\title{Autodeleveraging: Impossibilities and Optimization}
\author{Tarun Chitra\\ 
        Gauntlet \\
        \texttt{\small tarun@gauntlet.xyz} 
       }
\date{\today}
\newcommand{\ones}{\mathbf 1}
\newcommand{\reals}{{\mbox{\bf R}}}
\newcommand{\naturals}{{\mbox{\bf N}}}
\newcommand{\Expect}{\mathop{\bf E{}}}
\newcommand{\Prob}{\mathop{\bf Prob}}
\newcommand{\argmin}{\mathop{\rm argmin}}
\newcommand{\cf}{{\it cf.}}
\newcommand{\eg}{{\it e.g.}}
\newcommand{\ie}{{\it i.e.,}}
\newcommand{\BEAS}{\begin{eqnarray*}}
\newcommand{\EEAS}{\end{eqnarray*}}
\newcommand{\BEA}{\begin{eqnarray}}
\newcommand{\EEA}{\end{eqnarray}}
\newcommand{\BEQ}{\begin{equation}}
\newcommand{\EEQ}{\end{equation}}
\newcommand{\BIT}{\begin{itemize}}
\newcommand{\EIT}{\end{itemize}}
\newcommand{\iparagraph}[1]{\paragraph{\normalfont\itshape #1}}
\newcommand{\filecite}[1]{}
\newcommand{\asympp}{\asymp_p}
\theoremstyle{plain}
\newtheorem{theorem}{Theorem}[section]
\newtheorem{lemma}[theorem]{Lemma}
\newtheorem{corollary}[theorem]{Corollary}
\newtheorem{proposition}[theorem]{Proposition}
\theoremstyle{definition}
\newtheorem{definition}[theorem]{Definition}
\newtheorem{assumption}[theorem]{Assumption}
\newtheorem{example}[theorem]{Example}
\newtheorem{remark}[theorem]{Remark}
\begin{document}
\maketitle

\begin{abstract}
Autodeleveraging (ADL) is a last‑resort loss socialization mechanism for perpetual futures venues.
It is triggered when solvency-preserving liquidations fail.
Despite the dominance of perpetual futures in the crypto derivatives market, with over \$60 trillion of volume in 2024, there has been no formal study of ADL.
In this paper, we provide the first rigorous model of ADL.
We prove that ADL mechanisms face a fundamental \emph{trilemma}: no policy can simultaneously satisfy exchange \emph{solvency}, \emph{revenue}, and \emph{fairness} to traders.
This impossibility theorem implies that as participation scales, a novel form of \emph{moral hazard} grows asymptotically, rendering `zero-loss' socialization impossible.
On the positive side, we show that three classes of ADL mechanisms can optimally navigate this trilemma to provide fairness, robustness to price shocks, and maximal exchange revenue.
We analyze these mechanisms on the Hyperliquid dataset from October 10, 2025, when ADL was used repeatedly to close \$2.1 billion of positions in 12 minutes.
By comparing production ADL to transparent benchmark allocations, we find that Hyperliquid's production algorithm overshot the minimum trader profit haircut required to cover the shortfall.
Our methodology suggests the excess profits lost by profitable traders is between \$45.0M and \$51.7M.
In terms of the positions liquidated, this corresponds to roughly \$653.6M of positions being closed.
This comparison also suggests that Binance overutilized ADL far more than Hyperliquid.
Our results show both theoretically and empirically that optimized ADL mechanisms can dramatically reduce losses of trader profitability while maintaining exchange solvency.
\end{abstract}

\section{Introduction}
Perpetual futures (or simply, perpetuals) are by far the most popular crypto derivatives contract.
These contracts allow for duration-independent hedging of cryptocurrencies, such as Bitcoin, Ethereum or Solana.
That is, unlike traditional future contracts, perpetuals are not subject to a fixed expiration date.
Instead, perpetuals are similar to \emph{contracts for difference}, where market participants repeatedly send each other payments depending on how the spot price trades relative to the futures price.
For instance, if the spot price is higher than the futures price, then traders with short futures positions pay the traders with long futures until the prices are equalized.
We note that while Robert Shiller's 1993 proposal for perpetual futures provides the genesis of this idea, the BitMEX exchange's 2016 launch of the XBTUSD perpetual swap became the first live implementation~\cite{shiller1993measuring,Soska2021BitMEX,Hayes2025AdaptOrDie}.

Perpetual futures have a higher notional trading volume than spot cryptocurrency volume.
This is akin to index futures for US stocks, where there is far more volume traded in the futures contracts than the underlying spot assets, with the S\&P 500 futures index future trading roughly ten times more volume than the spot asset~\cite{CMEGroup2024EqualWeight}.
In 2024, centralized crypto exchanges such as Binance and Bybit processed nearly \$58.5 trillion (notional) in perpetuals trades~\citep{CoinGecko2025Perps}.
On the other hand, centralized exchanges processed nearly \$17.4 trillion of spot cryptocurrency trades~\citep{CoinGecko2025Q1}, which is a 3.3x ratio of perpetuals to spot volume.

The main reason for the elevated usage of perpetual futures, much like their index futures cousins, is due to the high level of leverage they offer.
In particular, the capital requirements for trading perpetual futures are much lower than spot trading.
Moreover, the continuous nature of perpetual futures allows for easier margin management than duration-based futures.
This is illustrated via the large gap between the maximum leverage offered by perpetuals exchanges (up to 125x~\cite{Binance2025CollateralLeverageUpdate}) versus the index futures on the Chicago Mercantile Exchange (around 10-15x leverage for index futures~\cite{CMEGroup2025SP500Margins}).

\paragraph{Centralized vs. Decentralized Exchanges.}
Historically, the majority of perpetuals trading volume has been concentrated on centralized exchanges (CEXes).
These venues, which include Binance, Bybit, and Coinbase, require users to deposit collateral that is custodied by the exchange.
The exchange is responsible for ensuring that users' positions are solvent and that their collateral is not used to cover losses from other traders.
In practice, this involves traders having to trust the exchange to not use their collateral for other purposes.
In 2022, centralized exchange FTX was found to have utilized customer collateral for other purposes, leading to multi‑billion‑dollar losses~\citep{FTXDebtors2023Report}.

In response, there has been an increasing trend in the usage of decentralized exchanges (DEXes) for trading perpetual futures.
These exchanges provide far more transparency and auditability into the mechanics of the exchange.
In particular, users can always verify what their collateral is being used for and what positions it is collateralizing.
Moreover, these exchanges are permissionless and pseudonymous, meaning anyone can trade on these exchanges without explicitly revealing their identity.

The transparent nature of decentralized exchanges does have, however, a downside.
Most positions in decentralized exchanges need to be fully overcollateralized, which generally means traders can take less leverage than they would on a centralized exchange.
Generally speaking, this is because the exchange doesn't know the identity of a trader and can only enforce global collateral invariants (\eg~assets > liabilities) by forcing users to post more collateral than if the exchange knew their identity.
Early decentralized perpetuals exchanges, such as Perpetual Protocol, were unable to offer much more than 5x leverage.

However, in recent years, hybrid models of decentralized exchanges where withdrawals and deposits are permissionless, but certain exchange operations are centralized, have become popular.
The most popular exchange of this form is Hyperliquid, which has helped increase the market share of perpetuals volume on DEXs from roughly 1\% in 2024 to over 8\% in 2025.
These hybrid models allow exchanges to use cryptographic commitments to enforce collateral rights while allowing oracles, markets, and solvency to be controlled by a smaller set of participants.
This allows the exchange to avoid issues faced by fully open decentralized exchanges, where weak collateral can be added and cannot be sold or liquidated successfully during a market crash to ensure solvency.

More concretely, there have been recent incidents on perpetual DEXs that highlight how novel market structures can be exploited or stressed in practice.
These include but are not limited to: Hyperliquid’s XPL pre‑market price spike and the JellyJelly attack~\cite{CryptoTimesXPL2025,ODailyXPL2025,OakResearchJELLY}.
These episodes led to emergency safeguards and community post‑mortems, underscoring the need for robust oracle design, listing policies, and circuit breakers to mitigate manipulation.
Such emergency responses would have been nearly impossible in the fully decentralized setting, demonstrating the value of hybrid models of decentralization in perpetual futures.

\paragraph{Liquidations.}
Despite the empirical prominence of perpetuals markets, there has been little formal study of the stability and robustness of these markets relative to the large body of research into spot cryptocurrency trading.
For instance, the main price stability mechanism used for perpetuals is the \emph{funding rate}, a continuous payment stream between the long and short positions.
There has been some study of how to replicate funding rates via other financial instruments~\cite{AngerisChitra2023PerpsSIAM, he2022fundamentals, ackerer2024perpetual}, but little to no empirical study of how well such approximations work.
Even more surprising is the lack of formal study of how robust perpetual exchanges are under high leverage conditions, despite their usage being driven by high leverage traders.

A crucial component to ensuring the stability of perpetual futures markets with high leverage is~\emph{liquidation}.
This is the process of closing a user's position that is worth less than the cash collateral they have posted to the exchange.
Perpetuals exchanges rely on liquidation mechanisms to ensure that they stay solvent --- that is, the assets they hold are greater than the liabilities they owe to traders.
To the author's knowledge, there has been no formal study of liquidations in perpetuals exchanges, despite there being a large body of work studying liquidations in overcollateralized cryptocurrency lending (\eg~\cite{perez2021liquidations,kao2020analysis,sun2022liquidity,qin2021empirical}).

Liquidations are executed by taking the collateral of a trader and selling it in a manner that minimizes losses that the exchange realizes.
In centralized exchanges, liquidations are usually executed by the exchange itself in its own spot markets.
For instance, suppose that a user uses \$1,000 worth of collateral to open a long position with 10x leverage.
This implies that if the price falls from the initial price when the contract was opened, $p_0$, by roughly 10\%, then the user's position will have a value of -\$1,000, so that user's net position is \$0.
If the price decreases beyond this point, then the user's position will have a negative value, implying that the user owes the exchange (and hence, other traders) capital.
We say that a trader's position is \emph{insolvent} if such a negative net position is realized.

Exchanges utilize liquidations to remove trader positions before they are insolvent.
For instance, in our example, an exchange might deem a trader's position to be liquidatable if the price falls by 9\%, such that the trader's position is closed before insolvency.
Provided that the exchange can execute this position (inclusive of transaction fees and market impact costs) at a price in between $0.91 p_0$ and $0.9 p_0$, the position can be closed profitably.
This profit is usually distributed to other traders in the exchange, added to an insurance fund for future losses, or realized as a profit for the exchange.

Centralized exchanges usually execute liquidations in their own spot markets.
This is because they can effectively guarantee atomicity of liquidations, ensuring that the liquidation transaction does not get front run and have a worse execution price than expected.
In decentralized exchanges, liquidiations are usually executed by third-party actors known as liquidators.
Liquidators can be viewed as traders who have the ability to warehouse the inventory of a collateral position and exit profitably.
From an economic perspective, liquidations need to be profitable for either the exchange or the liquidator in order to help with exchange solvency.

\paragraph{Autodeleveraging.}
In severe market dislocations, it is sometimes possible for liquidations to be so deeply unprofitable that they are unable to be executed by any party.
When this happens, the exchange can reach a state of insolvency, where the assets held are less than the liabilities owed to its traders.
This excess in liabilities is known as the \emph{shortfall}.
In such scenarios, a natural strategy to reduce the risk of total insolvency is to socialize losses by haircutting profitable (or \emph{winning}) traders.
In the worst case, a winning user's position is completely closed to zero, leaving them with a potentially large opportunity cost.
This process is known as \emph{autodeleveraging} (ADL) and is a last-resort measure used by exchanges.

Autodeleveraging is unique in that it algorithmically socializes losses on winners based on a public rule or mechanism that an exchange posts.
This means that all of the position closures and implicit liquidations of insolvent positions are performed in an irreversible and atomic manner.
The closest analogue to autodeleveraging in traditional finance is a central counterparty (CCP) such as a derivatives clearing house. 
In traditional finance, CCPs employ default waterfalls and loss mutualization (including variation‑margin gains haircutting (VMGH)) that share the same core idea of allocating residual losses across non‑defaulting members~\cite{DuffieZhu2011,Pirrong2011,CPMI_IOSCO_2014,ContGhamamiSITG,KubitzaWinnersLosers,Turing2019MagicRelighting}.
CCPs generally have manual, non-algorithmic socialization and disbursement of insurance funds during such events with disputes reconciled via the legal system.
We note, furthermore, that only clearing members bare the losses in CCPs, which leads to a different principal-agent dynamic than perpetuals markets as smaller traders need not bare the final losses. 

Autodeleveraging, to the best of the author's knowledge, was first introduced by the crypto exchange Huobi in 2015~\cite{HuobiADL}.
Prior to the implementation of ADL, exchanges would manually socialize losses, with the exchange serving as the sole CCP.
By 2016, the BitMEX exchange implemented ADL via a formula that ranked trader positions for ADL by their profit times their leverage~\cite{BitMEXADL}.
This formula persists to today on virtually every centralized exchange (including Binance, who implemented the formula in 2019~\cite{BinanceADL}) and on the decentralized exchange Hyperliquid~\cite{HyperliquidDocsLiquidations}.
We note that there are some decentralized exchanges such as Drift~\cite{DriftADLCode} and Paradex~\cite{ParadexADLBlog} that use a different ADL mechanism (the \emph{pro-rata} mechanism defined in~\S\ref{subsec:adl}).
However, well over 95\% of perpetuals trading volume is on exchanges that utilize the BitMEX ranking model.

In practice, ADL is reserved for extreme tail events where routine liquidation mechanisms are insufficient to maintain venue solvency.
There are two primary scenarios that can trigger ADL activation.
First, a sudden and large price movement can push positions far beyond their liquidation thresholds before the exchange's liquidation system can execute the necessary trades.
Second, a temporary but severe loss of executable liquidity in the order book can prevent the exchange from closing positions at prices near their liquidation triggers, even when the price movement itself is not extreme.

To illustrate how such scenarios can lead to insolvency, consider a concrete example.
Suppose that a trader posts \$1{,}000 of collateral to take a 10x leveraged long position with a notional value of \$10{,}000, where the position becomes liquidatable when the price declines by 9\% from its initial value $p_0$.
Under normal market conditions, the exchange would execute a liquidation when the price reaches $0.91\,p_0$, closing the position and recovering the collateral.
However, if an abrupt 12\% price drop occurs between price updates or during a period of low liquidity, the position may be closed at a significantly worse execution price, such as $0.88\,p_0$ rather than the intended $0.91\,p_0$.
In this scenario, the account's equity becomes negative, representing approximately 2\% of its initial collateral value.
When many traders simultaneously experience such execution slippage and the exchange's insurance fund reserves cannot absorb the aggregate resulting losses, the venue triggers ADL.
Under the exchange's predefined ADL rules, the system algorithmically reduces or closes profitable opposing positions until the shortfall is covered and solvency is restored.

Inherently, ADL creates \emph{moral hazard} --- that is, losing traders have an incentive to take on more risk in anticipation of being socialized by winning traders.
The goal of an exchange is to balance the risk of insolvency against haircutting winning trader profits.
Moral hazard is measured by this trade-off --- if it is possible to reduce the risk of insolvency to zero without haircutting winning traders, then the system has no moral hazard.
On the other hand, if it is not possible to reduce the risk of insolvency by any appreciable amount without severely haircutting winning trades, the system has excessive moral hazard.
The job of an exchange operator is to choose an ADL mechanism that tries to induce a low level of moral hazard under most tail event scenarios.
This connects to principal–agent models of moral hazard and linear/robust contracting \cite{Holmstrom1982,Carroll2015RobustLinearContracts,DuttingEtAl2023MultiAgentContracts}, with key differences: many agents act simultaneously, externalities operate through a common solvency constraint, and tail‑risk mitigation substitutes for standard effort observability.

Despite the rarity of ADL, it is certainly not hypothetical: there have been dozens of instances of ADL activations on major venues since 2018~\citep{BinanceADL,BitMEXADL,AevoADL,CoinglassADL}.
The largest episode of ADL usage was on October 10-11, 2025 (UTC), when outsized liquidations exhausted insurance funds and multiple venues simultaneously invoked ADL~\cite{CoinDesk2025LargestLiquidations}.
However, other famous incidents such as the March 2020 ``Black Thursday'' and May 19, 2021 deleveraging event were other prominent days involving ADL usage.
Most of these usages of ADL involved controversies, in that highly profitable but lower risk (\eg~lower leverage) traders found their positions closed out by ADL mechanisms. 
These traders have often threatened to sue exchanges for unfair treatment, with rumors and public speculation surfacing during Oct. 10–11, 2025 (subsequently denied by Wintermute)~\citep{CointelegraphWintermute2025,TheBlockWintermute2025}\footnote{See also prior complaints during May 2020 on Binance futures when profitable shorts were closed via ADL~\citep{CointelegraphBinanceADL2020}}.

These incidents demonstrate a natural trade-off that perpetuals exchanges have to make: they can either aggressively socialize losses to their winners and (potentially) lose the future revenue of these users while preserving solvency or hold the losses due to insolvency on their balance sheet.
In the former case, the exchange effectively chooses to prioritize short-term solvency and potential liquidation profits over long-term value and fees generated by users.
The latter case effectively creates a long-term risk for the exchange's balance sheet, especially if it does not have sufficient funds to cover insolvencies.
This strongly suggests that one can view the choice of how an exchange implements ADL as a trade-off between solvency, moral hazard, and long-term revenue for an exchange.

We note that open interest after October 10, 2025 suggests that Hyperliquid has lost nearly 50\% of open interest, whereas competitors such as Lighter and even Binance have recovered to pre-event levels.\footnote{Source: DeFiLlama perpetuals dashboard, \url{https://defillama.com/perps}, accessed Nov.\ 30, 2025.}
Open interest is generally more expensive to manipulate than trading volume and can serve as a coarse measurement of expected future revenue for an exchange. 
Numerous parties have argued that this open interest loss for Hyperliquid is due to their overly aggressive ADL policy.\footnote{See, \eg~, public commentary from @fiddybps1, @0xReaz, @Eugene\_Bulltime, and @0xLouisT on X (Nov.\ 2025): \url{https://x.com/fiddybps1/status/1978750722901762321}, \url{https://x.com/0xReaz/status/1986486213101166599}, \url{https://x.com/Eugene_Bulltime/status/1994380900067582182}, \url{https://x.com/0xLouisT/status/1990815109787058654}.}

\paragraph{This Paper.}
In this paper, we provide (to the author's best knowledge) the first formal mathematical model of autodeleveraging.
The main model of the paper is venue‑agnostic: it abstracts away pricing (\eg~order book versus automated market maker microstructure) and focuses on how positions are opened and closed.
As a warning to the reader: this paper is more verbose than necessary.
The author aims to provide a self-contained, pedagogical introduction to perpetual futures and liquidations, as the current literature and open source documentary does not cover liquidations and ADL in any manner comprehensive enough for formal study.
Throughout the paper, there will be numerous simple numerical examples to illustrate the basic concepts and mechanics of perpetual futures.

The first section of the paper (\S\ref{sec:background}) provides definitions and examples of the basic objects --- \emph{positions} --- and basic operations that one can perform on them.
In the process of defining the set of operations on positions, we will naturally need to define the economics of positions using funding rates and arbitrage.
These definitions will hopefully provide answers to the questions of why one opens a position, how much one has to spend on keeping a position open, and how positions on different exchanges achieve price synchronization.
From these economic properties, we will naturally define solvency for both individual traders and of the exchange itself.

Liquidations and ADL then naturally follow as operations that exist to try to use economic means to enforce solvency constraints with high probability.
We will define the set of ADL mechanisms in a sufficiently broad manner to allow for venue-specific ADL, this allowing us to compare Binance, Hyperliquid, Drift, and other exchanges.
Our decomposition of the ADL strategy space into~\emph{severity} (\ie~dollar amount impacted by ADL) and \emph{haircuts} (\ie~per-trader percentage liquidated) allows us to separate exchange solvency and trader profitability into two separate optimization problems.
This will be key to simplifying our results and constructing practical algorithms.

Given a perpetuals exchange, defined in the combinatorial manner of~\S\ref{sec:background}, we next focus on notions of risk that are endemic to perpetuals exchanges in~\S\ref{sec:risk-prelim}.
We construct a set of risk measurements that, while specific to perpetuals exchanges, are connected to classical risk metrics such as Value-at-Risk.
Subsequently, we describe four facets of perpetuals exchange risk that we will analyze: extreme value analysis, Schur ordering, fairness, and moral hazard.

Extreme value analysis is the classical actuarial estimation of tail probabilities and rare events.
We provide a simple primer to make the text self-contained and our notation consistent.
The Schur ordering provides a way of partially ordering risk measurements, even if they are not fully comparable.
This allows us to compare ADL mechanisms across venues, despite the fact that the outcomes might not have explicitly comparable risks in every situation.
Algorithmic fairness measures (axiomatic and optimization-based) are subsequently introduced to concretely answer the question of whether ADL is more fair to certain traders than others.
Finally, we describe the general moral hazard problem and specialize it to the perpetuals exchange setting.
In this setting, it corresponds to whether the exchange operator has an incentive to take on more risk than necessary because they can socialize worst-case losses on their best traders.

With these risk tools in tow, we are now prepared to describe the main results of the paper.
We have five classes of results in this paper:
\begin{enumerate}
  \item \emph{Negative}: Moral hazard is strictly increasing in the size of the exchange.
  \item \emph{Fairness}: The Pro-Rata ADL mechanism is the most fair mechanism
  \item \emph{Robustness}: Given a risk model $g$ for historical user behavior, there is a unique ADL mechanism maximally robust to price shocks 
  \item \emph{ADL as Stackelberg Game}: Multiple price shocks causing ADL to be used repeatedly are a Stackelberg game that can be studied rigorously. 
  \item \emph{Empirical}: Using Hyperliquid's data from October 10, 2025, where ADL was used over 40 times in a 10-minute period, we evaluate the performance of different ADL mechanisms
\end{enumerate}

\iparagraph{Negative Results.}
Our negative results focus on quantifying limits to how much ADL can actually reduce insolvency without inuring traders with disproportionate losses.
We explicitly ask the question of how the loss of profit due to ADL to the trader with the highest PNL scales as the number of positions $n$ grows.
We quantify this by looking at ratios of the maximum winner's profit to the total shortfall (over all traders) and to the worst trader's shortfall.
These ratios can be viewed as analogues of the Value-at-Risk and Expected Shortfall metrics in traditional risk modeling, as we show in Appendix~\ref{app:proofs}.
The main difference is that while Value-at-Risk and Expected Shortfall measure the total expected loss (over all traders) in tail events, our metrics only measure the expected loss to the most winning trader.
Moreover, Appendix~\ref{app:classical-risk-measures} shows that classical VaR/ES rankings echo the same conclusion: queue-based ADL always delivers the lowest top-winner survival even under these traditional risk lenses.

When these ratios, as constructed, are bounded below by a constant, then the winner can expect to always retain at least a constant fraction of their earnings after an ADL event.
However, if these ratios decay to zero, then the winner has no guarantee of retaining any of their profits after an ADL event.
We show that under some mild technical assumptions, these ratios decay as $O(\frac{b_n}{n})$, where $b_n$ is the \emph{extreme value scale} (EV scale) of profits (see~\S\ref{sec:risk-prelim}).
For most well-behaved probability distributions over the set of profits, $b_n = o(n)$, so the ratios decay to zero.
This result establishes a fundamental limit of these markets: moral hazard scales with participation.
It suggests that as crypto exchanges grow, ADL must become either more frequent or more severe for top traders; they cannot ``grow out'' of the problem.

\iparagraph{Fairness Results.}
As the negative results show that no ADL mechanism can scale to preserve winner's profit, a natural follow-up question is, what is the most fair way of socializing profitable users?
We consider two notions of fairness in~\S\ref{sec:fairness}: axiomatic and optimization-based.
They are both common in the algorithmic fairness literature within machine learning~\cite{KleinbergEtAl2018FairnessImpossibility,BarocasHardtNarayanan2019FairnessBook,Mehrabi2021SurveyFairness,DworkEtAl2012FairnessAwareness,Agarwal2018ReductionsFairness,HardtPriceSrebro2016EqualityOfOpportunity}.
We demonstrate that the pro-rata ADL rule (used by Drift~\cite{DriftADLCode} and Paradex~\cite{ParadexADLBlog}) is the unique fair ADL rule in \emph{both} the axiomatic and optimization senses~\citep{johnson2023concave}.

Axiomatic fairness involves defining rules or invariants that a particular ADL mechanism must satisfy to be deemed fair.
The rules we analyze for fairness are monotonicity, scale invariance and Sybil resistance.
Monotonicity (stable ordering) requires that larger pre-ADL haircutable endowment remains larger post-ADL.
Scale invariance states that if everyone scales positions and deficits by an equal factor, relative ADL losses do not change.
Sybil resistance (split invariance) states that splitting one economic position across accounts does not change its aggregate ADL seizure.

We show that under mild smoothness conditions, the pro-rata ADL rule is the unique rule to satisfy these axioms.
In stark contrast, queue-based mechanisms can violate monotonicity: the greedy waterfall allocation can fully wipe out the top-ranked winner while leaving a nearly identical winner untouched (Appendix~\ref{app:capped-pro-rata}, Proposition~\ref{prop:queue-monotonicity-failure}).
Queue Sybil resistance depends on how accounts are ranked (Appendix~\ref{app:capped-pro-rata}, Proposition~\ref{prop:queue-sybil-score-preserving}): absolute scores (\eg~total PNL) are Sybil vulnerable, while ratio scores such as Hyperliquid's documented index are Sybil resistant when proportional splitting does not create score ties~\cite{HyperliquidDocsADL,Doug2025SybilTweet}.
In our model and data, the queue-based mechanisms used by Binance and Hyperliquid produce extremely low Profit-to-Solvency Ratios (PTSR), imposing disproportionately large losses on the top winner.
This finding contradicts the common intuition that queues are orderly; instead, they concentrate losses on the most successful traders.

Optimization-based fairness involves showing that for a large class of individual trader utility functions, a particular mechanism maximizes social welfare (\ie~the total profit).
This is more of a system-wide fairness notion, as it shows that the system as a whole is fair, even though there might be individuals who were punished more by an ADL mechanism.
Unlike the axiomatic rules, which have to hold for every trader exactly, the optimization-based fairness rules can be thought of as `fuzzier' and allowing for more slack.
Despite this, we still show that the pro-rata ADL rule is the unique rule to maximize social welfare for a large class of concave utility functions.

\iparagraph{Robustness Results.}
A natural question, beyond fairness, is whether an ADL rule is \emph{robust} to subsequent price shocks.
The main question here is whether the choices of severities (\eg~total amount to socialize) and haircuts (\eg~per-account socialization) increases the chance of insolvency should another price shock occur.
In~\S\ref{sec:glpr}, we consider a new threat model where an ADL mechanism is graded based on how well it reduces to total shortfall from two price shocks.
The model involves a price shock occurring that triggers ADL, an ADL policy being executed, and then another price shock occurring.

We show that a natural way to model total shortfall is via a \emph{risk model} $g$ that maps a user's amount of leverage used to a risk score.
A higher risk score implies that a trader's leverage level places a higher burden on the rest of the exchange's solvency, especially in the repeated price shock scenario.
We first show that the risk model is effectively a way to quantify the total shortfall.
Given a stochastic process for price shocks, one can construct an optimal risk model (using convex duality) $g^*$ that maximizes the total solvency gained per dollar of winner position that is haircut by an ADL policy.
With the optimal risk model in hand, we construct a linear-time algorithm to choose haircuts for each trader that are weighted by this model.
We then show that the resulting ADL policy derived from these weights has lowest expected total shortfall and is the optimal ADL policy (in the Schur ordering sense).

\iparagraph{ADL as a Stackelberg Game.}
A direct extension of the robustness threat model is to consider a sequential ADL problem, where for multiple rounds, a price shock is received, the exchange chooses an ADL policy, and then losses are realized.
This model might at first glance seem to only have theoretical value since ADL is itself a rare event.
However, on October 10, 2025 there were multiple ADL events that occurred over a short time window on both Hyperliquid and Binance.
While the account-level data for Binance's usage of ADL is not public, we do have this data for Hyperliquid.
Starting at 10:21:00 UTC, there were at least 40 recorded ADL events on Hyperliquid in a 10-minute period.
This strongly suggests that there is clustering or cascading of ADL events, making the multiple round model necessary to study.

Unlike in the single round models (used for fairness and robustness), the exchange is allowed to adjust their ADL policy after each ADL shock.
One can view this as a sequential game and we argue that it is a \emph{Stackelberg game}.
Stackelberg games have a well-known follower-leader decompsoition, where one user (the leader) plays a strategy first.
Then the follower conditions their strategy on the leader's action, before playing their own move.
In the ADL setting, the exchange is the leader and the traders are the followers.

We first show that viewing ADL as a dynamic, multiple round Stackelberg game leads to sharply different equilibria.
We prove a crucial ``principal-agent'' tension: any strategy an exchange executes that minimizes the time to recover solvency (\ie~number of rounds) necessarily maximizes the loss of exchange long-term revenue.
This trade-off is the economic heart of the paper: the exchange wants to be solvent in few rounds, but doing so destroys the best customers (\eg~the whales).
Intuitively, this occurs because any strategy that quickly recovers solvency autodeleverages the largest winners more aggressively.
These winners, however, are also often the most profitable users and generate high long-term fees for the exchange.
By liquidating them aggressively, you're removing the capital that they would use to later create positions and generate fees.
We note that threats to sue exchanges for ADL losses in profit after October 10, 2025 provide indirect empirical evidence to support this theoretical result~\cite{CointelegraphWintermute2025,TheBlockWintermute2025}.

This implies that any optimal ADL policy over multiple rounds also has to include incentive compatibility constraints that balance future exchange revenue, current exchange solvency, and current trader profits.
We construct an explicit algorithm for optimizing a revenue and solvency based objective and show that a simple mirror descent algorithm achieves vanishing regret for this objective.
We conclude with a simple example demonstrating that no static policy that dominate the dynamic policy, especially when exchange long-term revenue is a consideration.

\iparagraph{Empirical Results: Overshoot Diagnostics.}
We conclude by analyzing the October 10, 2025 ADL episode on Hyperliquid under a public-data, partial-observability reconstruction (\S\ref{sec:numerics}).
This event involved roughly \$2.1B of liquidations across a 12-minute cascade.
Using loser-side liquidation equity fields (for deficits) and a two-pass no-ADL counterfactual (for realized wealth removed from winners), we estimate an aggregate loser deficit of approximately \$100.1M and a production overshoot vs needed of approximately \$45.0M in PNL dollars at \(\Delta=0\), with a short-horizon band of approximately \$45.0M--\$51.7M.
Under the same data, a stylized wealth-space queue diagnostic corresponds to approximately \$653.6M in equity dollars.
The ratio between these figures reflects winner overcollateralization (equity/PNL $\approx 6.66\times$ trimmed mean across winners).
\footnote{Hyperliquid's on-chain fill log for October 10, 2025 records 34{,}983 individual ADL executions across 19{,}337 distinct wallets and 162 tickers.
We analyze the per-fill data but aggregate those executions into global time waves (gap 5s) when reporting per-wave totals.}
Finally, we compare production to transparent benchmark allocations that target the per-wave needed budget \(B_t^{\mathrm{needed}}\); these benchmarks can match \(B_t^{\mathrm{needed}}\) closely (subject to capacity caps and contract discreteness), making production overshoot vs needed empirically salient.

\paragraph{Corrections.}
In a previous version of this paper, we defined ADL in terms of equity haircuts.
After extensive public discussion and critique of how production ADL is executed in \emph{contract space} (and how that maps to wealth-space impacts), it became clear that haircuts should apply to PNL (profits) rather than total equity (i.e., principal/collateral is protected and senior to the haircutable PNL endowment)~\cite{Thogiti2025ADL653M}.
This version reflects these updates: we rewrite the ADL setup with an explicit haircut numéraire (protected cash plus a haircutable endowment, specialized to positive PNL), clarify the observation model used for empirical reconstruction, and show that the theoretical orderings are effectively unchanged while empirical conclusions remain directionally the same with revised magnitudes.
This formulation matches the theoretical haircutable endowment definition in~\S\ref{sec:background}.

\paragraph{Notation.}
\begin{itemize}
\item For a vector $x \in \reals^n$, we define $(x)_+ = \max(x, 0)$, where this is done coordinate-wise and $(x)_- = \min(x, 0)$ (also coordinate-wise).
Using our convention, we have $(x)_+ = -(-x)_-$
\item We denote the set of integers $\{1, \ldots, n\}$ by $[n]$.
\item For a sequence $x_1, x_2, \ldots$, we denote the subsequence from $s$ to $t$ as $x_{s:t} = (x_s, x_{s+1}, \ldots, x_t)$.
\item We use $\ones_E$ or $\ones\{E\}$ to denote the indicator function of an event $E$, which is $1$ if $E$ is true and $0$ otherwise.
\item We use the notation $f(n) \asymp g(n)$ to denote that $f(n) = \Theta(g(n))$, \ie~there exist constants $c, C > 0$ such that $c |g(n)| \le |f(n)| \le C |g(n)|$ for sufficiently large $n$.
\item We write $X_n \asympp Y_n$ for random variables $X_n, Y_n$ if there exist constants $c, C > 0$ such that $c |Y_n| \le |X_n| \le C |Y_n|$ with high probability as $n \to \infty$.
\item For vectors $x, y \in \reals^n$, we write $x \prec y$ if $x$ is majorized by $y$ (see \S\ref{sec:schur-convex-ordering} for details).
\end{itemize}

\section{Background}\label{sec:background}\label{sec:model}
We will first introduce a mathematical model for solvency risks in a perpetuals exchange.
Using this model, we define the key components for measuring the solvency of perpetuals exchanges: trader equity, exchange deficits, and autodeleveraging (ADL) policies.
These components will allow us to formalize the notion of moral hazard in perpetuals exchanges and sets the notation up for our main results in~\S\ref{sec:negative},~\S\ref{sec:fairness},~\S\ref{sec:glpr}, and~\S\ref{sec:stackelberg-rewrite}.

\subsection{Perpetuals Exchanges.}
A \emph{perpetuals futures exchange} (or perpetuals exchange) is a derivatives trading venue that allows traders to use leverage to speculate on the prices of cryptocurrency assets.
The main asset traded is a perpetual future, which is a continuous, expiryless futures contract.
Traders speculate on the contract in either the long (buy) or short (sell) direction.
Traders are allowed to make leveraged bets up to a maximum leverage cap $\ell^{\max}$.
As is often common with levered positions, the trader has to post collateral, known as \emph{margin}, to keep their positions \emph{solvent}.
We define a precise notion of solvency in this section, but for now think of solvency conditions as meaning that a trader's assets are greater than their liabilities.

The exchange incentivizes futures prices to approximately match spot prices by having long and short traders pay each other a continuous rate, known as the \emph{funding rate}.
The funding rate is determined by the spot price of the asset and the relative imbalance of long and short futures positions.
Arbitrageurs keep the price of the futures contract in line with the spot price by opening futures positions that push the contract price towards the spot price.
For simplicity, we assume that trades occur in discrete time $t \in \naturals$, but note that our results can easily be extended to continuous time $t \in \reals_+$.

\paragraph{Spot Price Oracle.}
Perpetual futures exchanges require a spot price oracle that provides a spot price of the asset from a spot trading venue.
Price oracles are assumed to be expensive or difficult to manipulate, which is dependent on the liquidity of the spot market.
Most perpetuals exchanges utilize oracles that aggregate price reporting over multiple venues to increase the overall liquidity represented in the price and hence, the manipulation cost.
We assume that there is a non-manipulable spot price $\hat{p}_t$ for an asset at all times $t \in \reals_+$.
Furthermore, we denote the futures price quoted by the exchange as $p_t$ and similarly assume that it exists for all $t \in \reals_+$.

\paragraph{Positions.}
Given a spot price oracle and futures price, a perpetuals exchange is able to open and close \emph{positions} created by its traders.
A position consists of collateral deposited by the trader, a leverage amount, and a direction (\eg~buy or sell).
A trader's position is updated as the spot price changes, realizing a gain or a loss for the trader.
The accumulation of gains and losses over the lifetime of the position is the trader's PNL (profit-and-loss).
We define a perpetuals exchange with $n$ traders as the set 
\begin{equation}\label{eq:perp-exch}
\mathcal{P}_n = \{(q_i, c_i, t_i, b_i) \in \reals_+^3 \times \{-1, 1\}  : i \in [n]\}
\end{equation}
where $q_i \geq 0$ is the notional quantity of futures held by the trader, $c_i \geq 0$ is the trader's collateral (cash) position (also known as the \emph{initial margin}), $t_i \in \naturals$ is the timestamp that the position was created, and $b_i$ is their side ($-1$ is sell, $1$ is buy).
We denote each position $\mathfrak{p}_i = (q_i, c_i, t_i, b_i)$ and assume that without the loss of generality that there is a fixed, strictly increasing ordering of these orders such that we can refer to $\mathfrak{p}_i$ unambiguously.

For each position $\mathfrak{p}$, we define the \emph{notional exposure} at time $t$ is the gross dollar size
\[
 n_{i,t}\ =\ p_t\,q_i\,.
\]
The notional exposure is the gross dollar size of the trader's position and is used to define leverage.

\iparagraph{Example.}
In this section, we fix a canonical running example that we will reuse across definitions to illustrate each exchange function.
Our simple test perpetuals exchange will have five positions, $\mathcal{P}_5$:
\begin{align*}
    \mathcal{P}_5 = \Big\{& \mathfrak{p}_A = (q_A,c_A,t_A,b_A) = \big(1,\,2,\,0,\,+1\big),\\
    & \mathfrak{p}_B = (q_B,c_B,t_B,b_B)=\big(1,\,\tfrac{2}{3},\,0,\,+1\big),\\
    & \mathfrak{p}_C = (q_C,c_C,t_C,b_C)=\big(4,\,\tfrac{8}{3},\,0,\,-1\big),\\
    & \mathfrak{p}_D = (q_D,c_D,t_D,b_D)=\big(1,\,\tfrac{2}{19},\,0,\,+1\big),\\
    & \mathfrak{p}_E = (q_E,c_E,t_E,b_E)=\big(1,\,\tfrac{10}{99},\,0,\,-1\big)\Big\}.
\end{align*}
Moreover, if at $t=0$ we have $p_0=1$ for $\mathcal{P}_5$, then the notional exposure are:
\[
 n_{A,0}=1\cdot 1=1,\quad n_{B,0}=1\cdot 1=1,\quad n_{C,0}=1\cdot 4=4,\quad n_{D,0}=1\cdot 1=1,\quad n_{E,0}=1\cdot 1=1.
\]
Note this toy example is net short by 2 contracts: $\sum_i b_i q_i = -2$. 
For AMM-style venues, the pool holds the offsetting inventory $q_V=2$ with $b_V=+1$, so across traders plus venue $\sum_i b_i q_i + b_V q_V = 0$. 
For order-book venues, interpret $\mathcal{P}_5$ as a subset of accounts.
The complementary matched positions are omitted but implied.\footnote{We note that in centralized exchanges, the exchange itself creates the complementary positions whereas in decentralized exchanges with order books (like Hyperliquid and Lighter), a liquidity provider like HLP or LLP owns the position; see Hyperliquid's HLP documentation~\cite{HyperliquidHLPVaults} and the Lighter whitepaper~\cite{LighterWhitepaper}.} 
All zero-sum statements (\eg,~funding rates, which will be defined shortly) are taken over traders plus the venue inventory.

\paragraph{Position Creation.}
When an exchange opens a new position, it performs two tasks:
\begin{enumerate}
\item Selecting a price $p_{t}$ for the position and constructing $\mathfrak{p} = (q, c, t, b)$.
\item Construction of an equal but opposite position, $\overline{\mathfrak{p}} = (q, c, t, -b)$ is created
\end{enumerate}
The latter condition is needed because the perpetuals exchange is meant to be a neutral matching layer for traders.
This means that the exchange should not have a net exposure to the spot asset, as it increases the risk of default for traders.
Specifically, the position $\overline{\mathfrak{p}}$ is created to ensure that the total notional exposure of the exchange to the spot asset is zero.\footnote{In centralized venues such as Binance Futures, the matching engine and clearing layer are designed to operate a matched book over user positions, as reflected in their open interest endpoints for USD\$- and COIN-margined futures~\cite{BinanceFuturesOpenInterest}. In decentralized order-book venues like Hyperliquid, the protocol core similarly avoids taking directional exposure. Offsetting inventory is held in separate HLP vaults on behalf of liquidity providers~\cite{HyperliquidHLPVaults}.}
We define 
\[
\overline{\mathcal{P}}_n = \mathcal{P}_n \cup \{ \overline{\mathfrak{p}} : \mathfrak{p} \in \mathcal{P}_n\}
\]
as the full set of positions held by the exchange (including matching, complementary positions).

While all venues construct $\overline{\mathfrak{p}}$ in the same manner, the price $p_t$ is chosen in a venue-specific manner.
Each venue has a \emph{liquidity model} for determining how prices $p_t$ are chosen.
The two most popular liquidity models are order books and automated market makers (AMMs).

Order book venues maintain a limit order book of bids and asks for perpetual futures contracts at discrete price levels, and have been extensively studied in the market microstructure literature~\cite{Kyle1985,AlmgrenChriss2001,Gatheral2010,BouchaudImpact2010}.
Traders may submit limit orders, which rest in the book until they are matched, or market orders, which execute immediately against resting liquidity.
When a trader opens a new position as a taker, the execution price $p_t$ is the volume-weighted average of the resting orders they consume, \ie~the price they would obtain by submitting a market order of size $q$ into the book at time $t$.
The largest order book venues in the perpetual futures market include centralized exchanges such as Binance and Bybit and decentralized exchanges such as Hyperliquid and Lighter.\footnote{From a risk-warehousing perspective, Hyperliquid's HLP vaults cause the protocol to function as a hybrid between a pure order-book venue and an inventory-taking pool: user orders are matched via an order book, but directional exposure is warehoused collectively in HLP vaults on behalf of liquidity providers~\cite{HyperliquidHLPVaults,OakResearchJELLY,CoinDesk2025LargestLiquidations}.}

Automated market maker (AMM) venues maintain a pool of collateral and inventory for the perpetual, and quote prices according to a deterministic pricing rule or bonding curve that depends on the pool state, typically implemented via constant-function market makers~\cite{AngerisChitra2020ImprovedOracles,Angeris2023GeometryCFMM}.
Traders interact with the pool via swap-like trades: submitting an order of size $q$ moves the pool along its pricing curve, and the execution price $p_t$ is the average price paid over this path.
When a trader opens a new position against the AMM, the pool itself holds the offsetting inventory $\overline{\mathfrak{p}}$, so that the trader's exposure is matched by an equal and opposite position in the pool, as in inventory-taking perpetual venues such as GMX, Drift, and Perpetual Protocol v2~\cite{GMXDocs,DriftADLCode,PerpV2Docs}.

In the remainder of this paper, we will ignore the liquidity model of a perpetuals exchange.
This is because the default conditions that we will study for autodeleveraging only depend on the positions $\mathcal{P}_n$ and not the process by which they were created.

\iparagraph{Example.}
Consider an order-book venue at time $t$ with three resting sell (ask) limit orders
\[
(p^1, q^1) = (1.0, 1),\quad (p^2, q^2) = (1.1, 3),\quad (p^3, q^3) = (1.2, 10).
\]
A trader submits a market buy order for quantity $q=5$. The exchange matches this order against the resting asks in price–time priority: it first consumes $q_1=1$ at $p_1=1.0$, then $q_2=3$ at $p_2=1.1$, and finally $1$ unit from the third order at $p_3=1.2$.
The total notional paid is $1\cdot 1.0 + 3\cdot 1.1 + 1\cdot 1.2 = 5.5$, so the execution price is the volume-weighted average $p_t = \frac{5.5}{5} = 1.1$.
Ignoring fees, the taker acquires a new long position $\mathfrak{p} = (q,c,t,b) = (5,c,t,+1)$ with notional exposure $q\,p_t = 5.5$.
Aggregating across the three maker accounts, the exchange simultaneously creates an equal and opposite short position $\overline{\mathfrak{p}}$ of size $5$, so that total long and short contract quantities (over all accounts) remain matched.

\paragraph{Open Interest.}
Note that the notion of leverage is encoded into the quantity $q_i$ of a position, as a user who can buy $p_{t_i} q$ units of spot assets with $c_i$ units of collteral can open a position with $q_i \leq \ell_{\max} q$.
This implies that the correct notion of the notional trades outstanding, known as \emph{interest}, on a pereptuals exchange is simply the sum of price-weighted quantities.
We divide the exchange $\mathcal{P}_n$ into long and short positions:
\begin{align*}
\mathcal{L}(\mathcal{P}_n) &= \{ (q_i, c_i, t_i, b_i) \in \mathcal{P}_n : b_i = 1\} \\
\mathcal{S}(\mathcal{P}_n) &= \{ (q_i, c_i, t_i, b_i) \in \mathcal{P}_n : b_i = -1\}
\end{align*}
We define the total set of long and short positions as:
\begin{align*}
  \overline{\mathcal{L}}(\mathcal{P}_n) &= \mathcal{L}(\mathcal{P}_n) \cup\{ \overline{\mathfrak{p}} : \mathfrak{p} \in \mathcal{L}(\mathcal{P}_n)\} \\
  \overline{\mathcal{S}}(\mathcal{P}_n) &= \mathcal{S}(\mathcal{P}_n) \cup\{ \overline{\mathfrak{p}} : \mathfrak{p} \in \mathcal{S}(\mathcal{P}_n)\}
\end{align*}
This is the set of all long and short positions, including matching positions.

Futures exchanges define the \emph{open interest} of the exchange as the total outstanding notional value of the positions held by the exchange.
Given the futures price $p_t$, the long open interest $L(\mathcal{P}_n, p_t)$ and short open interest $S(\mathcal{P}_n, p_t)$ are defined as 
\begin{align*}
L(\mathcal{P}_n, p_t) &= \sum_{(q_i, c_i, t_i, b_i) \in \mathcal{L}(\mathcal{P}_n)} q_i p_t \\
S(\mathcal{P}_n, p_t) &= \sum_{(q_i, c_i, t_i, b_i) \in \mathcal{S}(\mathcal{P}_n)} q_i p_t 
\end{align*}
Given the long and short open interest, we define the exchange's open interest simply as the sum, $OI(\mathcal{P}_n, p_t) = L(\mathcal{P}_n, p_t) + S(\mathcal{P}_n, p_t)$
\footnote{We note that exchange's will report open interest over $\overline{\mathcal{P}}_n$, which would be twice the open interest reported here. Since we assume that $\mathcal{P}_n$ is the set of unmatched trader positions, one can view our definition of open interest (which differs from Binance's definition by a factor of two) as `trader open interest' vs. `exchange open interest'.}
To lighten notation, we will write $L_t = L(\mathcal{P}_n, p_t), S_t = S(\mathcal{P}_n, p_t), OI_t = OI(\mathcal{P}_n, p_t)$ when the exchange $\mathcal{P}_n$ is already defined.

\iparagraph{Example.}
For $\mathcal P_5$ at $t=0$ with $p_0=1$, longs are $\{A,B,D\}$ and shorts are $\{C,E\}$, so
\[
 L_0=\sum_{b_i=+1} q_i p_0=(1+1+1)\cdot 1=3,\quad S_0=\sum_{b_i=-1} q_i p_0=(4+1)\cdot 1=5,\quad OI_0=8.
\]
At $t=1$ with $p_1=1.4$ (same quantities), $L_1=3\cdot 1.4=4.2$, $S_1=5\cdot 1.4=7.0$, $OI_1=11.2$ .

\paragraph{Leverage and Margin.}
The \emph{leverage} of a position is the ratio of the notional exposure to the collateral value.
If the leverage is greater than 1, it means that the user has effectively borrowed money from the exchange to have an exposure higher than their cash balance.
When the leverage is less than 1, the user's position is \emph{overcollateralized}.

As the price of the asset changes, the leverage of the position naturally changes due to its dependence on notional exposure.
When a position is opened, there is an \emph{initial leverage} ratio that the position is created with.
We define the initial leverage of a position as 
\[
\ell_i = \ell(\mathfrak{p}_i) = \frac{n_{i,t_i}}{c} = \frac{p_{t_i} q_i}{c_i}
\]
The larger the initial leverage, the most exposure the trader has to the asset and implicitly, the more risk the exchange is taking on lending to the user.

For risk management purposes, exchanges define leverage limits in terms of initial and maintenance~\emph{margins}.
Margins are bounds on the ratio of notional exposure to collateral that the exchange enforces as an invariant.
The \emph{initial margin ratio}, $m_I \in (0, 1)$ is defined such that for any valid position, we have:
\begin{equation}\label{eq:initial-margin-condition}
m_I p_{t_i} q_i \leq c_i
\end{equation}
This invariant implies that the maximum leverage that a position can have, $\ell^{\max}$ is $\ell^{\max} = \frac{1}{m_I}$.

In order to continually satisfy~\eqref{eq:initial-margin-condition} as spot and future prices change, a trader often needs to `top up' or add more collateral their position.
The amount of collateral needed is related to the exchange's \emph{maintenance margin ratio} $m_{\mu}$, which will be defined in the sequel.
As the user adjusts their collateral position dynamically, we let $c_{i,t}$ be the total quantity of collateral (margin) placed by the user at time $t$.
This also means that the state of position is dynamic, $\mathfrak{p}_{i,t} = (q_i, c_{i,t}, t_i, b_i)$.
We define the \emph{leverage at time $t$} as $\ell_{i,t} = \ones_{t \geq t_i} \frac{p_t q_i}{c_{i,t}}$



\iparagraph{Example.}
These five examples of $\mathcal{P}_5$ represent one highly overcollateralized and under leveraged position ($\mathfrak{p}_A$), two somewhat leveraged positions with medium leverage ($\mathfrak{p}_B, \mathfrak{p}_C$) and two highly leveraged positions ($\mathfrak{p}_D, \mathfrak{p}_E$).
We will initialize the example with $p_0=1$ and take a maximum leverage $\ell^{\max}=10$ (\eg,~$m_I=0.10$).
This gives the following opening leverages:
\begin{align*}
\ell_{A,0}&=\tfrac{1\cdot 1}{2}=0.5, &
\ell_{B,0}&=\tfrac{1\cdot 1}{2/3}=1.5, &
\ell_{C,0}&=\tfrac{1\cdot 4}{8/3}=1.5,\\
\ell_{D,0}&=\tfrac{1\cdot 1}{2/19}=9.5, &
\ell_{E,0}&=\tfrac{1\cdot 1}{10/99}=9.9.
\end{align*}
If the mark moves to $p_1=1.4$ and cash is unchanged ($c_{i,1}=c_i$), then time-$1$ leverages are
\begin{align*}
\ell_{A,1}&=\tfrac{1.4\cdot 1}{2}=0.7, &
\ell_{B,1}&=\tfrac{1.4\cdot 1}{2/3}=2.1, &
\ell_{C,1}&=\tfrac{1.4\cdot 4}{8/3}=2.1,\\
\ell_{D,1}&=\tfrac{1.4\cdot 1}{2/19}=13.3, &
\ell_{E,1}&=\tfrac{1.4\cdot 1}{10/99}=13.86.
\end{align*}
In practice, if $\ell_{i,1}>\ell^{\max}$ (\eg,~$\mathfrak{p}_D$ and $\mathfrak{p}_E$), then the venue requires collateral additions from the trader to restore the constraint $\ell_{i,1} \leq \ell^{max}$.

\paragraph{Funding Rate.}
A perpetual future is only useful as a hedging instrument if the futures price $p_t$ and spot price $\hat{p}_t$ are sufficiently ``close.''
To incentivize this, the majority of perpetuals futures markets use a continuous payment stream between the long and short positions called a \emph{funding rate}.
If the price of the future is lower than spot, $p_t < \hat{p}_t$, traders holding short futures positions pay a payment to traders holding long futures positions.
This encourages traders who want to earn these payments to open long positions, which will decrease the over all rate.
Conversely, if $p_t > \hat{p}_t$, traders holding long positions pay those holding short positions.

Most funding rates are simple linear functions of the relative imbalance between the long and short positions.
A common funding rate~\cite{Chitra2025PDLP} formula takes the form:
\begin{equation}\label{eq:single-time-funding}
\gamma(\mathcal{P}_n, p_t, \hat{p}_t) = \kappa \left(\frac{L(\mathcal{P}_n, p_t)}{S(\mathcal{P}_n, p_t)} - \frac{p_t}{\hat{p}_t} \right)
\end{equation}
for a constant $\kappa > 0$.
For brevity, write $\gamma_t = \gamma(\mathcal{P}_n, p_t, \hat{p}_t)$.
Note that we have defined $\gamma_t$ such that if $\gamma_t > 0$, then short traders pay long traders $\gamma_t$\% of their holdings.
This convention is opposite to what many exchanges use (\ie~$\gamma_t < 0$ represents a payment from the short side to the long side).
However, our convention leads to needing to check fewer negative signs in our formulas, as we will illustrate with our examples.

Given funding rates $\gamma_t$ for $t \geq 0$, a position $\mathfrak{p} = (q, c, t, b)$ linearly accumulates funding positions between time $t$ and the current time $T$, if the position is solvent.
That is, if the position is solvent at time $T$, the user receives a payments of the form
\[
\Gamma((q, c, t, b), t, T) = \sum_{s=t+1}^T (bq) \gamma_s p_s 
\]
If $\Gamma > 0$, then over the interval $[t, T]$, the user receives a payment from the exchange whereas if $\Gamma < 0$, the user pays the exchange.
When the user pays the exchange, their profits are distributed by the exchange to the other side of the market.
For brevity, we use the notation $\Gamma_{i,t,T} = \Gamma(\mathfrak{p}_i, t, T)$.

We also note that the funding rate, by construction, is zero-sum when one considers entire set of exchange positions $\overline{\mathcal{P}}_n$.
To see, this note the following:
\begin{equation}\label{eq:funding-zero-sum}
\sum_{\mathfrak{p} \in \overline{\mathcal{P}}_n} \gamma_t p_t\,(b q) = \sum_{\mathfrak{p} \in \overline{\mathcal{L}}(\mathcal{P}_n)} \gamma_t p_t\,(q) - \sum_{\mathfrak{p} \in \overline{\mathcal{S}}(\mathcal{P}_n)} \gamma_t p_t\,(q) = 0
\end{equation}
where the last equality follows from the fact that $\mathfrak{p} \in \overline{\mathcal{L}}(\mathcal{P}_n) \iff \overline{\mathfrak{p}} \in \overline{\mathcal{S}}(\mathcal{P}_n)$.

\iparagraph{Example.}
To make the linear funding rule concrete, take $\kappa=1$ and suppose the oracle and mark follow
$(\hat p_0,p_0)=(1,1)$, $(\hat p_1,p_1)=(1.5,1.4)$, $(\hat p_2,p_2)=(1.25,1.3)$, with quantities unchanged (so $\tfrac{L}{S}=\tfrac{3}{5}$).
Then with $\kappa=1$, $\gamma_1=\tfrac{L}{S}-\tfrac{p_1}{\hat p_1}=0.6-\tfrac{1.4}{1.5}\approx -0.3333$ and $\gamma_2=0.6-\tfrac{1.3}{1.25}=-0.44$ (longs pay shorts at both steps).
Per-step funding cash to each position is $\gamma_t p_t\,(b_i q_i)$ (positive means received, negative means paid):
\[
\begin{array}{c|ccccc}
 t & \mathfrak p_A & \mathfrak p_B & \mathfrak p_C & \mathfrak p_D & \mathfrak p_E \\
 \hline
 0 & 0.0000 & 0.0000 & 0.0000 & 0.0000 & 0.0000 \\
 1 & -0.4667 & -0.4667 & +1.8667 & -0.4667 & +0.4667 \\
 2 & -0.5720 & -0.5720 & +2.2880 & -0.5720 & +0.5720 \\
\end{array}
\]
Values follow Eq.~\eqref{eq:single-time-funding} using $(\hat p_t,p_t)$ above.

\paragraph{Profit and Loss.}
Given funding rates, one can write the explicit profit and/or loss that a trader faces during the lifetime of their position.
We define the profit-and-loss (PNL) of duration $T > 0$, $\mathsf{PNL}_{s:T} : (\reals_+^4 \times \{-1, 1\}) \times \reals_+^T \times \reals_+^T \rightarrow \reals$ as:
\begin{equation}\label{eq:PNL}
\mathsf{PNL}_{s:T}((q, c, \ell, t, b), p_{1:T}, \hat{p}_{1:T}) = \ones_{s\leq t < T} \left( b q (p_{\hat{T}} - p_t) + \Gamma((q, c, t, b), t, \hat{T}) \right) 
\end{equation}
where $\hat{T} = \min(T, \tau)$ is the last time the position is solvent.
We will use the shorthand $\mathsf{PNL}_T = \mathsf{PNL}_{0:T}$.
We will formally define the quantity $\tau$ shortly when we discuss liquidations, but for now think of it as a martingale stopping time for position solvency.
This condition states that the total position of the user is their collateral plus the net change from funding costs.
A natural consequence of the zero-sum nature of funding is that the total PNL is also zero-sum:
\[
\sum_{\mathfrak{p} \in \overline{\mathcal{P}}_n} \mathsf{PNL}_T(\mathfrak{p}, p_{1:T}, \hat{p}_{1:T}) = \sum_{\mathfrak{p} \in \overline{\mathcal{P}}_n} b q (p_{\hat{T}} - p_0) + \Gamma(\mathfrak{p}, t, \hat{T}) = 0
\]
The first term sums to zero, because there will be offsetting positions with $b = -1, b = 1$ and the second term is zero via~\eqref{eq:funding-zero-sum}.

\iparagraph{Example.}
Under the additive funding convention in Eq.~\eqref{eq:PNL} with the prices, open interest, and funding above, PNL over $[0,1]$ and $[0,2]$ is:
\begin{align*}
\mathsf{PNL}_{0:1}(\mathfrak p_A) &= -0.0667,&\quad \mathsf{PNL}_{0:2}(\mathfrak p_A) &= -0.7387,\\
\mathsf{PNL}_{0:1}(\mathfrak p_B) &= -0.0667,&\quad \mathsf{PNL}_{0:2}(\mathfrak p_B) &= -0.7387,\\
\mathsf{PNL}_{0:1}(\mathfrak p_C) &= +0.2667,&\quad \mathsf{PNL}_{0:2}(\mathfrak p_C) &= +2.9547,\\
\mathsf{PNL}_{0:1}(\mathfrak p_D) &= -0.0667,&\quad \mathsf{PNL}_{0:2}(\mathfrak p_D) &= -0.7387,\\
\mathsf{PNL}_{0:1}(\mathfrak p_E) &= +0.0667,&\quad \mathsf{PNL}_{0:2}(\mathfrak p_E) &= +0.7387.\end{align*}

\paragraph{Equity.}
For each position $\mathfrak{p}_{i,t}$, one can view the assets of the position as $c_{i, t}$ (\eg~the margin posted by the user) and the liabilities of the users as $-\mathsf{PNL}_T(\mathfrak{p}_{i,t}, p_{1:T}, \hat{p}_{1:T})$.
The liabilities are the negative of PNL because if the user has a loss, then they have positive liabilities to the exchange (\eg~they owe the exchange money).
On the other hand, if the user has a gain, then the exchange owes the user money (and is a negative liability).
The \emph{equity} of a position $\mathfrak{p}_{i, t}$, $e(\mathfrak{p}_{i, t})$ is simply the assets of the position minus the liabilities:
\begin{equation}\label{eq:equity}
e(\mathfrak{p}_{i, t}, p_{1,T}, \hat{p}_{1,T}) = c_{i, t} + \mathsf{PNL}_T(\mathfrak{p}_{i, t}, p_{1:T}, \hat{p}_{1:T})
\end{equation}
For notational convenience, we will write the terminal equity shorthand
\[
e_T(\mathfrak{p}_{i,t}) = e(\mathfrak{p}_{i,t},\ p_{1:T},\ \hat{p}_{1:T}).
\]
A position is said to be \emph{totally insolvent} if $e_T(\mathfrak{p}_{i, t}) < 0$.
Traders can update their collateral $c_{i, t}$ or close positions to avoid fully insolvency.
We define the random variable $\tau(\mathfrak{p}_{i, t}, p_{1:t})$ as the first time that a position is insolvent, \ie~
\[
    \tau(\mathfrak{p}_{i, t}, p_{1:t}) = \min\{ s \leq t : e(\mathfrak{p}_{i, s}, p_{1:s}) \leq \mu p_s q_i \}
\]
For any time index $t$, we will refer to the \emph{winner} and \emph{loser} index sets as
\[
 \mathcal{W}_t\ =\ \{\,i:\ e(\mathfrak p_{i,t})>0\,\},\qquad
 \mathcal{L}_t\ =\ \{\,i:\ e(\mathfrak p_{i,t})<0\,\}.
\]
Write their sizes as $k_t=|\mathcal{W}_t|$ and $m_t=|\mathcal{L}_t|$. We will use $\mathcal{W}_T,\mathcal{L}_T$ for the terminal sets at horizon $T$.


\iparagraph{Example.}
To illustrate the maintenance test, let $\mu=0.10$ and reuse the $[0,1]$ price and funding path above. Equities at $t=1$ (using $c_{i,1}=c_i$) are
\begin{align*}
e(\mathfrak p_A)&=2-0.0667\approx 1.9333, & e(\mathfrak p_B)&=\tfrac{2}{3}-0.0667\approx 0.6000, & e(\mathfrak p_C)&=\tfrac{8}{3}+0.2667\approx 2.9334,\\
e(\mathfrak p_D)&=\tfrac{2}{19}-0.0667\approx 0.0386, & e(\mathfrak p_E)&=\tfrac{10}{99}+0.0667\approx 0.1677.
\end{align*}
At $t=2$ (using the same path and $c_{i,2}=c_i$), equities are
\begin{align*}
e(\mathfrak p_A)&\approx 1.2613, & e(\mathfrak p_B)&\approx -0.0720, & e(\mathfrak p_C)&\approx 5.6214,\\
e(\mathfrak p_D)&\approx -0.6334, & e(\mathfrak p_E)&\approx 0.8397.
\end{align*}
Note that under this price move, there are now negative equity positions that are liquidatable.

\paragraph{Maintenance Margin.}
As spot and future prices change, an exchange needs to enforce dynamic collateral requirements to avoid insolvencies.
The \emph{maintenance margin ratio}, $m_{\mu}$, represents the ratio of collateral to notional exposure that a trader must maintain under adverse price moves.
With $m_{\mu}\in(0,1)$ and equity from eq.~\eqref{eq:equity}, a position is maintenance-insolvent (and liquidatable) when
\begin{equation}\label{eq:maintenance-margin}
e(\mathfrak p_{i,t})\ \le\ m_{\mu}\,p_t\,|q_i|.
\end{equation}
Note that maintenance margin depends on the equity of a position (\eg~includes the position's PNL) whereas initial margin only depends on the initial cash position.
If a user violates~\eqref{eq:maintenance-margin} due to a price move, they can add collateral within a specified time interval to keep their position open.
Otherwise, the venue liquidates the position.

\iparagraph{Example.}
Take $m_{\mu}=0.10$ and reuse the $[0,1]$ path above with $p_1=1.4$ and the computed equities
\[
e(\mathfrak p_A)\approx 1.9333,\quad e(\mathfrak p_B)\approx 0.6000,\quad e(\mathfrak p_C)\approx 2.9334,\quad e(\mathfrak p_D)\approx 0.0386,\quad e(\mathfrak p_E)\approx 0.1677.
\]
The maintenance thresholds are $m_{\mu} p_1|q|=(0.14,\ 0.14,\ 0.56,\ 0.14,\ 0.14)$ for $(A,B,C,D,E)$. Here $e(\mathfrak p_D)\le m_{\mu} p_1|q|$ while others remain above the threshold.
Position $D$ is maintenance-insolvent and would need to add collateral or be liquidated.

\paragraph{Leverage Mass.}
The final quantity we will define aims to measure how much leverage the winning (positive equity) and losing (negative equity) sides have on the exchange.
This will serve as a coarse measure of an exchange's risk exposure and will be used in~\S\ref{sec:negative}.
We define a natural aggregate leverage quantity for both the winning and losing sides that normalizes leverage by equity, allowing us to compare traders with very different margin ratios on a common ``true asset composition'' scale.
For $i\in\mathcal{W}_t$ define the \emph{winner effective leverage} $\lambda^+_{i,t}=\frac{n_{i,t}}{e(\mathfrak p_{i,t})}$, and for $i\in\mathcal{L}_t$ the \emph{loser effective leverage} $\lambda^-_{i,t}=\frac{n_{i,t}}{|e(\mathfrak p_{i,t})|}$. The \emph{winner} and \emph{loser leverage masses} are
\begin{equation}
\label{eq:leverage-mass-def}
 \ell^+_t\ =\ \sum_{i\in\mathcal{W}_t} \lambda^+_{i,t},\qquad
 \ell^-_t\ =\ \sum_{i\in\mathcal{L}_t} \lambda^-_{i,t},\qquad
 \bar \ell^{\pm}_t\ =\ \frac{\ell^{\pm}_t}{n}\ (\text{per–trader}).
\end{equation}
Note that unlike the leverage $\ell_{i, t}$, we divide by the equity (which can be larger or smaller than leverage, depending on the trader's PNL).

\iparagraph{Example.}
Consider $t=1$ with $p_1=1.40$. Then $n_{i,1}= p_1 q_i = 1.4$ for $i\in\{A,B,D,E\}$ and $5.6$ for $C$.
Using the equities at $t=1$ above, all five are winners and
\begin{align*}
\lambda^+_{A,1}&= \tfrac{1.4}{1.9333} \approx 0.7246, &
\lambda^+_{B,1}&= \tfrac{1.4}{0.6000} = 2.3333, &
\lambda^+_{C,1}&= \tfrac{5.6}{2.9334} \approx 1.9090,\\
\lambda^+_{D,1}&= \tfrac{1.4}{0.0386} \approx 36.269, &
\lambda^+_{E,1}&= \tfrac{1.4}{0.1677} \approx 8.353.
\end{align*}
Hence $\ell^+_1\approx 49.59$; the large $\lambda^+_{D,1}$ reflects $D$'s very small equity at $t=1$.
Recall $D$'s opening leverage was only $\ell_{D,0}\approx 9.5$, so effective leverage can greatly exceed raw leverage when equity has been eroded by losses.
At $t=2$ with $p_2=1.30$, $n_{i,2}=p_2|q_i|$ equals $1.3$ for $i\in\{A,B,D,E\}$ and $5.2$ for $C$.
From the equity example above, the winner/loser sets are $\mathcal W_2=\{A,C,E\}$ with $e_A\approx1.2613$, $e_C\approx5.6214$, $e_E\approx0.8397$ and $\mathcal L_2=\{B,D\}$ with $e_B\approx-0.0720$, $e_D\approx-0.6334$.
Winner effective leverages are
\begin{align*}
\lambda^+_{A,2}&= \tfrac{1.3}{1.2613} \approx 1.031, &
\lambda^+_{C,2}&= \tfrac{5.2}{5.6214} \approx 0.925, &
\lambda^+_{E,2}&= \tfrac{1.3}{0.8397} \approx 1.548,
\end{align*}
so $\ell^+_2\approx 3.504$.
Loser effective leverages are
\begin{align*}
\lambda^-_{B,2}&= \tfrac{1.3}{|{-}0.0720|} \approx 18.05, &
\lambda^-_{D,2}&= \tfrac{1.3}{|{-}0.6334|} \approx 2.052,
\end{align*}
so $\ell^-_2\approx 20.102$.
The ratio of leverage masses $\ell^-_2 / \ell^+_2 \approx 5.7$ shows that, at this horizon, most effective leverage sits on the losing side of the book.
Here $B$ and $D$ have the same notional $n_{i,2}$ but very different effective leverages, illustrating how $\lambda^{\pm}_{i,t}$ can diverge from the raw leverage $\ell_{i,t}$ when equities differ.

\subsection{Liquidations}
When a user's position is insolvent,~\ie~\eqref{eq:maintenance-margin} holds, the exchange must \emph{liquidate} the position --- that is, remove some or all of the position from $\mathcal{P}_n$.
Liquidations are used to ensure that the exchange itself remains solvent, which ensures that traders can withdraw their funds and profits as expected.
The process of liquidating a position is probabilistic, however, and can fail in a number of ways.

We will illustrate the high-level process of liquidation using bankruptcy and liquidation prices.
Then we will describe how exchanges can lose solvency when liquidations fail, leading to the usage of autodeleveraging mechanisms.
Our description of liquidations is minimal, avoiding formal mathematical description unless needed, as we are concerned with liquidation failure versus liquidation mechanics.
We refer the interested reader to~\cite{Soska2021BitMEX,perez2021liquidations,qin2021empirical,AngerisChitra2023PerpsSIAM} for detailed accounts of liquidation mechanisms and modeling on centralized and decentralized venues.

\subsubsection{Liquidation Prices}
Before defining how liquidations work, we need to define criteria for when a position is eligible for liquidation.
These criteria depend on when certain prices are reached where a trader position has low or negative equity.
We focus on defining three prices: bankruptcy price (zero equity), liquidation price (low equity), and execution price (actual realized price of a liquidation).
In practice, liquidation mechanisms often use more price-based variables to decide on execution.
However, all mechanisms define these three prices, allowing us to abstract liquidations to mechanisms involving these prices.

\paragraph{Bankruptcy Price.}
The \emph{bankruptcy price} of a position $\mathfrak{p}_{i, t}$, $p^{bk}(\mathfrak{p}_{i, t})$, is the highest price at which the position is totally insolvent.
The threshold condition for a position to be totally insolvent is $e_T(\mathfrak{p}_{i, t}) = 0$:
\[
    0 = e_T(\mathfrak{p}_{i,t}) = c_{i, t} + \mathsf{PNL}_T(\mathfrak{p}_{i, t}, p_{1:T}, \hat{p}_{1:T}) = c_{i,t} + b_i q_i (p^{bk}(\mathfrak{p}_{i, t}) - p_t) + \Gamma(\mathfrak{p}_{i, t}, t, \hat{T})
\]
Rearranging this gives:
\begin{equation}\label{eq:bankruptcy-price}
p^{bk}(\mathfrak{p}_{i, t}) = \max\left(p_t - \frac{c_{i,t} + \Gamma(\mathfrak{p}_{i, t}, t, \hat{T})}{b_i q_i}, 0 \right)
\end{equation}
One can view the bankruptcy price as the worst-case liquidation price, in the sense that if the position is liquidated at any $p < p^{bk}(\mathfrak{p}_{i, t})$, then other traders of the exchange will realize a loss.
Such a loss, which is either bourne by the exchange or by other traders, is known as \emph{bad debt} (see~\S\ref{subsec:exchange-solvency}).

The bankruptcy price also constrains the maximum leverage that a position can have.
Suppose that the funding is zero, \eg~$\Gamma(\mathfrak{p}_{i, t}, t, \hat{T}) = 0$.
The bankruptcy price then simplifies to:
\begin{equation}\label{eq:bankruptcy-price-no-funding}
p^{bk}(\mathfrak{p}_{i, t}) = \max\left(p_t - \frac{c_{i,t}}{b_i q_i}, 0\right) = \max\left(p_t - \frac{b_i}{\ell_{i,t}}\, p_t, 0\right) = p_t \max\left(1 - \frac{b_i}{\ell_{i,t}}, 0\right)
\end{equation}
This formula represents the common maxim that a perpetuals position with $\ell_i \geq 1$ times leverage will be liquidated when the price moves by $\frac{1}{\ell_i}$\% in the direction opposite to $b_i$; see, for example,~\cite{Binance2025CollateralLeverageUpdate,he2022fundamentals}.
For instance, a maximal 10x leveraged position will be liquidated when the price moves by 10\% from the initial price.
Detailed numerical examples of bankruptcy prices are provided in Appendix~\ref{app:liquidation-mechanics}.

\paragraph{Liquidation Price.}
To avoid bad debt, an exchange defines a \emph{liquidation price} $\hat{p}^{liq}(\mathfrak{p}_{i, t}) \ge p^{bk}(\mathfrak{p}_{i, t})$, when the position is liquidatable at time $t$.
Generally, the liquidation price is a spot price derived from the oracle versus a futures price quoted by the exchange.
This makes market manipulation --- creating and executing non-economically rational orders to force a liquidation --- costly to perform.
When the maintenance margin condition~\eqref{eq:maintenance-margin} is satisfied, the exchange attempts to partially liquidate the position.
The $\mu$-liquidation price, $p^{liq}(\mathfrak{p}_{i,t}, \mu)$ is defined as the maximal price where~\eqref{eq:maintenance-margin} holds. To avoid ambiguity about the base price used in $p^{bk}$, we write $\hat p^{liq}$ directly in terms of the entry price $p_{t_i}$ (independent of bankruptcy):
\[
\hat p^{liq}(\mathfrak p_{i,t},\mu) =
\begin{cases}
\displaystyle \max\!\left( \frac{\,p_{t_i} - (c_{i,t}+\Gamma(\mathfrak p_{i,t},t,\hat T))/q_i\,}{1-\mu},\ 0\right), & b_i=+1,\\[10pt]
\displaystyle \frac{\,p_{t_i} + (c_{i,t}+\Gamma(\mathfrak p_{i,t},t,\hat T))/|q_i|\,}{1+\mu}, & b_i=-1.
\end{cases}
\]
Detailed numerical examples of liquidation prices are provided in Appendix~\ref{app:liquidation-mechanics}.

\paragraph{Execution Price.}
When $\hat p_t < p^{liq}(\mathfrak p_{i,t},\mu)$ a position is liquidatable. The venue then sells (if $b_i=+1$) or buys (if $b_i=-1$) a slice of size $\Delta q$, realizing an execution price $p^{exec}(\mathfrak p_{i,t},\mu,\Delta q)$.
Whether a liquidation creates bad debt depends on the location of $p^{exec}$ relative to the bankruptcy price $p^{bk}$:
\begin{itemize}[leftmargin=12pt]
  \item \emph{Long} ($b_i=+1$): no shortfall if $p^{exec}\ge p^{bk}$; otherwise the realized shortfall is $D_t=(p^{bk}-p^{exec})\,\Delta q$.
  \item \emph{Short} ($b_i=-1$): no shortfall if $p^{exec}\le p^{bk}$; otherwise the realized shortfall is $D_t=(p^{exec}-p^{bk})\,\Delta q$.
\end{itemize}
Shortfalls are first absorbed by the insurance fund (\S\ref{subsec:exchange-solvency}) up to its balance.
Any residual shortfall not covered by the insurance fund is socialized via ADL (\S\ref{subsec:adl}).
The choice of liquidation size $\Delta q$ and impact parameter $\alpha$ together determines whether thin‑equity positions can typically be closed without realizing bad debt.
Detailed numerical examples of execution prices and shortfalls are provided in Appendix~\ref{app:liquidation-mechanics}.

\subsubsection{Liquidation Mechanics}\label{subsec:liquidation-mechanics}
Given liquidation prices, a natural question is how liquidations are mechanically executed.
At a high level, the exchange seizes a low equity position's cash and collateral and sells it to the market.
Such a sale can realize a profit or a loss for the exchange and is best thought of as a trading strategy itself.
We will provide a high-level description of liquidation mechanisms as trading strategies, with the caveat that there are many idiosyncrasies in practical implementations (see, \eg~\cite{BinanceADL,HyperliquidDocsLiquidations,BitMEXADL} for venue documentation and differences between auctions, order‑book liquidations, and RFQ-style closures).

\paragraph{Liquidation Costs.}
Most exchanges charge penalties to users who are liquidated as a means of disincentivizing liquidations and moral hazard.
These fees come in three flavors: fixed charges, insurance fund fees, and liquidation incentives.
The fixed fees correspond to reimbursement for gas and/or operational costs that an exchange realizes for performing a liquidation.
The insurance fund fees are proportional to the liquidation price and allow for the exchange to cover bad debt.
Finally, the liquidation incentives are used to encourage third-party actors known as liquidators to hold the risk of buying the position of size $\Delta q$ from the exchange and exiting it profitably.

In decentralized exchanges, liquidators are usually any market participant with enough capital who can connect to the exchange.
On the other hand, in centralized exchanges, liquidators are usually whitelisted parties approved by the exchange to perform liquidations
We denote the set of liquidation costs for a liquidation of size $\Delta q$ at time $t$ as $\tau_t(\Delta q) \in \reals_+$, which represents the cash cost paid by the user upon liquidation.
Detailed numerical examples of liquidation costs are provided in Appendix~\ref{app:liquidation-mechanics}.

\paragraph{Liquidation Strategies.}
The policy by which an exchange chooses the liquidation quantity $\Delta q$ is known as the \emph{liquidation strategy}, $L : \mathcal{P}_n \times \reals^T_+ \times \reals^T_+ \rightarrow \reals$.
The strategy $L(\mathfrak{p}_{i,t}, p_{1:T}, \hat{p}_{1:T})$ outputs a quantity $\Delta q$ to liquidate.
Generally speaking, the strategy has some model of the environment (\eg~the price impact function of the spot and futures exchanges) and utilizes that to pick the quantity.
Most exchanges utilize a simple a greedy liquidation strategy that chooses the minimal $\Delta q$ to ensure that the equity does not satisfy~\eqref{eq:maintenance-margin}.
Such strategies choose $\Delta q$ such that an equality of the form
\begin{equation}\label{eq:liquidation-strategy}
    e_T(\mathfrak{p}_{i,t}) + b\,\Delta q\,(p^{exec}_t - p_t) - \tau_t(\Delta q) = \mu p_t (q - \Delta q)
\end{equation}
approximately holds.
For notational simplicity, we assume that $L(\mathfrak{p}_{i,t}, p_{1:T}, \hat{p}_{1:T}) = 0$ if the position $\mathfrak{p}_{i,t}$ is not liquidatable.

As an explicit example, suppose that we have linear price impact, \ie~$p^{exec}=p_t\mp \tfrac{\alpha}{2}\,\Delta q$.
We receive a quadratic equation in $\Delta q$:
\[
\tfrac{\alpha}{2}\,\Delta q^2\; -\; \mu p_t\,\Delta q\; +\; (\mu p_t q - e_T + \tau)\;=\;0,
\]
The greedy liquidation strategy simply takes the smallest feasible root of this quadratic and uses it as the liquidation quantity $\Delta q$.
We refer the reader to Appendix~\ref{app:liquidation-mechanics} for detailed numerical examples of this strategy and related liquidation procedures.

\paragraph{Bad Debt.}
A liquidation for position $\mathfrak{p}_{i,t}$ generates \emph{bad debt} if the post-liquidation equity (including execution price and fees) is non-positive.
Formally, we define an adjusted terminal equity $\tilde{e}_T(\mathfrak{p}_{i,t}, p_{1:T}, \hat{p}_{1:T}, \Delta q)$ that incorporates liquidation costs:
\begin{equation}\label{eq:adjusted-equity}
\tilde e_T\;=\; e_T\; +\; b\,\Delta q\,\big(p^{exec}-p_t\big)\; -\; \tau_t(\Delta q)
\end{equation}
Intuitively, relative to marking the entire position at $p_t$, realizing a slice $\Delta q$ at $p^{exec}$ changes equity by the slippage term $b\,\Delta q\,(p^{exec}-p_t)$, and fees reduce equity via $\tau_t(\Delta q)$.
We say that a liquidation creates bad debt if $\tilde{e}_T(\mathfrak{p}_{i,t}, p_{1:T}, \hat{p}_{1:T}, \Delta q) < 0$, \ie~the liquidation leaves a residual liability for the venue.

Note that in practice, a liquidation strategy might retry or reattempt to liquidate the position repeatedly.
We elide formulating the details of such a liquidation strategy here for simplicity, but note that our model can be easily extended to account for this.
We define the \emph{total bad debt} or \emph{shortfall} of an exchange, $D_t$ given a liquidation strategy $L$ is,
\begin{equation}\label{eq:total-bad-debt}
    D_t(L) = \sum_{\mathfrak{p} \in \mathcal{P}_n} \max(0, -\tilde{e}(\mathfrak{p}_{i,t}, p_{1:T}, \hat{p}_{1:T}, \Delta q_i)) = \sum_{\mathfrak{p} \in \mathcal{P}_n} -\tilde{e}(\mathfrak{p}_{i,t}, p_{1:T}, \hat{p}_{1:T}, \Delta q_i)_-
\end{equation}
where $\Delta q_i = L(\mathfrak{p}_{i,t}, p_{1:T}, \hat{p}_{1:T})$.
This represents the shortfall that the exchange must cover to be solvent (\S\ref{subsec:exchange-solvency}).
Detailed numerical examples of bad debt are provided in Appendix~\ref{app:liquidation-mechanics}.

\begin{figure}[!ht]
  \centering
  \begin{subfigure}{0.9\textwidth}
    \centering
    \begin{tikzpicture}[x=1cm,y=3.4cm]
      \def\barw{0.35}
      \draw[->,thick] (0.5,-0.32) -- (0.5,0.52) node[left,font=\scriptsize] {$e_{T,i}$};
      \draw[->,thick] (0.5,0) -- (6.6,0) node[below right,font=\scriptsize] {equity ranking ($k \rightarrow 1$)};
      \draw[dashed,thick,NavyBlue] (0.5,0.10) -- (6.5,0.10);
      \node[font=\scriptsize,anchor=south west,text=NavyBlue] at (0.6,0.11) {$10\%$ liquidation threshold};
      \draw[densely dotted] (0.5,0) -- (6.5,0);
      \node[font=\scriptsize,anchor=east] at (0.48,-0.01) {bankruptcy};
      \foreach \i/\lab in {1/k,2/{k-1},3/{k-2},4/{\cdots},5/2,6/1}{
      \path[fill=Goldenrod!25,draw=Goldenrod!80!black,densely dashed] (5.5-\barw,0) rectangle (5.5+\barw,0.20);
        \draw (\i+0.5,0) -- (\i+0.5,-0.015);
        \node[below,font=\scriptsize] at (\i+0.5,-0.02) {$\lab$};
      }
      \foreach \i/\val in {1/-0.25,2/-0.12,3/0.05,4/0.18,5/0.30,6/0.45}{
        \pgfmathsetmacro{\barheight}{\val}
        \pgfmathsetmacro{\xL}{0.5+\i-\barw}
        \pgfmathsetmacro{\xR}{0.5+\i+\barw}
        \ifdim \barheight pt>0pt
          \path[fill=ForestGreen!65,draw=ForestGreen!80!black] (\xL,0) rectangle (\xR,\barheight);
        \else
          \path[fill=BrickRed!70,draw=BrickRed!80!black] (\xL,0) rectangle (\xR,\barheight);
        \fi
      }
      \path[fill=Orange!70,draw=OrangeRed!80] (3.5-\barw,0) rectangle (3.5+\barw,0.05);
      \node[font=\scriptsize,text=OrangeRed,anchor=south] at (3.5,-0.28) {liquidatable};
      \node[font=\scriptsize,anchor=south] at (3.4,0.47) {$p_t = 0.88\,p_0$};
      \node[font=\scriptsize,anchor=south] at (5.35,0.47) {$\hat{p}_t = 0.80\,p_0$};
      \node[font=\scriptsize,anchor=west] at (0.6,-0.32) {$e_{T,k}<0$};
    \end{tikzpicture}
    \caption{Baseline 10\% liquidation buffer: mark price $p_t=0.88p_0$ is 10\% above bankruptcy $\hat{p}_t=0.80p_0$.}
  \end{subfigure}

  \vspace{0.75em}

  \begin{subfigure}{0.9\textwidth}
    \centering
    \begin{tikzpicture}[x=1cm,y=3.4cm]
      \def\barw{0.35}
      \draw[->,thick] (0.5,-0.32) -- (0.5,0.52) node[left,font=\scriptsize] {$e_{T,i}$};
      \draw[->,thick] (0.5,0) -- (6.6,0) node[below right,font=\scriptsize] {equity ranking ($k \rightarrow 1$)};
      \draw[dashed,thick,NavyBlue] (0.5,0.10) -- (6.5,0.10);
      \node[font=\scriptsize,anchor=south west,text=NavyBlue] at (0.6,0.15) {$10\%$ liquidation threshold};
      \draw[densely dotted] (0.5,0) -- (6.5,0);
      \node[font=\scriptsize,anchor=east] at (0.48,-0.01) {bankruptcy};
      \foreach \i/\lab in {1/k,2/{k-1},3/{k-2},4/{\cdots},5/2,6/1}{
      \path[fill=Goldenrod!25,draw=Goldenrod!80!black,densely dashed] (5.5-\barw,0) rectangle (5.5+\barw,0.16);
        \draw (\i+0.5,0) -- (\i+0.5,-0.015);
        \node[below,font=\scriptsize] at (\i+0.5,-0.02) {$\lab$};
      }
      \foreach \i/\val in {1/0.08,2/0.12,3/0.18,4/0.24,5/0.32,6/0.42}{
        \pgfmathsetmacro{\barheight}{\val}
        \pgfmathsetmacro{\xL}{0.5+\i-\barw}
        \pgfmathsetmacro{\xR}{0.5+\i+\barw}
        \ifdim \barheight pt>0pt
          \path[fill=ForestGreen!65,draw=ForestGreen!80!black] (\xL,0) rectangle (\xR,\barheight);
        \else
          \path[fill=BrickRed!70,draw=BrickRed!80!black] (\xL,0) rectangle (\xR,\barheight);
        \fi
      }
      \draw[->,thick,NavyBlue] (5.2,0.015) -- (5.2,0.11);
      \node[font=\scriptsize,anchor=south] at (3.2,0.47) {$p_{t+\Delta} = 0.94\,p_0$};
      \node[font=\scriptsize,anchor=south] at (5.45,0.47) {$\hat{p}_{t+\Delta} = 0.93\,p_0$};
      \node[font=\scriptsize,anchor=west,text=ForestGreen!70!black] at (0.6,0.45) {all positions solvent};
    \end{tikzpicture}
    \caption{Price recovery: increasing the mark price by $\Delta p_t=+0.06p_0$ and the bankruptcy level by $\Delta \hat{p}_t=+0.13p_0$ moves them to $p_{t+\Delta}=0.94p_0$ and $\hat{p}_{t+\Delta}=0.93p_0$, leaving every account with positive equity and a residual mark--bankruptcy buffer of about $1\%$.}
  \end{subfigure}

  \vspace{0.75em}

  \begin{subfigure}{0.9\textwidth}
    \centering
    \begin{tikzpicture}[x=1cm,y=3.4cm]
      \def\barw{0.35}
      \draw[->,thick] (0.5,-0.38) -- (0.5,0.52) node[left,font=\scriptsize] {$e_{T,i}$};
      \draw[->,thick] (0.5,0) -- (6.6,0) node[below right,font=\scriptsize] {ranking ($k \rightarrow 1$)};
      \draw[densely dotted] (0.5,0) -- (6.5,0);
      \node[font=\scriptsize,anchor=east] at (0.48,-0.01) {bankruptcy};
      \foreach \i/\lab in {1/k,2/{k-1},3/{k-2},4/{\cdots},5/2,6/1}{
        \draw (\i+0.5,0) -- (\i+0.5,-0.015);
        \node[below,font=\scriptsize] at (\i+0.5,-0.02) {$\lab$};
      }
      \foreach \i/\val in {1/-0.32,2/-0.24,3/-0.15,4/0.04,5/0.22,6/0.36}{
        \pgfmathsetmacro{\barheight}{\val}
        \pgfmathsetmacro{\xL}{0.5+\i-\barw}
        \pgfmathsetmacro{\xR}{0.5+\i+\barw}
        \ifdim \barheight pt>0pt
          \path[fill=ForestGreen!65,draw=ForestGreen!80!black] (\xL,0) rectangle (\xR,\barheight);
        \else
          \path[fill=BrickRed!70,draw=BrickRed!80!black] (\xL,0) rectangle (\xR,\barheight);
        \fi
      }
      \draw[->,very thick,BrickRed!90] (2,-0.05) -- (2,-0.3);
      \node[font=\scriptsize,anchor=south] at (3.4,0.47) {$p_t = 0.88\,p_0$};
      \node[font=\scriptsize,anchor=south] at (5.35,0.47) {$\hat{p}_t = 0.80\,p_0$};
      \node[font=\scriptsize,anchor=west] at (1.95,-0.34) {negative equity (insolvent)};
    \end{tikzpicture}
    \caption{Example of a negative-equity account inside the sorted stack.}
  \end{subfigure}

  \caption{Sorted equity profiles for stylized liquidation examples. Negative positions (red) appear on the left, positive positions (green) on the right. Dashed lines highlight the bankruptcy level and liquidation triggers.}
  \label{fig:liquidation-examples}
\end{figure}
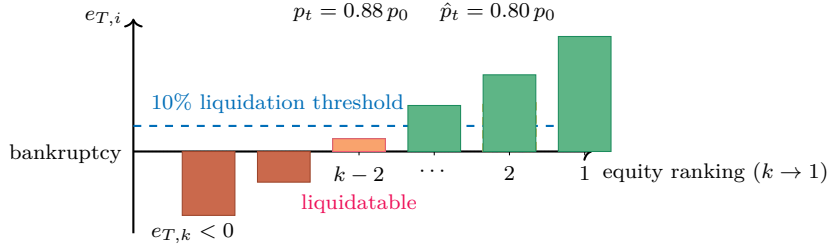
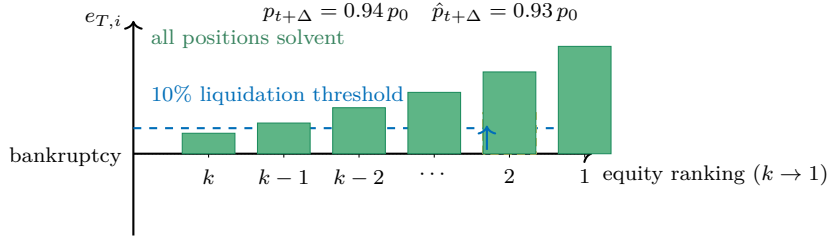
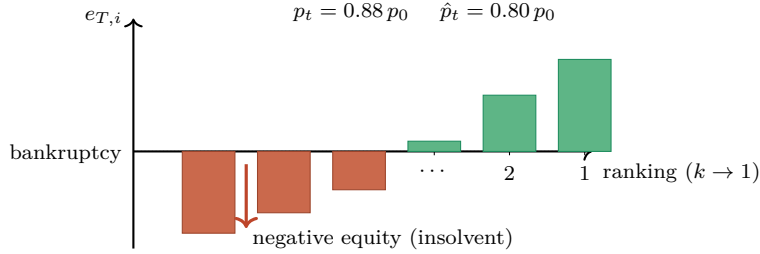

\paragraph{Anatomy of a Liquidation.}
Given the bankruptcy, liquidation, and execution prices, we can now describe the high-level algorithm that liquidations follow (see Appendix~\ref{app:liquidation-mechanics}).
We note that many live liquidation systems will have much more complex liquidation algorithms.
These complexities deal with the coordination costs of coordinating many parties (\eg~oracle provider, liquidators, spot order book liquidity) and precise models that exchanges use for their liquidation strategy.
However, we effectively lump all of these complexities into the definition of the liquidation strategy.
A detailed algorithmic description of the liquidation loop and a step-by-step example are provided in Appendix~\ref{app:liquidation-mechanics}.

\subsection{Exchange Solvency}\label{subsec:exchange-solvency}
In the previous section, we focused on liquidations as a means to remove individual trader solvencies.
However, there is also a `macroscopic' notion of solvency at the exchange level.
The goal of the exchange is to make sure that traders can realize their full profits and losses given the cash collateral held at the exchange.
In rare cases, an exchange might not be able to successfully execute a liquidation strategy, leading to the exchange being unable to pay unrealized but earned profits to some users.

In this section, we define solvency for exchanges and how exchanges use \emph{insurance funds} to try to ensure solvency.
An exchange's insurance fund is the first line of defense against insolvency.
However, if the insurance fund has insufficient balance, the exchange will need to utilize a autodeleveraging mechanism to socialize losses.
In this section, we will define solvency for exchanges and how insurance funds are constructed.

\paragraph{Exchange Solvency.}
The \emph{solvency} of an exchange is simply the total equity of all positions:
\begin{equation}\label{eq:solvency-condition}
\mathsf{Solv}_T(\mathcal{P}_n) = \sum_{\mathfrak{p} \in \mathcal{P}_n} e_T(\mathfrak{p}) = \sum_{\mathfrak{p} \in \mathcal{P}_n} c_{\mathfrak{p}} + \mathsf{PNL}_T(\mathfrak{p}, p_{1:T}, \hat{p}_{1:T})
\end{equation}
An exchange is said to be \emph{insolvent} if the following condition holds $\mathsf{Solv}_T(\mathcal{P}_n) \leq 0$.
The insolvency condition corresponds to when the total cash held by the exchange on behalf of users is less than the total accrued profits.
This mean that if this insolvency condition is hit, the exchange will not be able to payout all profitable traders when they withdraw from the exchange.
Note that this is a global condition on $\mathcal{P}_n$ as opposed to a per user constraint.

Most exchanges have a solvency threshold, $\delta > 0$, such that if it takes a global loss of $\delta$ units of equity to become truly insolvent, then the exchange is deemed approximately insolvent.
This notion of approximate insolvency is useful in practice, as it allows an exchange to have a risk parameter for tuning how aggressively it enters the ADL regime.
Formally, we define an exchange to be $\delta$-insolvent if the following condition holds $\mathsf{Solv}_{T}(\mathcal{P}_n) \leq \delta$.

\paragraph{Comparison to overcollateralized lending.}
In overcollateralized lending, where a user uses one crypto asset as collateral to borrow another crypto asset, there is a similar notion of a user's equity position.
However, in overcollateralized lending, the user's equity position is not global, but rather is local to the user.
The goal of a lending protocol is to ensure that the equity $e_{lend}(p) \geq 0$ for \emph{every} user's position $p$.
Ensuring per account solvency condition inherently forces overcollateralized lending protocols to offer far lower leverage than perpetuals exchanges.
From a mathematical standpoint, perpetual solvency is easier to satisfy, allowing for higher leverage --- there is only one global constraint that has to hold versus $O(n)$ local constraints for $n$ users.
However, in perpetuals exchanges, users who are solvent can face losses from users who are insolvent which cannot happen in isolated overcollateralized lending markets like Morpho. 

\paragraph{Insurance Funds.}
Many exchanges maintain an insurance fund to cover bad debt that arises from failed liquidations.
When a positive shortfall $D_t > 0$ is realized, the insurance fund is used to reduce $D_t$ to as close to zero as possible.
The insurance fund is typically funded using a portion of the exchange's revenue from transaction, liquidation, and funding fees.

Insurance reserves in crypto funds were first popularized alongside perpetual futures in the mid‑2010s (notably on BitMEX)~\citep{Soska2021BitMEX}.
A majority of perpetuals venues (centralized and on‑chain) maintain such reserves (\eg~BitMEX~\citep{BitMEXADL}, Binance~\citep{BinanceFuturesInsuranceFund}, Drift~\citep{DriftADLCode}, Bybit~\citep{BybitInsuranceFund}).
To illustrate the size of these funds, we note that mid-2025 estimates place OKX and Bybit insurance funds at roughly \$300M and \$150M, respectively~\citep{OKXBybitInsuranceComparison}.

We will first provide a simple model for how the insurance fund evolves over time.
Let $\mathsf{IF}_t\ge 0$ denote the insurance\-fund balance at time $t$.
Furthermore, let $\mathcal{T}_t$ be the multiset of liquidations at time $t$ with sizes $\{\Delta q_j\}_{j\in\mathcal{T}_t}$ and execution costs $\tau_t(\Delta q_j)\in\reals_+$ (\cf~eq.~\eqref{eq:adjusted-equity}).
Let $V_t\ge 0$ denote the gross traded volume (absolute quantity) in the contract at time $t$.
We introduce three parameters to model insurance fund growth: $\alpha, \beta, \eta \in [0, 1]$.
The $\alpha, \beta, \eta$ parameters controls how much of liquidation, funding, and trading fee revenue is added $\mathsf{IF}_t$, respectively.

Given these parameters, the time evolution of the insurance fund is given by:
\begin{equation}
\mathsf{IF}_{t+1}
= \mathsf{IF}_t
+ \alpha\sum_{j\in\mathcal{T}_t} \tau_t(\Delta q_j)
+ \eta\,p_t V_t
+ \beta\sum_{i=1}^n |\gamma_t|\,p_t\,q_i
- \min\{\mathsf{IF}_t,\, D_t\}
\label{eq:if-process}
\end{equation}
Note that the positive contributions represent the revenue sharing terms whereas the negative contribution represents the cover of a shortfall using the insurance fund.

The main quantity of interest, given $\mathsf{IF}_t$, is the \emph{residual shortfall}.
This represents the amount of bad debt that cannot be covered by the insurance fund and needs to be socialized via ADL (\S\ref{subsec:adl}).
Formally, we define the residual shortfall as:
\begin{equation}
R_t=D_t-\min\{\mathsf{IF}_t,\,D_t\}=(D_t-\mathsf{IF}_t)_+
\label{eq:adl-residual}
\end{equation}
We note that when $R_t > 0$, an ADL mechanism is triggered.
We term times $t$ such that $R_t > 0$ as~\emph{breach times}.
The frequency of breach events is referred to as~\emph{breach frequency} or \emph{breach rate}.

\iparagraph{Example.}
Continuing from the previous liquidation example, the bad debt realized at $t=4$ by $\mathfrak{p}_E$ is
\[
 D_4\;=\;(p^{exec}_E-p^{bk}(\mathfrak p_{E,4}))\,\Delta q_E\;\approx\;0.399.
\]
Let $\mathsf{IF}_4$ denote the pre-coverage fund balance. The coverage is $\min\{\mathsf{IF}_4, D_4\}$ and the ADL residual is $R_4=(D_4-\mathsf{IF}_4)_+$.
We assume that the fund evolves via eq.~\eqref{eq:if-process} and model liquidation fees as affine in notional size:
\[
\tau_t(\Delta q) = \tau^{fix} + \phi^{mark}\,p_t\,\Delta q + \phi^{exec}\,p^{exec}_t\,\Delta q,
\]
where $\tau^{fix}$ is a fixed fee and $\phi^{mark}, \phi^{exec}$ are proportional rates on mark and execution price, respectively.
Using the Hyperliquid fee schedule $(\tau^{fix},\phi^{mark},\phi^{exec})=(0,20\,\mathrm{bps},10\,\mathrm{bps})$ (see \S\ref{subsec:liquidation-mechanics}), the fees are
\[
 \tau_4(\Delta q_E)
 =0 + 20\cdot 10^{-4}\cdot p_4\,\Delta q_E + 10\cdot 10^{-4}\cdot p^{exec}_E\,\Delta q_E
 \approx0.00365.
\]
Assuming $\gamma_4=0$ and neglecting trading volume $V_4$ for simplicity:
\begin{itemize}[leftmargin=12pt]
  \item \emph{Sufficiently sized fund ($R_4=0$).} If $\mathsf{IF}_4\ge D_4$, the debt is fully covered. No ADL occurs. The balance updates to
  \[
  \mathsf{IF}_5=\mathsf{IF}_4+\alpha\,\tau_4(\Delta q_E)-D_4.
  \]
  \item \emph{Insufficiently sized fund ($R_4>0$).} If $\mathsf{IF}_4 < D_4$, the fund is depleted to zero and pays $\mathsf{IF}_4$. The residual $R_4 = D_4-\mathsf{IF}_4$ is socialized via ADL (\S\ref{subsec:adl}). The new balance is simply the inflows:
  \[
  \mathsf{IF}_5=\alpha\,\tau_4(\Delta q_E).
  \]
\end{itemize}
Increasing $\alpha$, $\beta$, or $\eta$ builds $\mathsf{IF}_t$ faster, reducing the probability of $R_t > 0$.

\paragraph{Optimal Insurance Fund Size.}
A natural question to ask is, what is the optimal size for an insurance fund?
While this question is nuanced in practice, we provide a simple mean-field model that provides some intuition.
In particular, we determine the optimal fund size $K^*$ by minimizing a total cost function that balances the opportunity cost of capital against the expected franchise damage from ADL events.
Let $r > 0$ denote the opportunity cost of holding capital (\eg~the risk-free rate) and $\kappa > 0$ denote the reputation cost per unit of socialized loss (reflecting lost future volume).
The exchange minimizes
\[
\min_{K \ge 0} \; r K + \kappa \, \Expect[(D_T - K)_+].
\]
This is a classic newsvendor-type problem~\citep{ArrowHarrisMarschak1951}.
Classical result show that the optimal size is the Value-at-Risk of the deficit distribution at quantile $1 - r/\kappa$:
\[
K^* \;=\; \text{VaR}_{1 - r/\kappa}(D_T) \;=\; \inf \{ x : \Prob(D_T \le x) \ge 1 - r/\kappa \}.
\]
For completeness, we provide a derivation in Appendix~\ref{app:optimal-capital}.
Intuitively, as the reputation cost $\kappa$ increases relative to the cost of capital $r$, the exchange holds a larger buffer to push ADL events further into the tail.

\subsection{Autodeleveraging}\label{subsec:adl}
Any mechanism for reducing or closing profitable user positions to reduce an exchange's insolvency is termed \emph{autodeleveraging} (ADL).
Mathematically, ADL mechanisms attempt to liquidating positions in order to reduce the residual $R_t$.
This inherently means that users with positive equity (\eg~are profitable) will be forced to close their positions and not realize their full profit.

\paragraph{Contract space vs wealth space.}
Production ADL is executed in \emph{contract space}: the engine selects positions by a ranking score and forces contract-level reductions.
The theoretical analysis in this paper is written in \emph{wealth space} (equity haircuts and haircutable endowment).
To compare production ADL to model ADL, we need an explicit mapping from contract closures to wealth-space impacts.
This mapping is non-trivial: a close in contracts is not the same statement as haircutting 100\% of equity, because the economic effect depends on prices and on how much of the account's equity is attributable to the seized contracts.
Production ADL induces wealth-space haircuts only through the observation-model mapping (see~\S\ref{sec:numerics}).
It is critical to distinguish between:
\begin{itemize}
  \item \emph{Wealth-space queue} (theory object): a greedy allocation rule that ranks positions by a score and allocates haircuts in endowment space until the budget is met.
  \item \emph{Contract-space queue} (production): Hyperliquid's implementation ranks positions and closes contracts; this induces wealth-space haircuts only through the observation-model mapping (two-pass replay comparing ADL-on vs ADL-off equity outcomes).
  \item \emph{Wealth-pro-rata} (theory): each winner loses the same fraction of their haircutable endowment, defined in~\eqref{eq:pro-rata}.
  \item \emph{Contracts-pro-rata} (production): some exchanges allocate losses proportional to contract size; this is a different mechanism with different fairness properties.
\end{itemize}
Without this separation, readers conflate wealth-space theory results with contract-space implementation and misread empirical overshoot as a production ledger claim.
These are different objects: the theory queue is an abstraction that shares the same ranking idea as production, but does not reproduce contract-level execution details.

\paragraph{Equity decomposition and haircut numéraire.}
We decompose each position's equity into protected cash/principal and profit-and-loss:
\begin{equation}\label{eq:equity-decomp}
e_{i,t} = c_{i,t} + \mathrm{PNL}_{i,t},
\end{equation}
where $c_{i,t} \ge 0$ is the protected cash/principal (returned on forced close; not haircutable in production) and $\mathrm{PNL}_{i,t}$ is the profit-and-loss.
We define the \emph{haircutable endowment} $w_{i,t} \ge 0$ as the only quantity ADL may seize from position $i$ at time $t$.
In the corrected measurement model used throughout this paper, we specialize to \emph{PNL-only} haircuts:
\begin{equation}\label{eq:endowment-PNL}
w_{i,t} = (\mathrm{PNL}_{i,t})_+.
\end{equation}
This reflects the production reality that principal/collateral is returned when a position is force-closed; counting it as ``haircut capacity'' would be a numéraire error.
(For theoretical comparison, one may also consider an equity-haircut abstraction $w_{i,t} = (e_{i,t})_+$, but this is explicitly labeled as a theoretical simplification, not production.)

\paragraph{Deficit and capacity definitions.}
We define four key quantities.
The \emph{total shortfall} (deficit) and \emph{maximum shortfall} are computed from loser-side negative equity:
\begin{align*}
D_T(\mathcal{P}_n) &= \sum_{\mathfrak{p} \in \mathcal{P}_n} \big(-e_T(\mathfrak{p})\big)_+ \geq 0,  &&& \Delta_T(\mathcal{P}_n) &= \max_{\mathfrak{p} \in \mathcal{P}_n} \big(-e_T(\mathfrak{p})_-\big) \geq 0.
\end{align*}
The \emph{total haircut capacity} and \emph{maximum haircut capacity} are computed from the haircutable endowment:
\begin{align*}
U_T(\mathcal{P}_n) &= \sum_{\mathfrak{p} \in \mathcal{P}_n} w_T(\mathfrak{p}) \geq 0, &&& \upsilon_T(\mathcal{P}_n) &= \max_{\mathfrak{p} \in \mathcal{P}_n} w_T(\mathfrak{p}) \geq 0.
\end{align*}
For notational simplicity, we will use Roman letters for total quantities and Greek letters for maximum or extreme value quantities.
The shortfall represents the total amount of negative equity (loser-side), whereas the capacity represents the total amount of haircutable endowment available from winners.
By definition, $\mathsf{Solv}_T(\mathcal{P}_n) = \sum_{\mathfrak{p}} e_T(\mathfrak{p}) = W_T(\mathcal{P}_n) - D_T(\mathcal{P}_n)$, where $W_T(\mathcal{P}_n) = \sum_{\mathfrak{p}} (e_T(\mathfrak{p}))_+$ is the total winner equity.
However, under PNL-only haircuts, solvency ($\mathsf{Solv}_T > 0$) does \emph{not} imply the deficit is coverable by ADL.
The correct one-round feasibility condition is $\theta_T D_T \le U_T$ (not $\theta_T D_T \le W_T$).
There is an intermediate regime where $D_T < W_T$ but $D_T > U_T$: the exchange is solvent in gross equity but cannot clear the deficit without touching principal.

Instead, ADL mechanisms attempt to liquidate a fraction of the shortfall, then allow the market to react for some time (\eg~for users to close positions or post more margin and spot prices to update), before attempting a future deleveraging (if needed).
This means that the ADL process should be viewed probabilistically, as a sequence of decisions that are dynamically adjusted based on market conditions.
We will demonstrate that from this perspective, ADL can be formulated as a standard reinforcement learning problem. 

\paragraph{ADL Policies.}
The fundamental object of reinforcement learning is the policy, which maps states of the world to actions.
For ADL, the positions $\mathcal{P}_n$ serve as the state space, whereas the actions are the fraction of shortfall to socialize and which positions are socialized.
While the majority of exchanges (\eg~Binance and Hyperliquid) use queue-based policies that greedily rank positions by PNL and leverage, we will define a formalism over a larger class of potential ADL policies.
This larger class will enable us to find more efficient and fair ADL policies than queue-based models and includes pro-rata ADL policies used by smaller exchanges such as Drift and Paradex.

An \emph{ADL policy} $\pi(\mathcal{P}_n)$ maps a perpetuals exchange to a fraction of shortfall to socialize, $\theta \in [0, 1]$, and a set of seized amounts $x_{\pi,i} \in [0, w_{i,T}]$ (or equivalently, haircut fractions $h_{\pi,i} \in [0,1]$ such that $x_{\pi,i} = h_{\pi,i} w_{i,T}$).
We term the fraction $\theta$ the \emph{severity} of the ADL policy.
For notational convenience we define $\theta_{\pi}, h_{\pi}$ as the severity and haircuts induced by an ADL policy $\pi$.
We say an ADL policy $\pi$ is \emph{valid} if the following constraints always hold:
\begin{equation}\label{eq:budget-balance}
\sum_{\mathfrak{p} \in \mathcal{P}_n} h_{\pi,i} w_T(\mathfrak{p})
= \sum_{\mathfrak{p} \in \mathcal{P}_n} x_{\pi,i}
= \theta_{\pi} D_T(\mathcal{P}_n).
\end{equation}
\begin{equation}\label{eq:feasibility}
\theta_{\pi} D_T(\mathcal{P}_n) \leq U_T(\mathcal{P}_n) = \sum_{\mathfrak{p} \in \mathcal{P}_n} w_T(\mathfrak{p}).
\end{equation}
The budget balance constraint~\eqref{eq:budget-balance} ensures that when an ADL policy is executed, exactly $\theta$\% of the deficit is socialized from the winners' haircutable endowment.
The feasibility constraint~\eqref{eq:feasibility} ensures that $\theta_{\pi}$ is chosen such that the required seizure does not exceed the total haircut capacity $U_T$.
Note that under PNL-only haircuts ($w = (\mathrm{PNL})_+$), feasibility is stricter than solvency: even if $\mathsf{Solv}_T = W_T - D_T > 0$, we may have $D_T > U_T$, meaning the deficit cannot be covered without touching principal.

We define the post-policy deficit $D^{\pi}_T$, capacity $U^{\pi}_T$, and max-winner endowment $\upsilon^{\pi}_T$ as
\begin{align*}
D^{\pi}_T &= \theta_{\pi} D_T(\mathcal{P}_n), &&& \Delta^{\pi}_T &= \theta_{\pi} \Delta_T(\mathcal{P}_n), \\
U^{\pi}_T &= \sum_{\mathfrak{p} \in \mathcal{P}_n} (w_T(\mathfrak{p}) - x_{\pi,i})_+ = \sum_{\mathfrak{p} \in \mathcal{P}_n} (1-h_{\pi,i}) w_T(\mathfrak{p}), &&& \upsilon^{\pi}_T &= \max_{\mathfrak{p} \in \mathcal{P}_n} (w_T(\mathfrak{p}) - x_{\pi,i})_+.
\end{align*}
Our goal is to construct policies that minimize an objective function that balances individual user profits with global exchange solvency.

Given an ADL policy outputting $(\theta_{\pi}, h_{\pi})$ (or equivalently, seized amounts $x_{\pi,i}$), the post-ADL equity for winners is:
\begin{equation}\label{eq:post-adl-equity}
e'_{i,T} = c_{i,T} + \mathrm{PNL}_{i,T} - x_{\pi,i} = c_{i,T} + \mathrm{PNL}_{i,T} - h_{\pi,i} w_{i,T}.
\end{equation}
For PNL-only haircuts ($w_{i,T} = (\mathrm{PNL}_{i,T})_+$), this becomes:
\begin{equation}\label{eq:post-adl-equity-PNL}
e'_{i,T} = c_{i,T} + \mathrm{PNL}_{i,T} - h_{\pi,i} (\mathrm{PNL}_{i,T})_+.
\end{equation}
For instance, if we have equities $e_T = (10, 5, 1, -3, -12)$ with cash components $c_T = (2, 1, 0.5, 0, 0)$ and PNL components $\mathrm{PNL}_T = (8, 4, 0.5, -3, -12)$, then under PNL-only haircuts with $w_T = (8, 4, 0.5, 0, 0)$ and policy outputs $h = (0.5, 1, 0, 0, 0)$ (seizing $x = (4, 4, 0, 0, 0)$), the post-ADL equities are $e'_T = (2+8-4, 1+4-4, 0.5+0.5-0, -3, -12) = (6, 1, 1, -3, -12)$.

\paragraph{Examples of ADL Policies.}
We will first provide two canonical examples: \emph{queueing} (or leverage ranking) and \emph{pro-rata}.
These examples represent the solvency policies of virtually all live perpetuals exchanges as of November 2025.
For both of these policies, the choice of $\theta_{\pi}$ is independent of the choice of $h_{\pi}$. 

\iparagraph{Queue.}
The policy that the largest centralized exchange, Binance, and the largest decentralized exchange, Hyperliquid, use is the Queue policy.
The implementation at these exchanges uses the PNL-leverage ranking.
However, the Queue model is somewhat more generic, as we define it below.

At a high level, the Queue strategy is a greedy algorithm that ranks positions using a \emph{score function}, $s_T$.
Once positions are ordered by score, positions are closed in order of score from highest to lowest, until the deficit is fully covered.
We will denote this strategy as $\pi_Q$ and define it via the algorithm that implements it.

Given a haircut budget $B = \theta_{\pi_Q} D_T(\mathcal{P}_n)$, the algorithm to choose $h_i$ (or equivalently, seized amounts $x_i = h_i w_{i,T}$) works as follows:
\begin{enumerate}
\item For each position $\mathfrak{p} = (q, c, \ell, t, b) \in \mathcal{P}_n$, define the score\footnote{We note that technically, many exchanges define the leverage used in the score differently (\ie~instead of the score being linear in $\ell$, some exchanges multiply by $\frac{\ell p q}{pq+c})$. This does not change our main results and mainly adds technical complications to the example.} $s_T(\mathfrak{p}, p^{ref})$. For the PNL-leverage ranking, we have $s_T(\mathfrak{p}, p^{ref})= \ell\, \frac{\hat p_{T}}{p^{ref}}$.
\item Let $\sigma \in \mathsf{Perm}(\mathcal{P}_n)$ be any permutation of the positions ranked by decreasing score $s_T(\mathfrak{p}, p^{ref})$
\item For $i \in [n]$, define $A_i = (B - \sum_{j=1}^{i} w_T(\mathfrak{p}_{\sigma(j)}))_+$, where $w_T(\mathfrak{p}_{\sigma(j)})$ is the haircutable endowment of the $j$th largest position under the ranking $\sigma$
\item Define $h \in [0,1]^n$ (or equivalently, $x \in [0, w_T]^n$) as
\begin{equation}\label{eq:pl-ranking}
h_{\pi_Q,\sigma(i)} = \begin{cases}
    1 & \text{if } A_{i-1} - A_{i} = w_T(\mathfrak{p}_{\sigma(i)}),\; A_i > 0 \\
\frac{A_{i-1}}{w_T(\mathfrak{p}_\sigma(i))} & \text{if } A_i = 0, A_{i-1} > 0 \\
0 & \text{otherwise}
\end{cases}
\end{equation}
\end{enumerate}
Note this definition of $h_{\pi_Q}$ ensures that $\sum_{i=1}^n h_{\pi_Q,i} w_T(\mathfrak{p}_i) = B$, satisfying~\eqref{eq:budget-balance}.
On the other hand, since we only subtract from positions with positive endowment (\eg~$j$ such that $w_T(\mathfrak{p}_{\sigma(j)}) > 0$), we satisfy~\eqref{eq:feasibility} by construction.

There is one remaining question to address: how should we interpret the score $s_T(\mathfrak{p}, p^{ref})$?
Most exchanges justify this form for the score by arguing that it can represent the risk a single position holds, so that the exchange ranks positions to ADL by risk-level.
Winning, positive equity positions that are autodeleveraged first will tend to be higher risk positions (\ie~used more leverage for their winnings).
The price terms in the numerator, $p_{\hat{T}}$, represents the last mark price of the position whereas the parameter $p^{ref}$ represents the initial or opening price of the user.

Binance chooses the scoring parameter $p^{ref} = p^{bk}$ of the position,~\ie $s^{\text{Binance}}_T(\mathfrak{p}) = \ell \frac{p_{\hat{T}}(\mathfrak{p})}{p^{bk}(\mathfrak{p})}$.
Hyperliquid, on the other hand, chooses the parameter $p^{ref}$ to be the entry price, $p^{ref} = p_{t_i}$.
In other words, Binance's choice effectively ranks a position based on the maximum possible PNL that a user could realize whereas Hyperliquid ranks based on the current return on equity.

\iparagraph{Pro-Rata Rules.}
This haircut strategy is even simpler than the Queue and has some benefits in terms of fairness, as we will show in~\S\ref{subsec:fairness}.
The Pro-Rata strategy, simply put, haircuts each winning user with positive endowment by the same multiplicative fraction.
The fraction is subject to budget constraints, as we describe below.

\emph{Important distinction:} The pro-rata rule defined here is \emph{wealth-pro-rata} (losses proportional to endowment/equity).
Some exchanges (e.g., Drift) implement \emph{contracts-pro-rata} (losses proportional to position size in contracts).
These are different mechanisms with different fairness meanings and should not be conflated.
The paper's fairness analysis uses wealth-pro-rata as the benchmark.

Given a valid feasible shortfall $\theta_{\pi} D_T(\mathcal{P}_n) \leq U_T(\mathcal{P}_n)$, the wealth-pro-rata haircut policy is defined as:
\begin{equation}\label{eq:pro-rata}
h_{\pi_{PR}}(\mathfrak{p}) = \frac{\theta_{\pi} D_T(\mathcal{P}_n)}{U_T(\mathcal{P}_n)} \quad \text{for } w_T(\mathfrak{p}) > 0,
\end{equation}
or equivalently, $x_{\pi_{PR},i} = h_{\pi_{PR}} w_{i,T}$ where $h_{\pi_{PR}} = \theta_{\pi} D_T / U_T$ is the uniform haircut fraction.
This ranking simply says that the haircutable endowment is socialized proportional to its size.
Some exchanges slightly modify this formula to include leverage (\eg~Binance's ADL Priority Index includes leverage and unrealized PNL \cite{BinanceADL}; Aevo documentation discusses leverage‑weighted priority \cite{AevoADL}), which penalizes higher leverage positions much like the PNL‑leverage ranking:
\begin{equation}\label{eq:lev-pro-rata}
h_{\pi_{LPR}}(\mathfrak{p}) = \left(\frac{\ell w_T(\mathfrak{p})}{\sum_{\mathfrak{p} \in \mathcal{P}_n} \ell w_T(\mathfrak{p})}\right) \frac{\theta_{\pi} D_T(\mathcal{P}_n)}{\sum_{\mathfrak{p} \in \mathcal{P}_n} \ell w_T(\mathfrak{p})}
\end{equation}

\begin{figure}[t]
  \centering
  \begin{subfigure}{0.48\textwidth}
    \centering
    \resizebox{\linewidth}{!}{%
    \begin{tikzpicture}[x=1cm,y=3.6cm]
      \def\barw{0.35}
      \draw[->,thick] (0.5,-0.28) -- (0.5,0.52) node[left,font=\scriptsize] {$e_{T,i}$};
      \draw[->,thick] (0.5,0) -- (6.6,0) node[below right,font=\scriptsize] {ranking ($k \rightarrow 1$)};
      \draw[densely dotted] (0.5,0) -- (6.5,0);
      \foreach \i/\lab in {1/k,2/{k-1},3/{k-2},4/{\cdots},5/2,6/1}{
        \draw (\i+0.5,0) -- (\i+0.5,-0.015);
        \node[below,font=\scriptsize] at (\i+0.5,-0.02) {$\lab$};
      }
      \foreach \i/\val in {1/-0.28,2/-0.22,3/0.12,4/0.20,5/0.30,6/0.40}{
        \pgfmathsetmacro{\barheight}{\val}
        \pgfmathsetmacro{\xL}{0.5+\i-\barw}
        \pgfmathsetmacro{\xR}{0.5+\i+\barw}
        \ifdim \barheight pt>0pt
          \path[fill=ForestGreen!65,draw=ForestGreen!80!black] (\xL,0) rectangle (\xR,\barheight);
        \else
          \path[fill=BrickRed!70,draw=BrickRed!80!black] (\xL,0) rectangle (\xR,\barheight);
        \fi
      }
      \path[fill=RoyalPurple!55,opacity=0.65] (1.5-\barw,0) rectangle (1.5+\barw,-0.28);
      \path[fill=RoyalPurple!55,opacity=0.65] (2.5-\barw,0) rectangle (2.5+\barw,-0.22);
      \path[fill=blue!35,opacity=0.75] (6.5-\barw,0) rectangle (6.5+\barw,0.40);
      \path[fill=blue!35,opacity=0.75] (5.5-\barw,0.20) rectangle (5.5+\barw,0.30);
      \path[fill=blue!10,draw=blue!60!black,densely dashed] (5.5-\barw,0) rectangle (5.5+\barw,0.20);
      \path[fill=RoyalPurple!55,opacity=0.65] (0.65,0.42) rectangle (0.81,0.46);
      \node[font=\scriptsize,anchor=west] at (0.83,0.44) {deficit};
      \path[fill=blue!35,opacity=0.75] (0.65,0.36) rectangle (0.81,0.40);
      \node[font=\scriptsize,anchor=west] at (0.83,0.38) {haircut};
      \draw[decorate,decoration={brace,amplitude=4pt},thick] (6.5+\barw+0.1,0.40) -- (6.5+\barw+0.1,0) node[midway,right=3pt,font=\scriptsize,align=left] {queue\\haircut};
      \node[font=\scriptsize,anchor=north] at (2,-0.34) {severity $\theta=0.50$};
    \end{tikzpicture}
    }
    \caption{Queue ADL:\\ all haircuts fall on the highest-ranked winner.}
  \end{subfigure}
  \hfill
  \begin{subfigure}{0.48\textwidth}
    \centering
    \resizebox{\linewidth}{!}{%
    \begin{tikzpicture}[x=1cm,y=3.6cm]
      \def\barw{0.35}
      \draw[->,thick] (0.5,-0.28) -- (0.5,0.52) node[left,font=\scriptsize] {$e_{T,i}$};
      \draw[->,thick] (0.5,0) -- (6.6,0) node[below right,font=\scriptsize] {ranking ($k \rightarrow 1$)};
      \draw[densely dotted] (0.5,0) -- (6.5,0);
      \foreach \i/\lab in {1/k,2/{k-1},3/{k-2},4/{\cdots},5/2,6/1}{
        \draw (\i+0.5,0) -- (\i+0.5,-0.015);
        \node[below,font=\scriptsize] at (\i+0.5,-0.02) {$\lab$};
      }
      \foreach \i/\val in {1/-0.28,2/-0.22,3/0.12,4/0.20,5/0.30,6/0.40}{
        \pgfmathsetmacro{\barheight}{\val}
        \pgfmathsetmacro{\xL}{0.5+\i-\barw}
        \pgfmathsetmacro{\xR}{0.5+\i+\barw}
        \ifdim \barheight pt>0pt
          \path[fill=ForestGreen!65,draw=ForestGreen!80!black] (\xL,0) rectangle (\xR,\barheight);
        \else
          \path[fill=BrickRed!70,draw=BrickRed!80!black] (\xL,0) rectangle (\xR,\barheight);
        \fi
      }
      \path[fill=RoyalPurple!55,opacity=0.65] (1.5-\barw,0) rectangle (1.5+\barw,-0.28);
      \path[fill=RoyalPurple!55,opacity=0.65] (2.5-\barw,0) rectangle (2.5+\barw,-0.22);
      \path[fill=blue!35,opacity=0.75] (3.5-\barw,0.04) rectangle (3.5+\barw,0.12);
      \path[fill=blue!35,opacity=0.75] (4.5-\barw,0.08) rectangle (4.5+\barw,0.20);
      \path[fill=blue!35,opacity=0.75] (5.5-\barw,0.16) rectangle (5.5+\barw,0.30);
      \path[fill=blue!35,opacity=0.75] (6.5-\barw,0.24) rectangle (6.5+\barw,0.40);
      \path[fill=RoyalPurple!55,opacity=0.65] (0.65,0.42) rectangle (0.81,0.46);
      \node[font=\scriptsize,anchor=west] at (0.83,0.44) {deficit};
      \path[fill=blue!35,opacity=0.75] (0.65,0.36) rectangle (0.81,0.40);
      \node[font=\scriptsize,anchor=west] at (0.83,0.38) {haircut};
      \draw[decorate,decoration={brace,amplitude=4pt},thick] (6.5+\barw+0.1,0.42) -- (3.5-\barw-0.1,0.42) node[midway,above=3pt,font=\scriptsize] {pro-rata haircuts};
      \node[font=\scriptsize,anchor=north] at (2,-0.34) {severity $\theta=0.50$};
    \end{tikzpicture}
    }
    \caption{Pro-rata ADL:\\ haircuts are shared across the surviving winners.}
  \end{subfigure}

  \caption{ADL severity example comparing queue and pro-rata coloring. Purple shading equals the negative equity mass (deficit) while blue shading shows the haircut mass allocated to winning traders.
  The queue panel’s dashed blue block at rank~2 highlights residual equity when the queue method allows partial closures; exchanges that close winners fully (\eg~Hyperliquid) would shave this bar completely. 
  Haircut mass matches deficit mass in each panel, illustrating severity $\theta=0.50$.}
  \label{fig:adl-coloring}
\end{figure}
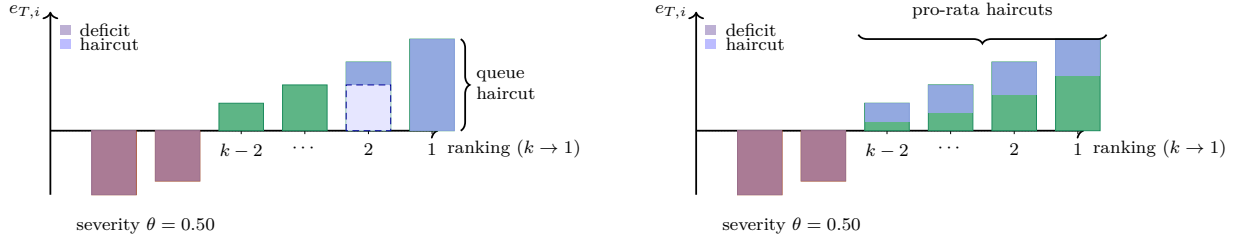

\iparagraph{Per-account constraints.}
Suppose that one wants to enforce some per-account constraints on the amount haircut.
For instance, suppose that we guarantee that for user $i$, their haircut $h_i$ always satisfies 
\begin{equation}\label{eq:haircut-constraint}
h_i \leq \overline{h}_i
\end{equation}
This corresponds to an exchange guaranteeing that a user will never lose more than $\overline{h}_i$\% of their haircutable endowment in a single ADL round.
Moreover, an exchange might also offer an absolute guarantee to users: if your equity is positive and sufficiently large, you will never have your equity cut to below some threshold $\underline{e}_i$.
This can be represented by the constraint 
\begin{equation}\label{eq:equity-constraint}
e'_{i,T} = c_{i,T} + \mathrm{PNL}_{i,T} - h_{\pi, i} w_{i,T} \geq \underline{e}_i
\end{equation}
For PNL-only haircuts ($w_{i,T} = (\mathrm{PNL}_{i,T})_+$), this becomes $c_{i,T} + \mathrm{PNL}_{i,T} - h_{\pi, i} (\mathrm{PNL}_{i,T})_+ \ge \underline{e}_i$.
Numerous exchanges offer one or both of these guarantees on a per‑ADL‑round basis, including Hyperliquid, BitMEX, and Aevo~\cite{HyperliquidDocsLiquidations,BitMEXADL,AevoADL}.

\paragraph{Invariance to protected cash.}
A key observation is that adding a common protected cash component $c$ to all winners is an affine shift that does not affect allocation-ordering theorems.
Specifically, if two policies $A$ and $B$ allocate seized amounts $x^A, x^B$ from endowment vectors $w$, then the post-ADL equity difference is:
\[
e'^A - e'^B = (c + \mathrm{PNL} - x^A) - (c + \mathrm{PNL} - x^B) = x^B - x^A,
\]
which depends only on the endowment allocation, not on $c$.
This means any theorem whose proof uses only (i) budget balance in the endowment ($\sum_i x_i = H$), and (ii) ordering/convexity/majorization properties of the survivor vector $w - x$, is unchanged after substituting the endowment vector from $(e_T)_+$ to $(\mathrm{PNL}_T)_+$.
The cash component only affects feasibility/coverability: under PNL-only haircuts, $\mathsf{Solv}_T = W_T - D_T > 0$ does not imply $D_T \le U_T$.

\iparagraph{Numerical example.}
Consider five accounts with equities $e_T = (10, 5, 1, -3, -12)$, cash components $c_T = (2, 1, 0.5, 0, 0)$, and PNL components $\mathrm{PNL}_T = (8, 4, 0.5, -3, -12)$.
Under PNL-only haircuts, the endowment vector is $w_T = (8, 4, 0.5, 0, 0)$ and the deficit is $D_T = 15$ (from losers with negative equity).
Let the queue policy choose severity $\theta_{\pi_Q} = \tfrac{1}{2}$, so the haircut budget is $B = \theta_{\pi_Q} D_T = 7.5$.
With queue order $\sigma = (2, 1, 3, 5, 4)$ induced by the PNL‑leverage scores, the construction in~\eqref{eq:pl-ranking} yields $A = (3.5, 0, 0, 0, 0)$ and haircuts $h = (0.4375, 1, 0, 0, 0)$ (seizing $x = (3.5, 4, 0, 0, 0)$).
Post-ADL equities are $e'_T = (2+8-3.5, 1+4-4, 0.5+0.5-0, -3, -12) = (6.5, 1, 1, -3, -12)$.
Under the pro‑rata rule~\eqref{eq:pro-rata} with the same severity $\theta_{\pi_{PR}} = \tfrac{1}{2}$, we have $D_T = 15$ and $U_T = 12.5$, so the uniform haircut factor is $h_{\pi_{PR}} = \theta_{\pi_{PR}} D_T / U_T = 15/25 = 0.6$ and seized amounts are $x = (4.8, 2.4, 0.3, 0, 0)$, giving post-ADL equities $e'_T = (2+8-4.8, 1+4-2.4, 0.5+0.5-0.3, -3, -12) = (5.2, 2.6, 0.7, -3, -12)$.
In particular, the ranking of positive‑equity accounts is preserved under pro‑rata, since all winners' endowments are multiplied by the same constant factor $(1-0.6) = 0.4$, illustrating its rank‑preservation property.

\subsection{ADL Trilemma}
\label{sec:adl-trilemma}

The preceding subsections defined trader equity, exchange solvency, insurance funds, and autodeleveraging (ADL) policies.
In this subsection we introduce a high–level design principle that organizes the rest of the paper: an \emph{ADL trilemma}.
Informally, a perpetuals venue cannot simultaneously (i) keep insolvency and ADL breach events rare, (ii) protect solvent winners from large socialized losses, and (iii) extract maximal long–run exchange revenue from trading and funding activity.
Any ADL design must pick (at most) two of these three goals.

\paragraph{Exchange revenue.}
To define the trilemma, we have to first formally describe what constitutes exchange revenue.
Recall from~\S\ref{subsec:exchange-solvency} that, given parameters $\alpha, \beta, \eta \in [0,1]$, the insurance fund $\mathsf{IF}_t$
evolves by
\[
\mathsf{IF}_{t+1}
= \mathsf{IF}_t
+ \alpha \sum_{j \in T_t} \tau_t(\Delta q_j)
+ \eta\, p_t V_t
+ \beta \sum_{i=1}^n |\gamma_t|\, p_t |q_i|
- \min\{\mathsf{IF}_t, D_t\},
\]
where $T_t$ is the multiset of liquidations at $t$, $V_t$ is traded volume, $\gamma_t$ is the funding rate, and $D_t$ is the period bad debt.
For notational convenience, define the per–period gross fee flows
\begin{align*}
\mathsf{Fee}^{\text{liq}}_t &= \sum_{j \in T_t} \tau_t(\Delta q_j) && \mathsf{Fee}^{\text{trade}}_t = p_t V_t && \mathsf{Fee}^{\text{fund}}_t = \sum_{i=1}^n |\gamma_t|\, p_t |q_i|.
\end{align*}
Most exchanges fully collect liquidation and trading fees and potentially collect fees on funding rates.\footnote{In particular, we note that HLP-like systems where fees are charged on quoting offsetting positions can be viewed as a form of exchange revenue~\cite{HyperliquidHLPVaults}. Moreover, a number of perpetuals AMM DEXs collect protocol revenue from funding flows~\cite{GMXDocs,PerpV2Docs}.}

Let $\zeta \leq (1-\beta)$ be the expected fraction of funding rates that the exchange keeps as revenue.
Then the exchange’s \emph{gross revenue} at time $t$ is
\[
\mathsf{Revenue}^{\text{gross}}_t = \mathsf{Fee}^{\text{liq}}_t + \mathsf{Fee}^{\text{trade}}_t + \zeta\mathsf{Fee}^{\text{fund}}_t.
\]
By construction, the fractions $\alpha,\eta,\beta$ of these fee flows are diverted into the
insurance fund. The remaining share accrues as \emph{net exchange revenue}
\[
\mathsf{Revenue}_t
:= (1-\alpha)\,\mathsf{Fee}^{\text{liq}}_t
 + (1-\eta)\,\mathsf{Fee}^{\text{trade}}_t
 + (1-\beta-\zeta)\,\mathsf{Fee}^{\text{fund}}_t.
\]
We note that in~\S\ref{sec:multi-round} that we refine this notion of revenue to a discounted \emph{exchange long–term value} (LTV).
The LTV accounts for future expected fee flows and traders leaving the exchange as a function of the realized ADL haircuts.

\paragraph{Three competing desiderata.}
There are three main goals that an ADL policy aims to enforce for traders and exchanges: revenue, solvency, and fairness.
We formalize these notions in the sequel:
\begin{itemize}
  \item \emph{Solvency.} (\S\ref{subsec:severity}, \S\ref{sec:negative}) The exchange aims to ensure that $R_t = 0$ for most times $t$ and that the total shortfall $\sum_t R_t$ is small relative to the expected insurance fund size.
  This desideratum is beneficial to both the exchange and traders.

  \item \emph{Fairness and moral hazard.} (\S\ref{sec:fairness}, \S\ref{sec:glpr})
  Traders want to know that if they face socialization, they will not be asked to absorb a portion of $R_t$ that is too large relative to their notional exposure on the exchange.
  This desideratum is mainly beneficial to traders.

  \item \emph{Exchange revenue and participation.} (\S\ref{sec:multi-round})
  Heavy ADL on high–value winners can trigger traders exiting the exchange permanently, shrinking future fee flows.
  This desideratum is mainly beneficial to the exchange.
\end{itemize}

\noindent It is clear that $\alpha,\beta,\eta$ and the ADL policy $\pi$ jointly control these three dimensions.
But a natural question is what the trade-offs are between these three desiderata at different values of $\alpha, \beta, \eta$ and parametrizations of $\pi$.
Raising $\alpha, \beta, \eta$ builds the reserve faster and decreases breaches (\ie~times with $R_t > 0$), but diverts exchange revenue into the fund.
Increasing ADL severity (or concentrating haircuts) makes breaches rarer and accelerates solvency, but worsens fairness and drives away the highest–value winning traders.
Keeping severities (\ie~total solvency resolved by ADL) small preserves fairness and participation, but leaves the exchange exposed to repeated shortfalls.
The goal of this paper is to formalize these three statements via a trilemma:

\begin{proposition}[Trilemma, Informal]
\label{prop:adl-trilemma}
Fix a sequence of perpetuals exchanges $\mathcal{P}_n$ and static ADL policies $\pi_n$ with insurance parameters $(\alpha, \beta, \eta)$.
Under the heavy-tailed shortfall assumptions of \S\ref{sec:negative}, no policy family $(\pi_n)$ can simultaneously satisfy the following uniformly in $n$:
\begin{enumerate}
  \item[\textbf{(S)}] \textbf{Solvency:} Residual risk is controlled, \ie, $\sum_t R_t(\pi_n) = O_p(1)$ and $\mathbb{P}[R_t(\pi_n) > 0] = O(1)$.
  \item[\textbf{(F)}] \textbf{Fairness:} Moral hazard is bounded, \ie, $\upsilon^{\pi}_T/D^{\pi}_T = \Theta(1)$ and $\upsilon^{\pi}_T/\Delta^{\pi}_T = \Theta(1)$, where $\upsilon^{\pi}_T$ is the maximum post-ADL haircutable endowment.
  \item[\textbf{(R)}] \textbf{Revenue:} Exchange revenue is preserved relative to a benchmark, \ie, $\text{LTV}(\pi_n) \asymp \text{LTV}_{\text{benchmark}}$.
\end{enumerate}
Enforcing \textbf{(S)} requires sacrificing \textbf{(F)} (via concentrated haircuts) or \textbf{(R)} (via excessive insurance diversion). Conversely, preserving \textbf{(F)} and \textbf{(R)} necessitates frequent solvency breaches.
\end{proposition}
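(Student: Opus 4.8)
The plan is to prove the contrapositive, \((\mathbf{S}) \Rightarrow \neg(\mathbf{F}) \vee \neg(\mathbf{R})\), via a dichotomy on the limiting severity \(\theta_{\pi_n}\), which is a well-defined scalar sequence precisely because the family is \emph{static}. The organizing observation is that fairness is essentially equivalent to vanishing severity. Since \(\upsilon^{\pi}_T = \max_i (w_{i,T}-x_{\pi,i})_+ \le \max_i w_{i,T} = \upsilon_T\) while \(D^{\pi}_T = \theta_{\pi}D_T\) and \(\Delta^{\pi}_T = \theta_{\pi}\Delta_T\), the negative result of \S\ref{sec:negative} (which supplies \(\upsilon_T/D_T = O(b_n/n)\) and \(\upsilon_T/\Delta_T = O(b_n/n)\) with \(b_n = o(n)\)) gives
\[
\frac{\upsilon^{\pi}_T}{D^{\pi}_T} \;\le\; \frac{\upsilon_T}{\theta_{\pi}D_T} \;=\; \frac{1}{\theta_{\pi}}\,O\!\Big(\tfrac{b_n}{n}\Big), \qquad \frac{\upsilon^{\pi}_T}{\Delta^{\pi}_T} \;\le\; \frac{1}{\theta_{\pi}}\,O\!\Big(\tfrac{b_n}{n}\Big).
\]
First I would treat the \emph{ADL-heavy} branch \(\theta_{\pi_n} = \Omega(1)\): the displayed bounds collapse to \(O(b_n/n) \to 0\), so both fairness ratios vanish and \((\mathbf{F})\)'s requirement that they be \(\Theta(1)\) fails. (The same branch independently breaks \((\mathbf{R})\) through the Stackelberg revenue--solvency tension of \S\ref{sec:multi-round}, since non-vanishing severity concentrates haircuts on the highest-PNL whales and drives away exactly the accounts generating future fee flows; but failure of \((\mathbf{F})\) already suffices here.)

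The substance is the complementary \emph{insurance-heavy} branch \(\theta_{\pi_n} \to 0\). Here ADL clears only a \(\theta_{\pi,t}\)-fraction of each post-insurance residual \((D_t - \mathsf{IF}_t)_+\), so the uncovered amount satisfies \(R_t(\pi_n) \ge (1-\theta_{\pi,t})(D_t-\mathsf{IF}_t)_+ = (1-o(1))(D_t-\mathsf{IF}_t)_+\). Hence \((\mathbf{S})\)'s requirement \(\sum_t R_t(\pi_n) = O_p(1)\) forces \(\sum_t (D_t - \mathsf{IF}_t)_+ = O_p(1)\): the insurance fund alone must absorb essentially every deficit. By the newsvendor characterization \(K^\ast = \mathrm{VaR}_{1-r/\kappa}(D_T)\) of \S\ref{subsec:exchange-solvency}, this means \(\mathsf{IF}_t\) must track a high quantile of the heavy-tailed deficit distribution uniformly in \(n\). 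Since the fund is fed only by the fee flows \(\mathsf{Fee}^{\mathrm{liq}}_t, \mathsf{Fee}^{\mathrm{trade}}_t, \mathsf{Fee}^{\mathrm{fund}}_t\) at rates \((\alpha,\beta,\eta)\) — flows that grow only linearly in participation through open interest and volume — matching this buffer target forces the diversion rates upward, driving the retained net-revenue fractions \((1-\alpha),(1-\eta),(1-\beta-\zeta)\) toward zero and hence \(\mathrm{LTV}(\pi_n) = o(\mathrm{LTV}_{\mathrm{benchmark}})\), so \((\mathbf{R})\) fails. The two branches exhaust all static families, completing the trilemma.

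The step I expect to be the main obstacle is the quantitative core of the insurance-heavy branch: showing that the buffer the fund must reach grows \emph{fast enough relative to the per-period fee base} that the diversion genuinely breaks the \(\asymp\) in \((\mathbf{R})\), rather than costing only a constant factor that \(\Theta\)-equivalence would tolerate. The delicacy is that under finite-mean per-trader losses the aggregate deficit concentrates at \(\mathbb{E}[D_T] = \Theta(n)\), matching fees, so any separation must be extracted purely from the tail quantile relevant to \((\mathbf{S})\); the heavy-tail/EV-scale hypothesis of \S\ref{sec:negative} is exactly what should make \(\mathrm{VaR}_{1-r/\kappa}(D_T)\) outrun the linear inflow as \(n\) grows. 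I would make this precise by pairing the tail hypothesis with a renewal or Borel--Cantelli lower bound on \(\sum_t (D_t-\mathsf{IF}_t)_+\) over breach times, showing that any \(\mathsf{IF}_t\) fundable at constant diversion leaves a residual that accumulates without bound, so restoring \((\mathbf{S})\) forces diversion toward its ceiling and collapses net revenue.
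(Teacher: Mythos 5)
Your Branch 1 is sound and is essentially the paper's case \textbf{(S)}$\wedge$\textbf{(R)}$\Rightarrow\neg$\textbf{(F)}: non-vanishing severity $\theta_{\pi_n}=\Omega(1)$ drives both fairness ratios to $O(b_n/n)\to 0$, exactly as in Theorem~\ref{thm:master-ptsr}. The contrapositive-plus-dichotomy organization is also legitimate (modulo passing to subsequences when $\theta_{\pi_n}$ neither stays bounded away from zero nor tends to zero).

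The genuine gap is in Branch 2, precisely where you flagged it, and it cannot be closed the way you propose. The paper does \emph{not} derive the revenue contradiction from heavy-tail scaling; it assumes a \emph{structural deficit regime} (Assumption~\ref{ass:structural-deficit}): the per-trader deficit rate strictly exceeds the per-trader fee rate, $\mu_- > \mu_{\Phi}$. With that assumption the whole argument collapses to a linear-scale accounting identity (Lemma~\ref{lem:solvency-identity}), $D_T \le B_T + \mathcal{D}_T + K_0 + R_T$: fairness caps the haircut budget at $B_T = o(n)$, solvency caps $R_T = o_p(n)$, so the diversions must satisfy $\mathcal{D}_T \ge (\mu_- - o(1))\,n$, whence $\mathrm{LTV}_T/n \le \mu_{\Phi} - \mu_- + o(1) < 0$, contradicting \textbf{(R)}. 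Your hoped-for substitute --- that the heavy-tail/EV hypothesis makes $\mathrm{VaR}_{1-r/\kappa}(D_T)$ ``outrun the linear inflow'' --- is false under the paper's own assumptions: Assumption~\ref{ass:lln-evt-trilemma} forces $D_T/n \xrightarrow{p} \mu_-$, so for finite-mean regularly varying losses ($\alpha_- > 1$) the fluctuations of $D_T$ are $O(n^{1/\alpha_-}) = o(n)$ and $\mathrm{VaR}_{1-r/\kappa}(D_T) = \mu_- n\,(1+o(1))$ remains \emph{linear}; heavy tails inflate only the sub-linear fluctuation term and the largest single loss $\Delta_T \asymp n^{1/\alpha_-}$, never the quantile's leading order. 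Consequently, if $\mu_- \le \mu_{\Phi}$, a constant diversion rate $\delta \in (\mu_-/\mu_{\Phi}, 1)$ funds the buffer, yields $\Prob[R_t>0]\to 0$ (so both parts of \textbf{(S)} hold), is compatible with vanishing severity and hence \textbf{(F)}, and retains $\mathrm{LTV} = (1-\delta)\Phi_T \asymp \Phi_T$, so all three desiderata hold simultaneously --- the paper states this explicitly in its ``Regime boundary'' paragraph, where the trilemma is declared silent when $\mu_- \le \mu_{\Phi}$. So Branch 2 is not merely missing a quantitative estimate: without the structural-deficit assumption, the claim you are trying to prove there is false, and no Borel--Cantelli or renewal refinement of the tail argument can rescue it.
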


\noindent A formal statement with precise definitions and a complete proof of the proposition appears in Appendix~\ref{app:adl-trilemma-proof}.

\paragraph{Scope note.}
The ADL trilemma is a statement about an abstract policy model (static ADL families under Assumptions J.1--J.3).
It does not rely on any empirical ``two-party zero-sum fill'' identity or on observing a complete execution/settlement ledger.
Empirical sections map public observations onto model objects; that mapping is separate from the theorem.

\paragraph{Regime boundary.}
The trilemma is \emph{conditional} on Assumption J.3 (the structural deficit regime $\mu_- > \mu_\Phi$).
It is \emph{silent} when $\mu_- \le \mu_\Phi$: in low-leverage or light-tailed regimes where expected shortfall rates are smaller than the maximum sustainable fee diversion rate, insurance can cover deficits without ADL binding, and all three goals (solvency, fairness, revenue) may be simultaneously achievable.
Here, $\mu_\Phi$ can be interpreted as the largest fee diversion rate that remains compatible with non-declining long-run venue value (e.g., via an LTV sensitivity constraint).
This is an interpretation of the regime boundary, not a change to the formal assumption used in the proof.
The paper focuses on the high-leverage structural deficit regime ($\mu_- > \mu_\Phi$) because that is where major perp venues actually operate and where ADL events are empirically concentrated.

\paragraph{Proof sketch and roadmap.}
The remainder of the paper establishes the Trilemma by analyzing each desideratum in turn:

\begin{itemize}
  \item \emph{Solvency Ratios (\S\ref{sec:negative}, Appendix~\ref{app:proofs}).}
  We analyze the ratio $\upsilon^{\pi}_T/D^{\pi}_T$ under heavy-tailed assumptions, where $\upsilon^{\pi}_T$ is the maximum post-ADL endowment.
  Theorem~\ref{thm:master-ptsr} proves that for any budget-balanced static ADL with severity $\theta_n$, the ratio scales as $b_n/(\theta_n n)$.
  Consequently, enforcing rare breaches (large $\theta_n$) drives the ratio to zero, violating \textbf{(F)}.

  \item \emph{Fairness Models (\S\ref{sec:fairness}--\S\ref{sec:glpr}, Appendices~\ref{app:capped-pro-rata}--\ref{app:rap-optimality-and-convex-dominance}).}
  We formalize fairness in two regimes.
  First, for a single round (\S\ref{sec:fairness}), we show that capped pro-rata is the unique policy satisfying Sybil resistance and monotonicity.
  Second, under external shocks (\S\ref{sec:glpr}), we prove that leverage-weighted rules (RAP) minimize future shortfall but sacrifice the most systemically important winners, highlighting the tension between short-term robustness and long-term participation.

  \item \emph{Revenue and Price of Anarchy (\S\ref{sec:multi-round}, Appendix~\ref{app:stack-nash}).}
  We model the revenue-solvency tradeoff as a Stackelberg game between the exchange and traders.
  We show that static severities yield an unbounded Price of Anarchy ($J_n(\pi) \sim \theta_n n / b_n$) compared to dynamic policies.
  This confirms that achieving \textbf{(R)} and \textbf{(S)} simultaneously requires dynamic intervention, as static rules cannot balance the trilemma asymptotically.
\end{itemize}

\section{Risk and Fairness Preliminaries}\label{sec:risk-prelim}\label{sec:risk}
To analyze how different ADL policies balance solvency and winner survival, we require metrics that capture both the \emph{magnitude} of losses (risk) and their \emph{distribution} across traders (fairness).
Classical risk measures like Value-at-Risk (VaR) and Expected Shortfall (ES) quantify aggregate solvency but ignore how the burden is shared.
To address this, we introduce fairness-aware metrics and distributional orderings that allow us to rank policies based on how they concentrate or spread losses among winners.
We then provide with an brief presentation of extreme-value scaling that we use in the sequel, before concluding with a brief introduction to algorithmic fairness.
The preliminaries in this section are meant to be incomplete and we refer the reader to the literature cited for more details.

\subsection{Risk Metrics}
\subsubsection{Traditional Risk Metrics}
We briefly review standard metrics used in finance and regulation~\citep{Boyd2017MultiPeriodTrading,BCBS2019FRTB}.
For a loss $X\ge 0$ and confidence level $\alpha \in (0,1)$:
\begin{itemize}
    \item \emph{Value-at-Risk (VaR):} The quantile $\mathrm{VaR}_\alpha(X) = \inf\{x : \Prob(X>x) \le \alpha\}$.
    \item \emph{Expected Shortfall (ES):} The average loss in the worst $\alpha$-fraction of cases, $\mathrm{ES}_\alpha(X) = \Expect[X \mid X \ge \mathrm{VaR}_\alpha(X)]$.
\end{itemize}
In ADL, we apply these to the residual shortfall $R_t$ to measure solvency risk.

\subsubsection{ADL-Specific Efficiency Metrics}\label{sec:risk-metrics}
To capture the trade-off between solvency and trader welfare, we define two ratios that normalize winner endowment survival by the scale of the default:
\begin{itemize}
    \item \emph{Profitability-to-Total-Solvency Ratio (PTSR):}
    \[
    \mathsf{PTSR}_T(\pi) = \Expect\left[\frac{\upsilon^\pi_T}{D^\pi_T}\right],
    \]
    where $\upsilon^\pi_T = \max_i (w_{i,T} - x_{\pi,i})_+$ is the maximum post-ADL haircutable endowment.
    This measures the survival of the top winner's endowment (profit capacity) relative to the total socialized loss. A low PTSR indicates that the most profitable trader is being disproportionately cannibalized to cover deficits.

    \item \emph{Profitability-to-Maximum Solvency Ratio (PMR):}
    \[
    \mathsf{PMR}_T(\pi) = \Expect\left[\frac{\upsilon^\pi_T}{\Delta^\pi_T}\right].
    \]
    This compares the top winner's post-ADL endowment to the largest single loser's shortfall, capturing the concentration of risk on both sides of the trade.
\end{itemize}
These ratios mirror the VaR/ES distinction: PTSR captures aggregate efficiency, while PMR captures tail concentration.
Under PNL-only haircuts ($w = (\mathrm{PNL})_+$), these metrics measure profit capacity survival, not total equity survival.
We formalize this connection and derive exact relationships in Appendix~\ref{app:proofs}.

\subsection{Fairness and Distributional Comparisons}
While scalar metrics like PTSR provide summary statistics, they cannot fully capture the fairness of a policy across the entire population of winners.
Two policies might achieve similar solvency but distribute the pain very differently --- one by wiping out a few large winners (Queue), another by shaving everyone slightly (Pro-Rata).
To rank policies robustly, we use tools from majorization theory~\citep{MarshallOlkinArnold2011}.

\paragraph{Schur-Convexity and Submajorization.}\label{sec:schur-convex-ordering}
We compare haircut vectors $h \in \reals^n$ using \emph{submajorization} ($\prec_w$).
We say a policy $\pi_A$ is \emph{more fair} (or less concentrated) than $\pi_B$ if its haircut vector is submajorized by $\pi_B$'s, i.e., $h(\pi_A) \prec_w h(\pi_B)$.
A function $\phi:\reals^n \to \reals$ is said to be \emph{Schur-convex} if $x \prec_w y$ implies $\phi(x) \le \phi(y)$, \ie~it increases with concentration.
This implies that for \emph{any} convex, symmetric cost function $\phi$ (representing trader disutility), the aggregate distress is lower under $\pi_A$:
\[
h(\pi_A) \prec_w h(\pi_B) \implies \sum \phi(h_i(\pi_A)) \le \sum \phi(h_i(\pi_B)).
\]
This gives us a powerful, parameter-free way to claim that Pro-Rata is ``fairer'' than Queue: it minimizes the collective pain for all convex risk attitudes.
We provide a detailed treatment of these orderings and their application to ADL in Appendix~\ref{app:lpr-convex-dominance}.

\paragraph{Comonotonicity.}
In our worst-case analysis, we often consider \emph{comonotonic} couplings, where winners' profits and losers' deficits are perfectly correlated (move in lockstep).
This represents the most dangerous regime for ADL, as large deficits coincide with large winner profits, testing the policy's ability to extract liquidity without destroying the best traders.
Our negative results in \S\ref{sec:negative} exploit this structure to derive tight bounds on the Trilemma.
See Appendix~\ref{app:lpr-convex-dominance} for further details on comonotonic risk bounds.

\subsection{Extreme--value scaling}
Our asymptotic analysis relies on \emph{extreme--value scales}: deterministic sequences that characterize the typical magnitude of the largest winner endowment (for haircut capacity) and the largest loser shortfall in the limit of a large market.
Recall the winner and loser index sets $\mathcal{W}_T$ and $\mathcal{L}_T$ with cardinalities $k_n$ and $m_n$, respectively.
Under mild mixing we assume the aggregate winner endowment and loser deficit masses concentrate at linear scales:
\begin{equation}\label{eq:lln-scaling}
  \left(\frac{U_T(\mathcal{P}_n)}{n},\, \frac{D_T(\mathcal{P}_n)}{n}\right)
  \xrightarrow{p} (\mu_+,\mu_-),
\end{equation}
where $U_T = \sum_i w_{i,T}$ is the total haircut capacity.
Here $\mu_+,\mu_->0$ summarize the average winner endowment and loser deficit magnitudes.
We define the respective maxima as
\begin{equation}\label{eq:maxima-def}
  \big(\upsilon_T,\, \Delta_T\big)
  = \left(
    \max_{i\in\mathcal{W}_T} w_{i,T},\,
    \max_{i\in\mathcal{L}_T} \big(-e_T(\mathfrak p_{i,T})\big)_+
  \right),
\end{equation}
where $\upsilon_T$ is the maximum haircutable endowment (under PNL-only, this is the maximum positive PNL).
A pair of deterministic, increasing sequences $\{b^+_{k}\}_{k\ge1}$ and $\{b^-_{m}\}_{m\ge1}$ constitute \emph{extreme--value scales} if there exist constants $c_+,c_-\in(0,\infty)$ such that, as $n\to\infty$,
\begin{equation}\label{eq:ev-scaling}
  \left(\frac{\upsilon_T}{b^+_{k_n}},\, \frac{\Delta_T}{b^-_{m_n}}\right)
  \xrightarrow{p} (c_+,c_-).
\end{equation}
We abbreviate $b_n=b^+_{k_n}$ when the context (winner side) is unambiguous.
A canonical choice, consistent with classical extreme--value theory, is the set of \emph{upper--quantile normalizers}:
\begin{equation}\label{eq:quantile-scales}
  \big(b^+_{k},\, b^-_{m}\big)
  = \left(
    F_+^{-1}\!\left(1-\frac{1}{k}\right),\,
    F_-^{-1}\!\left(1-\frac{1}{m}\right)
  \right),
\end{equation}
where $F_+$ and $F_-$ denote the distribution functions of winner-side positive endowments (under PNL-only, positive PNL) and loser-side shortfalls, respectively.

\iparagraph{Extreme–value severity scales.}
We refer to $\theta_n = \Theta(b_n/n)$ as the \emph{extreme--value severity scale}.
This choice aligns the haircut magnitude with the typical largest winner endowment $b_n=b^+_{k_n}$ (under PNL-only, the largest positive PNL) in a large market where $k_n=\Theta(n)$.

\paragraph{Examples.}
Two prototypical cases appear throughout our analysis:
\begin{itemize}
  \item \emph{Light tails (sub--Gaussian).} For distributions with sub--Gaussian scales $\sigma_+,\sigma_-$,
  \[
  b^+_{k}\ \asymp\ \sigma_+\sqrt{2\log k},
  \qquad
  b^-_{m}\ \asymp\ \sigma_-\sqrt{2\log m}.
  \]
  \item \emph{Power--law tails (Pareto/Fréchet).} For distributions satisfying $\Prob\{w_T>x\}\sim C_+x^{-\alpha_+}$ (under PNL-only, this is the distribution of positive PNL) and $\Prob\{(-e_T)_+>x\}\sim C_-x^{-\alpha_-}$ with $\alpha_\pm>0$,
  \[
  b^+_{k}\ \asymp\ (C_+ k)^{1/\alpha_+},
  \qquad
  b^-_{m}\ \asymp\ (C_- m)^{1/\alpha_-}.
  \]
\end{itemize}

\subsection{Fairness}
\paragraph{Algorithmic Fairness in ADL.}
Algorithmic fairness literature distinguishes between \emph{axiomatic} approaches, which posit structural invariants (often yielding impossibility results~\citep{Arrow1951,KleinbergEtAl2018FairnessImpossibility}), and \emph{optimization} approaches, which minimize disparate impact via convex programs~\citep{DworkEtAl2012FairnessAwareness}.
ADL spans both: queue-based policies resemble discrete ranking mechanisms subject to Arrow-style impossibilities, while pro-rata policies admit continuous convex formulations.
We adopt an axiomatic perspective to characterize ``ideal'' fairness, then relax it to optimize for robustness.

\paragraph{Axiomatic Properties.}
We formalize three operational desiderata for fair ADL policies:
\begin{itemize}
    \item \emph{Sybil Resistance (Split Invariance):} Let $x_{\pi,i}=h_{\pi,i}w_{i,T}$ denote seized endowment. For any winner $i$, any split $w_{i,T}=\sum_{a=1}^r z_a$ with the same economic state, and any budget $B$, if $\tilde x_{\pi,a}(B)$ denotes seizure on split children, aggregate seizure must be unchanged:
    \[
      \sum_{a=1}^r \tilde x_{\pi,a}(B) = x_{\pi,i}(B).
    \]
    \item \emph{Monotonicity (Stable Ordering):} The policy preserves winner ordering in surviving endowment:
    \[
      w_{i,T} \ge w_{j,T} \implies (w_{i,T}-x_{\pi,i}) \ge (w_{j,T}-x_{\pi,j}).
    \]
    \item \emph{Scale Invariance:} If all deficits and equities scale by $\lambda > 0$, the relative haircuts should remain unchanged. This ensures the mechanism responds to \emph{leverage} and \emph{risk distribution}, not nominal sizes.
\end{itemize}
In \S\ref{sec:fairness}, we prove that \emph{capped pro-rata} is the unique policy satisfying all three properties.
Appendix~\ref{app:capped-pro-rata} makes the queue comparison explicit: Proposition~\ref{prop:queue-sybil-score-preserving} shows that queue Sybil resistance depends on the ranking score (false for absolute scores, where splitting can change queue rank, and true for ratio scores that keep rank unchanged under proportional splits, such as Hyperliquid's documented index~\cite{HyperliquidDocsADL}), while Proposition~\ref{prop:queue-monotonicity-failure} shows greedy queues can still violate monotonicity.

\subsection{Moral hazard}
Moral hazard is a generic incentive problem that arises when one party (the \emph{agent}) can take actions that affect risk or payoffs, but those actions are not perfectly observable, verifiable, or contractible by the party bearing some of the downside (the \emph{principal}); see, \eg~\citet{Holmstrom1982,LaffontMartimort2002}.
Classic examples include workers choosing effort under wage contracts, managers taking portfolio risk on behalf of investors, or insurers providing coverage to policyholders whose behavior may change once insured.
Because the agent’s action is hidden, the principal cannot simultaneously provide full insurance (so that the agent’s payoff is insulated from shocks) and preserve strong incentives for the agent to take socially desirable actions: risk sharing necessarily distorts effort incentives.
The central goal of the moral-hazard literature is therefore to characterize \emph{second-best} contracts (typically linear or otherwise simple sharing rules) that optimally trade off incentive provision against insurance under information constraints~\citep{DuttingEtAl2023MultiAgentContracts,Carroll2015RobustLinearContracts}.

Moral hazard also couples to robustness: contracts must work across environments and counterparties, pushing outcomes away from the first-best benchmark~\citep{DuttingEtAl2023MultiAgentContracts}.
The same friction arises whenever losses are socialized (\eg~insurance pools or centrally cleared derivatives) as risk sharing dulls the ability for individuals to manage risk efficiently.
ADL fits this template: winners and losers supply mutual insurance while the exchange chooses the loss-sharing rule.
The negative results in Section~\ref{sec:negative} formalize the resulting second-best trade-off.

\section{Severity Optimization}\label{subsec:severity}
We decompose the ADL problem into two orthogonal components: \emph{severity} (how much to socialize) and \emph{haircuts} (who pays).
Solvency is driven almost entirely by the scalar severity sequence $(\theta_t)$, while fairness and revenue depend on the haircut distribution $h_t$.
Since \S\ref{sec:fairness}--\S\ref{sec:glpr} cover haircut design, here we focus on optimizing $\theta_t$ to balance solvency against the risk of driving away traders.

\paragraph{Severity Policies.}
A \emph{causal severity policy} maps available information $\mathcal{F}_t$ (deficits, funding rates, risk limits) to a severity fraction $\theta_t \in [0, 1]$.
We contrast three approaches:
\begin{enumerate}
    \item \emph{Static Interpolation:} $\theta_t$ is fixed or linearly moves between deficit-matching $\theta_t^{\text{match}}=\min\{1,\ \sum_i e_{t,i}/D_t\}$ (so $H_t = B_t^\star$ when feasible) and a cap $\bar{\theta}$ (\eg~1).
    \item \emph{Exponential Backoff:} $\theta_t = \theta_0 \alpha^{k_t}$ decays as the number of recent shocks $k_t$ increases. This prevents cascading failures from wiping out the entire book during prolonged stress.
    \item \emph{Online Control (Mirror Descent):} We treat severity selection as an online convex optimization problem. The controller adjusts $\theta_t$ to minimize a regret bound combining solvency costs ($R_t$) and revenue loss (excessive haircuts). This approach adapts to non-stationary market conditions and respects caps $\Theta_t$ derived from the Price of Anarchy analysis in \S\ref{sec:multi-round}.
\end{enumerate}

\paragraph{Separation Principle.}
Under the budget-balance constraint~\eqref{eq:budget-balance}, any one-round ADL policy $\pi$ with severity $\theta_\pi$ and haircuts $h_\pi$ satisfies
\[
  H(\pi) \;=\; \sum_i h_{\pi,i}\,w_{i,T} \;=\; \sum_i x_{\pi,i} \;=\; \theta_\pi D_T(\mathcal{P}_n),
\]
so the post-ADL deficit is
\[
  R_T(\pi) \;=\; (D_T(\mathcal{P}_n) - H(\pi))_+ \;=\; (1-\theta_\pi)_+ D_T(\mathcal{P}_n),
\]
which depends only on $\theta_\pi$ and $D_T(\mathcal{P}_n)$, not on how $H(\pi)$ is distributed across winners.
Thus, any solvency functional that depends on $\pi$ only via $\{R_t(\pi)\}_t$ (\eg~breach frequencies, VaR/ES of $\sum_t R_t$) can be optimized over the scalar sequence $(\theta_t)$, holding the haircut rule fixed.
Conversely, fairness and revenue functionals depend on the winner-side post-ADL endowments $\{(1-h_{\pi,i})w_{i,T}\}_i$ (and consequently post-ADL equities $e'_{i,T} = c_{i,T} + \mathrm{PNL}_{i,T} - x_{\pi,i}$) and are invariant to the choice of $\theta_t$ once the total budget $H_t$ is fixed.
We exploit this separability in our numerics (\S\ref{sec:numerics}) by tuning $\theta_t$ (severity control) and $h_t$ independently.

\section{Negative Results}\label{sec:negative}
We begin with two structural limits for ADL.
They formalize a moral‑hazard trade‑off between solvency and winner payoffs that cannot be removed by better policy design.
The tension is instantaneous and zero‑sum at the shock: covering the aggregate deficit $D_T$ eventually requires shaving the profits of the best winner.
We first argue that this tension can be viewed as a form of moral hazard, in the sense used in principal-agent contracts.

An exchange that earns fees from trading volume has an incentive to let very risky, highly leveraged users accumulate large positions.
When these users default, their negative equity shows up as a large deficit $D_T$, which is then socialized onto the winners via ADL haircuts.
In our model this is the basic moral‑hazard question: the exchange enjoys fee revenue from risky losers, while the cost of their tail losses is borne by solvent winners.
The results below quantify how this wedge scales as the market thickens, showing that the severity needed for solvency grows with the number of positions and with the leverage imbalance between losers and winners.

\subsection{Impossibility of Avoiding Moral Hazard as \texorpdfstring{$n\to\infty$}{n Grows}}\label{sec:moral-hazard}
We first show that the tension between solvency and winner survival scales unfavorably with market size, creating an unavoidable moral hazard wedge.
Viewing traders as principals and the exchange as an agent, preserving the agent's utility (solvency) asymptotically destroys the principals' utility (winner profits) unless severity vanishes.

\paragraph{Example.}
Consider a book of size $n$ where loser positions are i.i.d. with mean loss $\mu > 0$, so aggregate deficit scales as $D_T \approx \mu n$.
Suppose the top winner's endowment (under PNL-only, positive PNL) scales sub-linearly, \eg~$b_n \sim c \log n$ (light tails).
If the exchange commits to a fixed severity $\bar{\theta} > 0$ to maintain solvency, the total haircut budget is $H_n \approx \bar{\theta} \mu n$.
As $n \to \infty$, this linear haircut cost overwhelms the logarithmic winner endowment:
\[
w_n^{\text{post}} \approx (c \log n - \bar{\theta} \mu n)_+ \to 0.
\]
Thus, to preserve any winner profit capacity, severity must vanish at the rate $\theta_n = O(b_n/n)$.
Constant severity policies therefore represent a form of growing moral hazard: they transfer an increasingly large share of tail risk from the exchange to the most profitable traders.
We formalize this scaling in Appendix~\ref{app:mh-example}.

\paragraph{Main Result.}
In our model, socialization is instantaneously zero‑sum: every dollar of deficit $D_T$ covered by ADL must come from the winners' haircutable endowment.
If ADL usage does not vanish as the market grows larger ($n\rightarrow\infty$), then one cannot simultaneously keep deficits small (solvency, low $D_T$) and preserve winner profit capacity (high $\upsilon_T$). 
The proposition below quantifies this tradeoff in terms of extreme-value scales and can be read as a moral‑hazard impossibility result in the principal–agent sense.

\begin{proposition}[Informal]\label{prop:ev-scale-severity-informal}\label{prop:ev-impossibility}
Suppose that the extreme value scale of the winning trader's endowment, $\upsilon_T(\mathcal{P}_n)$, is $b_n$ (under PNL-only, this is the maximum positive PNL).
Then, 
\[
\mathsf{PTSR}_T(\mathcal{P}_n, \pi) =  \Expect_{\pi}\left[\frac{\upsilon^{\pi}_T(\mathcal{P}_n)}{D^{\pi}_T(\mathcal{P}_n)}\right] \asymp \frac{b_n}{\theta_n\, n}
\]
In particular, the example above shows the order $b_n/n$ is tight: unless the severity vanishes at the extreme–value scale, $\theta_n=\Theta(b_n/n)$, one cannot preserve the best trader's profit capacity as $n\to\infty$.
\end{proposition}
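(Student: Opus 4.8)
The plan is to prove the two-sided bound $\mathsf{PTSR}_T(\mathcal{P}_n,\pi)\asymp b_n/(\theta_n n)$ by separately pinning down the typical sizes of the numerator $\upsilon^{\pi}_T$ and the denominator $D^{\pi}_T$, and then transferring these high-probability statements through the expectation in the definition of PTSR. The denominator is the easy half: by the separation identity $D^{\pi}_T=\theta_{\pi}D_T(\mathcal{P}_n)$ and the linear concentration of the aggregate deficit in~\eqref{eq:lln-scaling}, $D_T(\mathcal{P}_n)/n\xrightarrow{p}\mu_-$, so $D^{\pi}_T\asympp \theta_n\mu_- n\asymp\theta_n n$. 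The substance is the claim that, in the relevant severity regime, every valid policy leaves a surviving top endowment of the extreme-value order, $\upsilon^{\pi}_T\asymp b_n$.

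For the numerator upper bound I would use monotonicity of the haircut: since $0\le x_{\pi,i}\le w_{i,T}$ we have $\upsilon^{\pi}_T=\max_i(w_{i,T}-x_{\pi,i})_+\le\max_i w_{i,T}=\upsilon_T$, and the extreme-value scaling~\eqref{eq:ev-scaling} gives $\upsilon_T\asympp c_+ b_n$. This already yields the universal upper bound $\mathsf{PTSR}_T\lesssim b_n/(\theta_n n)$. For the matching lower bound I would exploit budget conservation together with the joint scaling of the leading order statistics. Fix a constant rank $J$ and order winners by endowment $w_{(1)}\ge\cdots\ge w_{(J)}$; since $\sum_i x_{\pi,i}=H:=\theta_{\pi}D_T$ and all $x_{\pi,i}\ge 0$, conservation gives $\sum_{j\le J}x_{(j)}\le H$, so by pigeonhole over the top $J$ survivors,
\[
\upsilon^{\pi}_T\ \ge\ \max_{j\le J}\big(w_{(j)}-x_{(j)}\big)\ \ge\ \frac{1}{J}\Big(\sum_{j\le J}w_{(j)}-H\Big).
\]
Under the max-domain-of-attraction hypotheses each fixed-rank order statistic satisfies $w_{(j)}/b_n\xrightarrow{d}$ a positive limit, so $\sum_{j\le J}w_{(j)}\asympp c_J b_n$ with $c_J$ increasing in $J$ (and $c_J\to\infty$, using $\alpha_+>1$ so that $\mu_+<\infty$); choosing $J$ a large enough constant relative to the budget ratio makes $\sum_{j\le J}w_{(j)}\ge 2H$ whenever $H\lesssim b_n$, i.e.\ $\theta_n=O(b_n/n)$, the regime that pins the threshold, whence $\upsilon^{\pi}_T\gtrsim b_n/J\asymp b_n$. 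Combining the two sides gives $\upsilon^{\pi}_T\asymp b_n$, hence $\upsilon^{\pi}_T/D^{\pi}_T\asympp b_n/(\theta_n n)$.

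The final step is to pass from this high-probability ratio bound to the expectation defining $\mathsf{PTSR}_T$; here I would use that $D^{\pi}_T\ge \theta_n D_T$ is bounded away from $0$ on the concentration event while $\upsilon^{\pi}_T\le\upsilon_T$ has controlled upper tails under the same extreme-value assumptions, so the normalized ratio $(\theta_n n/b_n)\,(\upsilon^{\pi}_T/D^{\pi}_T)$ is uniformly integrable and its expectation converges to a constant in $(0,\infty)$. Tightness of the exponent is then read off the constant-severity example: taking $\theta_n=\bar\theta$ fixed forces $H\asymp n\gg b_n$, the pigeonhole lower bound degrades, and $\mathsf{PTSR}_T\asymp b_n/n\to0$, so no static policy preserves the top trader's profit capacity unless $\theta_n=\Theta(b_n/n)$. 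The main obstacle is the lower bound on $\upsilon^{\pi}_T$: it is genuinely policy-independent only while the budget stays at the extreme-value scale $H=O(b_n)$, and making the pigeonhole bound uniform requires the joint tightness of the leading order statistics $\{w_{(j)}/b_n\}_{j\le J}$ supplied by extreme-value theory rather than the mean-field scaling~\eqref{eq:lln-scaling} alone; the subsequent uniform-integrability transfer to the expectation is the routine but fiddly remainder.
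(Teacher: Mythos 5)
Your proof is correct at the paper's (informal) level of rigor, but it takes a genuinely different route. The paper's formal counterpart (the PTSR-scaling theorem in Appendix~\ref{app:proofs}) handles the numerator by \emph{hypothesis}: it restricts attention to policies satisfying $\upsilon^\pi_T \sim \upsilon_T$ (pro-rata-like policies with $H \ll U_T$), after which the argument is two lines --- EVT gives $\upsilon_T \asympp c_+ b_n$, the LLN gives $D^\pi_T = \theta_n D_T \asympp \theta_n \mu_- n$, and the expectation is dispatched by bounded convergence. You instead make the numerator bounds policy-independent: the upper bound $\upsilon^\pi_T \le \upsilon_T$ follows from nonnegativity of seizures, and the lower bound follows from budget conservation plus pigeonhole over the top-$J$ order statistics, $\upsilon^\pi_T \ge J^{-1}\bigl(\sum_{j\le J} w_{(j)} - H\bigr)$, using the joint tightness of $\{w_{(j)}/b_n\}_{j\le J}$ and $c_J \to \infty$ (which indeed requires $\alpha_+ > 1$, but that is already forced by the LLN scaling $U_T/n \to \mu_+$). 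This buys something real: the impossibility direction $\mathsf{PTSR}_T \lesssim b_n/(\theta_n n)$ holds for \emph{every} valid policy, and the matching lower bound holds for every valid policy exactly in the regime $\theta_n = O(b_n/n)$ that pins the threshold --- so the ``cannot preserve the best trader's profit capacity'' conclusion becomes mechanism-free, whereas the paper's statement is conditional on the gentleness hypothesis. Conversely, the paper's formulation covers all severity regimes (but only for pro-rata-like policies), and your observation that the two-sided bound cannot be policy-independent once $\theta_n n \gg b_n$ is consistent with the paper's own result that Queue drives $\upsilon^\pi_T$ far below $b_n$ in that regime (Proposition~\ref{prop:queue-min-top}).

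One caveat, shared by both proofs: passing the in-probability scaling through the expectation is not quite routine, because $\upsilon^\pi_T/D^\pi_T$ is unbounded on the rare event that $D_T$ is atypically small. The paper's appeal to bounded convergence is not literally applicable, and your uniform-integrability sketch likewise needs a tail bound on $1/D_T$ (e.g., a moment or anti-concentration assumption on loser-side shortfalls) to close. Since the statement is labeled informal, this does not sink either argument, but it is the one step that would need to be made explicit in a formal write-up.
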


\iparagraph{Example.}
The bound $b_n/n$ dictates the maximal safe severity for different distributions:
\begin{itemize}
    \item \emph{Gaussian Winners (Light Tails):} For $w_i \sim \mathcal{N}(0,1)$ (endowment distribution), the maximum scales as $b_n \sim \sqrt{2\log n}$. 
    Thus, severity must vanish rapidly, $\theta_n \lesssim \sqrt{2\log n}/n$, to avoid autodeleveraging the top winner's endowment.
    This implies that the shortfall that can be covered goes to zero as $n\rightarrow \infty$.
    \item \emph{Pareto Winners (Heavy Tails):} For $w_i \sim \text{Pareto}(\alpha)$ with $\alpha > 1$, the maximum scales as $b_n \sim n^{1/\alpha}$. 
    Here, the constraint is looser but still vanishing as $n\rightarrow \infty$: $\theta_n \lesssim n^{-(1-1/\alpha)}$.
\end{itemize}
Even in the heavy-tailed case (infinite variance, $\alpha < 2$), budget balance implies that a constant fraction severity $\theta$ wipes out the top winner's endowment almost surely as $n \to \infty$.

\subsection{Excessive leverage guarantees large maximal trader loss}\label{subsec:lev-imbalance}
Another natural form of moral hazard, one that quantifies the role of the exchange as an agent, is the ratio of leverage held by the winning and losing sides.
If losers generally have much more leverage than winners, then the exchange is giving winning traders a higher likelihood of paying socialized losses.
We quantify this deficiency in risk management by the exchange by providing a more quantitative extreme value result: we show that the ratio of winner to loser leverages is a multiplicative term in how fast the winning trader's position is likely to be autodeleveraged.

\paragraph{Example.}
Consider a regime where loser leverage mass $\ell^-$ dominates winner leverage mass $\ell^+$.
Let $b_n^+$ (resp.\ $b_n^-$) denote the extreme value (EV) scales of winner endowments and loser deficits, respectively.
Aggregate shortfall scales with loser liability, $D_T \approx \ell^- b_n^-$, while the top winner's endowment scales as $\upsilon_T \approx \ell^+ b_n^+$.
If severity $\theta_n$ is fixed, the haircut budget $H_n = \theta_n D_T \approx \theta_n \ell^- b_n^-$ is large.
Comparing budget to winner endowment:
\[
\frac{H_n}{\upsilon_T} \approx \theta_n \frac{\ell^-}{\ell^+} \frac{b_n^-}{b_n^+}.
\]
If $\ell^- \gg \ell^+$, even a small severity $\theta_n$ creates a haircut larger than the top winner's entire endowment.
Thus, safe severity is throttled not just by market size but by the \emph{leverage imbalance} $\ell^+/\ell^-$.
See Appendix~\ref{app:mh-leverage} for the rigorous construction.

\paragraph{Main Result.}
The next result makes the leverage–imbalance effect precise.
When loser positions carry much more leverage than winners, the ratio of winner to loser leverage masses \(\ell_n^+/\ell_n^-\) enters multiplicatively in the EV‑scale bound for the top winner’s survival.
In moral‑hazard terms, an exchange that tolerates very high loser leverage effectively exposes its best winners to autodeleveraging: even with small severities \(\theta_n\), a large imbalance \(\ell_n^-/\ell_n^+\) can still force substantial haircuts on the most profitable traders.

\begin{proposition}[Informal]\label{prop:excessive-leverage}
Let $b_n^+, b_n^-$ be the extreme value scales of the winning traders' endowments and losing traders' deficits, respectively.
Furthermore, let the winner and loser side leverage masses be $\ell_n^+,\ell_n^-$, respectively.
Then we have the general EVT scaling:
\[
\Expect_{\pi}\!\left[\frac{\upsilon^{\pi}_T(\mathcal{P}_n)}{D^{\pi}_T(\mathcal{P}_n)}\right]\ \asymp\ \frac{\ell_n^+}{\ell_n^-}\cdot\frac{b_n^+}{b_n^-}\cdot\frac{1}{\theta_n},
\]
where $\upsilon^{\pi}_T$ is the maximum post-ADL endowment.
\end{proposition}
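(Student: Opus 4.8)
The plan is to reduce this to the unweighted extreme-value scaling of Proposition~\ref{prop:ev-impossibility} (equivalently Theorem~\ref{thm:master-ptsr}) by replacing the raw position count $n$ with the leverage-weighted aggregates $\ell_n^{\pm}$ on each side of the book. Since the feasibility/budget identity gives $D^{\pi}_T = \theta_n D_T(\mathcal{P}_n)$ exactly, and all of the relevant aggregates concentrate (the mass scalings of~\eqref{eq:lln-scaling} and the extreme-value scalings of~\eqref{eq:ev-scaling}), the expectation of the ratio is governed by the typical values, and I would factor
\[
\frac{\upsilon^{\pi}_T}{D^{\pi}_T}\;=\;\frac{1}{\theta_n}\cdot\frac{\upsilon^{\pi}_T}{\upsilon_T}\cdot\frac{\upsilon_T}{D_T}.
\]
It then suffices to establish three estimates: the loser-side denominator scaling $D_T \asymp \ell_n^- b_n^-$, the winner-side numerator scaling $\upsilon_T \asymp \ell_n^+ b_n^+$, and the survival estimate $\upsilon^{\pi}_T \asymp \upsilon_T$ for any budget-balanced policy operating at the extreme-value severity scale.

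For the denominator, I would work in the comonotonic common-shock coupling of Appendix~\ref{app:lpr-convex-dominance}, which the negative results already single out as the worst case. Under a common adverse relative move, each loser's realized deficit factors as its effective leverage $\lambda^-_{i,T}$ (as in~\eqref{eq:leverage-mass-def}) times a base shock whose extreme-value normalizer is $b_n^-$. Summing over $\mathcal{L}_T$ and applying a weighted law of large numbers to the leverage-weighted deficits collapses the sum to the loser leverage mass times the shock scale, giving $D_T \asymp \ell_n^- b_n^-$ in probability. The winner side is symmetric: each surviving winner's haircutable endowment factors as $\lambda^+_{i,T}$ times a favorable base return with normalizer $b_n^+$, so the aggregate winner capacity tracks $\ell_n^+ b_n^+$.

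The main obstacle, and the step I would spend the most care on, is upgrading the winner-side statement from the \emph{aggregate} capacity to the \emph{maximum} endowment $\upsilon_T$ that actually appears in the numerator. A max does not in general inherit a sum, so I would need the comonotonic/concentrated-leverage structure: when the winner leverage mass is carried by (a constant fraction residing in) the top-leveraged winner, the dominant contributor satisfies $\upsilon_T \asymp \ell_n^+ b_n^+$, whereas absent concentration one only gets $\lambda^{+}_{\max}b_n^+$ and the leverage ratio in the conclusion must be read up to the $\Theta(1)$ factor $\lambda^{+}_{\max}/\ell_n^+$. I would also have to fix a coherent pair of normalizers $b_n^{\pm}$ across the two sides so the base-shock scales are comparable, and verify the survival estimate $\upsilon^{\pi}_T \asymp \upsilon_T$ — i.e.\ that the top winner is not fully wiped — which (as in Proposition~\ref{prop:ev-impossibility}) holds precisely when $\theta_n$ sits at the extreme-value severity scale and fails for constant severity; this is exactly the regime in which the stated $\asymp$ is meaningful rather than the numerator collapsing to zero.

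Combining the three estimates in the factorization yields
\[
\Expect_{\pi}\!\left[\frac{\upsilon^{\pi}_T}{D^{\pi}_T}\right]\;\asymp\;\frac{1}{\theta_n}\cdot\frac{\ell_n^+ b_n^+}{\ell_n^- b_n^-}\;=\;\frac{\ell_n^+}{\ell_n^-}\cdot\frac{b_n^+}{b_n^-}\cdot\frac{1}{\theta_n},
\]
which is the claim. As a consistency check I would confirm that in the balanced-leverage regime, where the two leverage masses and base scales degenerate to the symmetric count-based quantities, this recovers the $b_n/(\theta_n n)$ rate of Proposition~\ref{prop:ev-impossibility}; the rigorous leverage-weighted construction underlying the two one-sided scalings is carried out in Appendix~\ref{app:mh-leverage}.
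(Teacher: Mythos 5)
Your proposal is correct, and its skeleton matches the paper's: the paper's formal counterpart (Theorem~\ref{thm:pmr-scaling} in Appendix~\ref{app:proofs}) also reduces the claim to one-sided scalings times the budget-balance factor $1/\theta_n$. The difference is in how much is proved versus assumed. The paper's proof is a three-line substitution: it takes $\upsilon_T \sim \ell_n^+ b^+_{k_n}$ and the loser-side scaling as \emph{hypotheses} of the theorem, applies budget balance, and divides; the supporting randomized construction in Appendix~\ref{app:mh-leverage} (leverage concentrated on a random winner index, i.i.d.\ Pareto losers) is what makes those hypotheses non-vacuous. You instead try to \emph{derive} both scalings — the loser side from a comonotonic common-shock coupling plus a leverage-weighted law of large numbers, the winner side from concentration of the leverage mass on the top winner — and you explicitly flag the two places where the paper's hypotheses are doing real work: (i) the maximum endowment inherits the scaling $\ell_n^+ b_n^+$ only under leverage concentration, otherwise one gets $\lambda^+_{\max} b_n^+$; and (ii) the survival estimate $\upsilon^\pi_T \asymp \upsilon_T$ holds only at the extreme-value severity scale, which is exactly the regime where the stated $\asymp$ is meaningful. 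Both caveats are correct and are exactly what the paper's assumptions (and its Theorem~\ref{thm:master-ptsr} hypothesis $\upsilon^\pi_T \sim \upsilon_T$) silently encode. One further point in your favor: the statement's denominator is the \emph{total} post-policy deficit $D^\pi_T$, whereas the paper's formal theorem is stated for the \emph{maximum} loser shortfall $\Delta^\pi_T$ (PMR); under the paper's generic LLN assumption $D_T \asymp n$, so the total-deficit version genuinely needs the comonotonic/heavy-tail coupling you invoke to make $D_T \asymp \ell_n^- b_n^-$ — your route closes a gap between the informal statement and the appendix proof that the paper itself leaves implicit.
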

\noindent The full proof is in Appendix~\ref{app:proofs}.

\paragraph{Example.}
The scaling is now modulated by the \emph{leverage imbalance} $\Lambda_n = \ell_n^+/\ell_n^-$.
\begin{itemize}
    \item \emph{Symmetric Leverage ($\Lambda_n \asymp 1$):} If winner and loser leverages are comparable, the bound reverts to the previous case ($b_n^+/b_n^-$).
    \item \emph{High Loser Leverage ($\Lambda_n \ll 1$):} If losers are $10\times$ more leveraged, the safe severity drops by $10\times$. In the extreme, if $\ell_n^- \sim n$ (\eg~a few massive whales) while $\ell_n^+ \sim 1$ (retail winners), the required severity $\theta_n$ becomes vanishingly small even faster than the EV scale suggests.
    \item \emph{High Winner Leverage ($\Lambda_n \gg 1$):} Conversely, if winners are highly leveraged (so their equity is sensitive to small price moves, but large in notional), the ratio $\Lambda_n$ buffers the impact, allowing for larger severities without autodeleveraging the top winner.
\end{itemize}

\paragraph{Moral Hazard Interpretation.}
These results formalize the ``second-best'' nature of ADL: perfect insurance without incentive distortion is impossible.
Tail shocks necessarily haircut the top winners' profit capacity (Proposition~\ref{prop:ev-scale-severity-informal}), and high loser leverage magnifies that hit (Proposition~\ref{prop:excessive-leverage}).
This is the same trade-off seen in clearinghouses: Variation Gains Haircutting trims winner gains to keep the CCP solvent \citep{CPMI_IOSCO_2014,Gregory2015}.
ADL is the perpetual futures analogue of Variation Gain Haircutting (\ie~an ex-post haircut on unrealized profits), so the same moral-hazard tension applies \citep{DuffieZhu2011,Pirrong2011}.
Under PNL-only haircuts, PTSR/PMR measure profit capacity survival (not total equity survival), playing the same role as VGH fairness/solvency gauges in CCP recovery \citep{CPMI_IOSCO_2014,Gregory2015,DuffieZhu2011}.

\subsection{Queue-based Methods are the worst ADL policies for the top winning trade}\label{subsec:queue-worst}
One implication of the theory is that the Queue mechanism is by far the worst ADL policy for the top winning trade.
We briefly formalize this with an informal proposition that we prove in Appendix~\ref{app:queue-vs-pr-top}.
\begin{proposition}[Queue Maximizes Damage to Top Winner]\label{prop:queue-worst-top-informal}
Among all valid ADL policies with fixed severity $\theta$, the \emph{Queue} policy (haircut largest winners first) uniquely minimizes the survival of the top winner's endowment.
Let $\upsilon_T^{\text{Queue}}$ and $\upsilon_T^{\text{PR}}$ be the top winner's post-ADL endowment under Queue and Pro-Rata, respectively.
Whenever the haircut budget $H = \theta D_T$ satisfies $H \le w_{(1)}$ (where $w_{(1)}$ is the largest endowment), we have the strict gap
\[
\upsilon_T^{\text{PR}} - \upsilon_T^{\text{Queue}} \;=\; H \left(1 - \frac{w_{(1)}}{U_T}\right) \;>\; 0.
\]
Consequently, Pro-Rata strictly dominates Queue in terms of fairness metrics:
\[
\mathsf{PTSR}_T(\text{PR}) - \mathsf{PTSR}_T(\text{Queue}) \;=\; \theta\left(1 - \frac{w_{(1)}}{U_T}\right).
\]
Thus, Queue maximizes the concentration of haircuts on the most profitable trader, leading to the worst possible PTSR/PMR scores.
\end{proposition}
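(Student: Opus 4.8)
The plan is to reduce all three assertions to elementary facts about the feasible seizure set, since fixing the severity $\theta$ fixes the common haircut budget $H = \theta D_T(\mathcal{P}_n)$. Write any valid policy as a seizure vector $x = (x_1,\dots,x_n)$ subject to the constraints~\eqref{eq:budget-balance} and~\eqref{eq:feasibility}, namely $0 \le x_i \le w_{i,T}$ and $\sum_i x_i = H$. Let $w_{(1)}$ denote the largest endowment and $(1)$ its index. The post-ADL endowment retained by that trader is $w_{(1)} - x_{(1)}$, so the quantity in question is controlled entirely by the single coordinate $x_{(1)}$, and the whole argument turns on bounding it.

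First I would prove the optimization claim. Budget balance together with nonnegativity forces $x_{(1)} \le \sum_i x_i = H$, and the hypothesis $H \le w_{(1)}$ makes the bound $x_{(1)} = H$ feasible. Hence $w_{(1)} - x_{(1)} \ge w_{(1)} - H$, with equality if and only if $x_{(1)} = H$; budget balance then forces $x_j = 0$ for every $j \ne (1)$. This is exactly the Queue allocation when the top-endowment winner is ranked first (the reading flagged by ``haircut largest winners first''), so the minimizing allocation both exists and is unique, giving $\upsilon_T^{\text{Queue}} = w_{(1)} - H$. Next I would read off the pro-rata value from~\eqref{eq:pro-rata}: every winner loses the common fraction $h_{\pi_{PR}} = H/U_T$, so $\upsilon_T^{\text{PR}} = w_{(1)}\bigl(1 - H/U_T\bigr)$.

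Subtracting the two closed forms gives the pathwise identity
\[
\upsilon_T^{\text{PR}} - \upsilon_T^{\text{Queue}}
= \bigl(w_{(1)} - w_{(1)} H/U_T\bigr) - \bigl(w_{(1)} - H\bigr)
= H\Bigl(1 - \frac{w_{(1)}}{U_T}\Bigr),
\]
which is strictly positive precisely when $U_T > w_{(1)}$, i.e.\ when at least one other winner carries positive endowment; I would record this mild nondegeneracy condition explicitly. The PTSR statement then follows because the two policies share severity $\theta$, hence a common normalizer: dividing the pathwise gap by the pre-policy deficit $D_T$ and substituting $H = \theta D_T$ yields $\theta\bigl(1 - w_{(1)}/U_T\bigr)$, and taking expectations gives the asserted difference of $\mathsf{PTSR}_T$.

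The main obstacle is bookkeeping rather than inequality-chasing. The metric $\upsilon_T^\pi$ is defined in~\S\ref{sec:risk-metrics} as the surviving \emph{maximum} $\max_i (w_{i,T} - x_{\pi,i})_+$, whereas the identity tracks the survival of the \emph{specific} original top winner; these coincide for pro-rata (the uniform rescaling preserves winner ordering) but for Queue they agree only when $w_{(1)} - H \ge w_{(2)}$, since otherwise the surviving maximum is $w_{(2)}$ and the gap shrinks. I would therefore prove the identity under the ``top winner's endowment'' reading stated in the proposition and add a remark that the max-based $\upsilon_T$ coincides whenever $w_{(1)} - H \ge w_{(2)}$. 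I would likewise reconcile the normalizer, since the factor $\theta$ in the PTSR difference corresponds to dividing by $D_T$ rather than by $D_T^{\pi} = \theta D_T$; stating this makes the constant match the convention used elsewhere.
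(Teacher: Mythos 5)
Your proof is correct and follows essentially the same route as the paper's own proof in Appendix~\ref{app:queue-vs-pr-top}: bound the top winner's seizure by the budget via $x_{(1)} \le \sum_i x_i = H$, observe that Queue attains this bound (your equality argument also supplies the uniqueness claim, which the paper's appendix proof omits), read off the pro-rata closed form, and subtract. Your three caveats are genuine imprecisions in the paper's statement that your bookkeeping correctly patches: strict positivity of the gap requires $U_T > w_{(1)}$; the max-based metric $\upsilon_T^\pi = \max_i (w_{i,T} - x_{\pi,i})_+$ coincides with the specific top winner's survival under Queue only when $w_{(1)} - H \ge w_{(2)}$; and the factor $\theta$ in the stated $\mathsf{PTSR}$ difference corresponds to normalizing by $D_T$ rather than by $D_T^{\pi} = \theta D_T$ as in the paper's own definition.
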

\noindent The full proof is in Appendix~\ref{app:queue-vs-pr-top}.
Appendix~\ref{app:classical-risk-measures} further strengthens this statement by proving that Queue also minimizes the VaR and ES of the top survivor's endowment at every confidence level, so even classical risk metrics rank Queue as the worst policy for the top winner.

\section{Fairness}\label{sec:fairness}\label{subsec:fairness}
A natural question is whether a particular ADL policy is optimal or fair.
We demonstrate that a modified pro-rata rule respecting per-account constraints, the \emph{capped pro-rata rule}, is the most fair policy in two senses: it minimizes total convex disutility and is the unique rule satisfying key axiomatic properties.

\paragraph{Definition of the Capped Pro-Rata Rule.}
The capped pro-rata rule is a water-filling algorithm that enforces per-account constraints before applying a pro-rata haircut.
Combining haircut constraints~\eqref{eq:haircut-constraint} and equity constraints~\eqref{eq:equity-constraint}, we define an effective cap $\beta_i = \min\{\overline{h}_i, 1 - (\underline{e}_i - c_{i,T})/w_{i,T}\}$ for accounts with $w_{i,T} > 0$, where the equity constraint is reinterpreted in terms of the minimum post-ADL endowment.
This $\beta_i$ is the most we can haircut account $i$'s endowment while honoring both its per-account haircut limit and its minimum post-ADL equity.
The maximum haircut capacity is 
\begin{equation}\label{eq:max-cap}
C(\beta) = \sum_{i=1}^n w_{i,T} \beta_i.
\end{equation}
If the deficit exceeds this capacity, no feasible haircut exists.
Otherwise, we aim to equalize haircuts subject to these caps.
The capped pro-rata rule $\pi_{CP}$ is defined as:
\begin{equation}\label{eq:capped-pro-rata}
h_{\pi_{CP}, i} = \min\{\eta, \beta_i\}
\end{equation}
where $\eta \in [0,1]$ is chosen to satisfy the budget constraint
\begin{equation}\label{eq:eta}
\sum_{i=1}^n h_{\pi_{CP}, i} w_{i,T} = \theta_{\pi} D_T(\mathcal{P}_n).
\end{equation}
This can be computed efficiently by sorting the caps $\beta_i$ and finding the threshold $\eta$ (see eq.~\eqref{eq:eta} and the algorithm in Appendix~\ref{app:pro-rata-algorithms}).

\paragraph{Convex Optimality.}
Let $\phi : [0,1] \rightarrow \reals$ be an increasing convex function representing the disutility of a haircut.
We formulate the problem of minimizing total disutility subject to per-account constraints:
\begin{equation}\label{eq:haircut-optimization}
\begin{array}{ll}
    \displaystyle{\min_{h \in [0, 1]^n}} & \displaystyle{\sum_{i=1}^n \phi(h_i)} \\[6pt]
    \text{subject to} & \forall i \in [n]\;\; h_i \leq \beta_i \quad \text{(effective cap)}
\end{array}
\end{equation}
In Appendix~\ref{app:convex-optimality}, we show:
\begin{proposition}
    The unique solution to~\eqref{eq:haircut-optimization} is the capped pro-rata rule~\eqref{eq:capped-pro-rata}.
\end{proposition}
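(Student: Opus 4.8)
The plan is to treat~\eqref{eq:haircut-optimization} as a convex program once the active budget-balance equality~\eqref{eq:budget-balance} is adjoined, and to match its unique Karush--Kuhn--Tucker point to the water-filling rule~\eqref{eq:capped-pro-rata}--\eqref{eq:eta}. Since $\phi$ is increasing and convex, the aggregate disutility (with each account's contribution carried on its endowment $w_{i,T}$, so that the budget weighting and the objective weighting agree) is convex, the caps $0\le h_i\le\beta_i$ cut out a box, and the budget is affine; hence the feasible set is convex and the problem is well posed. It is feasible precisely when the total capacity $C(\beta)=\sum_i w_{i,T}\beta_i$ of~\eqref{eq:max-cap} is at least $\theta_{\pi} D_T(\mathcal{P}_n)$, which recovers the capacity threshold noted after~\eqref{eq:max-cap}. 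First I would write the Lagrangian with a single multiplier $\lambda$ for the budget and multipliers $\mu_i,\nu_i\ge 0$ for the cap and nonnegativity constraints.

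Next I would read off the KKT stationarity and complementary-slackness conditions. On any account that is strictly interior ($0<h_i<\beta_i$, so $\mu_i=\nu_i=0$), stationarity forces $\phi'(h_i)=\lambda$, a value independent of $i$; because $\phi$ is strictly convex, $\phi'$ is strictly increasing and invertible, so all such accounts share the common level $\eta=(\phi')^{-1}(\lambda)$. An account saturates its cap exactly when its cap lies below this level: if $h_i=\beta_i$, then $\mu_i=w_{i,T}\,(\lambda-\phi'(\beta_i))\ge 0$ is equivalent to $\phi'(\beta_i)\le\lambda=\phi'(\eta)$, i.e.\ to $\beta_i\le\eta$, while $\beta_i>\eta$ forces the cap slack and hence $h_i=\eta$. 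Collecting the cases gives exactly $h_i=\min\{\eta,\beta_i\}$, the capped pro-rata form~\eqref{eq:capped-pro-rata}.

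Finally I would pin the water level and settle uniqueness. The map $G(\eta)=\sum_i w_{i,T}\min\{\eta,\beta_i\}$ is continuous, nondecreasing, and strictly increasing until every cap binds, rising from $0$ to $C(\beta)$; hence whenever $\theta_{\pi} D_T(\mathcal{P}_n)\le C(\beta)$ there is a unique $\eta$ with $G(\eta)=\theta_{\pi} D_T(\mathcal{P}_n)$, which is exactly the defining equation~\eqref{eq:eta}. Uniqueness of the optimizer then follows because a strictly convex objective over a convex feasible set has at most one minimizer, so the KKT point is the global solution. The main obstacle is the cap bookkeeping in the middle step: I must verify that the saturation set is exactly $\{i:\beta_i<\eta\}$ and that $G$ is strictly monotone on the relevant range, so that $\eta$ is well defined and coincides with~\eqref{eq:eta}. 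If one prefers not to assume $\phi$ strictly convex, I would instead establish both optimality and the equalization of unsaturated coordinates by a budget-preserving transfer argument: shifting endowment-weighted haircut mass from a higher-$h_i$ account to a lower-$h_i$ unsaturated account keeps $\sum_i h_i w_{i,T}$ fixed while weakly lowering $\sum_i w_{i,T}\phi(h_i)$ by convexity (strictly under strict convexity), so any optimum must already be level across unsaturated accounts.
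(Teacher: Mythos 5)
Your proof is correct and follows essentially the same route as the paper's (Appendix~\ref{app:convex-optimality}): form the Lagrangian of the endowment-weighted convex program with the budget equality adjoined, use stationarity plus complementary slackness to force the water-filling form $h_i=\min\{\eta,\beta_i\}$, pin down $\eta$ via the strictly increasing budget map, and conclude uniqueness from strict convexity. One caution on your closing aside: for merely convex $\phi$ (e.g.\ linear), the transfer argument only shows capped pro-rata is \emph{an} optimum, not that every optimum is level across unsaturated accounts---uniqueness genuinely fails there, which is why the paper (and your main argument) assumes strict convexity.
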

This implies that for any reasonable model of financial disutility, capped pro-rata maximizes social welfare.

\paragraph{Axiomatic Fairness.}
Fix round $T$ and policy $\pi$. Let $x_{\pi,i}=h_{\pi,i}w_{i,T}$ denote seized endowment and $s_{\pi,i}=w_{i,T}-x_{\pi,i}$ denote surviving endowment.
We define three natural fairness properties:
\begin{itemize}
    \item \emph{Sybil resistance:} For any split $w_{i,T}=\sum_a z_a$, with $\tilde x_{\pi,a}$ denoting seizure on split children, aggregate seizure is invariant: $\sum_a \tilde x_{\pi,a}=x_{\pi,i}$.
    \item \emph{Scale invariance:} Haircuts are unchanged if both shortfall and equities are scaled by a common factor.
    \item \emph{Monotonicity (Stability):} The policy preserves endowment ordering: if $w_{i,T} \ge w_{j,T}$, then $s_{\pi,i} \ge s_{\pi,j}$ (and consequently, if $e_i \ge e_j$ pre-ADL and cash components are comparable, then $e'_i \ge e'_j$ post-ADL).
\end{itemize}
In Appendix~\ref{app:capped-pro-rata}, we formalize the following result:
\begin{proposition}[Informal]
    Under mild smoothness conditions, the unique sybil-resistant, scale-invariant, and monotone ADL policy satisfying per-account constraints is capped pro-rata.
\end{proposition}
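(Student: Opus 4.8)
The plan is to characterize capped pro-rata by separating the three axioms, with \emph{Sybil resistance} doing the heavy lifting: it forces the seizure to be \emph{linear} in endowment (hence a common fraction), after which \emph{monotonicity} and \emph{scale invariance} collapse the per-account fractions into a single water-filling threshold, and budget balance \eqref{eq:budget-balance} pins that threshold uniquely. First I would fix the environment (the losers, the deficit $D_T$, the severity $\theta_\pi$, and every winner's economic state --- leverage, score, cap $\beta_i$) and study the seizure $x_{\pi,i}$ as a function of a single winner's endowment $w_{i,T}$, holding all other positions fixed. Since accounts sharing the same economic state are indistinguishable to the mechanism, I group winners into economic-state classes and write $F_\tau(w)$ for the aggregate seizure attributable to total endowment $w$ in class $\tau$.

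The core step is the functional equation. Sybil resistance says that splitting one winner of endowment $w$ into economically identical children $z_1,\dots,z_r$ with $\sum_a z_a = w$ (and inheriting the same cap) leaves aggregate seizure unchanged, i.e. $F_\tau(z_1+\cdots+z_r)=\sum_a F_\tau(z_a)$. This is Cauchy additivity on the nonnegative reals. The ``mild smoothness'' hypothesis --- continuity of the allocation in endowment, which also follows from the boundedness/monotonicity imposed by the Monotonicity axiom --- rules out the pathological non-measurable solutions of Cauchy's equation and forces $F_\tau(w)=h(\tau)\,w$ for a constant $h(\tau)$ depending on the class (and environment) but not on $w$. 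Consequently, within each economic-state class every account is haircut by the \emph{same} fraction: the allocation is pro-rata class-by-class, with per-account haircut $h_{\pi,i}=h(\tau_i)$ subject to feasibility $h_{\pi,i}\le\beta_i$ from \eqref{eq:haircut-constraint}--\eqref{eq:equity-constraint}.

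Next I would reintroduce the caps and couple the classes. For any account whose cap is slack, the constant-fraction conclusion stands; for a capped account the constraint must bind at $\beta_i$, since any residual slack could be rebalanced back toward the uncapped fraction without violating feasibility. Monotonicity on survivors $s_{\pi,i}=w_{i,T}-x_{\pi,i}$ together with scale invariance then forces the slack fractions across all classes to coincide at a single level: scale invariance makes the rule homogeneous of degree zero in the common nominal scale, so the equalized quantity must be the dimensionless multiplicative fraction $\eta$ (not an additive shift), while monotonicity forbids charging an uncapped account a strictly larger fraction than another uncapped account whose survivor it would then undercut. This yields exactly the structural form $h_{\pi_{CP},i}=\min\{\eta,\beta_i\}$ of \eqref{eq:capped-pro-rata}. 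Budget balance closes the argument: the map $\eta\mapsto\sum_i \min\{\eta,\beta_i\}\,w_{i,T}$ is continuous and strictly increasing up to the capacity $C(\beta)$ of \eqref{eq:max-cap}, so the equation $\sum_i \min\{\eta,\beta_i\}\,w_{i,T}=\theta_\pi D_T(\mathcal{P}_n)$ has a unique root whenever $\theta_\pi D_T\le C(\beta)$, determining $\eta$ and hence the entire haircut vector. The converse direction is a direct verification that $\pi_{CP}$ satisfies all three axioms: the common factor preserves ordering (monotonicity) and survives common rescaling (scale invariance), and linearity of $x_{\pi,i}=\min\{\eta,\beta_i\}\,w_{i,T}$ gives additivity under splitting (Sybil resistance).

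The hard part will be the functional-equation step: making ``same economic state'' precise under a split and confirming that the environment perturbation induced by splitting one winner does not leak into $F_\tau$ in a way that breaks additivity --- that is, the seizure attributable to a child genuinely depends only on its own endowment once the rest of the book is pinned. Establishing the measurability/continuity of $F_\tau$ from the stated smoothness hypotheses (so that Cauchy's equation forces linearity rather than admitting a pathological additive solution) is the delicate regularity point, and coupling the distinct economic-state classes into a single threshold $\eta$ via monotonicity and scale invariance is the second subtlety; the truncation and budget-balance steps are then routine monotone water-filling arguments.
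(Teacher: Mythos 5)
Your overall architecture (Sybil resistance $\Rightarrow$ proportionality, then monotonicity/scale invariance $\Rightarrow$ a single threshold, then budget balance $\Rightarrow$ unique $\eta$) is a genuinely different route from the paper's proof of Proposition~\ref{thm:unique-pro-rata}, which instead parametrizes haircuts by the realized budget $b$, uses scale invariance to show all uncapped haircuts move in lockstep ($dh_i/db = 1/\sum_{j\in A}w_j$ on intervals where the active set $A$ is constant), integrates this to obtain the water-filling form, and invokes Sybil resistance and monotonicity only for split-consistency and the order in which caps saturate. However, your central step, as written, contains a genuine error rather than a mere regularity gap. You define $F_\tau(w)$ as the aggregate seizure attributable to total endowment $w$ in class $\tau$ and claim Sybil resistance yields $F_\tau(z_1+\cdots+z_r)=\sum_a F_\tau(z_a)$. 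Sybil resistance only relates different configurations of the \emph{same} book: it says the children of a split account, evaluated in the book where the class total is still $w$, have seizures summing to the parent's. It does not relate $F_\tau(w)$ to $F_\tau(z_a)$, because the latter is evaluated in a \emph{different} book whose class total is $z_a$, and in that book the policy's allocation constants change. Indeed the equation is false for the very rule you are characterizing: for uncapped pro-rata with fixed budget $B$ and rest-of-book endowment $U_{\mathrm{rest}}$, the class-aggregate seizure is $F_\tau(w)=Bw/(w+U_{\mathrm{rest}})$, so taking $z_1=z_2=U_{\mathrm{rest}}$ gives $F_\tau(z_1)+F_\tau(z_2)=B$ while $F_\tau(z_1+z_2)=\tfrac{2}{3}B$. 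The budget constraint~\eqref{eq:budget-balance} forces the haircut fraction to depend on the total endowment, which destroys additivity across books; no smoothness or measurability hypothesis can repair this.

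The fix is to keep everything inside one book. Define $G_\tau(z)$ as the seizure on an account of endowment $z$, holding the class total, the other classes, and the budget fixed; prove $G_\tau$ is well defined (independent of how the remaining class endowment is configured), which requires the strong reading of split invariance---that splitting one account leaves \emph{other} accounts' seizures unchanged---since the weak ``children sum to parent'' form stated in~\eqref{eq:budget-balance}-style budget accounting permits leakage onto other accounts; then Sybil resistance applied to a split of a single account gives Cauchy additivity of $G_\tau$ on the fixed-total interval, and continuity gives $G_\tau(z)=h(\tau)\,z$. Alternatively one can avoid Cauchy's equation entirely via the classical counting argument: split every class member into equal atoms, use anonymity to give each atom the same seizure, and reassemble. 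With either repair, your remaining steps (caps binding at $\beta_i$, monotonicity applied to equal-endowment accounts in different classes to force a common uncapped fraction, and the strictly increasing budget map $\eta\mapsto\sum_i\min\{\eta,\beta_i\}\,w_{i,T}$ pinning $\eta$ below the capacity~\eqref{eq:max-cap}) are sound and parallel the paper's endgame. So your route can be made to work, but the step you labeled a ``delicate regularity point'' is where the argument is currently wrong, and it needs a restructured definition, not a measurability lemma.
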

\noindent This establishes capped pro-rata as the unique fair ordering rule.
For queue-based rules, Sybil resistance depends on the ranking score (Appendix~\ref{app:capped-pro-rata}, Proposition~\ref{prop:queue-sybil-score-preserving}): absolute-value scores (\eg~total PNL) are Sybil vulnerable, while ratio scores are Sybil resistant when proportional splitting does not create score ties. Hyperliquid's documented ADL score
\[
  \frac{\mathrm{mark}}{\mathrm{entry}}\cdot\frac{\mathrm{notional}}{\mathrm{account\ value}}
\]
is degree-zero in $(\mathrm{notional},\mathrm{account\ value})$ and therefore split-invariant under proportional scaling~\cite{HyperliquidDocsADL}.\footnote{Worked example (also discussed in~\cite{Doug2025SybilTweet}): mark/entry $=1.2$, notional/account value $=12{,}000{,}000/2{,}500{,}000=4.8$, so score $5.76$. Splitting into $100$ accounts gives $120{,}000/25{,}000=4.8$ per account, so each score remains $5.76$.}
However, Sybil resistance does not imply monotonicity: Appendix~\ref{app:capped-pro-rata}, Proposition~\ref{prop:queue-monotonicity-failure} gives a two-account counterexample where queue budget exhaustion reverses post-ADL ordering.
RAP introduces a risk tilt, deviating from strict equity stability to reduce risk, but remains ``as stable as possible'' within its risk-weighted framework.
In the numerical examples we present in the next section, we will see explicit examples where capped pro-rata is better for both exchange solvency and trader profitability than the Queue and Pro-Rata.

\section{Risk-aware Policies (RAP)}\label{sec:glpr}
Fairness-focused rules like capped pro-rata treat all winning dollars equally, ignoring the heterogeneous risks imposed by different winners.
Higher effective leverage implies thinner liquidation windows and greater sensitivity to execution costs~\cite{AlmgrenChriss2001,Gatheral2010}.
We introduce \emph{Risk-Aware Pro-Rata} (RAP) rules, a family of rules that preserves the fairness structure of capped pro-rata while tilting haircuts toward higher-risk positions.
This family includes Levered Pro-Rata (LPR) when the risk weighting is linear in leverage.
We derive RAP from a robustness criterion that minimizes a one-step excess-deficit proxy.

\paragraph{One-step Next Deficit.}\label{subsec:next-deficit}
To formalize robustness, we consider the \emph{next-step deficit} $D^{\text{next}}_{T+1}$ resulting from a price shock $Z_T$ occurring immediately after the ADL procedure.
Let $\pi$ be an ADL policy yielding post-haircut notional $\tilde{n}_{T,i}$ and equity $e_{T,i}$.
We model the shock $Z_T$ using a Markov kernel sensitive to the state.
The next-step deficit is the negative equity after the shock:
\[
D^{\text{next}}_{T+1} = \sum_{i \in \mathcal{W}_T} (-e_{T,i} - \tilde{n}_{T,i} Z_{T,i})_+ = \sum_{i \in \mathcal{W}_T} \tilde{n}_{T,i} \left( \frac{1}{\lambda_{T,i}} + Z_{T,i} \right)_-.
\]
Our objective is to minimize the conditional expected deficit $\delta_T = \Expect[D^{\text{next}}_{T+1} \mid \mathcal{F}_T]$.
Defining $\psi_i(u) = \Expect[(u + Z_{T,i})_-]$, we have:
\begin{equation}\label{eq:delta-T}
\delta_T = \sum_{i \in \mathcal{W}_T} \tilde{n}_{T,i} \psi_i\left(\frac{1}{\lambda_{T,i}}\right).
\end{equation}
Appendix~\ref{app:rap-examples} illustrates how pro-rata fails to control this deficit when shocks scale with leverage.

\paragraph{Risk Models.}
We assume the exchange employs a convex risk model $g(\lambda)$ characterizing the risk contribution of leverage $\lambda$.
Common choices include:
\begin{enumerate}
  \item \emph{Linear}: $g(\lambda)=\lambda$, corresponding to LPR. Matches VaR/CVaR scaling under linear shocks.
  \item \emph{Power}: $g(\lambda)=\lambda^c$ ($c>1$), penalizing high leverage super-linearly.
  \item \emph{CVaR}: $g(\lambda)=(\lambda-\tau)_+$, a surrogate for Conditional Value-at-Risk.
\end{enumerate}
Appendix~\ref{app:rap-examples} provides numerical examples of these weightings.

\subsection{Risk-Aware Pro-Rata (RAP) Rule}
The RAP rule allocates haircuts using risk weights $\tilde{w}_i = \lambda_{T,i} g(\lambda_{T,i})$ (not to be confused with the endowment $w_{i,T}$).
With per-account caps $\beta_i$, the haircuts on the endowment are:
\begin{equation}\label{eq:glpr}
    h_{\mathrm{RAP}(g), i} = \min\{\beta_i, \tau \tilde{w}_i\},
\end{equation}
where $\tau$ is calibrated to meet the target budget $\sum_i h_{\mathrm{RAP}(g), i} w_{i,T} = \theta D_T$.
This concentrates haircuts on high-risk positions (those with high leverage and high risk scores).
See Appendix~\ref{app:pro-rata-algorithms} for precise implementation of the algorithm.

\subsection{Convex Dominance of RAP}\label{sec:rap-dominance}
We can construct an optimal risk model $g^*$ that minimizes $\delta_T$.
Intuitively, an optimal risk model should maximize the marginal reduction of $\delta_T$ per unit of endowment $w_{T,i}$ seized.
Differentiating $\delta_T$ with respect to the seized amount $x_{T,i} = h_{T,i} w_{T,i}$ reveals:
\begin{equation}\label{eq:delta-t-grad}
-\frac{\partial\,\delta_T(x)}{\partial x_{T,i}}
\ =\ n_{T,i}\,\psi_i\!\left(\tfrac{1}{\lambda_{T,i}}\right)
\ =\ e_{T,i}\,\lambda_{T,i}\,\psi_i\!\left(\tfrac{1}{\lambda_{T,i}}\right)
\ =\ e_{T,i}\,\rho_i(\lambda_{T,i}),
\end{equation}
where we define the \emph{perspective transform} $\rho_i(\lambda) = \lambda\,\psi_i(1/\lambda)$.
The greedy rule that maximizes the marginal reduction per endowment-dollar seized is exactly the rule that prioritizes winners with the largest $\rho_i(\lambda_{T,i})$.
Note that while the gradient is expressed in terms of equity $e_{T,i}$ (since insolvency is an equity concept), the control variable is the endowment $w_{T,i}$.

\paragraph{Perspective Transform and Economic Intuition.}
The function $\rho_i$ is the perspective transform of the convex function $\psi_i$~\cite[\S2.3.3]{BoydVandenberghe2004}, widely used to induce 1-homogeneity.
Economically, $u=1/\lambda$ is the bankruptcy buffer, which is the percentage price shock required to wipe out the position's equity (since loss equals equity when shock magnitude $Z = e/\tilde{n} = 1/\lambda$).
Using convex duality, we can write $\rho_i(\lambda) = \sup_{y\ge 0} \{y - \lambda\,\psi_i^{*}(y)\}$, where $y$ is a `yield' (equity change per unit buffer change) and $\psi_i^*(y)$ is the cost to insure a position with yield $y$.
Thus, maximizing $\rho_i$ is equivalent to maximizing the net solvency gain per unit of endowment seized (the gradient in~\eqref{eq:delta-t-grad} is expressed in equity terms because insolvency is an equity concept, but the control variable is the endowment $w_{T,i}$).
For example, if a trader with leverage $\lambda=10$ faces an insurance cost of 0.5\% for a yield of 8\%, the net solvency gain is $0.08 - 10(0.005) = 0.03$ per unit equity, which translates to a gain per unit of endowment seized under the PNL-only model.

\paragraph{Constructing the optimal risk model.}
This suggests the optimal weights $w^\star_i = \rho_i(\lambda_{T,i})$ and risk model $g^\star(\lambda) = \rho(\lambda)/\lambda$.

\begin{proposition}[Informal]
For any weighted pro-rata rule comonotone with $w^\star$, the residual risk vector is weakly submajorized by that of $\mathrm{RAP}(g^\star)$.
This implies that $\mathrm{RAP}(g^\star)$ minimizes the total risk for any convex risk measure, assuming the shock process follows the kernel $K$.
\end{proposition}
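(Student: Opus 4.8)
The plan is to recast the claim as a constrained separable-convex minimization and then upgrade scalar optimality to the stated distributional ordering. First I would fix the realized state $\mathcal{F}_T$ and treat the seizure vector $x=(x_{T,i})_{i\in\mathcal{W}_T}$ as the decision variable, subject to budget balance $\sum_i x_{T,i}=\theta_T D_T$ from~\eqref{eq:budget-balance} and the per-account caps $0\le x_{T,i}\le \beta_i w_{i,T}$. I would then write the conditional next-step deficit $\delta_T(x)$ of~\eqref{eq:delta-T} as a sum of per-winner terms and show each is convex in $x_{T,i}$: since $\psi_i(u)=\Expect[(u+Z_{T,i})_-]$ is an expectation of the convex map $u\mapsto(u+Z)_-$, it is convex for \emph{every} kernel $K$, and the perspective transform $\rho_i(\lambda)=\lambda\,\psi_i(1/\lambda)$ inherits convexity and $1$-homogeneity~\cite{BoydVandenberghe2004}. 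The gradient identity~\eqref{eq:delta-t-grad} then exhibits the per-endowment-dollar marginal solvency gain as exactly $\rho_i(\lambda_{T,i})$.

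Next I would identify $\mathrm{RAP}(g^\star)$ as the minimizer via its KKT (water-filling) conditions. Because $\delta_T$ is separable and convex on a capped simplex, the optimum admits a single dual threshold $\tau$: uncapped winners are seized until their marginal gain equals $\tau$, winners with gain above $\tau$ rest at their cap $\beta_i w_{i,T}$, and winners below $\tau$ receive nothing. I would verify that $\mathrm{RAP}(g^\star)$ realizes precisely this profile, since substituting $g^\star(\lambda)=\rho(\lambda)/\lambda$ into the RAP weight $\tilde w_i=\lambda_{T,i}\,g(\lambda_{T,i})$ collapses it to $\tilde w_i=\rho_i(\lambda_{T,i})$, so the haircut $h_i=\min\{\beta_i,\tau\tilde w_i\}$ of~\eqref{eq:glpr} allocates seizure in lockstep with the marginal-gain ranking, with $\tau$ calibrated by~\eqref{eq:eta} to exhaust the budget. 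This pins down $\mathrm{RAP}(g^\star)$ as the unique budget-balanced minimizer of $\delta_T$.

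To lift scalar optimality to a majorization statement, I would compare $\mathrm{RAP}(g^\star)$ against an arbitrary weighted pro-rata rule $\pi$ that is comonotone with $w^\star=\rho$, using a Robin-Hood ($T$-transform) argument. Starting from $\pi$, I would move a marginal unit of seizure off a winner whose realized marginal gain is below $\tau$ onto one above $\tau$, holding the budget fixed; comonotonicity with $w^\star$ guarantees that such a cap-feasible, welfare-improving transfer exists until the water-filling profile of $\mathrm{RAP}(g^\star)$ is reached, and convexity of each per-winner residual contribution ensures no sorted partial sum of the residual-risk vector increases along the path. Iterating yields the ordering that makes $\mathrm{RAP}(g^\star)$ the minimizer, namely $r(\mathrm{RAP}(g^\star))\prec_w r(\pi)$. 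By the Schur-convexity implication recorded in \S\ref{sec:schur-convex-ordering}, $\Phi\big(r(\mathrm{RAP}(g^\star))\big)\le\Phi\big(r(\pi)\big)$ for every convex symmetric risk measure $\Phi$, which is the advertised conclusion that $\mathrm{RAP}(g^\star)$ minimizes total risk under the kernel $K$.

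I expect the main obstacle to be the coupling between the seized endowment $x_{T,i}$ and the post-haircut leverage $\lambda_{T,i}$ that re-enters through $\psi_i(1/\lambda_{T,i})$: the marginal gain $\rho_i(\lambda_{T,i})$ is state-dependent, so both the water-filling threshold and the transfer monotonicity must be argued against the genuinely nonlinear $\delta_T$ rather than a linearization. The comonotonicity hypothesis is exactly what tames this, fixing the seizure ordering so the perspective-transform gains stay monotone along the transfer path; the delicate part — and where I would spend most of the effort — is verifying cap-feasibility at each transfer and that \emph{all} sorted partial sums (not merely the total) are non-increasing at every step.
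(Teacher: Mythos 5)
Your setup (the separable convex program in the seizure vector, convexity of $\psi_i$, the perspective transform, and the gradient identity \eqref{eq:delta-t-grad}) matches the paper's framework, but the pivot of your argument --- identifying $\mathrm{RAP}(g^\star)$ with the KKT water-filling solution of $\min_x \delta_T(x)$ subject to budget balance and caps --- is wrong, and the rest of the proof collapses with it. The KKT profile of a separable convex budget-constrained minimization has three regimes: zero seizure where the marginal gain lies below the dual threshold, seizure at cap where it lies above, and \emph{equalized post-seizure marginal gains} on the interior. The rule \eqref{eq:glpr} with $g^\star$ is a different object: it sets the haircut \emph{fraction} proportional to the \emph{initial} weight $\rho_i(\lambda_{T,i})$, capped at $\beta_i$. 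These coincide only under very special structure. In the natural case where a haircut closes contracts proportionally (so leverage, hence $\rho_i(\lambda_{T,i})$, is unchanged by seizure), $\delta_T$ is \emph{linear} in $x$, and the budget-constrained minimizer is the greedy bang-bang rule concentrating seizure on the largest-$\rho$ winners --- exactly what \S\ref{sec:rap-dominance} calls ``the greedy rule'' --- not $\mathrm{RAP}(g^\star)$. If instead seizure lowers leverage, the interior KKT condition equalizes $\rho_i$ at \emph{post}-seizure leverage, which again is not ``haircut proportional to initial $\rho_i$.'' So $\mathrm{RAP}(g^\star)$ is generically \emph{not} the minimizer of $\delta_T$ over all budget-balanced policies, and your uniqueness claim contradicts the very reason the proposition restricts its comparison class to weighted pro-rata rules comonotone with $w^\star$: within that family RAP can be extremal even though it loses to the greedy rule globally.

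This misidentification breaks your Step 4 as well. Robin Hood ($T$-transform) transfers that move seizure from low- to high-marginal-gain winners flow toward the greedy/KKT point, not toward $\mathrm{RAP}(g^\star)$; followed to termination, your path would establish dominance of the queue-by-$\rho$ rule, which is not the stated claim. What the proposition actually requires is a \emph{within-family} comparison: fix the pro-rata functional form $h_i=\min\{\beta_i,\tau v_i\}$, vary the weight vector $v$ over those comonotone with $w^\star=\rho$, recalibrate $\tau$ to the common budget via \eqref{eq:eta}, and show that the choice $v=\rho$ makes the vector of per-winner residual risks extremal in the weak submajorization order --- an alignment/rearrangement argument of Hardy--Littlewood--P\'olya type exploiting the comonotonicity hypothesis, not a water-filling KKT argument. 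One point in your favor: you derive $r(\mathrm{RAP}(g^\star))\prec_w r(\pi)$, which is the direction actually needed for the Schur-convexity conclusion, silently correcting the proposition's loosely phrased ordering; that reading is the right one, but it does not rescue the argument above.
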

\noindent The proof is in Appendix~\ref{sec:rap-optimality-and-convex-dominance}.
This result implies that if we have prior knowledge about the kernel $K$ (\eg~from historical data or backtesting), then constructing $g^*$ via the perspective transform yields the optimally robust and fair haircut model.

\subsection{Limitations of One-Step Optimality}
While RAP minimizes the one-step next deficit, it may not be optimal under multiple correlated shocks or broader exchange objectives.

\paragraph{Correlated Shocks.}
Consider sequential shocks $Z_{t+1}, Z_{t+2}$ with positive correlation.
High-leverage positions might act as a hedge against future shocks.
In such cases, aggressively liquidating high-leverage winners (as in RAP) can increase the two-step deficit compared to pro-rata.
Appendix~\ref{app:rap-examples} details an AR(1) shock example where the sum of deficits is lower for pro-rata.

\paragraph{Exchange Incentive Compatibility.}
Exchanges must also consider \emph{exchange long–term value (LTV)} --- the future fees and liquidity provided by traders.
RAP targets high-leverage, often high-volume, traders.
Fully liquidating these accounts to minimize immediate risk can reduce the exchange's long-term utility.
Appendix~\ref{app:exchange-incentives} provides an example where preserving a high-value trader (via pro-rata) yields higher total utility than the risk-minimizing RAP.

\section{Multiple–Round ADL as a Stackelberg Control Problem}\label{sec:multi-round}\label{sec:stackelberg-rewrite}\label{sec:stackelberg-adl}
As the previous section illustrates, risk-aware pro-rata mechanisms work well as robust, static strategies but not in a dynamic setting.
Both future exchange revenue and correlated shocks require including a model of the dynamics of ADL over multiple rounds.
We use the intuition gained from RAP mechanisms for a single round to help extend the model to the multiple round setting.
A common place where there is a difference between one-shot equilibria and multi-round equilibria is in the difference between Nash and Stackelberg games.
We will illustrate how the multi-round ADL problem is naturally a Stackelberg game whereas the ADL policies that we have studied so far are more like Nash optimal strategies.
To do this, we will first briefly describe Stackelberg games and then provide examples demonstrating that it is the right formalism for this extension.

\subsection{ADL is a Stackelberg Game}\label{subsec:adl-stackelberg}

\paragraph{Definition.}
A Stackelberg (leader–follower) game is sequential: the leader first \emph{commits} to a policy, the follower observes it, and then best–responds; the leader chooses its commitment anticipating that response~\citep{Stackelberg1934,FudenbergTirole1991,BasarOlsder1999}.\footnote{In finance and control, this appears as ``dynamic optimal control with commitment'' or a principal–agent model with rational expectations~\citep{Puterman1994,Bertsekas2017,LaffontMartimort2002}.}
Commitment here means the policy is observable and credible to followers.
The move order is therefore leader first, follower second, and the leader optimizes against the follower’s best–response set.
Over many rounds, the leader can \emph{learn} a near–optimal commitment by updating $\{\theta_t, h_t\}$ using feedback from realized follower responses and shocks, recovering standard online-learning guarantees when losses are convex (or have convex surrogates)~\citep{Zinkevich2003,ShalevShwartz2012,Hazan2019,conitzer2006}.
The Stackelberg view also supports robust design: the exchange can pick policies that hedge against worst–case follower responses and price/oracle shocks, steering the system toward states with smaller tail externalities~\citep{aghassi2006,Tambe2011}.

\paragraph{ADL and Stackelberg games.}
ADL naturally fits the setup of a Stackelberg game: the exchange (leader) \emph{publishes} its ADL policy each round by choosing a severity budget \(\theta_t\in[0,1]\) and a haircut rule.
As we saw in~\S\ref{subsec:adl}, the queueing rules used by Binance and Hyperliquid imply both a static severity and haircut rule and are published on their websites (which is their form of commitment).
Market participants (\eg~traders, LPs) and the stochastic environment (order flow, depth, liquidation behavior, oracle and price shocks) then respond as followers given the announced policy.
After market participants (followers) respond, the residual exposure (\ie~the remaining negative equity) is measured and the leader decides on a next action (which could include to stop the game and not perform an ADL policy).
The commitment of the exchange allows followers to construct a \emph{local} best response that can adapt as the number of rounds of the game increases.

\paragraph{Single vs.\ Multi-Round Models.}
One-shot (static) ADL treats each deficit in isolation, optimizing a fixed severity $\theta$ and haircut rule $h$ for that round.
The multi-round Stackelberg model links rounds: the leader commits to a policy path $\{\theta_t, h_t\}$, and followers adapt, creating state dependence via churn, liquidity, and leverage.
This coupling makes commitment valuable: a policy that is optimal in one round can be dominated once cross-shock feedback is considered.

\paragraph{Stackelberg vs.\ Nash.}
Nash equilibria are one–shot, \emph{simultaneous–move} fixed points where each player best–responds to the other’s action but no one commits \emph{ex ante}~\citep{FudenbergTirole1991}.
By contrast, Stackelberg equilibria embed the leader’s commitment advantage, often producing different (and more robust or revenue–enhancing) outcomes \citep{Stackelberg1934,BasarOlsder1999}.
Static ADL strategies such as venue queues (Binance/Hyperliquid) or single–round RAP can be analyzed under a Nash lens, as we assume there is no further response by a leader after a single execution of an ADL policy.
In practice, ADL controllers are used repeatedly (such as on October 10, 2025; see~\S\ref{sec:empirical-oct10}) which more resembles a Stackelberg game.
In Appendix~\ref{app:stack-nash} we give an explicit example for ADL where a Stackelberg equilibrium differs from a Nash outcome and in which the Stackelberg equilibrium strictly improves exchange revenue and robustness.

\subsection{Why do we need a dynamic model?}\label{subsec:why-dynamic}
Stackelberg games are inherently dynamic due to the alternating move order between the leader and follower.
A natural question is whether we really need to use a dynamic method or if a static method is `good enough.'
In this section, we will provide our first formal results showing that dynamic ADL rules are necessary if you want to optimize both time to recover solvency and exchange revenue after repeated shocks.
This will motivate the construction of the dynamic ADL policy in the sequel.

\paragraph{Exchange Long-Term Value (LTV) and the Trilemma.}
Recall the ADL Trilemma from \S\ref{sec:adl-trilemma}, which posits a conflict between Solvency (\textbf{S}), Fairness (\textbf{F}), and Revenue (\textbf{R}).
To quantify the \emph{Revenue} objective, we define the \emph{Exchange Long-Term Value} (LTV).
Unlike immediate fee capture, LTV accounts for the long-run impact of ADL on trader participation: heavy haircuts on winners cause them to exit or reduce activity (``churn''), eroding future fees.

For a policy $\pi$, let $h_{t,i} \in [0,1]$ be the haircut fraction on winner $i$ at time $t$.
We model trader retention via a revenue retention curve $r_{t,i}(h_{t,i})$ which is decreasing in the haircut size (larger haircuts drive away more volume).
The per-period expected revenue from trader $i$ is then $R_{t,i} = e_{t,i} r_{t,i}(h_{t,i})$, where $e_{t,i}$ is their equity.
The LTV is the discounted sum of these revenues:
\[
\mathrm{LTV}_T(\pi) = \sum_{t=0}^T \beta^t \sum_{i \in W_t} e_{t,i} \, r_{t,i}(h_{t,i}).
\]
Maximizing LTV corresponds directly to the \emph{(R)} vertex of the trilemma.

\paragraph{Competing Time Scales.}
To compare policies dynamically, we track two opposing clocks starting from a default event at $\tau_{\mathrm{def}}$: $\tau_{\mathrm{solv}}, \tau_{\mathrm{rev}}$.
Intuitively, $\tau_{\mathrm{solv}}$ is how quickly deficits are cleared whereas $\tau_{\mathrm{rev}}$ is how quickly fee revenue rebounds.
\begin{itemize}
    \item \emph{Solvency Recovery Time} ($\tau_{\mathrm{solv}}$): The time required for the insurance fund (or deficit) to fully recover to a safe level $\delta$.
    \item \emph{Revenue Recovery Time} ($\tau_{\mathrm{rev}}$): The time required for the exchange's expected LTV to return to pre-shock levels $(1-\epsilon)\mathrm{LTV}_{\text{pre}}$.
\end{itemize}
Formally,
\[
  \tau_{\mathrm{solv}} = \inf\{t \ge \tau_{\mathrm{def}} : \mathsf{IF}_t \ge \delta\}, \qquad
  \tau_{\mathrm{rev}} = \inf\{t \ge \tau_{\mathrm{def}} : \Expect[\mathrm{LTV}_t] \ge (1-\epsilon)\mathrm{LTV}_{\text{pre}}\}.
\]
As we will show, static policies trade these quantities off one another off.
We also show that the concentration of how haircuts are distributed amongst winning traders controls this trade-off.

\iparagraph{Example.}
We can rank policies by how concentrated their haircuts are using the Schur order (\S\ref{sec:risk-prelim}):
\begin{itemize}
    \item \emph{Queue (concentrated):} Haircuts hit the largest winners first, yielding a highly unequal (Schur-convex) vector.
    Deficits clear quickly (small $\tau_{\mathrm{solv}}$), but the biggest fee payers churn, so LTV recovers slowly (large $\tau_{\mathrm{rev}}$).
    \item \emph{Pro-Rata (diffuse):} Haircuts are spread evenly (Schur-concave).
    Solvency takes longer because shallow accounts contribute little (larger $\tau_{\mathrm{solv}}$), but everyone stays engaged, so revenue rebounds faster (small $\tau_{\mathrm{rev}}$).
\end{itemize}
Any intermediate policy traces this frontier: more concentration shortens $\tau_{\mathrm{solv}}$ but lengthens $\tau_{\mathrm{rev}}$.
This intuition is formalized in the following proposition, which states that any policy that concentrates haircuts (like Queue) will recover solvency faster but revenue slower than a diffuse policy (like Pro-Rata).

\paragraph{Opposing Schur orderings.}\label{subsec:opposite-orders}
Let $z_t(\pi)$ be the vector of post-haircut residual equities.
If policy $A$ is more concentrated than policy $B$ (i.e., $z_t(A) \succ_{\text{Schur}} z_t(B)$), then $A$ recovers solvency faster but revenue slower.
\begin{proposition}[Informal]\label{prop:opposing-orders-informal}
Under separable convex stage losses and monotone shock dynamics, if the residual exposure of policy $A$ majorizes policy $B$ (\eg~Queue vs.\ Pro-Rata), then:
\[
\tau_{\mathrm{solv}}(A) \le \tau_{\mathrm{solv}}(B) \quad \text{but} \quad \mathrm{LTV}_t(A) \le \mathrm{LTV}_t(B) \quad \text{for all } t.
\]
This implies no static policy can simultaneously minimize both recovery times.
\end{proposition}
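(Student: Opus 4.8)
The plan is to prove the two inequalities as \emph{opposite-direction} consequences of a single majorization relation and then read off the corollary. Let $z_t(\pi)$ denote the per-winner seized-exposure (haircut) vector at round $t$, which is the object ordered by the informal statement; by the concentration characterization underlying Proposition~\ref{prop:queue-worst-top-informal} and the Schur machinery of \S\ref{sec:risk-prelim}, Queue is the majorization-maximal (most concentrated) feasible allocation and Pro-Rata the majorization-minimal one, so $z_t(A)\succ z_t(B)$ says that $A$ concentrates haircuts more than $B$. I first fix the severity schedule $(\theta_t)$ so that both policies clear the same per-round budget $H_t=\theta_t D_t$ (the Separation Principle of \S\ref{subsec:severity}); the sole difference between $A$ and $B$ is how each budget is distributed across winners, which is exactly what $z_t(A)\succ z_t(B)$ encodes.

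I would dispatch the revenue inequality first, since it is static and cleaner. Writing the per-period LTV contribution as $\sum_{i\in W_t} e_{t,i}\,r_{t,i}(h_{t,i})$ and invoking the separable-convex stage-loss hypothesis, the revenue \emph{loss} $\phi(h_{t,i})=e_{t,i}\bigl(r_{t,i}(0)-r_{t,i}(h_{t,i})\bigr)$ is convex in its argument (equivalently, retention is concave decreasing). A separable sum of convex functions is Schur-convex, so $z_t(A)\succ z_t(B)$ gives $\sum_i \phi(h_{t,i}(A))\ge \sum_i \phi(h_{t,i}(B))$, i.e.\ strictly larger revenue loss and hence $\mathrm{LTV}_t(A)\le \mathrm{LTV}_t(B)$ at each $t$. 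Discounting by $\beta^t$ and summing preserves the inequality termwise, yielding the LTV claim.

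Second I would handle the solvency inequality, which is the genuinely dynamic part and the main obstacle. Because $\tau_{\mathrm{solv}}=\inf\{t\ge\tau_{\mathrm{def}}:\mathsf{IF}_t\ge\delta\}$ is a hitting time of the insurance-fund process driven by the shock kernel $K$, I cannot argue from a single round. The plan is a pathwise coupling: drive both policies with identical shock realizations and show, by induction on $t$, that the majorization $z_t(A)\succ z_t(B)$ is preserved by one round of the dynamics (apply the shock, realize the fresh negative-equity deficit through the convex map $(\cdot)_-$, then perform the budget-balanced capped reallocation). Preservation is precisely where \emph{monotone shock dynamics} enters: stochastic monotonicity of $K$ lets me transport the majorization order through the convex deficit map and the capped-redistribution step, so that under the more concentrated policy the fresh deficit injected into $\mathsf{IF}$ is pathwise no larger at every round. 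Granting preservation, $\mathsf{IF}_t(A)\ge \mathsf{IF}_t(B)$ for all $t$ along the coupling, so the fund reaches $\delta$ no later under $A$, giving $\tau_{\mathrm{solv}}(A)\le\tau_{\mathrm{solv}}(B)$ almost surely and hence in distribution.

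Finally, the corollary follows from monotonicity along the majorization order together with the extremal characterization above: moving from the diffuse (Pro-Rata) end toward the concentrated (Queue) end, $\tau_{\mathrm{solv}}$ is nonincreasing while $\mathrm{LTV}_t$ is nondecreasing, so the two objectives are optimized at opposite endpoints and no single static allocation can minimize both $\tau_{\mathrm{solv}}$ and $\tau_{\mathrm{rev}}$ simultaneously. I expect the delicate step to be the induction that the majorization order survives the nonlinear deficit-and-reallocation map; this is exactly why the formal version in the appendix must assume both a stochastically monotone kernel and an order-preserving stage map on the majorization lattice, and it is where I would concentrate the effort.
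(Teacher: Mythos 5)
Your revenue half is sound and is essentially the paper's argument in per-round form: the paper (Appendix Proposition on opposing Schur orderings) applies Schur-concave, coordinate-wise nonincreasing stage values $G_t$ to the \emph{cumulative} haircut vectors $M_t$, obtaining the ordering of $M_t$ from the residual ordering via the conservation-of-mass identity $M_t(A)-M_t(B)=z_t(B)-z_t(A)$; your separable-convex-loss/Schur-convexity argument on per-round haircuts buys the same conclusion. The solvency half is where your proposal has a genuine gap. The paper never proves that any ordering propagates through the dynamics: it \emph{assumes} safety dominance ($z_t(A)\preceq_w z_t(B)$ on residual debts, for all $t<\tau_{\mathrm{solv}}(A)$) as hypothesis (i), and the hitting-time claim is then a one-line consequence of the fact that the zero vector is minimal in the weak submajorization order. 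You instead try to \emph{derive} that residual ordering from per-round haircut concentration by a pathwise coupling and induction, and that step --- which you yourself flag as the delicate one and do not carry out --- is precisely what fails. Majorization is not preserved under coordinate-wise nonlinear maps such as the deficit map $(\cdot)_-$, and, worse, the paper's own two-winner example (equities $(1,1)$, budget $1$) shows the direction is wrong: the concentrated queue policy leaves residuals $(0,1)$ while pro-rata leaves $(\tfrac12,\tfrac12)$, so the \emph{diffuse} policy's residuals are the submajorized ones, contradicting your inductive claim that the more concentrated policy injects pathwise no larger fresh deficit each round.

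There is a second, structural problem with your setup. By invoking the Separation Principle to force both policies to clear the identical budget $H_t=\theta_t D_t$ every round, conservation of mass makes the residual-debt \emph{totals} coincide at every $t$ under the common shock coupling; since the entries are nonnegative, $z_t(A)=0$ if and only if $z_t(B)=0$, so $\tau_{\mathrm{solv}}(A)=\tau_{\mathrm{solv}}(B)$ identically and the solvency comparison degenerates to equality. The non-trivial content of the proposition lives exactly where cleared masses differ across policies (for example, when per-account caps prevent pro-rata from meeting the budget that queue can extract from deep winners), which your normalization excludes. So your plan proves the LTV inequality, trivializes rather than proves the solvency inequality, and the inductive bridge you would need to fix it is false in general --- which is presumably why the paper retreats to assuming safety dominance as a hypothesis rather than deriving it from concentration.
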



\subsection{Incentive Compatibility and Constraints}\label{subsec:stackelberg-ic}
We frame the problem as a Stackelberg game: the exchange (leader) commits to a policy, and traders (followers) best-respond.
We impose follower incentive compatibility (FIC) and leader-side revenue and solvency constraints (LIC) to ensure the policy is feasible for traders and sustainable for the venue.

\paragraph{Follower Incentive Compatibility (FIC).}
Winners must remain solvent and within haircut caps:
\[
  0 \;\le\; h_{t,i} \;\le\; \beta_{t,i}, \qquad (1-h_{t,i}) e_{t,i}^+ \;\ge\; \underline{e}_{t,i},
\]
where $\beta_{t,i}$ enforces per-account haircut limits and $\underline{e}_{t,i}$ enforces minimum post-ADL equity (cf.\ \eqref{eq:haircut-constraint}--\eqref{eq:equity-constraint}).
These bounds guarantee feasibility (no negative winners) and rule out haircut paths that would violate venue-level haircut caps (as we saw in~\S\ref{subsec:adl}).

\paragraph{Leader Incentive Compatibility (LIC).}
Revenue must stay above a floor and solvency above a safety threshold:
\[
  \Expect[\mathrm{LTV}_t] \;\ge\; (1-\epsilon)\,\mathrm{LTV}_{\text{pre}}, \qquad
  \mathsf{IF}_t \;\ge\; \delta \;\;\text{or}\;\; \mathrm{CVaR}_\alpha(-\mathsf{IF}_t) \;\le\; \kappa.
\]
These ensure the policy is not revenue-destructive (per the churn model in \S\ref{sec:numerics}) and keeps the exchange solvent both on-path and in the tails (\S\ref{subsec:exchange-solvency}).
These constraints define the feasible set for any policy path $\{\theta_t, h_t\}$; we defer the full optimization program to Appendix~\ref{app:stack-nash}.

\paragraph{Stackelberg vs.\ Nash Equilibria.}
We model the interaction between the exchange and traders as a Stackelberg game where the exchange (leader) commits to a policy $\pi_t$, and traders (followers) subsequently respond (\eg~by adjusting liquidity or closing positions).
This commitment power is crucial.
As we show in Appendix~\ref{app:stack-nash} (Proposition~\ref{prop:stack-nash}), a simultaneous-move (Nash) game admits a ``bad'' equilibrium where traders withdraw liquidity anticipating high severity, which in turn forces the exchange to set high severity to cover the deficit, often violating the LIC floor on revenue.
In the Stackelberg model, the exchange commits to a policy that satisfies FIC/LIC and rules out this inefficient equilibrium, coordinating the market toward a high-liquidity, lower-severity outcome that keeps both trader feasibility and venue sustainability intact.

\subsection{Learning the Optimal ADL Policy}\label{subsec:learning-mdic}
Given these constraints, we can treat severity and haircuts as an online control problem: choose $\{\theta_t, h_t\}$ after each price shock, conditioned on past shocks and responses, to minimize loss while respecting FIC/LIC.
Because losses admit convex surrogates, we can apply classical online schemes (\eg~mirror descent) and enforce constraints via primal–dual updates so FIC/LIC remain satisfied at each round.
We use mirror descent because it naturally handles convex losses with simple geometry (RAP/Pro-Rata) and pairs cleanly with primal–dual updates, matching the Stackelberg setting of committing to a policy while respecting per-round constraints~\citep{Haghtalab2016StackelbergLearning,Haghtalab2016ThreeStrategies,Zinkevich2003,ShalevShwartz2012,Hazan2019}.

\paragraph{Mirror Descent with IC constraints (MDIC).}
We adopt a primal–dual mirror–descent scheme tailored to the RAP geometry and non–smooth constraints:
\begin{enumerate}
  \item Choose a mirror map \(\psi\) on the policy domain \(\Pi\) (\eg~a separable entropy or a Bregman geometry aligned with capped pro–rata).
  \item Initialize \(\pi_1\in\Pi\) and dual multipliers \(\lambda_1\!\ge 0\) for constraints \(c(\cdot;\cdot)\le 0\).
  \item For rounds \(t=1,2,\dots\):
  \begin{enumerate}
    \item Observe state \(s_t\); publish \(\pi_t\) (commitment). The follower/environment best–responds; incur loss \(\ell_t(\pi_t)\) and observe a subgradient \(g_t\in\partial \ell_t(\pi_t)\) and constraint feedback \(c_t=c(\pi_t;s_t)\).
    \item Primal update (mirror step with penalties):
    \[
      \pi_{t+1} \;=\; \argmin_{\pi\in\Pi}\; \Big\{\langle g_t + \nabla_\pi \langle \lambda_t, c(\pi;s_t)\rangle,\, \pi\rangle \;+\; \tfrac{1}{\eta_t} D_\psi(\pi\,\|\,\pi_t)\Big\}.
    \]
    \item Dual update (projected subgradient ascent on violations):
    \[
      \lambda_{t+1} \;=\; \big[\lambda_t + \gamma_t\, c_t\big]_+.
    \]
  \end{enumerate}
  \item Tune \((\eta_t,\gamma_t)\) as \(O(t^{-1/2})\) to balance regret and constraint violations.
\end{enumerate}
This controller reduces to standard OCO when constraints are absent, while handling non–smooth IC/solvency constraints via the dual iterates.

\paragraph{Why MDIC.}
The IC set is convex but non‑smooth (\eg~caps, max over scenarios).
Mirror descent matches this geometry (KL on the simplex, log‑barrier on $\theta$) and keeps updates stable while preserving $O(\sqrt{T})$ rates.

\paragraph{Algorithmic Guarantees and Why Regret Matters.}
If $f_t$ is Lipschitz, the follower is $\varepsilon$‑best‑response, and the IC set has bounded diameter, MDIC with $\eta_t\propto 1/\sqrt{T}$ yields
\[
\mathrm{Regret}_T = O\!\Big(\sqrt{T\,D_\phi(\theta^\star\|\theta_1)}+\sqrt{T\,D_\Phi(h^\star\| h_1)}\Big)+\varepsilon T,
\]
with per-round IC feasibility (or sublinear average violation in the long-term CVaR variant).
We care about regret because it benchmarks any adaptive policy against the best dynamic policy in hindsight.
Showing that MDIC achieves $O(\sqrt{T})$ regret means MDIC learns fast, while $\Omega(T)$ regret shows static rules (Queue, RAP, Pro-Rata) can be worst-case bad.
Proofs are in Appendix~\ref{app:regret}.

\paragraph{Static Policies Have Linear Regret.}
In an alternating two-regime example (liquidity vs.\ stress) any fixed $\theta$ incurs $\Omega(T)$ regret, whereas MDIC adapts and achieves $O(\sqrt{T})$ (proven in Appendix~\ref{app:regret}).
This captures why static Queue, RAP, Pro-Rata are fragile in adversarial environments.

\subsection{Follower Robustness}
So far, we have focused on the leader (\eg~the exchange), as they have to commit to a policy.
However, it is possible that if the follower (\ie~a trader who reduces the deficit) knows enough about the leader's strategy to adversely select against other traders.
We address the strategic behavior of traders (``followers'') in response to ADL policies, focusing on adverse selection and timing games.
Our description of follower behavior will be minimal with most details in Appendix~\ref{app:follower-robustness}.

\paragraph{Adverse Selection and the Pro-Rata Death Spiral.}
As we saw earlier, if an ADL policy socializes losses uniformly without regards to risk (like Pro-Rata), low-risk traders subsidize high-risk traders. 
If the cost of subsidy exceeds the utility of trading for low-risk traders, they exit.
This increases both the exchange's average risk and probability of future deficits and can be viewed as a form of adverse selection.
This is a direct consequence of the follower's strategic response: low-risk traders, observing a policy that penalizes them disproportionately, rationally choose to exit (or not participate), leaving the exchange with a riskier pool of traders.

Let $\mu_i$ be the baseline utility of trader $i$ and let $H_{\pi}(\lambda_i \mathcal{P}_n)$ be the random haircut variable for a trader with leverage $\lambda_i$.
The "utility is greater than subsidy" condition can be described as $\mu_i - \Expect[H_{\pi}(\lambda_i, \mathcal{P}_n)] \ge u_0$, where $u_0$ is a baseline minimum utility a low-risk trader accepts.
Under Pro-Rata, $\Expect[H_{\pi_{PR}}]$ depends on the average leverage $\overline{\lambda} = \frac{1}{n}\sum_i \lambda_i$, which penalizes low-risk traders with $\lambda_i < \overline{\lambda}$.
The worst case is if there is a `death spiral' of exits, where low-risk traders leaving sequentially increase the probabilty of other traders leaving, as the average $\overline{\lambda}$ increases on each exit.
This is adverse selection that the exchange faces as the pool of traders becomes increasingly risky from a leverage standpoint.
We formalize this in Appendix~\ref{app:follower-adverse}~and provide explicit examples of a death spiral.

\paragraph{The Waiting Game.}
Traders may act as ``Backstop Liquidity Providers'' (BLP) by voluntarily filling liquidation orders to reduce the deficit $D_t$.
For instance, traders who deposit to Hyperliquid's HLP pool around the time of an ADL event act as BLPs.
This presents an optimal stopping problem: a BLP can absorb the deficit immediately (incurring market risk and execution costs) or wait, hoping that the exchange reduces ADL severity $\theta_t$ (and thus the potential haircut) in subsequent rounds.
If the exchange employs a policy like an exponential backoff that lowers severity over time, it incentivizes traders to wait, potentially exacerbating the deficit.
To prevent such strategic delays, the exchange needs to provide a financial incentive to ensure that BLPs add liquidity in a timely manner.

This incentive often comes in the form of yield premium $\Gamma_t$. 
This premium can be viewed as an incentive (e.g., a rebate or a yield) that the exchange must pay to compensate BLPs for the risk of absorbing the deficit early.
Intuitively, $\Gamma_t$ quantifies the cost of immediate intervention, capturing the execution slippage and inventory risk premium required for a BLP to step in.
By adjusting these incentives, an exchange can encourage "no-wait" conditions to speed up solvency recovery.
Intuitively, these ``No-Wait'' conditions correspond to solvency-like constraints: $\theta_t D_t \ge \Gamma_t$.
We formalize this with the following proposition:
\begin{proposition}[No-Wait Condition]\label{prop:no-wait}
  Let $\Gamma_t$ be the liquidity premium required for a Backstop Liquidity Provider (BLP) to absorb deficit $D_t$ immediately rather than waiting for $\Delta t$.
  If the severity policy $\theta_t$ is a decreasing function of time (\eg~exponential backoff), a BLP will withhold liquidity unless the immediate haircut cost is cheaper than the discounted future cost:
  \[
    \theta_t D_t \;\le\; \Expect_t[\theta_{t+\Delta t} D_{t+\Delta t}] - \Gamma_t.
  \]
  Violating this condition induces a ``Waiting Game'' where solvency recovery is delayed purely by the follower's strategy.
\end{proposition}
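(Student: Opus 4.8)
The plan is to recast the BLP's decision as a one-step optimal stopping problem and read off the No-Wait inequality as the stopping rule. First I would fix the filtration $\mathcal{F}_t$ and treat the BLP as a cost-minimizing agent facing, at each breach time, a binary choice: absorb the residual deficit now, or defer by one interval $\Delta t$ and reassess. The crucial modeling step is to price each action. If the BLP injects liquidity at time $t$, it clears the deficit but thereby becomes a haircuttable winner exposed to the current round's severity; by budget balance \eqref{eq:budget-balance} its haircut exposure is exactly the socialized mass $\theta_t D_t$, and on top of this it bears the early-action premium $\Gamma_t$ capturing inventory risk and execution slippage. If instead it waits, the severity relaxes to $\theta_{t+\Delta t}$ and the deficit evolves stochastically to $D_{t+\Delta t}$, so the $\mathcal{F}_t$-conditional cost of deferral is $\Expect_t[\theta_{t+\Delta t} D_{t+\Delta t}]$; any discount factor $\beta \in (0,1]$ over the short horizon can be folded into this expectation or into $\Gamma_t$ without changing the argument.

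Second, I would write the stopping criterion directly. Let $V^{\mathrm{act}}_t = \theta_t D_t + \Gamma_t$ denote the total immediate cost of acting and $V^{\mathrm{wait}}_t = \Expect_t[\theta_{t+\Delta t} D_{t+\Delta t}]$ the continuation cost of waiting one step. A rational BLP stops (provides liquidity) precisely when acting is no more expensive than deferring, $V^{\mathrm{act}}_t \le V^{\mathrm{wait}}_t$, i.e.
\[
\theta_t D_t + \Gamma_t \;\le\; \Expect_t[\theta_{t+\Delta t} D_{t+\Delta t}],
\]
which rearranges to the stated condition $\theta_t D_t \le \Expect_t[\theta_{t+\Delta t} D_{t+\Delta t}] - \Gamma_t$. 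Because the severity schedule $\theta_t$ is monotone decreasing (exponential backoff), the induced cost process yields a monotone stopping problem, so the one-step-look-ahead rule is optimal: the single-interval comparison is not merely necessary but also sufficient to characterize the BLP's full stopping policy, and no improvement is available from comparing against longer deferral horizons.

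Third, I would establish the converse half, the Waiting Game. If the No-Wait condition fails, then $V^{\mathrm{wait}}_t < V^{\mathrm{act}}_t$ strictly, deferral strictly dominates, and the BLP withholds liquidity at $t$. The deficit then persists (or, under adverse shock dynamics, grows), pushing back the solvency-recovery time $\tau_{\mathrm{solv}}$ of \S\ref{subsec:why-dynamic}; the delay is attributable solely to the follower's strategic response, not to exogenous illiquidity. The comparative statics follow immediately: since a decreasing $\theta_t$ lowers the right-hand side, exponential-backoff schedules tighten the constraint and make waiting more attractive, so the exchange must raise the compensation implicit in $\Gamma_t$ (a yield rebate) to restore $V^{\mathrm{act}}_t \le V^{\mathrm{wait}}_t$ and induce timely provision.

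The hard part will be the cost-accounting step: one must justify that injecting liquidity makes the BLP a haircuttable winner whose exposure equals the full socialized mass $\theta_t D_t$ rather than some function of only its marginal contribution, and that $\Gamma_t$ enters additively as an early-action cost on the act-now side (entering as a rebate on that side would flip the sign and break the stated inequality). Pinning this down requires invoking the PNL-only haircut numéraire and budget balance of \S\ref{subsec:adl}, so that the amount subject to haircut is unambiguously $\theta_t D_t$; once that identification is fixed, the optimal-stopping comparison and the monotone-case optimality of the one-step rule are routine.
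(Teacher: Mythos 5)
Your derivation is correct as a proof of the proposition \emph{as stated}, and it shares the paper's skeleton (a one-step act-versus-wait stopping comparison), but the cost accounting you adopt is the mirror image of the one in the paper's own proof, and the difference is substantive. In Appendix~\ref{app:follower-wait} the paper models severity as a per-unit \emph{payment} to the BLP: the payoff from intervening at $t+u$ with size $q$ is $\beta^u q(\theta_{t+u}-\kappa_{t+u})$ (cf.\ Lemma~\ref{lem:no-wait}), so the stopping rule it derives is ``act now iff $\theta_t D_t \ge \Gamma_t + \beta\,\Expect_t[V_{t+1}(D_{t+1})]$,'' and the proof then reads the proposition's displayed inequality as the condition that \emph{incentivizes delay} (``to enforce no-wait, the exchange must ensure the reverse inequality holds''). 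That is the opposite direction from the proposition's phrasing (``will withhold liquidity unless \dots; violating this condition induces a Waiting Game''). Your accounting --- acting costs $\theta_t D_t + \Gamma_t$, waiting costs $\Expect_t[\theta_{t+\Delta t} D_{t+\Delta t}]$, act iff acting is cheaper --- is the reading that reproduces the proposition's inequality with its stated sign and direction, and it is also the one under which the comparative statics cohere: a decaying severity schedule shrinks the right-hand side, breaks the condition, and makes waiting attractive, which is exactly the backoff-induces-waiting story the proposition is meant to capture. Under the paper's payment convention, decaying severity would make waiting \emph{less} attractive. So your route is not merely equivalent; it repairs a sign inconsistency between the paper's statement and its own proof, at the price of having to defend the modeling claim (which you correctly flag as the hard step) that the BLP's act-now exposure equals the full socialized mass $\theta_t D_t$ via budget balance~\eqref{eq:budget-balance}.

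Two gaps you should still close, both of which the paper's proof shares. First, one-step-look-ahead optimality does not follow from $\theta_t$ being decreasing alone: $D_{t+\Delta t}$ is stochastic and can grow, so the stopping problem is not automatically monotone; you need an explicit assumption (e.g., the cost process $\theta_\tau D_\tau$ is a supermartingale, or the stopping region is closed under the deficit dynamics) before the single-interval comparison characterizes the full policy. The paper commits the same sin when it silently replaces the continuation value $\beta\,\Expect_t[V_{t+1}(D_{t+1})]$ with the one-step quantity $\Expect_t[\theta_{t+\Delta t}D_{t+\Delta t}]$. Second, your wait-side cost omits the future premium $\Gamma_{t+\Delta t}$ and the discount factor; this matches the proposition's formula, but it should be stated as an assumption (the premium is charged only for immediate intervention), since with a nonzero future premium the displayed condition is sufficient for provision but its violation no longer forces the Waiting Game.
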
  
\noindent Appendix~\ref{app:follower-wait} details the construction of $\Gamma_t$ and proves this proposition.

\subsection{Dynamic Phase Transition in ADL}
\label{subsec:dynamic-phase-transition}

In Section~\ref{sec:negative}, we established that static ADL rules face fundamental scaling limits.
However, our results in that section were qualitative in nature, making it hard to use those results to design better mechanisms.
One question that one might have is if there is a way to classify, based on properties of an asset (\eg~volatility, liquidity, etc.) whether one can use a simple static ADL strategy or if one needs a full dynamic strategy.
Here we will construct conditions for quantifying an answer to this question.

To quantify the efficiency gap between static (Nash) and dynamic (Stackelberg) control, we evaluate the system using welfare functions $W(\pi)$ corresponding to the trilemma (Proposition~\ref{thm:adl-trilemma-formal}) objectives.
Welfare functionals quantify how fair, profitable, or solvent the entire system is (as opposed to individual traders).
Specifically, we show there is a \emph{phase transition} that depends on the liquidity and volatility of the asset pair that determines when one needs a dynamic methods (like MDIC) or if one can use a simpler algorithm (like RAP).
The phase transition is defined by an \emph{order parameter}, which is a dimensionless quantity that captures the relative strength of the ADL severity to the market's liquidity and volatility.
This phase transition occurs in the \emph{efficiency gap} between static and dynamic methods --- \ie~how much better at preserving solvency and/or revenue that a dynamic method is over static method.

\paragraph{Price of Anarchy for Nash vs.\ Stackelberg.}
The Price of Anarchy (PoA) is the standard metric for quantifying efficiency loss from strategic behavior across equilibrium concepts~\cite{Roughgarden2005SelfishRouting,roughgarden15intrinsic}.
In Stackelberg settings, a leader's commitment reshapes follower best responses and can systematically lower inefficiency relative to simultaneous-move (Nash) play.
This is well documented in dynamic game models~\cite{BasarOlsder1999} and in smoothness-based PoA analyses that extend uniformly across equilibrium notions, including sequential and composed mechanisms~\cite{SyrgkanisTardos2013,roughgarden15intrinsic}.
In our ADL context, the venue is a natural leader that can commit to a severity path, while traders react myopically to current and anticipated haircuts.
This makes a PoA comparison between one-shot (Nash-like) and multi-round (Stackelberg) control both natural and informative.

\noindent We define the PoA for a welfare function $W$ as the ratio of the optimal dynamic Stackelberg welfare to the welfare of a static, one-shot policy $\pi^{\mathrm{stat}}_n$:
\begin{equation}
  \label{eq:poa-def}
  \mathrm{PoA}_n(W) = \frac{W(\pi^{\mathrm{dyn}}_n)}{W(\pi^{\mathrm{stat}}_n)}
\end{equation}
We consider both a fairness welfare $W_{\text{Fair}}(\pi) = \mathsf{PTSR}_T(\pi)$ and a revenue welfare $W_{\text{LTV}}(\pi) = \text{LTV}_T(\pi)$.
The severity load $\kappa_n = \theta_n n / b_n$ serves as the order parameter for the phase transition in fairness, while the heavy-tail index $\alpha$ (see~\S\ref{sec:risk-metrics}) governs the phase transition in revenue.
Using extreme-value scaling and mean-field equilibrium assumptions, we obtain PoA bounds that are robust to the precise equilibrium concept. Appendix~\ref{app:smoothness-poa} formalizes the scaling assumptions and presents the theorem. Informally:

\begin{proposition}[Informal]\label{prop:poa-phase-informal}
  Let $\kappa_n = \theta n/b_n$ be the normalized severity relative to the winner equity scale $b_n$.
  \begin{itemize}
    \item \emph{Bounded PoA (Static Sufficient):} If the severity load stays bounded ($\sup_n \kappa_n < \infty$), leverage masses are comparable ($\ell_n^+ \asymp \ell_n^-$), and deficits are light-tailed, then static ADL is constant-factor optimal: $\sup_n \mathrm{PoA}^{\text{Fair}}_n < \infty$ and $\sup_n \mathrm{PoA}^{\text{LTV}}_n < \infty$.
    \item \emph{Unbounded PoA (Dynamic Necessary):} If the load diverges ($\kappa_n \to \infty$) or leverage is concentrated ($\ell_n^+/\ell_n^- \to 0$), Fairness PoA blows up. If deficits are heavy-tailed ($\alpha<2$) and solvency is enforced, Revenue PoA also diverges. Static policies must sacrifice at least one objective, while a dynamic controller can maintain both.
  \end{itemize}
  This implies a screening rule: if estimates satisfy $\widehat{\kappa}_n \le K$ and $\widehat{\Lambda}_n \in [c, C]$, static ADL suffices; otherwise, dynamic control is required to navigate the Trilemma.
\end{proposition}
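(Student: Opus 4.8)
The plan is to reduce both Price-of-Anarchy ratios to the order parameter $\kappa_n = \theta_n n/b_n$ and the leverage imbalance $\Lambda_n = \ell_n^+/\ell_n^-$, and then read the dichotomy directly off the scaling laws already established. First I would pin down the two benchmark policies. The static policy $\pi^{\mathrm{stat}}_n$ must clear the deficit in a single round, so its severity $\theta^{\mathrm{stat}}_n$ is fixed by the one-round feasibility/solvency requirement against a draw of $D_T$ (and, in the tail, against the largest single loser $\Delta_T$). The dynamic Stackelberg policy $\pi^{\mathrm{dyn}}_n$ instead \emph{stages} severity across rounds: by committing to a severity path and running the online controller of \S\ref{subsec:learning-mdic} (which attains vanishing regret, Appendix~\ref{app:regret}), the total socialized mass $\sum_t \theta_t D_t$ covers the deficit while each per-round severity is held at the extreme-value scale $\theta_t = \Theta(b_n/n)$, i.e. $\kappa^{\mathrm{dyn}}_n = \Theta(1)$. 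This staging is the entire source of the commitment advantage and is what makes both ratios computable.

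For the \emph{fairness} vertex I would substitute the master scaling $\mathsf{PTSR}_T(\pi) \asymp b_n/(\theta_n n) = 1/\kappa_n$ of Proposition~\ref{prop:ev-scale-severity-informal}, together with its leverage-weighted refinement $\mathsf{PTSR}_T(\pi) \asymp \Lambda_n (b_n^+/b_n^-)/\theta_n$ from Proposition~\ref{prop:excessive-leverage}, to obtain
\[
\mathrm{PoA}^{\mathrm{Fair}}_n = \frac{\mathsf{PTSR}_T(\pi^{\mathrm{dyn}}_n)}{\mathsf{PTSR}_T(\pi^{\mathrm{stat}}_n)} \asymp \frac{\kappa^{\mathrm{stat}}_n}{\kappa^{\mathrm{dyn}}_n} \asymp \kappa^{\mathrm{stat}}_n,
\]
with an extra factor $\Lambda_n^{-1}$ whenever the imbalance is active. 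The bounded branch is then immediate: $\sup_n \kappa_n < \infty$ together with $\Lambda_n \asymp 1$ forces $\mathsf{PTSR}_T(\pi^{\mathrm{stat}}_n) = \Theta(1)$, so the ratio is $O(1)$. The unbounded branch is equally direct: if $\kappa_n \to \infty$ or $\Lambda_n \to 0$, then $\mathsf{PTSR}_T(\pi^{\mathrm{stat}}_n) \to 0$ while $\mathsf{PTSR}_T(\pi^{\mathrm{dyn}}_n)$ stays bounded below, so $\mathrm{PoA}^{\mathrm{Fair}}_n \to \infty$.

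For the \emph{revenue} vertex I would work with the churn-weighted $\mathrm{LTV}$ and invoke the opposing-Schur-order result (Proposition~\ref{prop:opposing-orders-informal}): among budget-balanced allocations, concentrating the haircut — which a one-shot solvent policy is forced to do — maximizes retention loss and hence minimizes $\mathrm{LTV}$. Under light tails the deficit has no dominant atom ($\Delta_T = o(D_T)$), so the static severity stays at EV scale, retention curves are evaluated at small haircuts, and $\mathrm{LTV}(\pi^{\mathrm{stat}}_n) \asymp \mathrm{LTV}(\pi^{\mathrm{dyn}}_n)$. The divergence is a heavy-tail phenomenon: for $\alpha<2$ the largest single loser is a constant fraction of the aggregate deficit, $\Delta_T/D_T = \Theta(1)$ with positive probability, so one-round solvency forces a near-total haircut $h \to 1$ on the top winners precisely in the dominant draws, driving their retention $r(h)$ to the churn floor $r(1)$. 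Integrating this retention collapse against the $\alpha<2$ tail (where such draws dominate the expectation) gives $\mathrm{LTV}(\pi^{\mathrm{stat}}_n)/\mathrm{LTV}(\pi^{\mathrm{dyn}}_n) \to 0$, whereas the staged policy keeps per-round haircuts small and retention near one. Smoothness-based PoA arguments (Appendix~\ref{app:smoothness-poa}) then upgrade these bounds so they hold across Nash and Stackelberg equilibrium selections under the mean-field assumptions.

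The screening rule is obtained by intersecting the two bounded branches: $\widehat{\kappa}_n \le K$ and $\widehat{\Lambda}_n \in [c,C]$ with light-tailed deficits place the system where both ratios are $O(1)$, so static ADL is constant-factor optimal; violating either threshold, or heavy tails with enforced solvency, sends at least one ratio to infinity. The main obstacle is the revenue branch: the fairness ratios fall out almost mechanically from the already-proved PTSR scalings, but the $\mathrm{LTV}$ divergence needs (i) the majorization argument that concentrated haircuts are $\mathrm{LTV}$-worst, (ii) a tail integration tying the index $\alpha<2$ to a retention collapse of the top winner, and (iii) robustness of this to the worst-case equilibrium via smoothness. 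Step (ii) is the delicate one, since it rests on the quantitative claim that $\Delta_T/D_T$ stays bounded away from zero exactly when $\alpha<2$ — which is precisely where the phase boundary in the revenue coordinate sits.
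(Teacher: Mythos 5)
Your fairness branch is essentially the paper's own argument: Theorem~\ref{thm:poa-phase} proves the fairness dichotomy exactly as you do, by pinning the dynamic policy at load $\kappa^{\mathrm{dyn}}_n=\Theta(1)$, reading off $\mathsf{PTSR}_T(\pi^{\mathrm{stat}}_n)\asymp 1/\kappa^{\mathrm{stat}}_n$ from the master scaling, and taking the ratio (with the leverage factor $\Lambda_n$ entering through Proposition~\ref{prop:excessive-leverage}). That part is correct and needs no changes.

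The genuine gap is in your revenue branch, and it sits precisely at the step you flagged as delicate. Your mechanism is haircut-driven churn of top winners, triggered by the claim that for $\alpha<2$ the largest single loser is a non-vanishing fraction of the aggregate deficit, i.e.\ $\Delta_T/D_T=\Theta(1)$ with positive probability. Under the paper's standing assumptions this is false on the entire range $1<\alpha<2$: Assumption~\ref{ass:lln-evt-trilemma} (and eq.~\eqref{eq:lln-scaling}) imposes a finite mean $\mu_-\in(0,\infty)$, so $D_T\sim\mu_- n$ by the LLN while $\Delta_T\sim b^-_{m_n}\asymp n^{1/\alpha}$, hence $\Delta_T/D_T\asymp n^{1/\alpha-1}\to 0$. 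Max-over-sum dominance (the single-big-jump regime you invoke) occurs only for $\alpha<1$ (infinite mean), so your phase boundary sits at the wrong tail index, and the retention-collapse integration never gets off the ground exactly where the proposition places the transition, $\alpha\in(1,2)$. The paper's proof of Proposition~\ref{prop:ltv-poa} uses a different channel entirely: a static policy is a fixed fee-diversion rate $\delta$ into the insurance fund, and for $\alpha<2$ (infinite variance) a ruin-theoretic argument shows that absorbing the maximal jump $\Delta_T\sim n^{1/\alpha}$ \emph{before} it arrives forces $\delta\to 1$; thus either solvency fails with probability tending to one, or essentially all fees are diverted and $W_{\mathrm{Rev}}(\pi^{\mathrm{stat}}_n)=o(n)$, while the dynamic controller diverts only ex post and keeps $W_{\mathrm{Rev}}^\star\asymp n$. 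To salvage your churn-based route you would need either to strengthen the hypothesis to $\alpha<1$ (which changes the statement) or to replace the max-to-sum claim with a fluctuation/timing argument about fund accumulation versus shock arrival --- which is what the paper actually does. A smaller related issue: in your light-tailed bounded branch you assert the static severity ``stays at EV scale'' while solvency is enforced; in the structural-deficit regime ($\mu_->\mu_\Phi$) solvency requires haircuts plus diversions of total size $\Theta(n)$, so the boundedness argument must run through insurance provisioning for the mean (as in the paper), not through uniformly small haircuts.
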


\noindent Practically, this result can be used to provide simple heuristics for screening assets using historical data to determine whether a simple static rule suffices or if a dynamic policy is required.
Formal statements and proofs are given by Theorem~\ref{thm:poa-phase}
and Proposition~\ref{prop:ltv-poa}
in Appendix~\ref{app:smoothness-poa}, and by
Example~\ref{ex:light-tailed-unbounded-poa}.

\section{Empirical Analysis under Partial Observability: The October 10 Event}\label{sec:empirical-oct10}\label{sec:numerics}

\paragraph{Partial Observation model.}
Our reconstruction uses publicly available fills, marks, and account-level aggregates.
It is not a full execution and clearing ledger: complete settlement logic, internal netting, fee and buffer accounting, and some timing details are unobserved.
These observables are sufficient to compute our empirical targets (deficits and winner-side PNL capacity) under explicit definitions.
Accordingly, strict accounting equalities are treated as diagnostics rather than identities.
Throughout this section, ``counterfactual'' refers to counterfactuals under this public-data observation model, not to a complete ledger identity.

We evaluate the proposed ADL mechanisms using a reconstruction\footnote{We thank Mauricio Trujillo (\href{https://x.com/ConejoCapital}{@conejocapital}) and Hydromancer for providing, cleaning, and validating this dataset.} of the October 10, 2025 liquidation cascade on Hyperliquid~\cite{CoinDesk2025LargestLiquidations}.
We use this replay to reconstruct production wealth removed from winners via a two-pass no-ADL counterfactual, compute the instantaneous bankruptcy-gap proxy \(B_t^{\mathrm{needed}}\), and compare production to transparent benchmark allocations under a profits-only (PNL) haircut-capacity constraint.

\subsection{Experimental Setup}

\paragraph{Regret Framework.}
We evaluate ADL policies using two complementary regret components:

\begin{definition}[Fairness Regret]
The fairness regret measures the maximum deviation from pro-rata allocation:
\[
\mathcal{R}_t^{\mathrm{fair}}(\pi) := \max_{i \in W_t} \frac{h_{\pi,i}}{e_i} - \frac{h_{\mathrm{pr},i}}{e_i}
\]
where $h_{\pi,i}$ is the haircut under policy $\pi$ and $e_i$ is user $i$'s equity.
\end{definition}

\begin{definition}[Overshoot Regret]
The overshoot regret measures excess budget beyond the needed amount:
\[
\mathcal{R}_t^{\mathrm{overshoot}}(\pi) := B_t^{\pi} - B_t^{\mathrm{needed}}
\]
\end{definition}

Total regret is the sum of these components. Policies with low overshoot but highly concentrated losses, and policies with diffuse losses but systematic overshoot, are both suboptimal under this decomposition.

\paragraph{Dataset and wave partition.}
We utilize the canonical public-data reconstruction of the event~\cite{HyperMultiAssetedADL}, covering a 12-minute window (21:16--21:27 UTC) with \$2{,}103{,}111{,}431 in liquidations across roughly 160 assets.
We partition ADL fills into \emph{global time waves} by gap clustering: a new wave begins when the inter-fill gap exceeds 5 seconds, yielding $T=16$ waves for this window.

\paragraph{Deficit and haircut capacity.}
For each wave $t$, the realized loser deficit is
\[
D_t=\sum_{j \in \mathrm{losers}(t)}(-e_t(j))_+,
\]
computed from loser-side liquidation equity fields (using the canonical table's \texttt{liquidated\_total\_equity}, with a per-liquidated-user minimum within a wave to avoid double counting).
Across the event, the realized deficit sums to \(\sum_t D_t \approx \$100.1\)M.
Haircuts apply to profits (PNL), not protected cash/collateral.
For benchmark allocations we use a per-user capacity proxy \(c_u := \min(U(u),E(u))\), where \(U(u)\) is positive unrealized PNL and \(E(u)\) is positive equity, both taken from the canonical REALTIME table within the wave window.

\paragraph{Needed budget and production wealth removed.}
We compute \(B_t^{\mathrm{needed}}\) from mark--execution gaps in the raw fill stream and \(H_t^{\mathrm{prod}}(\Delta)\) via two-pass replay as described below.
At \(\Delta=0\), totals satisfy \(\sum_t B_t^{\mathrm{needed}} \approx \$15.1\)M and \(\sum_t H_t^{\mathrm{prod}} \approx \$60.1\)M, hence \(O(0)\approx \$45.0\)M.

\paragraph{Regime check .}
To assess whether the October 10 episode plausibly lies in the structural-deficit regime ($\mu_- \gg \mu_\Phi$), we run an order-of-magnitude check.
The realized event deficit is \(\sum_t D_t \approx \$100.1\)M.
Using the venue fee schedule and event turnover, a conservative upper bound for sustainable fee diversion is approximately 2--5\% of gross fees; with gross fees in the \$1--5M range over the window, this implies \(\mu_\Phi\) on the order of \$20k--\$250k.
The realized deficit is therefore roughly two orders of magnitude larger, consistent with $\mu_- \gg \mu_\Phi$ in this episode.
This is a plausibility check, not identification: the formal trilemma statement still relies on Assumption J.3.

\paragraph{Benchmark allocations.}
We compare production to transparent benchmark allocations that target \(B_t^{\mathrm{needed}}\) on each wave under the same profits-only capacity constraint:
\begin{itemize}
    \item \emph{Wealth pro-rata (continuous):} a capped pro-rata allocation in wealth-space USD.
    \item \emph{Vector mirror descent (vector-md):} a projection-based allocator that matches the budget using a single projection step onto \(\{x\in[0,1]^n: c^\top x=B\}\).
    \item \emph{Contract pro-rata:} standard exchange-style integer allocation proportional to position size, used by Binance, Bybit, and similar venues.
    \item \emph{Min-max ILP:} MIP solver minimizing the maximum haircut percentage $\max_i h_i/e_i$ subject to budget and capacity constraints.
\end{itemize}
These benchmarks are used to quantify the mechanically necessary overshoot under discreteness and capacity constraints; they are not claims about production implementability.

\paragraph{Integer rounding and budget shortfall.}
A key difference between continuous and integer-contract allocation is rounding error.
In contract space, standard pro-rata rounding (downward rounding of fractional contracts) can induce substantial under-allocation relative to \(B_t^{\mathrm{needed}}\).
In our reconstruction, contract pro-rata exhibits per-wave shortfalls up to \$371{,}000 (wave 1) and cumulative under-allocation of approximately \$1.5M across the episode.
Thus, standard exchange-style integer allocation does not reliably hit the target budget without additional adjustment.
The min-max ILP benchmark enforces budget matching while minimizing the maximum haircut percentage, achieving near-exact targeting with only small positive overshoot (cumulative \$50{,}000).
This shows that tight budget control is feasible in integer contract space, but not via simple rounding alone.

\paragraph{Revenue proxy.}
We construct a \emph{revenue proxy} to estimate expected fee loss induced by ADL haircuts.
The proxy maps haircut intensity to post-event account churn, consistent with retention modeling in the economics and marketing literature~\cite{Acerbi2002,Bolton1998Duration,Schmittlein1987Customers}.
Formally, let the churn probability for winner \(i\) in wave \(t\) be the probability that the account exits after receiving haircut \(h_{t,i}\).
We model this probability with an exponential hazard:
\[
p_{t,i}=1-\exp(-\beta h_{t,i}/w_{t,i}).
\]
This specification is approximately linear for small haircuts and saturates toward one for severe haircuts, matching the account-level attrition patterns observed in the replay.
It is also consistent with survival-style retention curves used in both classical and modern churn studies~\cite{Bolton1998Duration,Schmittlein1987Customers,AscarzaHardie2013UsageChurn,LemmensGupta2020Churn,Ascarza2018RetentionFutility}.
Full details of the calculation and the fee proxy are in Appendix~\ref{app:revenue-proxy}.

\paragraph{Overshoot measurement methodology.}
To measure overshoot, we compare production ADL (executed in contract space) with wealth-space comparator policies.
This requires a \emph{two-pass replay} on the same realized price path.

\emph{Two-pass replay.}
For each ADL wave $t$, we define two equity outcomes for each position $p$:
\begin{itemize}
  \item $e_t^{\mathrm{ADL}}(p)$: the realized equity after production ADL executes in wave $t$ (observed outcome).
  \item $e_t^{\mathrm{no\text{-}ADL}}(p)$: a counterfactual equity if ADL were disabled in wave $t$, holding the realized price path fixed (reconstructed outcome).
\end{itemize}
The \emph{induced haircut} from production ADL is then:
\[
h_t^{\mathrm{prod}}(p) = \begin{cases}
\dfrac{e_t^{\mathrm{no\text{-}ADL}}(p) - e_t^{\mathrm{ADL}}(p)}{e_t(p)}, & e_t(p) > 0, \\[6pt]
0, & \text{otherwise},
\end{cases}
\]
where $e_t(p)$ is the pre-ADL equity (which equals $e_t^{\mathrm{no\text{-}ADL}}(p)$ by construction of the counterfactual).
Total wealth removed from winners in wave $t$ is:
\[
H_t^{\mathrm{prod}} = \sum_{p \in \mathcal{W}_t} \bigl(e_t^{\mathrm{no\text{-}ADL}}(p) - e_t^{\mathrm{ADL}}(p)\bigr)_+.
\]

\emph{The no-ADL counterfactual and identification assumption.}
The counterfactual \(e_t^{\mathrm{no\text{-}ADL}}(p)\) is not directly observed because production ADL executed.
We recover it by replaying the realized price path with ADL disabled, under the short-horizon assumption that positions remain open and equity evolves with marks over the execution window.
This assumption is necessary because trader actions in the unobserved world (manual closes, leverage changes, transfers) are not observed.
Given that ADL waves occur on a seconds-scale during high volatility, this assumption is appropriate for measuring immediate wealth transfer effects.

\emph{Needed budget and production overshoot vs needed.}
For each ADL fill $k$ with liquidation mark $p_k^{\mathrm{mark}}$, execution price $p_k^{\mathrm{exec}}$, and size $q_k$, define the instantaneous bankruptcy-gap proxy
\[
\mathrm{needed}_k := \lvert p_k^{\mathrm{mark}} - p_k^{\mathrm{exec}}\rvert\cdot \lvert q_k\rvert.
\]
We aggregate within each wave $t$:
\[
B_t^{\mathrm{needed}} := \sum_{k \in \text{fills in wave }t}\mathrm{needed}_k.
\]
We define the \emph{production overshoot vs needed}:
\[
O_t(\Delta) := H_t^{\mathrm{prod}}(\Delta) - B_t^{\mathrm{needed}},\qquad O(\Delta):=\sum_t O_t(\Delta).
\]
This compares realized wealth removed from winners to the minimal instantaneous transfer implied by closing across the mark--bankruptcy gap.

\emph{Evaluation horizon \(\Delta\).}
The quantity \(B_t^{\mathrm{needed}}\) is computed from realized ADL fills and is independent of \(\Delta\).
By contrast, \(\Delta\) changes the evaluation time for winner equities in \(H_t^{\mathrm{prod}}(\Delta)\): we evaluate \(e^{\mathrm{ADL}}\) and \(e^{\mathrm{no\text{-}ADL}}\) at \(t_{\mathrm{eval}}=t_{\mathrm{end}}+\Delta\) while holding the realized post-wave path fixed.
This isolates a short-horizon opportunity-cost channel.
Over \(\Delta\in\{0,500,1000,2000,5000\}\)ms, we obtain \(O(0)\approx \$45.0\)M and a range of approximately \$45.0M--\$51.7M.

\subsection{Results}

\paragraph{Headline numbers (two numeraires).}
Figure~\ref{fig:emp-headlines} reports two headline numeraires:
the stylized \emph{queue overshoot} diagnostic in equity dollars (wealth space) and the \emph{production overshoot vs needed} metric in PNL dollars (two-pass replay at \(\Delta=0\)).
The figure also reports the simple mapping obtained by dividing the equity-dollar queue diagnostic by the trimmed-mean winner overcollateralization ratio (equity/PNL).

\begin{figure}[t]
  \centering
  \IfFileExists{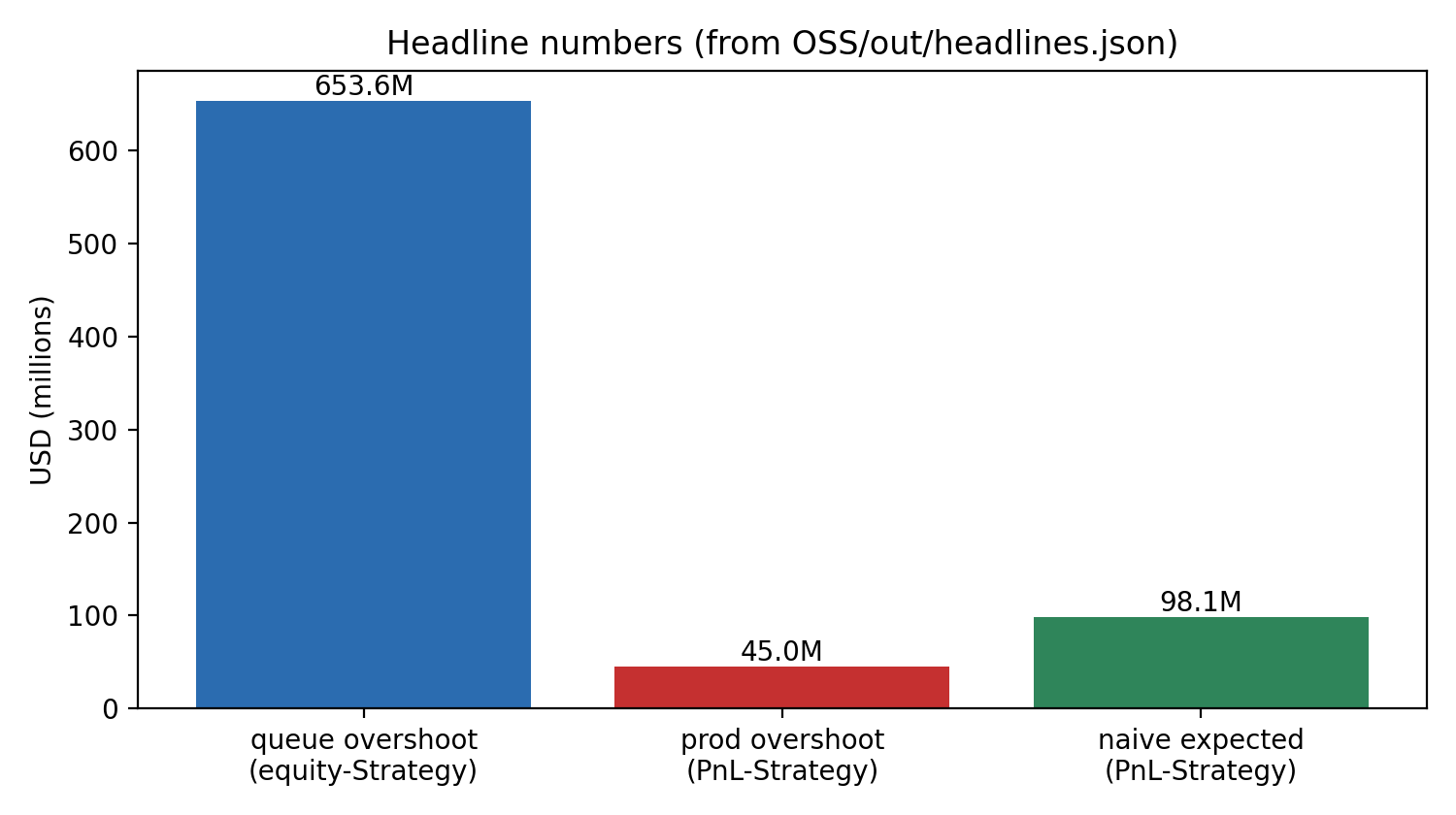}{\includegraphics[width=0.82\linewidth]{oss_figures/01_headlines.png}}{\rule{0pt}{2in}\rule{0.9\linewidth}{0pt}}
  \caption{Headline metrics from the October 10, 2025 event reconstruction. The stylized queue overshoot (\$653.6M in equity dollars) is a theoretical wealth-space diagnostic; production overshoot vs needed (\$45.0M in PNL dollars) is the empirically measured excess from two-pass replay at \(\Delta=0\). The 6.66\(\times\) overcollateralization ratio connects these two numeraires.}
  \label{fig:emp-headlines}
\end{figure}

\paragraph{Overshoot vs horizon (opportunity-cost sensitivity).}
Figure~\ref{fig:emp-overshoot} plots \(O(\Delta)\) from the horizon sweep.
As \(\Delta\) increases, \(H_t^{\mathrm{prod}}(\Delta)\) may rise because positions are marked later along the realized path, revealing the short-horizon opportunity-cost channel.

\begin{figure}[t]
  \centering
  \IfFileExists{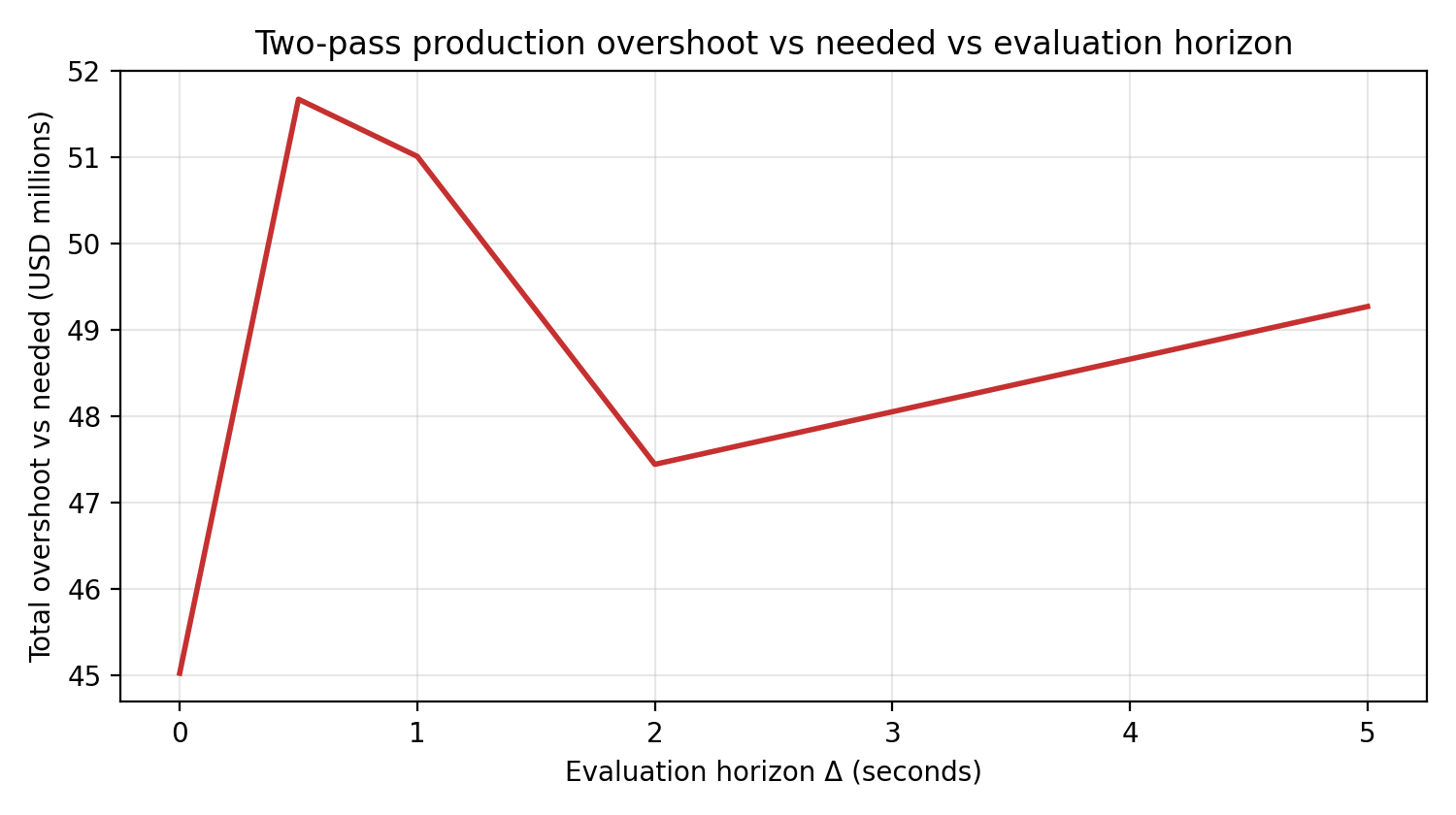}{\includegraphics[width=0.82\linewidth]{oss_figures/02_overshoot_vs_horizon.png}}{\rule{0pt}{2in}\rule{0.9\linewidth}{0pt}}
  \caption{Total production overshoot vs needed \(O(\Delta)\) as a function of evaluation horizon \(\Delta\). The overshoot increases from \$45.0M to \$51.7M as \(\Delta\) grows, reflecting opportunity-cost channels: positions marked later along the realized price path capture additional value that would have accrued to winners.}
  \label{fig:emp-overshoot}
\end{figure}

\paragraph{Production vs benchmark allocations (per wave).}
To quantify excess relative to transparent alternatives, we compare production with benchmark allocations that target \(B_t^{\mathrm{needed}}\) on each wave.
Benchmarks use a PNL-only haircut-capacity proxy constructed from the canonical REALTIME table (Appendix~\ref{app:methodology}).
Figure~\ref{fig:emp-policy-per-wave} shows per-wave budgets, overshoot vs needed, and a concentration diagnostic.

\begin{figure}[t]
  \centering
  \IfFileExists{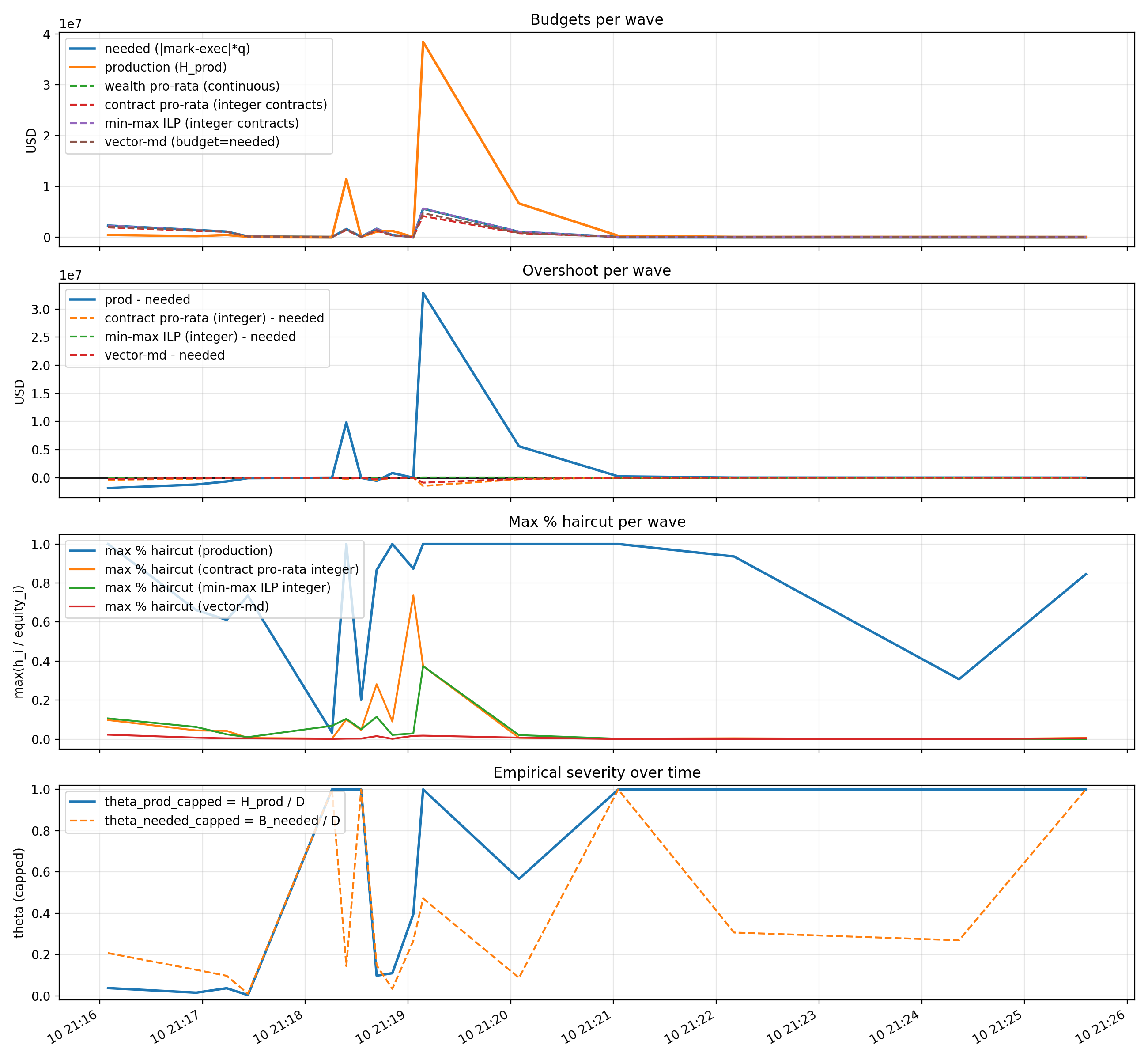}{\includegraphics[width=0.98\linewidth]{oss_figures/05_policy_per_wave_performance.png}}{\rule{0pt}{2in}\rule{0.9\linewidth}{0pt}}
  \caption{Production vs benchmark allocations per wave. The ``needed'' budget \(B_t^{\mathrm{needed}}\) (bankruptcy-gap proxy) is compared against production haircuts \(H_t^{\mathrm{prod}}\) and four benchmark policies: wealth pro-rata, vector mirror descent, contract pro-rata, and min-max ILP. Benchmarks achieve near-exact budget matching while production consistently overshoots, empirically validating the trilemma's prediction that queue-based allocation sacrifices efficiency (Proposition~\ref{prop:adl-trilemma}).}
  \label{fig:emp-policy-per-wave}
\end{figure}

\paragraph{Cumulative overshoot vs needed.}
Figure~\ref{fig:emp-cumulative-overshoot} shows the aggregate effect.
Benchmarks that target \(B_t^{\mathrm{needed}}\) maintain near-zero cumulative overshoot vs needed (up to integer rounding and capacity caps), while production accumulates substantial excess.

\begin{figure}[t]
  \centering
  \IfFileExists{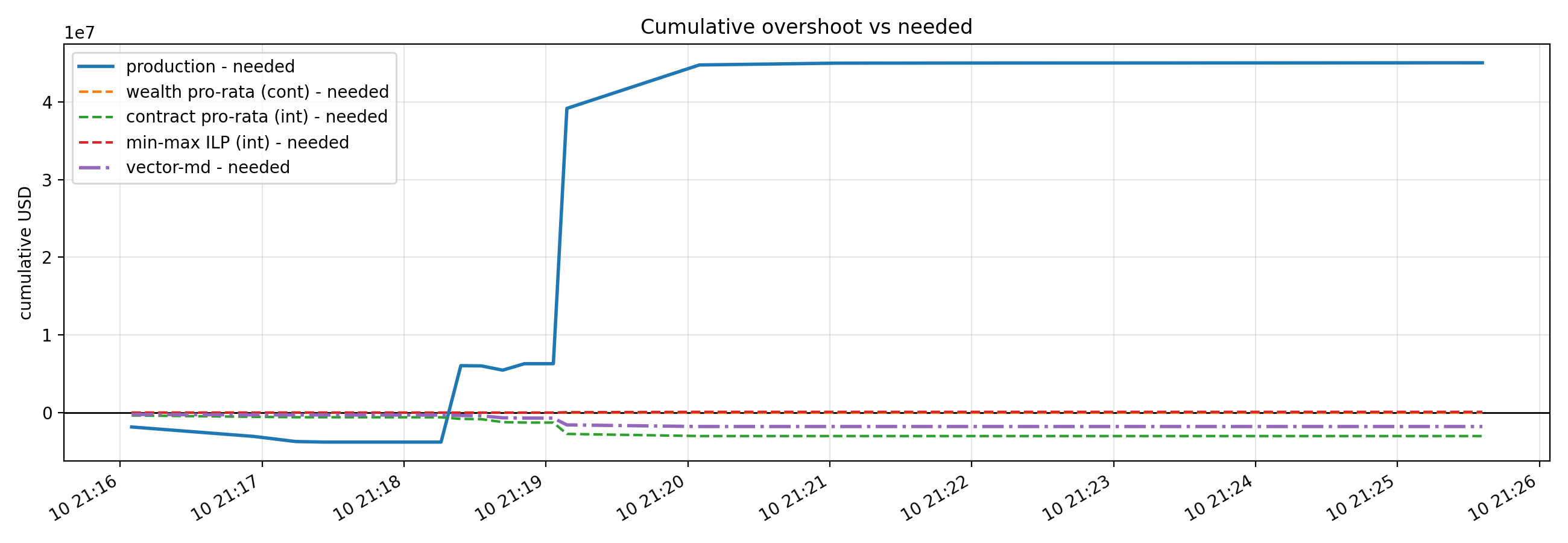}{\includegraphics[width=0.98\linewidth]{oss_figures/06_policy_per_wave_cumulative_overshoot.png}}{\rule{0pt}{2in}\rule{0.9\linewidth}{0pt}}
  \caption{Cumulative overshoot vs needed over the 16 waves. Production accumulates \$45.0M in excess haircuts while benchmark policies maintain near-zero cumulative overshoot (bounded only by integer rounding and capacity constraints). This demonstrates that large overshoot is not mechanically inevitable---it is a policy choice.}
  \label{fig:emp-cumulative-overshoot}
\end{figure}

\paragraph{Regret decomposition.}
We decompose total regret into overshoot regret and fairness regret, as defined in the Regret Framework.
Figure~\ref{fig:emp-cumulative-regret} shows cumulative regret trajectories over the 16 waves for production versus benchmark allocations.
Figures~\ref{fig:emp-overshoot-regret}--\ref{fig:emp-total-regret} provide detailed breakdowns of each regret component separately.

\begin{figure}[t]
  \centering
  \IfFileExists{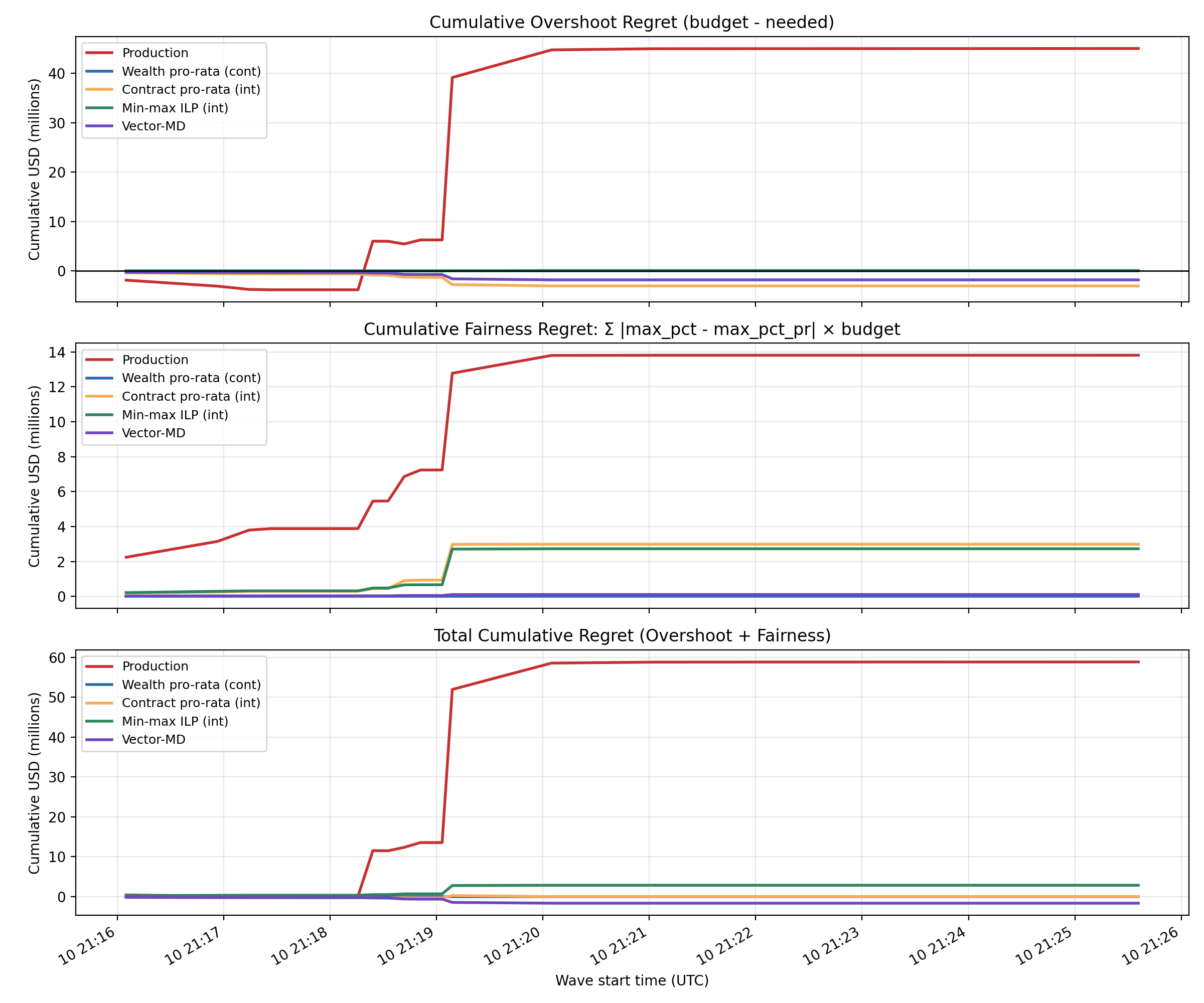}{\includegraphics[width=0.95\linewidth]{oss_figures/09_cumulative_regret_historical.png}}{\rule{0pt}{2in}\rule{0.9\linewidth}{0pt}}
  \caption{Cumulative regret over the 16 waves for production and benchmark policies (wealth pro-rata, contract pro-rata, min-max ILP, and vector-MD). Top panel: overshoot regret (budget minus needed). Middle panel: fairness regret (deviation from pro-rata). Bottom panel: total regret. Production accumulates regret roughly 10\(\times\) faster than pro-rata benchmarks.}
  \label{fig:emp-cumulative-regret}
\end{figure}

\begin{figure}[t]
  \centering
  \IfFileExists{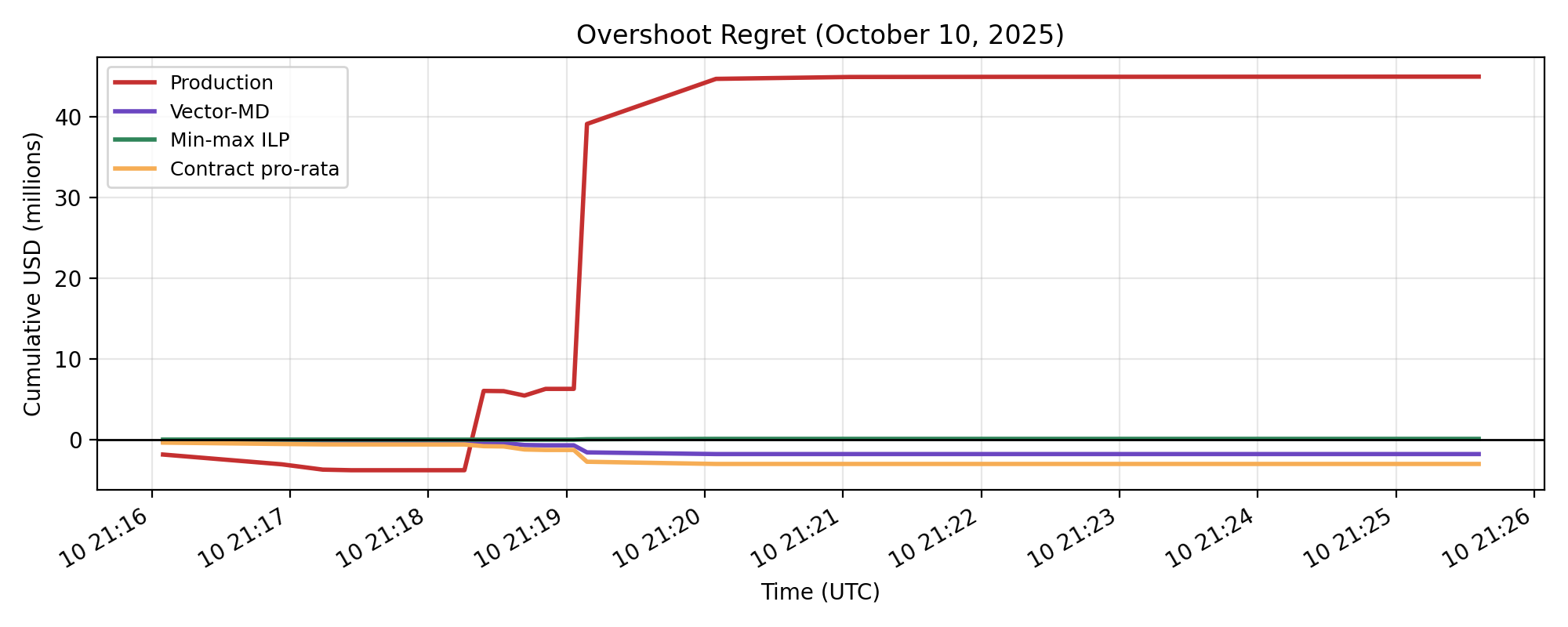}{\includegraphics[width=0.95\linewidth]{oss_figures/10a_overshoot_regret.png}}{\rule{0pt}{1.5in}\rule{0.9\linewidth}{0pt}}
  \caption{Cumulative overshoot regret by policy. Production accumulates \$45.0M in overshoot (excess haircuts beyond needed), while benchmark policies achieve near-zero overshoot. This confirms that large overshoot is a policy choice, not a mechanical necessity.}
  \label{fig:emp-overshoot-regret}
\end{figure}

\begin{figure}[t]
  \centering
  \IfFileExists{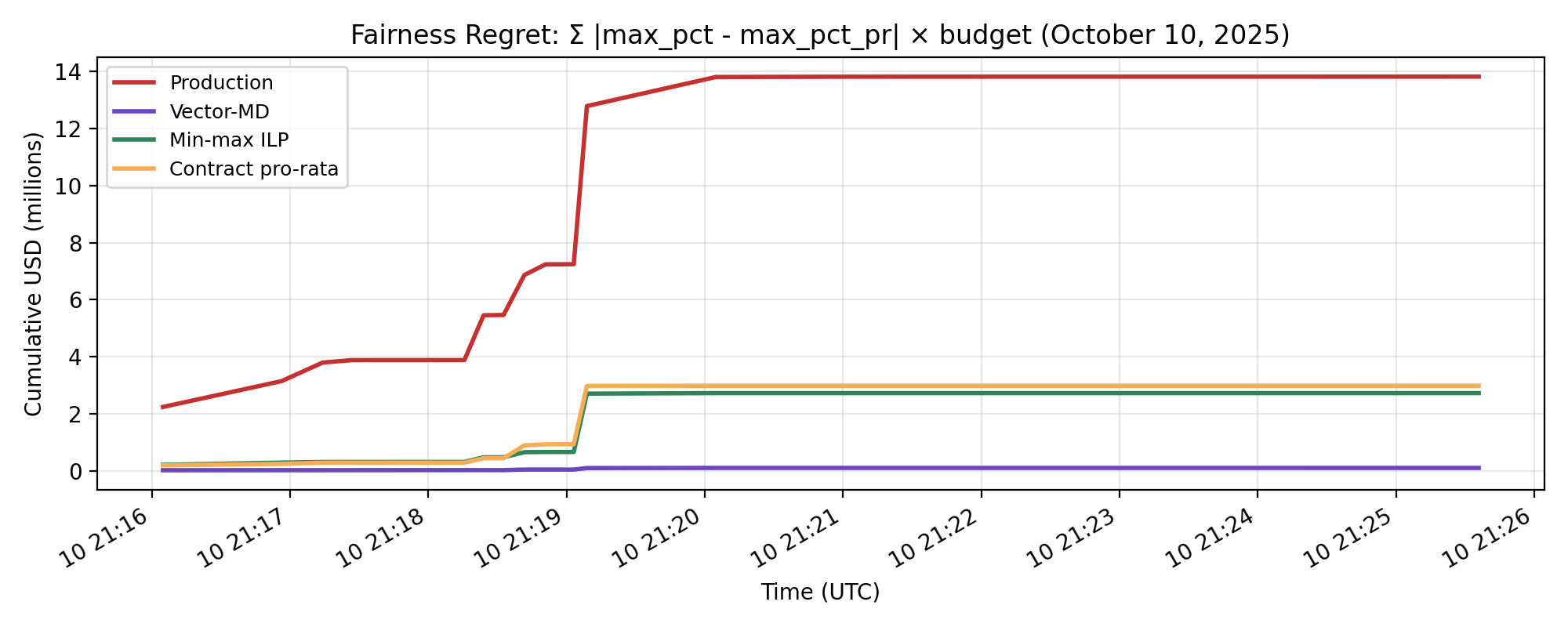}{\includegraphics[width=0.95\linewidth]{oss_figures/10b_fairness_regret.png}}{\rule{0pt}{1.5in}\rule{0.9\linewidth}{0pt}}
  \caption{Cumulative fairness regret by policy. Production's regret is dominated by fairness violations: the worst-hit winner receives 75.45\% haircuts on average, compared to 11.53\% for contract pro-rata and 6.21\% for min-max ILP. Vector-MD achieves the lowest fairness regret (0.71\%).}
  \label{fig:emp-fairness-regret}
\end{figure}

\begin{figure}[t]
  \centering
  \IfFileExists{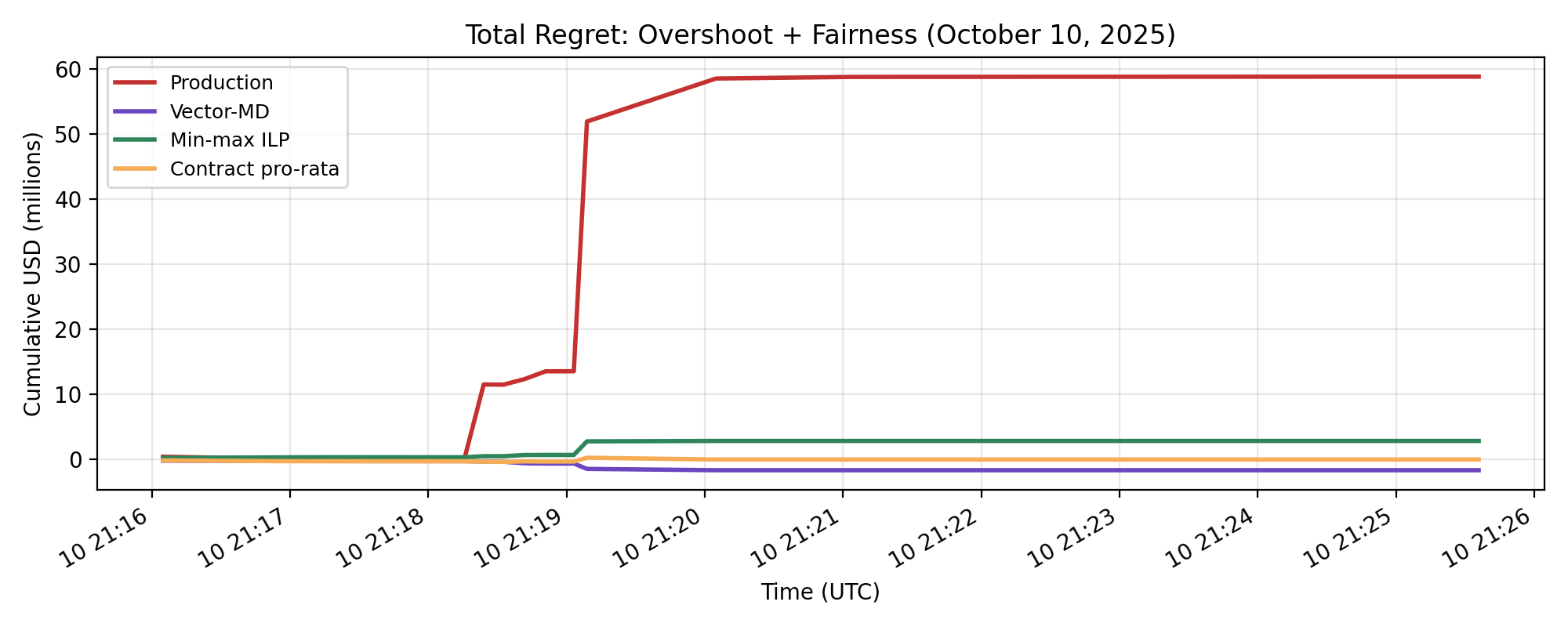}{\includegraphics[width=0.95\linewidth]{oss_figures/10c_total_regret.png}}{\rule{0pt}{1.5in}\rule{0.9\linewidth}{0pt}}
  \caption{Total cumulative regret (overshoot + fairness) by policy. The dominant contributor to production's total regret is fairness violations, not overshoot. Benchmark policies achieve bounded total regret through proportional allocation.}
  \label{fig:emp-total-regret}
\end{figure}
\subsection{Interpretation}
Taken together, the reconstruction indicates that production ADL removes materially more winner wealth than the instantaneous bankruptcy-gap proxy \(B_t^{\mathrm{needed}}\), even at the conservative horizon \(\Delta=0\).
At the same time, transparent benchmarks (wealth-space pro-rata and solver-free contract-space approximations) can track \(B_t^{\mathrm{needed}}\) closely and keep overshoot vs needed near zero.
This difference is descriptive rather than normative: it measures how a contract-space production policy maps into wealth transfers under partial observability.
Under this measurement, production exhibits both high overshoot regret (\$45.0M vs needed) and high fairness regret (loss concentration on top winners), while benchmark allocations maintain near-zero overshoot regret with bounded fairness regret.

Based on these results, we offer three concrete prescriptions for venues that use ADL:
\begin{enumerate}
\item \emph{Make the numéraire explicit.} When reporting ``haircut dollars,'' state whether the unit is equity dollars (wealth space) or PNL dollars (haircut-capacity space), and report both when possible.
\item \emph{Separate eligibility/capacity from allocation.} Enforce ``profits-only'' capacity constraints (haircuts apply to PNL endowment, not cash) and then choose an allocation rule conditional on that feasible set.
\item \emph{Publish a budget target.} A publicly specified per-wave target such as \(B_t^{\mathrm{needed}}\) makes it possible to evaluate (and improve) execution policies; benchmark allocations show that large overshoot is not mechanically inevitable.
\end{enumerate}

\section{Conclusion and Future Work}\label{sec:conclusion}
In this paper, we provide the first formalism for ADL mechanisms that allows us to compare the performance of different policies.
We first prove a negative result: it is impossible for a perpetual venue to simultaneously optimize for exchange solvency, long-term revenue, and fairness to traders.
We then try to get around the negative result by analyzing strategies that optimize each of these components individually.
Our results demonstrate that the heuristic strategy employed by Hyperliquid, Binance, and others that stems from 2015 is actually suboptimal in all dimensions.
We empirically validate the measurement framework on Hyperliquid data from October 10, 2025: the two-pass reconstruction shows a production profit-haircut overshoot vs the instantaneous bankruptcy-gap proxy \(B_t^{\mathrm{needed}}\) on the order of tens of millions of dollars, while transparent benchmark allocations can target \(B_t^{\mathrm{needed}}\) closely and thereby keep overshoot vs needed near zero.

Our results suggest a few directions for future inquiry.
Implementing these policies in production has a number of challenges that we don't consider in the paper.
For instance, as exchanges grow and have many accounts with very smaller notation sizes, it makes more sense to choose a subset of users who are ``eligible for ADL''.
This lowers the computational burden of executing ADL during times of duress (when execution latency is high) while also ensuring that smaller users are not subsidizing larger users.
Practical limits on ADL applicability likely change the results we have here and need to be studied further.

ADL policies also likely change in their efficacy if they were able to be executed in a fully privately manner.
In this paper, we ignored modeling execution costs for exchanges upon liquidation and position closure.
The modeling of such execution costs is nuanced and likely leads to incorrect conclusions about ADL if not carefully handled\footnote{See, \eg~\citet{Storm2025ADLThread1,Storm2025ADLThread2} for examples of overly naive analyses that conflate execution costs with ADL mechanism design.}.
When private execution is possible, the impact of these costs is muted as follower strategies are not able to react to ADL shocks.

While centralized exchanges effectively offer ADL privacy (\ie~other users do not know who else was autodeleveraged besides themselves), they still leak information to the market via public changes to the order book.
With superior privacy guarantess (likely provided by fully homomorphic encryption, zero knowledge proofs, and secure multi-party computation), one could imagine ADL policies being able to avoid the adverse selection effects we describe in the follower strategies section.
As ADL mechanisms rely increasingly on backstop liquidity vaults (such as Hyperliquid's HLP and Lighter's LLP), privacy of ADL execution becomes increasingly important to reduce costs for the exchange and traders.

Finally, we note that the results in this paper focused on a single margin model, where the posted collateral was the num\'eraire.
Many exchanges, including Hyperliquid (which recently launched portfolio margin~\cite{HyperliquidDocsPortfolioMargin}), have cross-margin ADL support.
This changes the modeling of execution costs and forces us to explicitly model market impact on cash and/or collateral balances.

\section{Acknowledgments}\label{sec:acknowledgments}
The author would like to thank Udai Parvathaneni, Nathan Sheng, JD Maturen, Kamil Yusubov, and Luke Sterle from Gauntlet for helpful discussions around how to quantify realistic ADL scenarios (such as October 10, 2025).
The author would also like to thank Nagu Thogiti, Victor Xu, Yi Sun, Anirudh Pai, Kshitij Kulkarni, Theo Diamandis, Matheus V. X. Ferreira, Guillermo Angeris, and Vinayak Kurup for helpful discussions.
Finally, the author really appreciates the help with procuring and cleaning Hyperliquid data provided by SonarX, Hydromancer, and Mauricio Trujillo (@conejocapital).

\clearpage
\appendix
\section*{Appendix}
\addcontentsline{toc}{section}{Appendix}
\section{Notation and conventions}
Throughout the appendix we use consistent notation:
\begin{itemize}
  \item Shocks are indexed by $t=1,2,\dots,T$; winners $W_t$ with endowments $w_{t,i}>0$ (under PNL-only, $w_{t,i}=(\mathrm{PNL}_{t,i})_+$) and losers $L_t$ with negative equity.
  \item Deficit $D_t\ge 0$ (computed from loser negative equity); severity $\theta_t\ge 0$; haircut vector $h_t\in[0,1]^{|W_t|}$ with endowment survivors $s_{t,i}=(1-h_{t,i})w_{t,i}$.
  \item Per–round budget $B_t=\min\{\theta_t D_t,\ \sum_i w_{t,i}\}$ for severity policies and $B_t=w_t^\top h_t$ for vector policies; $B_t^\star=\min\{D_t,\sum_i w_{t,i}\}$ (budget constraint in endowment space).
  \item Per–account caps are $\beta_{t,i}\in[0,1]$ so that $0\le h_{t,i}\le \beta_{t,i}$.
  \item Risk weights: $\rho(\lambda)=\lambda\,\psi(1/\lambda)$ and $g(\lambda)=\rho(\lambda)/\lambda$; when ordering by risk we use $\rho$.
  \item Orders: $x\prec_w y$ denotes weak submajorization (on decreasing rearrangements).
  \item Asymptotics: $X_n \asympp Y_n$ means there exist constants $c, C > 0$ such that $c |Y_n| \le |X_n| \le C |Y_n|$ with high probability.
\end{itemize}

\section{Liquidations, Autodeleveraging, and Insurance Funds}\label{app:liquidation-mechanics}

\subsection{Bankruptcy Price Example}
We illustrate the bankruptcy price calculation with an example.
Fix $\ell^{\max}=10$ (so $m_I=0.10$) and $p_0=1$.
Consider the five running positions from $\mathcal{P}_5$:
\begin{align*}
\mathfrak p_A&=(q,c,b)=(1,\,2,\,+1), &
\mathfrak p_B&=(1,\,2/3,\,+1), &
\mathfrak p_C&=(4,\,8/3,\,-1),\\
\mathfrak p_D&=(1,\,2/19,\,+1), &
\mathfrak p_E&=(1,\,10/99,\,-1).
\end{align*}
Applying Eq.~\eqref{eq:bankruptcy-price-no-funding} with $\Gamma=0$ and $p_t=p_0=1$ gives
\begin{align*}
 p^{bk}(\mathfrak p_A)&=\max\{0,\,1-2\}=0,\\
 p^{bk}(\mathfrak p_B)&=1-\tfrac{2}{3}=\tfrac{1}{3},\\
 p^{bk}(\mathfrak p_C)&=1-\tfrac{\,8/3\,}{-4}=1+\tfrac{2}{3}=\tfrac{5}{3}\ (1.6667),\\
 p^{bk}(\mathfrak p_D)&=1-\tfrac{2}{19}\approx 0.8947,\\
 p^{bk}(\mathfrak p_E)&=1+\tfrac{10}{99}\approx 1.1010.
\end{align*}
Thus A is robust to a full drop to zero; B (long 1.5x) has a low bankruptcy price; C (short 1.5x) bankrupts only if the mark rises above $\tfrac{5}{3}$; D/E (long ~9.5x/9.9x) have high bankruptcy prices close to $1$, making negative equity likely if liquidations lag.

\subsection{Liquidation Price Example}
For $\mu=0.10$ the liquidation prices evaluate to factors $\tfrac{1}{1-\mu}$ for longs and $\tfrac{1}{1+\mu}$ for shorts. Using the same five positions with $p_{t_i}=1$ and $\Gamma=0$,
\begin{align*}
 \hat p^{liq}(\mathfrak p_A,0.10)&=\tfrac{1}{0.9}\cdot 0=0,\\
 \hat p^{liq}(\mathfrak p_B,0.10)&=\tfrac{1}{0.9}\cdot \tfrac{1}{3}\approx 0.3704,\\
 \hat p^{liq}(\mathfrak p_C,0.10)&=\tfrac{1}{1.1}\cdot \tfrac{5}{3}\approx 1.5152,\\
 \hat p^{liq}(\mathfrak p_D,0.10)&=\tfrac{1}{0.9}\cdot \tfrac{17}{19}\approx 0.9942,\\
 \hat p^{liq}(\mathfrak p_E,0.10)&=\tfrac{1}{1.1}\cdot \tfrac{109}{99}\approx 1.0010.
\end{align*}
Long A would liquidate only if the oracle fell to $0$ (\ie~never occurs in practice). The short $C$ becomes liquidatable only when the oracle exceeds its threshold; the high‑leverage long $D$ and short $E$ become liquidatable close to $1$ if collateral is not topped up.

\subsection{Execution Price Example}
We summarize execution with a directional linear impact rule consistent with our notation: selling to close a long uses $P^{sell}(x)=p_t-\alpha x$ and buying to close a short uses $P^{buy}(x)=p_t+\alpha x$ with $\alpha>0$; the volume–weighted execution for a slice is
$\;p^{exec}=p_t-\tfrac{\alpha}{2}\Delta q$ if $b_i=+1$ (sell) and $p^{exec}=p_t+\tfrac{\alpha}{2}\Delta q$ if $b_i=-1$ (buy).
Directional linear impact with a single $\alpha$: selling (closing a long) uses $P^{sell}(x)=p_t-\alpha x$, buying (closing a short) uses $P^{buy}(x)=p_t+\alpha x$. The slice VWAP over $[0,\Delta q]$ is $p^{exec}=p_t\mp\tfrac{\alpha}{2}\Delta q$ (minus for sells, plus for buys). Fix $\alpha=1.0$ and choose $\Delta q$ per case:
\begin{itemize}[leftmargin=12pt]
  \item $\mathfrak p_A$ (Long): $p_t=1.30$, $\Delta q=0.5$ gives $p^{exec}=1.30-0.5\cdot 0.5=1.05$. Here $p^{bk}(\mathfrak p_A)=\max\{p_t-2,0\}=0$, so $p^{exec}>p^{bk}$.
  \item $\mathfrak p_B$ (Long): $p_t=0.95$, $\Delta q=0.2$ gives $p^{exec}=0.95-0.5\cdot 0.2=0.85$. With $p^{bk}(\mathfrak p_B)=\max\{0.95-\tfrac{2}{3},0\}\approx 0.2833$, we have $p^{exec}>p^{bk}$.
  \item $\mathfrak p_C$ (Short): $p_t=1.60$, $\Delta q=2.0$ gives $p^{exec}=1.60+0.5\cdot 2.0=2.60$. Since $p^{bk}(\mathfrak p_C)=p_t+\tfrac{8/3}{4}=p_t+\tfrac{2}{3}=2.2667$, $p^{exec}>p^{bk}$ (adverse for a short).
  \item $\mathfrak p_D$ (Long; targeted): $p_t=0.98$, $\Delta q=0.4$ gives $p^{exec}=0.98-0.5\cdot 0.4=0.78$. With $p^{bk}(\mathfrak p_D)=p_t-\tfrac{2}{19}\approx 0.8747$, we achieve $p^{exec}<p^{bk}$.
  \item $\mathfrak p_E$ (Short): $p_t=1.05$, $\Delta q=0.4$ gives $p^{exec}=1.05+0.5\cdot 0.4=1.25$. With $p^{bk}(\mathfrak p_E)=p_t+\tfrac{10}{99}\approx 1.151$, we have $p^{exec}>p^{bk}$.
\end{itemize}

\subsection{Liquidation Costs Example}
To ground the fee model, let $\tau(\Delta q)=\tau^{fix}+\phi^{mark} p_t\,\Delta q+\phi^{exec} p^{exec}\,\Delta q$ as in practice.
Consider $p_t=1.30$, a slice $\Delta q=0.50$, and a realized $p^{exec}=1.32$.
Two parameterizations:
\begin{itemize}[leftmargin=12pt]
  \item Binance: $(\tau^{fix},\phi^{mark},\phi^{exec})=(0,\,40\,\mathrm{bps},\,0)$ \citep{BinanceFuturesInsuranceFund}. Then $\tau=0.0040\cdot 1.30\cdot 0.50=0.0026$.
  \item Hyperliquid: $(\tau^{fix},\phi^{mark},\phi^{exec})=(0,\,20\,\mathrm{bps},\,10\,\mathrm{bps})$ \citep{HyperliquidDocsLiquidations}. Then $\tau=0.0020\cdot 1.30\cdot 0.50+0.0010\cdot1.32\cdot0.50\approx0.00130+0.00066=0.00196$.
\end{itemize}
\noindent Rates and formulas vary by venue and contract; the above are illustrative parameterizations consistent with public documentation that liquidation fees are charged and, on centralized venues like Binance, credited to the insurance fund.

\subsection{Liquidation Strategy Example}
Consider short $\mathfrak p_E$ when the mark jumps to $p_t=5.5$ (ignore funding for this step).
Equity before liquidation is $e\approx c_E - q_E(p_t-p_0)=\tfrac{10}{99}-1\cdot 4.5\approx -4.399$.
Let $\mu=0.10$ and a linear fee $\tau(\Delta q)=\phi\,p_t\,\Delta q$ with $\phi=30$bps. Suppose execution is $p^{exec}=5.55$.
Using \eqref{eq:liquidation-strategy} and $b=-1$, the minimal slice that restores maintenance solves
\[
\Delta q\;=\; \frac{\mu p_t q - e}{\,b(p^{exec}-p_t) - \phi p_t + \mu p_t\,}
\;=\; \frac{0.1\cdot 5.5\cdot 1 - (-4.399)}{-0.05 - 0.003\cdot 5.5 + 0.1\cdot 5.5}\ \approx\ \frac{4.949}{0.4835}\ \approx\ 10.24.
\]
Since $\Delta q>q_E$, a greedy policy would fully close E (cap at $\Delta q=q_E=1$).

\subsection{Bad Debt Example}
Consider the high‑leverage long $\mathfrak p_D$ and a slice of size $\Delta q=0.4$ at $p_t=0.98$.
Suppose the realized execution is $p^{exec}_D=0.78$ while the bankruptcy level is $p^{bk}_D\approx 0.8747$.
Since $p^{exec}_D<p^{bk}_D$, the realized shortfall from this slice is
\[
 (p^{bk}_D-p^{exec}_D)\,\Delta q\ \approx\ (0.8747-0.78)\cdot 0.4\ \approx\ 0.0379,
\]
which contributes this amount to the period bad debt $D_t$ (cf. Eq.~\eqref{eq:total-bad-debt}).
Coverage follows the solvency waterfall: the insurance fund pays $\min\{\mathsf{IF}_t, D_t\}$ and any residual shortfall is socialized via ADL (see \S\ref{subsec:exchange-solvency}, \S\ref{subsec:adl}).
The short case is symmetric: buying to close above bankruptcy ($p^{exec}>p^{bk}$) realizes a positive contribution to $D_t$.

\subsection{Anatomy of a Liquidation}
Given the bankruptcy, liquidation, and execution prices, we can now describe the high-level algorithm that liquidations follow.
We note that many live liquidation systems will have much more complex liquidation algorithms.
These complexities deal with the coordination costs of coordinating many parties (\eg~oracle provider, liquidators, spot order book liquidity) and precise models that exchanges use for their liquidation strategy.
However, we effectively lump all of these complexities into the definition of the liquidation strategy.
The following liquidation loop is run on every oracle update received by a perpetuals exchange: 
\begin{itemize}
\item For $\mathfrak{p}_{i,t} \in \mathcal{P}_n$
\begin{itemize}
\item If the maintenance margin condition~\eqref{eq:maintenance-margin} is violated for $\mathfrak{p}_{i,t}$
\begin{enumerate}
    \item Remove the position $\mathcal{P}_n \leftarrow \mathcal{P}_n - \{\mathfrak{p}_{i, t}\}$
    \item Estimate quantity to liquidate $\Delta q_i \leftarrow L(\mathfrak{p}_{i,t}, p_{1:T}, \hat{p}_{1:T})$
    \item Liquidator executes $\Delta q_i$-sized liquidation and returns their execution price $p^{exec}(\Delta q_i)$
    \item Update position: $\mathfrak{p}'_{i,t} = (q_i - \Delta q_i, c_i + p^{exec} \Delta q_i - \tau_t(\Delta q_i), t_i, b_i)$
    \item Re-add the position position: $\mathcal{P}_n \leftarrow \mathcal{P}_n \cup \{\mathfrak{p}'_{i, t}\}$
    \item Update equity using~\eqref{eq:adjusted-equity}
\end{enumerate}
\end{itemize}
\item If $\mathfrak{p}_{i,t}$ has bad debt, $\tilde{e}(\mathfrak{p}_{i,t}, p_{1:T}, \hat{p}_{1:T}, \Delta q_i) < 0$, then
\begin{itemize}
    \item Attempt to use the insurance fund, if it exists, to cover the bad debt (\S\ref{subsec:exchange-solvency})
    \item If the insurance fund is insufficiently sized, utilize an ADL mechanism (\S\ref{subsec:adl})
\end{itemize}
\end{itemize}

\iparagraph{Example.}
We illustrate a five–step path using the running set $\mathcal{P}_5$ from above. Take $T=5$, $p_0=1$ and
\[
 p_{0:5}=(1.00,\ 0.96,\ 0.94,\ 0.97,\ 1.05,\ 1.12),\qquad \hat p_t=p_t\ \ (t=0,\dots,5),\qquad \mu=m_I=0.10.
\]
Executions follow the directional linear impact rule introduced in the execution example: for a slice of size $\Delta q$ at time $t$, the volume‑weighted execution is
\[
\begin{aligned}
  p^{exec}\;=\;p_t-\tfrac{\alpha}{2}\,\Delta q&\quad\text{(sell to close a long, $b=+1$)},\\
  p^{exec}\;=\;p_t+\tfrac{\alpha}{2}\,\Delta q&\quad\text{(buy to close a short, $b=-1$)}.
\end{aligned}
\]
with $\alpha>0$. We take $\alpha=1.0$ and choose $\Delta q$ via the loop's liquidation size $\Delta q_i=L(\mathfrak{p}_{i,t}, p_{1:T}, \hat p_{1:T})$.

\iparagraph{D liquidates at $t=2$ (no bad debt).}
At $t=2$ we have $p_2=0.94$ and the maintenance condition \eqref{eq:maintenance-margin} is violated for $\mathfrak p_{D,2}$, so the loop attempts a partial liquidation. Take $\Delta q_D=L(\mathfrak{p}_{D,2},\cdot)=0.20$ for illustration. By \eqref{eq:bankruptcy-price-no-funding},
\[
 p^{bk}(\mathfrak p_{D,2})\;=\;p_2-\tfrac{2}{19}\;\approx\;0.8347,\qquad
 p^{exec}_D\;=\;p_2-\tfrac{\alpha}{2}\Delta q_D\;=\;0.94-0.10\;=\;0.84.
\]
Since $p^{exec}_D>p^{bk}(\mathfrak p_{D,2})$, this slice executes without bad debt; the position is updated to $\mathfrak{p}'_{D,2}=(q_D-\Delta q_D,\ c_D+p^{exec}_D\Delta q_D-\tau_2(\Delta q_D),\ t_D,\ b_D)$ and equity is updated per \eqref{eq:adjusted-equity} before reinserting $\mathfrak{p}'_{D,2}$ into $\mathcal{P}_n$.

\iparagraph{E becomes bad debt at $t=4$ (short; liquidation fails).}
At $t=4$ we have $p_4=1.05$ and \eqref{eq:maintenance-margin} is violated for $\mathfrak p_{E,4}$ with $b_E=-1$. The loop selects a liquidation size; take a full close $\Delta q_E=L(\mathfrak{p}_{E,4},\cdot)=1$. By \eqref{eq:bankruptcy-price-no-funding},
\[
 p^{bk}(\mathfrak p_{E,4})\;=\;p_4+\tfrac{10}{99}\;\approx\;1.1510,\qquad
 p^{exec}_E\;=\;p_4+\tfrac{\alpha}{2}\Delta q_E\;=\;1.05+0.50\;=\;1.55.
\]
For a short, $p^{exec}>p^{bk}$ realizes bad debt. The loop records the shortfall
\[
 D_4\;=\;\big(p^{exec}_E-p^{bk}(\mathfrak p_{E,4})\big)\,\Delta q_E\;\approx\;0.399,\qquad \tilde e\big(\mathfrak p_{E,4}, p_{1:5}, \hat p_{1:5}, \Delta q_E\big)\;=\;-D_4<0,
\]
and then attempts coverage via the insurance fund (up to $\min\{\mathsf{IF}_4,D_4\}$); any residual shortfall $R_4$ defined by Eq.~\eqref{eq:adl-residual} is socialized by ADL (see \S\ref{subsec:adl}).

\subsection{Optimal Capital Structure Derivation}\label{app:optimal-capital}
In this section, we compute the optimal static insurance fund size $IF^\star$ that trades off the opportunity cost of capital and expected uncovered losses beyond $IF$.

\paragraph{Setup.} Let $D_T$ denote the round deficit with pdf $f_D$ and tail $\bar F_D(x)=\Pr[D_T>x]$. Let $r>0$ be the per‑unit capital cost and $\kappa>0$ the per‑unit social loss weight of uncovered deficits.
The objective is
\[
\min_{IF \ge 0}\ \mathcal{J}(IF)\;=\; r\,IF\;+\;\kappa\,\mathbb{E}\big[(D_T-IF)_+\big]
\]
which equals $r\,IF+\kappa\int_{IF}^{\infty}(x-IF)f_D(x)\,dx$ when $D_T$ is continuous.

\paragraph{Optimality condition.} Differentiating yields,
\[
\mathcal{J}'(IF)= r - \kappa\,\bar F_D(IF).
\]
Hence any interior minimizer satisfies $\bar F_D(IF^\star)=r/\kappa$, i.e.,
\[
IF^\star \;=\; \bar F_D^{-1}\!\Big(\tfrac{r}{\kappa}\Big)\;=\;\mathrm{VaR}_{\,1-r/\kappa}(D_T).
\]

\paragraph{Assumptions.}
In order for this argument to hold, we assume that $\mathcal J$ is convex and differentiable.
Moreover, if $r\ge \kappa$ then we define $IF^\star=0$, whereas if $r\to 0$, then $IF^\star \rightarrow \sup D_T$.
\section{Moral Hazard and Extreme Value Analysis}\label{app:proofs}\label{app:moral-hazard-2}

In this appendix, we formalize the moral hazard properties of ADL mechanisms.
We analyze the optimal control of the Profitability-to-Total-Solvency Ratio (PTSR) and the Profitability-to-Maximum Solvency Ratio (PMR) defined in~\S\ref{sec:risk-metrics}, and derive their asymptotic behavior under distributional assumptions.

\subsection{Setup and Assumptions}

We work in the \emph{large-market limit} ($n\to\infty$) under the heavy-tailed assumptions characteristic of crypto markets.
Recall that $D^\pi_T = \theta_\pi D_T$ is the total socialized loss and $\Delta^\pi_T = \theta_\pi \Delta_T$ is the maximum socialized shortfall under policy $\pi$.
The survivor of the top winner's endowment is denoted $\upsilon^\pi_T = \max_i (w_{i,T} - x_{\pi,i})_+$ (under PNL-only, this is the maximum post-ADL positive PNL).

\iparagraph{Assumption A (regular variation).}
The right tail of the winner endowment distribution $\bar F_+(x)$ (under PNL-only, the distribution of positive PNL) and the right tail of the loser shortfall distribution $\bar F_-(x)$ are regularly varying with indices $\alpha_+ > 0$ and $\alpha_- > 0$, respectively.
That is, $\bar F_\pm(x) = L_\pm(x) x^{-\alpha_\pm}$ where $L_\pm$ are slowly varying functions.

\iparagraph{Assumption B (LLN and EVT).}
We assume the standard Law of Large Numbers (LLN) and Extreme Value Theory (EVT) scaling limits apply:
\begin{itemize}
    \item \emph{Aggregates:} $U_T/n \xrightarrow{p} \mu_+$ (total endowment capacity) and $D_T/n \xrightarrow{p} \mu_-$ (total deficit), for constants $\mu_\pm \in (0, \infty)$.
    \item \emph{Extremes:} The maximum winner endowment $\upsilon_T$ scales as $b_{k_n}^+ = F_+^\leftarrow(1-1/n)$, and the maximum loser shortfall $\Delta_T$ scales as $b_{m_n}^- = F_-^\leftarrow(1-1/n)$.
\end{itemize}

\subsection{Optimal Control of Moral Hazard}

\subsection{Queue maximizes top-winner damage}\label{app:queue-vs-pr-top}

We first establish that the \emph{Queue} (or Top-First) rule minimizes the moral hazard metrics defined in the main text for any fixed budget.

\begin{proposition}[Queue Minimizes Top Survivor]\label{prop:queue-min-top}
    Fix a budget $H = D^\pi_T$. Let $\upsilon_T$ be the endowment of the largest winner (under PNL-only, the maximum positive PNL).
    For any feasible haircut vector $h$ satisfying $\sum_i h_i w_{i,T} = H$ (budget constraint in endowment space), the top-winner endowment survivor $\upsilon^\pi_T$ satisfies
    \[
        \upsilon^\pi_T \;\ge\; \max\{\upsilon_T - H, 0\}.
    \]
    Equality is attained by the Queue rule, which sets the haircut on the top winner to $h_{(1)} = \min(H/\upsilon_T, 1)$ and others to $0$ (until $h_{(1)}$ saturates).
\end{proposition}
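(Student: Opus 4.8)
The plan is to work entirely in endowment (seizure) space, where the analysis is cleanest. Writing $x_i = h_i w_{i,T}$ for the seized amount and $s_i = w_{i,T} - x_i$ for the survivor, validity of $\pi$ (eqs.~\eqref{eq:budget-balance}--\eqref{eq:feasibility}) gives $\sum_i x_i = H$ with $0 \le x_i \le w_{i,T}$, so every survivor is nonnegative and the budget $H = \theta_\pi D_T \le U_T$ is allocable within the endowments. The lower bound is then essentially a one-line observation. Let $j^\star$ index a largest endowment, $w_{j^\star,T} = \upsilon_T$. Since all seizures are nonnegative, the seizure on that single account cannot exceed the total budget: $x_{j^\star} \le \sum_i x_i = H$. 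Hence its survivor obeys $s_{j^\star} = \upsilon_T - x_{j^\star} \ge \upsilon_T - H$, and combining with $s_{j^\star} \ge 0$ yields $s_{j^\star} \ge \max\{\upsilon_T - H, 0\}$. Because $\upsilon^\pi_T$ dominates the survivor of this account, the claimed inequality holds for every feasible $h$.

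For tightness I would show that the Queue rule drives the top account's survivor down to exactly this floor by pushing its seizure to the feasible ceiling. Queue (largest-endowment winner first) allocates the entire budget to the top account, so $x_{j^\star} = \min(H, \upsilon_T)$. If $H \le \upsilon_T$ then $x_{j^\star} = H$ and $s_{j^\star} = \upsilon_T - H \ge 0$; if $H > \upsilon_T$ then $h_{(1)}$ saturates at $1$, $x_{j^\star} = \upsilon_T$, and $s_{j^\star} = 0$. In both cases $s_{j^\star} = \max\{\upsilon_T - H, 0\}$, meeting the floor, so Queue maximizes the seizure on the top account—equivalently minimizes its survival—among all feasible allocations. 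This simultaneously establishes the minimality claim in the statement.

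The one point requiring care—and the only genuine subtlety, since the inequality itself is elementary—is the reading of $\upsilon^\pi_T = \max_i (w_{i,T} - x_{\pi,i})_+$. The lower bound above holds verbatim under the literal global-max reading, because the global max dominates $s_{j^\star}$. The equality, however, concerns the survival of the \emph{originally top-ranked} winner (the object the moral-hazard ratios $\mathsf{PTSR}$/$\mathsf{PMR}$ are designed to track), which Queue always drives to $\max\{\upsilon_T - H, 0\}$. I would make this explicit: under a rank-preserving rule such as pro-rata the global max coincides with the top-ranked survivor, whereas under Queue the global max may instead be carried by a lower-ranked winner once $\upsilon_T - H < w_{(2)}$; the proposition's content is precisely that Queue cannibalizes the top winner down to the floor, which is exactly what produces the strict Queue-versus-pro-rata gap $H\bigl(1 - w_{(1)}/U_T\bigr)$ in Proposition~\ref{prop:queue-worst-top-informal}. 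Apart from pinning down this interpretation and dispatching the trivial degenerate cases (ties for $\upsilon_T$, empty winner set), there is no hard step.
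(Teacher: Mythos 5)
Your proof is correct and follows essentially the same route as the paper's: bound the seizure on the top account by the total budget to obtain the floor $\max\{\upsilon_T - H, 0\}$, then observe that Queue pushes that account's seizure to $\min(H,\upsilon_T)$ to attain it. Your closing remark on the two readings of $\upsilon^\pi_T$ is in fact more careful than the paper's own proof, which silently writes $\upsilon^\pi_T = \upsilon_T - h_{(1)}\upsilon_T$ (i.e., the top-ranked winner's survivor) even though the formal definition elsewhere is the global maximum of survivors $\max_i (w_{i,T}-x_{\pi,i})_+$, under which the lower bound still holds but Queue need not attain it (e.g., $w=(10,9)$, $H=5$ leaves survivors $(5,9)$ with global max $9 > 5$), so pinning the statement to the top-ranked winner's survival, as you do, is exactly the right reading.
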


\begin{proof}
    Let $h_{(1)}$ be the haircut applied to the top winner's endowment.
    Since $h_i w_{i,T} \ge 0$ for all $i$, we have $h_{(1)} \upsilon_T \le \sum_{i} h_i w_{i,T} = H$.
    The survivor is $\upsilon^\pi_T = \upsilon_T - h_{(1)} \upsilon_T \ge \upsilon_T - H$.
    Since endowment cannot be negative, $\upsilon^\pi_T \ge \max\{\upsilon_T - H, 0\}$.
    The Queue rule greedily allocates the budget to the largest endowment position, achieving $h_{(1)} \upsilon_T = \min(H, \upsilon_T)$, thus attaining the lower bound.
\end{proof}

\begin{corollary}[Minimality of PTSR/PMR]
    Since $D^\pi_T$ and $\Delta^\pi_T$ are fixed for a given policy severity, the Queue rule minimizes both $\mathsf{PTSR}_T$ and $\mathsf{PMR}_T$ among all budget-balanced policies.
\end{corollary}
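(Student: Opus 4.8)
The plan is to reduce the corollary to the pointwise minimality already established in Proposition~\ref{prop:queue-min-top}, exploiting the fact that fixing the severity $\theta$ decouples the numerator of each ratio from its denominator. First I would fix an arbitrary severity level $\theta$ and restrict to the family of budget-balanced policies $\pi$ with $\theta_\pi=\theta$. The key observation is that, once $\theta$ and the pre-ADL book $\mathcal{P}_n$ are fixed, both denominators are pinned down: by definition $D^\pi_T=\theta\,D_T(\mathcal{P}_n)$ and $\Delta^\pi_T=\theta\,\Delta_T(\mathcal{P}_n)$, and by the budget-balance constraint~\eqref{eq:budget-balance} the total seized endowment is $H=\theta\,D_T(\mathcal{P}_n)$. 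None of these three quantities depends on how haircuts are distributed across winners; only the numerator $\upsilon^\pi_T$ is allocation-dependent.

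Second, I would work conditionally on a realization of the environment (the price path, and hence the realized equities, winner/loser sets, $D_T$, $\Delta_T$, and $\upsilon_T$). On each such realization the problem is deterministic, so Proposition~\ref{prop:queue-min-top} applies verbatim with the common budget $H$: it gives the pointwise bound $\upsilon^\pi_T\ge\max\{\upsilon_T-H,\,0\}$ for every budget-balanced $\pi$, with equality attained by Queue, i.e. $\upsilon^{\mathrm{Queue}}_T\le\upsilon^\pi_T$. Since $D^\pi_T,\Delta^\pi_T>0$ are the \emph{same} constants for every policy of severity $\theta$ and are independent of the allocation, I can divide the pointwise inequality through by each denominator to obtain
\[
\frac{\upsilon^{\mathrm{Queue}}_T}{D^\pi_T}\le\frac{\upsilon^\pi_T}{D^\pi_T},\qquad
\frac{\upsilon^{\mathrm{Queue}}_T}{\Delta^\pi_T}\le\frac{\upsilon^\pi_T}{\Delta^\pi_T},
\]
still path-wise. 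Taking expectations over the environment and using monotonicity of $\Expect[\cdot]$ then yields $\mathsf{PTSR}_T(\mathrm{Queue})\le\mathsf{PTSR}_T(\pi)$ and $\mathsf{PMR}_T(\mathrm{Queue})\le\mathsf{PMR}_T(\pi)$, which is the claim.

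The main obstacle is not the algebra but making the pointwise step airtight. I would need to confirm that the minimality in Proposition~\ref{prop:queue-min-top} is genuinely denominator-invariant --- that is, $\theta$ (not the induced haircut shape) is held fixed across the compared policies --- and to pin down the budget regime in which Queue's bound is attained \emph{as a maximum over survivors}. Because Queue wipes winners from the top of the ranking, its top survivor equals $\upsilon_T-H$ only when the budget does not push the top winner below the next-ranked survivor; the relevant regime $H\le\upsilon_T$ (equivalently $\theta D_T\le\upsilon_T$) is exactly the one flagged in Proposition~\ref{prop:queue-worst-top-informal}, and I would carry it as a standing hypothesis. Finally I would dispose of the degenerate events $\{D_T=0\}$ or $\{\Delta_T=0\}$ (no deficit, ratios undefined) by excluding them under the standing assumption $D_T,\Delta_T>0$ on the breach event, which also guarantees integrability so that the expectations defining $\mathsf{PTSR}_T$ and $\mathsf{PMR}_T$ are well defined.
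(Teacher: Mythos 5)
Your core argument is correct and is exactly the paper's proof of this corollary: once the severity $\theta$ is fixed, the denominators $D^\pi_T = \theta D_T$ and $\Delta^\pi_T = \theta \Delta_T$ and the budget $H = \theta D_T$ are identical across all budget-balanced policies, so the pathwise bound of Proposition~\ref{prop:queue-min-top} gives $\upsilon^{\mathrm{Queue}}_T \le \upsilon^\pi_T$, and dividing by the common positive denominators and taking expectations yields minimality of both $\mathsf{PTSR}_T$ and $\mathsf{PMR}_T$.

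The one flaw is in your final paragraph: the standing hypothesis $H \le \upsilon_T$ is both unnecessary and, for the worry you actually raise, insufficient. In the proof of Proposition~\ref{prop:queue-min-top} the quantity $\upsilon^\pi_T$ is computed as $\upsilon_T - h_{(1)}\upsilon_T$, i.e., it tracks the post-ADL endowment of the trader who was the largest winner \emph{before} ADL, not the maximum over all post-ADL survivors; under that reading Queue attains $\max\{\upsilon_T - H,\,0\}$ for \emph{every} budget (the positive part handles $H > \upsilon_T$), so no regime restriction is needed anywhere in your argument. If instead you insist on the literal formula $\upsilon^\pi_T = \max_i (w_{i,T} - x_{\pi,i})_+$ from the PTSR definition, then $H \le \upsilon_T$ does not close the gap: take $w = (10, 9)$ and $H = 5 \le w_{(1)}$; Queue leaves survivors $(5, 9)$ with maximum $9$, while the allocation $x = (3, 2)$ leaves $(7, 7)$ with maximum $7$, so Queue is not the minimizer even though your hypothesis holds. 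The condition you would actually need under that reading is $H \le w_{(1)} - w_{(2)}$, so that the top winner's residual stays on top of the ranking. The clean (and intended) route is the first reading, under which your pointwise-divide-and-integrate argument goes through unconditionally.
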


\iparagraph{Gap versus pro-rata.}
The Queue rule minimizes moral hazard but concentrates the entire loss on the top winner (extreme inequality).
In contrast, the Pro-Rata rule spreads the loss proportionally across all winners, prioritizing \emph{smoothness} (treating similar positions similarly) over minimizing the top survivor's burden.
For $H \le \upsilon_T$, the survivor gap is
\[
    \upsilon^{\mathrm{PR}}_T - \upsilon^{\mathrm{Queue}}_T \;=\; H \left( 1 - \frac{\upsilon_T}{U_T} \right),
\]
where $U_T = \sum_i w_{i,T}$ is the total endowment capacity.
This gap scales linearly with the budget $H$, quantifying the ``cost of fairness'': by choosing the smoother Pro-Rata allocation, the system allows the top winner to retain more profit capacity than is strictly necessary to cover the deficit.

\subsection{Asymptotic Scaling Results}

We now characterize the asymptotic behavior of PTSR and PMR under ``gentle'' policies (like Pro-Rata) where the top winner is not specifically targeted.

\begin{theorem}[PTSR scaling]\label{thm:master-ptsr}\label{thm:ptsr-scaling}\label{thm:ev-main}
    Under Assumptions A and B, for any policy $\pi$ with severity $\theta_n$ where $\upsilon^\pi_T \sim \upsilon_T$ (e.g., Pro-Rata with $H \ll U_T$), the PTSR scales as
    \[
        \mathsf{PTSR}_T(\pi) \;\asympp\; \frac{b_{k_n}^+}{\theta_n n},
    \]
    where $\upsilon^\pi_T$ is the maximum post-ADL endowment.
\end{theorem}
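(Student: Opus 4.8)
The plan is to decompose the ratio $\upsilon^\pi_T/D^\pi_T$ into numerator and denominator, control each by the concentration inputs of Assumption~B, and then promote the resulting high-probability two-sided bound to the expectation defining $\mathsf{PTSR}_T$.

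First I would handle the numerator. Because haircuts can only shrink endowments, $\upsilon^\pi_T \le \upsilon_T$ always holds, and the hypothesis $\upsilon^\pi_T \sim \upsilon_T$ upgrades this to $\upsilon^\pi_T = (1+o_p(1))\,\upsilon_T$. The EVT half of Assumption~B (eq.~\eqref{eq:ev-scaling}) then gives $\upsilon_T/b^+_{k_n} \xrightarrow{p} c_+ \in (0,\infty)$, so that $\upsilon^\pi_T \asympp b^+_{k_n}$ with high probability. Since the number of winners satisfies $k_n \asymp n$ with high probability (the winner count concentrates by the same law of large numbers that produces $U_T/n \to \mu_+ > 0$) and $b^+_k$ is regularly varying in $k$ under Assumption~A, we also have $b^+_{k_n} \asymp b^+_n$, so the numerator is of order $b^+_n$.

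Next I would handle the denominator. By the LLN half of Assumption~B, $D_T/n \xrightarrow{p} \mu_- \in (0,\infty)$, hence $D_T \asympp n$ with high probability, and therefore $D^\pi_T = \theta_n D_T \asympp \theta_n n$. Combining the two scalings on the good event $G_n = \{\upsilon^\pi_T \asymp b^+_{k_n}\} \cap \{D_T \asymp n\} \cap \{k_n \asymp n\}$, whose probability tends to $1$ by Slutsky and the continuous mapping theorem, yields the core two-sided bound
\[
\frac{\upsilon^\pi_T}{D^\pi_T} \;\asympp\; \frac{b^+_{k_n}}{\theta_n\,n} \;\asympp\; \frac{b^+_n}{\theta_n\,n}
\qquad\text{with high probability.}
\]

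Finally, to pass from this high-probability statement to $\mathsf{PTSR}_T(\pi)=\Expect[\upsilon^\pi_T/D^\pi_T]$, I would split the expectation over $G_n$ and $G_n^c$. The lower bound is immediate, since $\mathsf{PTSR}_T \ge \Expect[\upsilon^\pi_T/D^\pi_T;G_n] \ge c\,(b^+_n/\theta_n n)\,\Prob(G_n) \gtrsim b^+_n/(\theta_n n)$, and on $G_n$ the upper bound matches the same scale. The main obstacle is the residual contribution $\Expect[\upsilon^\pi_T/D^\pi_T;G_n^c]$, where the danger is that an atypically small $D_T$ inflates the ratio. I would control it by (i) conditioning on ADL actually triggering, so $D_T>0$, and (ii) invoking a lower-tail large-deviation bound for the nonnegative sum $D_T=\sum_i(-e_{i,T})_+$: since its summands have positive mean $\mu_-$ and a light left tail (heaviness lives only in the right tail under Assumption~A), a Bernstein/Cram\'er estimate gives $\Prob(D_T<\tfrac12\mu_- n)\le e^{-cn}$. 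A truncation or H\"older argument then bounds the residual expectation on $G_n^c$ by this exponentially small probability times a polynomially controlled envelope (using $\upsilon^\pi_T\le U_T$), which is $o(b^+_n/(\theta_n n))$. Establishing this integrability --- ruling out that the rare small-deficit regime dominates the expectation --- is the delicate step; the numerator and denominator scalings themselves follow directly from Assumptions~A and~B through the continuous mapping theorem.
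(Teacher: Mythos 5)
Your proposal follows essentially the same route as the paper's proof: pin the numerator at the extreme-value scale $b^+_{k_n}$ using the hypothesis $\upsilon^\pi_T \sim \upsilon_T$ together with eq.~\eqref{eq:ev-scaling}, pin the denominator at $\theta_n \mu_- n$ using the LLN half of Assumption~B, and combine. Where you genuinely differ is the final passage to the expectation. The paper dispatches it with a one-line appeal to bounded convergence, which is in fact the weakest point of its argument: the ratio $\upsilon^\pi_T/D^\pi_T$ is not uniformly bounded, since a realization with an atypically small deficit makes it arbitrarily large, so bounded convergence does not apply as stated. You correctly identify this as the real obstacle and bring the right tool to bear --- a lower-tail Chernoff/Bernstein bound for the nonnegative sum $D_T$, which indeed decays exponentially in $n$ regardless of how heavy the right tails are (negative exponential tilting of nonnegative summands needs no moment conditions). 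However, your fix remains incomplete at exactly the spot you flag as delicate: exponential smallness of $\{D_T < \tfrac{1}{2}\mu_- n\}$ does not by itself control $\Expect[\upsilon^\pi_T/D^\pi_T;\,G_n^c]$, because on that event the ratio admits no polynomial envelope --- the bound $\upsilon^\pi_T \le U_T = O_p(n)$ tames only the numerator, while $1/D_T$ can still blow up. Closing the gap requires a separate small-ball estimate near zero, e.g.\ $\Prob(D_T \le x) \le F_-(x)^{m_n}$ (exponentially small in $m_n \asymp n$ whenever the loser-shortfall law puts positive mass above some fixed $\delta > 0$), followed by a dyadic decomposition of $\Expect[1/D_T;\,D_T \le \epsilon n]$. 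With that lemma inserted your argument is complete, and it is then strictly more rigorous than the paper's own proof of this step.
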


\begin{proof}
    By definition, $\mathsf{PTSR}_T(\pi) = \Expect[\upsilon^\pi_T/D^\pi_T]$.
    Under the hypothesis, the numerator scales as $\upsilon_T \sim b_{k_n}^+$ (the maximum endowment, which under PNL-only is the maximum positive PNL).
    The denominator is $D^\pi_T = \theta_n D_T$. By the LLN, $D_T \sim \mu_- n$, so $D^\pi_T \sim \theta_n \mu_- n$.
    Thus, the ratio scales as $b_{k_n}^+/(\theta_n n)$.
    Using bounded convergence for the expectation yields the result.
\end{proof}

\iparagraph{Implication.}
The behavior of PTSR depends critically on the tail class of winner endowments (under PNL-only, positive PNL):
\begin{itemize}
    \item \emph{Pareto (Heavy) Tails:} $b_{k_n}^+ \asymp n^{1/\alpha_+}$. Here $\mathsf{PTSR}_T \asymp n^{1/\alpha_+ - 1}/\theta_n$. Moral hazard vanishes ($\mathsf{PTSR} \to 0$) if and only if winners have finite mean ($\alpha_+ > 1$). If $\alpha_+ < 1$, the top survivor grows faster than the aggregate deficit, making the moral hazard wedge permanent.
    \item \emph{Exponential/Gaussian (Light) Tails:} $b_{k_n}^+ \asymp (\log n)^\gamma$. Here $\mathsf{PTSR}_T \asymp (\log n)^\gamma / (n \theta_n)$. Since polylog growth is slower than linear, moral hazard vanishes rapidly for any non-vanishing severity $\theta_n$, as the aggregate deficit overwhelms the largest individual winner's endowment.
\end{itemize}

\begin{theorem}[PMR Scaling]\label{thm:pmr-scaling}
    Assume winner endowments scale with leverage mass $\ell_n^+$ and loser deficits scale with leverage mass $\ell_n^-$ (representing total leverage), and that the underlying normalized distributions satisfy Assumption A.
    The PMR scales as:
    \[
        \mathsf{PMR}_T(\pi) \;\asympp\; \frac{1}{\theta_n} \cdot \frac{\ell_n^+}{\ell_n^-} \cdot \frac{b_{k_n}^+}{b_{m_n}^-} \;\asymp\; \frac{1}{\theta_n} \cdot \frac{\ell_n^+}{\ell_n^-} \cdot n^{\frac{1}{\alpha_+} - \frac{1}{\alpha_-}},
    \]
    where $\upsilon^\pi_T$ is the maximum post-ADL endowment.
\end{theorem}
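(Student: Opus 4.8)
The plan is to reduce the claim to the two one‑sided extreme‑value limits already available for the winner and loser sides and then assemble the ratio. First I would peel off the severity: since $\Delta^\pi_T = \theta_\pi \Delta_T$ by definition, $\mathsf{PMR}_T(\pi) = \theta_n^{-1}\,\Expect[\upsilon^\pi_T/\Delta_T]$, so the factor $1/\theta_n$ appears at once and it remains to show $\Expect[\upsilon^\pi_T/\Delta_T] \asymp (\ell_n^+/\ell_n^-)(b_{k_n}^+/b_{m_n}^-)$. Working in the same ``gentle policy'' regime as Theorem~\ref{thm:master-ptsr} (pro‑rata‑type haircuts with budget $H \ll U_T$), I would invoke the survivor bound of Proposition~\ref{prop:queue-min-top} to conclude $\upsilon^\pi_T \sim \upsilon_T$, the pre‑ADL maximum endowment, so the numerator is governed by the winner‑side extreme.

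The core step is to establish the two leverage‑modulated EVT limits. I would model each side as a scale mixture consistent with the theorem's hypothesis: winner endowments as $w_{i,T} = \ell_n^+\,\xi_i$ and loser deficits as $|e_{i,T}| = \ell_n^-\,\zeta_i$, where $\{\xi_i\},\{\zeta_i\}$ are drawn from the normalized distributions of Assumption~A (regularly varying with indices $\alpha_+,\alpha_-$), and treat the leverage masses $\ell_n^\pm$ as deterministic scales (they concentrate by the LLN). Applying classical EVT to the normalized maxima then yields
\[
\frac{\upsilon_T}{\ell_n^+\,b_{k_n}^+}\xrightarrow{p} c_+, \qquad \frac{\Delta_T}{\ell_n^-\,b_{m_n}^-}\xrightarrow{p} c_-,
\]
with $b_{k_n}^+,b_{m_n}^-$ the upper‑quantile normalizers of~\eqref{eq:quantile-scales} and $c_\pm\in(0,\infty)$. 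Because $k_n,m_n=\Theta(n)$, the Pareto case specializes to $b_{k_n}^+\asymp n^{1/\alpha_+}$ and $b_{m_n}^-\asymp n^{1/\alpha_-}$, producing the second displayed form $n^{1/\alpha_+-1/\alpha_-}$.

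Combining the two limits by the continuous mapping theorem on the joint vector $(\upsilon^\pi_T,\Delta_T)$ — legitimate since the limiting denominator scale $c_->0$ is bounded away from zero — gives $\upsilon^\pi_T/\Delta_T \xrightarrow{p} (c_+/c_-)(\ell_n^+ b_{k_n}^+)/(\ell_n^- b_{m_n}^-)$, which is exactly the $\asympp$ (with‑high‑probability bounded‑ratio) assertion; I would note that this holds for \emph{any} coupling of the two sides, including the comonotonic worst case of \S\ref{sec:risk-prelim}, because each marginal concentrates separately. To upgrade from the in‑probability statement to the expectation $\mathsf{PMR}_T$, I would then invoke uniform integrability, mirroring the bounded‑convergence step used in Theorem~\ref{thm:master-ptsr}.

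The hard part will be precisely this uniform‑integrability upgrade, because $\mathsf{PMR}_T$ is an expectation of a \emph{ratio} and the denominator $\Delta^\pi_T$ can be atypically small. The genuine technical content is a lower‑tail bound on the loser maximum: since $\Delta_T = \ell_n^-\max_i\zeta_i$, the event $\{\max_i\zeta_i \le \varepsilon\,b_{m_n}^-\}$ requires \emph{all} $m_n$ deficits to fall below threshold, so its probability is a product that decays super‑polynomially in $m_n=\Theta(n)$; this shows the small‑denominator region contributes negligibly to the expectation. Pairing this with an upper moment bound on $\upsilon^\pi_T$ (finite mean when $\alpha_+>1$, and otherwise handled by truncation so that only the $\asympp$ conclusion is claimed) closes the integrability gap. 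I would flag that, unlike the w.h.p.\ statement, this expectation‑level control must be carried out with the denominator's lower tail bounded \emph{independently} of the numerator, which is where the argument is most delicate.
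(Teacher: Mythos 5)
Your route is the same skeleton as the paper's proof: peel off the severity via $\Delta^\pi_T = \theta_\pi \Delta_T$, establish the two leverage-modulated marginal scalings $\upsilon_T \asympp \ell_n^+ b_{k_n}^+$ and $\Delta_T \asympp \ell_n^- b_{m_n}^-$, take the ratio, and substitute the Pareto normalizers. You are in fact more careful than the paper on the two points it passes over silently (coupling-free validity of the ratio limit, and the upgrade from an in-probability statement to the expectation defining $\mathsf{PMR}_T$). However, the specific estimate you use for that upgrade is false in exactly the regime the theorem covers. You claim that $\Prob\{\max_i \zeta_i \le \varepsilon\, b_{m_n}^-\}$ ``decays super-polynomially in $m_n$'' because it is a product over $m_n$ terms. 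Under Assumption A the deficits are regularly varying (Fr\'echet domain), so $\bar F_-(\varepsilon b_{m_n}^-) \sim \varepsilon^{-\alpha_-}/m_n$, and the product satisfies $\bigl(1 - \varepsilon^{-\alpha_-}/m_n\bigr)^{m_n} \to \exp\!\bigl(-\varepsilon^{-\alpha_-}\bigr)$: a constant bounded away from zero, not decaying in $n$. (Your claim would be correct for light-tailed deficits, where the threshold $\varepsilon b_{m}^-$ sits at a fixed quantile power and the product decays stretched-exponentially; but the theorem's displayed conclusion $n^{1/\alpha_+ - 1/\alpha_-}$ is precisely the heavy-tailed case.) So the bad event where the denominator is atypically small has constant probability, the ratio is unbounded on it, and negligibility of its contribution does not follow from your argument as written.

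The fix is to control the whole lower tail uniformly rather than a single threshold: the same product computation gives $\Prob\{M_{m} \le b_m^-/u\} \le C\exp(-c\,u^{\alpha_-})$ for all $u \ge 1$, uniformly in $m$ (under mild regularity of $F_-$ near its lower endpoint), which yields $\sup_m \Expect\bigl[(b_m^-/M_m)^{1+\delta}\bigr] < \infty$; combining this with a truncated moment bound on $\upsilon_T$ via H\"older gives the expectation-level $\asymp$ for any coupling of the two sides. A second, smaller slip: Proposition~\ref{prop:queue-min-top} only gives $\upsilon^\pi_T \ge (\upsilon_T - H)_+$, which forces $\upsilon^\pi_T \sim \upsilon_T$ only when $H = o(b_n)$, i.e.\ $\theta_n = o(b_n/n)$ --- much stronger than your stated regime $H \ll U_T$. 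In the gentle pro-rata regime the correct argument is structural, not the queue bound: the top winner's haircut is proportional, so $\upsilon^\pi_T = (1 - H/U_T)\,\upsilon_T \sim \upsilon_T$ whenever $H/U_T \to 0$.
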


\begin{proof}
    We have $\mathsf{PMR}_T(\pi) = \Expect[\upsilon^\pi_T / \Delta^\pi_T]$.
    The top winner's endowment scales with total winner leverage mass: $\upsilon_T \sim \ell_n^+ b_{k_n}^+$ (under PNL-only, this is the maximum positive PNL).
    The maximum loser shortfall scales with total loser leverage mass: $\Delta_T \sim \ell_n^- b_{m_n}^-$.
    The budget balance condition implies $\Delta^\pi_T = \theta_n \Delta_T$.
    Thus, the ratio scales as
    \[
        \frac{\ell_n^+ b_{k_n}^+}{\theta_n \ell_n^- b_{m_n}^-} \;=\; \frac{1}{\theta_n} \frac{\ell_n^+}{\ell_n^-} \frac{b_{k_n}^+}{b_{m_n}^-}.
    \]
    Substituting the regular variation scalings $b_{k_n}^+ \sim n^{1/\alpha_+}$ and $b_{m_n}^- \sim n^{1/\alpha_-}$ yields the result.
\end{proof}

Theorem~\ref{thm:pmr-scaling} decomposes moral hazard into three components:
(1) \emph{Policy Severity} ($1/\theta_n$): Lower severity amplifies PMR.
(2) \emph{Leverage Imbalance} ($\ell_n^+/\ell_n^-$): If the winning side holds more leverage mass, PMR increases.
(3) \emph{Tail Risk} ($n^{1/\alpha_+ - 1/\alpha_-}$): Heavier winner tails relative to losers drive PMR divergence.
This decomposition highlights that even with fair tails ($\alpha_+ = \alpha_-$), a systemic leverage imbalance ($\ell_n^+ \gg \ell_n^-$) can sustain a high PMR.
Specifically, if the exchange allows winners to be significantly more leveraged than losers (a "risk-on" imbalance), the top winner's survival will systematically outstrip the worst-case socialized loss, creating a persistent moral hazard where maximal profits are privatized while maximal losses are capped.

\subsection{Relationship to Classical Risk Measures}\label{app:classical-risk-measures}

These two metrics have natural interpretations in terms of financial risk measures.
The deficit $D_T$ corresponds to the aggregate \emph{Expected Shortfall} (ES) of the losing tail, while $\Delta_T$ corresponds to the \emph{Value-at-Risk} (VaR) at the extreme quantile ($1/n$).
Specifically, PTSR compares the \emph{Maximum Profit} to the \emph{Aggregate Socialized Loss} (ES-like), measuring efficiency in bulk.
PMR compares the \emph{Maximum Profit} to the \emph{Maximum Socialized Loss} (VaR-like), measuring efficiency in the extreme tail.
A high PMR implies that the system permits ``unicorn'' wins that vastly exceed the worst-case individual losses, potentially incentivizing excessive risk-taking if traders perceive a capped downside but unbounded upside.

We further strengthen the connection to classical risk measures by showing that Queue not only minimizes the top survivor in expectation, but also minimizes it in the sense of VaR and ES at \emph{every} tail level.
\begin{proposition}[Queue minimizes VaR/ES of the top survivor]\label{thm:queue-var-es}
  Fix any budget $h\ge 0$ and $\alpha\in(0,1)$.
  For any feasible haircut vector $h$ with $\sum_i h_i e_i = h$,
  \[
    \omega^\pi_T \;\ge\; (\omega_T - h)_+ \quad\text{a.s.}
  \]
  Consequently,
  \[
    \mathrm{VaR}_\alpha(\omega^\pi_T) \;\ge\; \mathrm{VaR}_\alpha\big((\omega_T - h)_+\big),\qquad
    \mathrm{ES}_\alpha(\omega^\pi_T) \;\ge\; \mathrm{ES}_\alpha\big((\omega_T - h)_+\big).
  \]
  The Queue rule attains equality. Moreover, the following identities hold:
  \[
    \mathrm{VaR}_\alpha\big((\omega_T - h)_+\big) \;=\; \max\{\mathrm{VaR}_\alpha(\omega_T) - h,\, 0\},
  \]
  \[
    \mathrm{ES}_\alpha\big((\omega_T - h)_+\big) \;=\; \frac{1}{1-\alpha}\int_\alpha^1 \max\{\mathrm{VaR}_u(\omega_T) - h,\, 0\}\,du.
  \]
\end{proposition}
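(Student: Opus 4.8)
The plan is to reduce both stochastic-ordering claims to a single pathwise inequality and then transfer it through the monotonicity of $\mathrm{VaR}$ and $\mathrm{ES}$, handling the two closed-form identities separately via a quantile-commutation lemma. First I would establish the almost-sure bound $\omega^\pi_T \ge (\omega_T - h)_+$ realization by realization, which is just the sample-path version of Proposition~\ref{prop:queue-min-top}. On any fixed path, writing $h_{(1)}$ for the haircut fraction applied to the pre-ADL largest endowment position, budget balance $\sum_i h_i w_{i,T} = h$ together with non-negativity of each term forces $h_{(1)}\omega_T \le h$, so that position's survivor $(1-h_{(1)})\omega_T \ge \omega_T - h$; combining with non-negativity gives the $(\omega_T-h)_+$ floor. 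The Queue rule sets $h_{(1)} = \min\{h/\omega_T,1\}$, so its top survivor equals exactly $(\omega_T - h)_+$ on every path, yielding the pathwise identity $\omega^{\mathrm{Queue}}_T = (\omega_T - h)_+$.

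Second, I would invoke monotonicity of both risk measures with respect to almost-sure domination: if $X \ge Y$ a.s.\ then $\mathrm{VaR}_\alpha(X) \ge \mathrm{VaR}_\alpha(Y)$ (quantiles are monotone) and $\mathrm{ES}_\alpha(X) \ge \mathrm{ES}_\alpha(Y)$ (ES is coherent, hence monotone). Applying this with $X = \omega^\pi_T$ and $Y = (\omega_T - h)_+$ gives the two displayed inequalities, and the pathwise identity from Step~1 immediately upgrades them to equalities for Queue, since $\omega^{\mathrm{Queue}}_T$ and $(\omega_T-h)_+$ are equal as random variables and therefore share every law-invariant functional.

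Third, I would prove the two closed forms. The transform $g(x) = (x-h)_+$ is non-decreasing and continuous, so quantiles commute with it via the generalized-inverse identity $F_{g(\omega_T)}^{-1} = g \circ F_{\omega_T}^{-1}$; under the paper's convention $\mathrm{VaR}_\alpha(X) = \inf\{x : \Prob(X>x)\le\alpha\}$ this reads $\mathrm{VaR}_\alpha((\omega_T-h)_+) = \max\{\mathrm{VaR}_\alpha(\omega_T)-h,0\}$, which is the first identity. For ES I would start from the Acerbi spectral representation $\mathrm{ES}_\alpha(X) = \tfrac{1}{1-\alpha}\int_\alpha^1 \mathrm{VaR}_u(X)\,du$, apply the same commutation at each level $u$ inside the integral, and read off $\tfrac{1}{1-\alpha}\int_\alpha^1 \max\{\mathrm{VaR}_u(\omega_T)-h,0\}\,du$.

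The main obstacle I anticipate is the bookkeeping around the kink of $g$ and reconciling conventions. The commutation $\mathrm{VaR}_\alpha(g(X)) = g(\mathrm{VaR}_\alpha(X))$ requires matching the one-sided continuity of $g$ to the chosen upper-tail quantile convention, and the boundary case $\mathrm{VaR}_\alpha(\omega_T)=h$ must be treated so the outer $\max\{\cdot,0\}$ is applied consistently; I would settle this by verifying $F_{g(\omega_T)}^{-1} = g\circ F_{\omega_T}^{-1}$ directly for monotone continuous $g$. A secondary subtlety is checking that the Acerbi integral used in Step~3 is the representation compatible with the paper's definition $\mathrm{ES}_\alpha(X)=\Expect[X\mid X\ge \mathrm{VaR}_\alpha(X)]$, which I would reconcile by a change of variables relating the upper-tail index $\alpha$ to the quantile level $1-\alpha$. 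Everything else is routine once the pathwise identity and the commutation lemma are in place.
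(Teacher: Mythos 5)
Your proposal is correct and follows essentially the same route as the paper's proof: the pathwise bound $\omega^\pi_T \ge (\omega_T - h)_+$ from budget balance and nonnegativity (as in Proposition~\ref{prop:queue-min-top}), monotonicity of $\mathrm{VaR}$ and $\mathrm{ES}$ under almost-sure domination, the quantile-commutation argument for the $\mathrm{VaR}$ identity, and the $\mathrm{ES}_\alpha(Z)=\tfrac{1}{1-\alpha}\int_\alpha^1 \mathrm{VaR}_u(Z)\,du$ representation (which the paper attributes to Kusuoka and you to Acerbi — same formula) for the $\mathrm{ES}$ identity. Your extra care about the kink of $(x-h)_+$, the upper-tail quantile convention, and reconciling the conditional-expectation definition of $\mathrm{ES}$ with the integral representation addresses details the paper's proof passes over silently, but introduces no new idea beyond it.
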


\begin{proof}
  The pointwise lower bound $\omega^\pi_T \ge (\omega_T - h)_+$ follows from the budget constraint and nonnegativity of haircuts, as in Proposition~\ref{prop:queue-min-top}.
  Monotonicity of risk measures implies that if $X\ge Y$ almost surely, then $\mathrm{VaR}_\alpha(X)\ge \mathrm{VaR}_\alpha(Y)$ and $\mathrm{ES}_\alpha(X)\ge \mathrm{ES}_\alpha(Y)$.
  For the identities, observe that $x\mapsto (x-h)_+$ is nondecreasing; hence quantiles shift: $\mathrm{VaR}_\alpha((X-h)_+) = \max\{\mathrm{VaR}_\alpha(X)-h,0\}$.
  The ES identity follows from the Kusuoka representation~\cite{Kusuoka2001} $\mathrm{ES}_\alpha(Z)=\frac{1}{1-\alpha}\int_\alpha^1 \mathrm{VaR}_u(Z)\,du$ applied to $Z=(X-h)_+$.
\end{proof}

\iparagraph{Implication for severity design.}
For a random budget $H=\theta_n D_T$, apply Theorem~\ref{thm:queue-var-es} conditionally on $H$ to conclude that Queue minimizes the conditional VaR/ES of the top survivor at every tail level.
When $\alpha_+>1$ (finite mean winners), the tail-equivalence property of regularly varying distributions yields
\[
  \frac{\mathrm{ES}_u(\omega_T)}{\mathrm{VaR}_u(\omega_T)} \;\to\; \frac{\alpha_+}{\alpha_+-1}\qquad\text{as }u\uparrow 1,
\]
so VaR- and ES-based moral hazard conclusions coincide asymptotically with those of PTSR and PMR.

\subsection{Randomized constructions for moral-hazard examples}\label{app:mh-example}

\iparagraph{Extreme-value moral hazard (principal–agent).}
Fix $\rho\in(0,1)$ and $k_n=\lfloor \rho n\rfloor$. Draw winner equities $Y_i^{(n)}$ i.i.d.\ Pareto$(\alpha_+)$ and loser equities $X_i^{(n)}$ with mean $\mu$.
Then $M_n^+=\max Y_i^{(n)} \asymp n^{1/\alpha_+}$ while non-max winners sum to $o(M_n^+)$.
Losers sum to $D_T \approx \mu n$.
For fixed severity $\theta_n \equiv \bar\theta$, the haircut $H_n \approx \bar\theta \mu n$ exceeds the capacity of non-max winners, forcing the top winner to cover the bulk.
Post-ADL equity is $(M_n^+ - \bar\theta \mu n)_+ \to 0$ since $n^{1/\alpha_+} \ll n$ for $\alpha_+ < 1$.

\iparagraph{Leverage-imbalance construction.}\label{app:mh-leverage}
Fix leverage masses $\ell_n^- \gg \ell_n^+$. Draw loser equities $X_i^{(n)}$ i.i.d.\ Pareto$(\alpha_-)$, so $D_T \approx M_n^- \asymp n^{1/\alpha_-}$.
Assign winner leverage $c\,\ell_n^+$ to a random index $I_n$ and distribute the rest evenly.
Then $\omega_n \asymp (\ell_n^+/\ell_n^-) n$.
This satisfies Proposition~\ref{prop:excessive-leverage} assumptions, yielding the claimed threshold.

\section{Theoretical Properties of Capped Pro-Rata}\label{app:capped-pro-rata}
We formalize the theoretical properties of the capped pro-rata rule.
We note that the capped pro-rata algorithm in Algorithm~\ref{alg:capped-pro-rata} is a standard water-filling algorithm~\cite{BoydVandenberghe2004}.
The most similar known prior work to this appendix is the study of how such algorithms provide sybil resistance in concave games in decentralized systems~\cite{johnson2023concave}.
One can view our result as a generalization of this result.

\iparagraph{Properties of ADL rules.}
Fix time $T$, state $\mathcal{P}_n$, winners $\mathcal{W}_T$, and endowments $w_i = w_{T,i}$ (under PNL-only, $w_i = (\mathrm{PNL}_{T,i})_+$).
Let $s_{\pi, i} = (1-h_{\pi, i}) w_i$ be the surviving endowment of position $i$ under ADL policy $\pi$.
The post-ADL equity is $e'_{\pi,i} = c_{i,T} + \mathrm{PNL}_{i,T} - h_{\pi,i} w_i = c_{i,T} + (1-h_{\pi,i})w_i + (\mathrm{PNL}_{i,T})_-$ (where $(\mathrm{PNL}_{i,T})_-$ is the negative PNL component, if any).
We say that a feasible ADL rule $\pi$ (\S\ref{subsec:adl}) satisfies:
\begin{enumerate}
\item \emph{Sybil resistance:} Outcomes are invariant to splitting/merging accounts. For any split $w_i=\sum_{a=1}^r z_a$, the sum of survivors equals the original survivor: $\sum_{a} s^{(i\to z)}_{\pi,a} = s_{\pi,i}$.
\item \emph{Scale invariance:} $s_\pi(c w;\theta_\pi)=c\,s_\pi(w;\theta_\pi)$ for $c>0$.
\item \emph{Monotonicity:} If $w_1 \ge \dots \ge w_k$, then $s_{\pi,1}\ge\dots\ge s_{\pi,k}$ (and consequently, if cash components are comparable, post-ADL equities preserve ordering).
\item \emph{Interior regularity:} The map $(w,H)\mapsto s_\pi(w;H)$ is $C^1$ on the interior, \ie~for $w_i>0$ for all $i$ and $0<H<\sum_i w_i$.
\end{enumerate}
Collectively, we refer to these as the \emph{fairness properties} for an ADL rule.

\paragraph{Queue score dependence: Sybil resistance vs monotonicity.}
The queue rule's Sybil behavior depends on the ranking score, while monotonicity can fail even when Sybil resistance holds.

\begin{proposition}[Queue Sybil resistance for score-preserving splits]\label{prop:queue-sybil-score-preserving}
Consider the wealth-space queue rule $\pi_Q$ with budget $B$, winner endowments $w$, and ranking scores $r_i$.
Fix winner $i$ with a unique score level ($r_i\neq r_j$ for all $j\neq i$).
Suppose $i$ is split into subaccounts with endowments $z_1,\dots,z_m>0$, $\sum_{a=1}^m z_a=w_i$, and each child keeps the parent score ($r_{i,a}=r_i$). Write $x_{\pi_Q,i}(B):=w_i-s_{\pi_Q,i}(B)$ for seizure at budget $B$, and $\tilde x_{\pi_Q,a}(B)$ for seizure on split child $a$.
Then aggregate seizure is invariant:
\[
  \sum_{a=1}^m \tilde x_{\pi_Q,a}(B)=x_{\pi_Q,i}(B).
\]
Hence queue is Sybil resistant for such splits.
\end{proposition}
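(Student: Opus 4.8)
The plan is to exploit the closed-form, telescoping description of the greedy queue induced by~\eqref{eq:pl-ranking} and to show that a score-preserving split merely reshuffles positions inside a single contiguous ``score block'' without changing the total mass seized from that block. First I would record a uniform representation of the per-position seizure. Writing $A_i = \bigl(B - \sum_{j=1}^i w_T(\mathfrak{p}_{\sigma(j)})\bigr)_+$ for the running budget remainder under a ranking permutation $\sigma$, a short case check of the three branches of~\eqref{eq:pl-ranking} shows that in every branch the seized amount on the rank-$i$ position is $x_{\pi_Q,\sigma(i)} = A_{i-1} - A_i$: full fill, partial fill, and zero fill all collapse to this difference. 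Consequently the aggregate seizure over any contiguous block of ranks $\{s,\dots,t\}$ telescopes to $A_{s-1} - A_t$, depending only on the remainders at the two endpoints.

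Next I would pin down the block structure forced by the uniqueness hypothesis $r_i \neq r_j$ for $j \neq i$. Let $K = |\{j : r_j > r_i\}|$ and $H_{>} = \sum_{j : r_j > r_i} w_j$. Splitting $i$ into children with scores $r_{i,a} = r_i$ leaves the set of strictly higher-scored positions, and hence $H_{>}$, untouched, so in any admissible (decreasing-score) permutation these positions fill ranks $1,\dots,K$ both before and after the split, and the endpoint remainder $A_K = (B - H_{>})_+ =: B'$ is identical in the two scenarios. Before the split, $i$ is the unique position at score $r_i$ and occupies rank $K+1$. After the split, the $m$ children tie only among themselves, since no other position shares the score $r_i$, so for \emph{any} tie-break they occupy the contiguous block of ranks $K+1,\dots,K+m$.

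Then the computation closes by applying the telescoping identity to this block. Before the split, the seizure on $i$ is $A_K - A_{K+1} = B' - (B' - w_i)_+ = \min\{B', w_i\}$. After the split, using $\sum_{a=1}^m z_a = w_i$, the aggregate child seizure is $\sum_{a=1}^m \tilde{x}_{\pi_Q,a}(B) = A_K - A_{K+m} = B' - \bigl(B' - \sum_{a=1}^m z_a\bigr)_+ = B' - (B' - w_i)_+ = \min\{B', w_i\}$. The two expressions coincide, which is exactly the claimed identity $\sum_{a=1}^m \tilde{x}_{\pi_Q,a}(B) = x_{\pi_Q,i}(B)$.

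The hard part will be the tie-handling bookkeeping rather than any analytic difficulty: I must argue carefully that the queue processes the tied children as one contiguous water-filling block whose \emph{total} intake is $\min\{B', w_i\}$ regardless of their internal order, so that the intermediate remainders $A_{K+1},\dots,A_{K+m-1}$ (which genuinely depend on the tie-break) cancel out of the endpoint difference, and that the uniqueness of $r_i$ prevents any foreign position from interleaving into the block. I would also dispatch the two edge regimes that the $\min$ handles automatically: $B' = 0$ (budget exhausted before the block is reached, giving zero seizure on both sides) and $B' \geq w_i$ (the whole block is fully seized); the hypothesis $z_a > 0$ ensures each child is a genuine positive-endowment winner that actually participates in the queue.
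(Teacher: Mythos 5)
Your proof is correct and takes essentially the same route as the paper's: both compute the residual budget left after the strictly higher-scored prefix (the paper's $R_i$, your $B' = A_K$), observe that a score-preserving split leaves that prefix unchanged, and conclude that the aggregate seizure on the tied block equals $\min\{w_i, R_i\}$ in both the unsplit and split instances. Your telescoping identity $x_{\pi_Q,\sigma(i)} = A_{i-1} - A_i$, verified across the three branches of the queue definition, is simply a more explicit justification of the greedy-block step that the paper's proof states directly.
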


\begin{proof}
Let $H_i=\{j:\ r_j>r_i\}$ and define the residual budget when the queue reaches score level $r_i$:
\[
  R_i=\Big(B-\sum_{j\in H_i} w_j\Big)_+.
\]
Because $r_i$ is unique, no other account is processed between $H_i$ and account $i$.
In the unsplit instance, queue seizure is
\[
  x_{\pi_Q,i}(B)=\min\{w_i,\ R_i\}.
\]
In the split instance, the same higher-score set $H_i$ is processed first, so the residual is still $R_i$ when the split children are reached.
The queue then seizes greedily across the children, whose total endowment is $\sum_a z_a=w_i$, hence
\[
  \sum_{a=1}^m \tilde x_{\pi_Q,a}(B)=\min\Big\{\sum_{a=1}^m z_a,\ R_i\Big\}=\min\{w_i,\ R_i\}=x_{\pi_Q,i}(B).
\]
\end{proof}

\begin{corollary}[Hyperliquid score is Sybil resistant under proportional splits]\label{cor:hyperliquid-sybil}
Hyperliquid's documented ADL score~\cite{HyperliquidDocsADL},
\[
  r=\frac{\mathrm{mark}}{\mathrm{entry}}\cdot\frac{\mathrm{notional}}{\mathrm{account\ value}},
\]
is homogeneous of degree zero in $(\mathrm{notional},\mathrm{account\ value})$ under proportional scaling: for any $\alpha>0$,
\[
  r(\alpha\,\mathrm{notional},\alpha\,\mathrm{account\ value})=r(\mathrm{notional},\mathrm{account\ value}).
\]
Therefore proportional account splitting preserves score and, under Proposition~\ref{prop:queue-sybil-score-preserving}'s no-tie assumption at the split score level, preserves aggregate queue seizure (Sybil resistance). See also the worked example in~\cite{Doug2025SybilTweet}.
\end{corollary}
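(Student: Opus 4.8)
The plan is to split the statement into its two halves: a one-line algebraic verification that the score is homogeneous of degree zero in $(\mathrm{notional},\mathrm{account\ value})$, followed by a reduction of the Sybil-resistance conclusion to Proposition~\ref{prop:queue-sybil-score-preserving}. Since the combinatorial work was already carried out in that proposition, the corollary reduces to checking that a proportional split satisfies its hypotheses.

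First I would factor the score as $r = \left(\frac{\mathrm{mark}}{\mathrm{entry}}\right)\cdot\left(\frac{\mathrm{notional}}{\mathrm{account\ value}}\right)$ and observe that the first factor is a ratio of per-unit prices that does not involve $(\mathrm{notional},\mathrm{account\ value})$ at all, hence is invariant under scaling those arguments. For the second factor, substituting $(\alpha\,\mathrm{notional},\alpha\,\mathrm{account\ value})$ makes the common factor cancel: $\frac{\alpha\,\mathrm{notional}}{\alpha\,\mathrm{account\ value}}=\frac{\mathrm{notional}}{\mathrm{account\ value}}$ for every $\alpha>0$. Multiplying the two factors yields $r(\alpha\,\mathrm{notional},\alpha\,\mathrm{account\ value})=r(\mathrm{notional},\mathrm{account\ value})$, which is exactly degree-zero homogeneity.

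Next I would translate a proportional account split into this scaling. A proportional split with weights $\alpha_1,\dots,\alpha_m>0$, $\sum_a\alpha_a=1$, allocates a fraction $\alpha_a$ of both the collateral and the PNL (hence of the equity/account value) and the same fraction $\alpha_a$ of the quantity (hence of the notional $pq$) to child $a$; the entry and mark prices are per-unit quantities shared by all children, so the $\mathrm{mark}/\mathrm{entry}$ factor is untouched. Thus each child presents arguments $(\alpha_a\,\mathrm{notional},\alpha_a\,\mathrm{account\ value})$ with the same price ratio, and by the homogeneity just established each child inherits exactly the parent score $r$. This verifies the score-preservation hypothesis $r_{i,a}=r_i$ of Proposition~\ref{prop:queue-sybil-score-preserving}; invoking that proposition under its no-tie assumption at the split score level gives $\sum_a \tilde x_{\pi_Q,a}(B)=x_{\pi_Q,i}(B)$, i.e.\ aggregate queue seizure is invariant, which is the claimed Sybil resistance.

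There is no genuinely hard step; the only point requiring care is the bookkeeping in the third paragraph, namely confirming that a proportional split scales notional and account value by the \emph{same} factor $\alpha_a$ (so that their ratio, not merely each separately, is preserved) and that it leaves $\mathrm{mark}/\mathrm{entry}$ fixed. Once these two invariances are pinned down, the corollary follows immediately from degree-zero homogeneity together with Proposition~\ref{prop:queue-sybil-score-preserving}, and the no-tie caveat is inherited verbatim from that proposition and needs no separate argument.
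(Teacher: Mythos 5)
Your proposal is correct and follows essentially the same route as the paper's proof: verify degree-zero homogeneity by cancellation of the common scaling factor, note that $\mathrm{mark}/\mathrm{entry}$ is untouched by splitting so every child inherits the parent's score, and then invoke Proposition~\ref{prop:queue-sybil-score-preserving} under its no-tie assumption. Your third paragraph merely spells out the split bookkeeping (same fraction of quantity and of equity) that the paper leaves implicit, which is a fine, slightly more explicit rendering of the same argument.
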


\begin{proof}
The ratio $\mathrm{notional}/\mathrm{account\ value}$ is unchanged by common scaling, and the mark/entry term is unchanged by splitting, so each split child has the same score as the parent.
Apply Proposition~\ref{prop:queue-sybil-score-preserving}.
\end{proof}

\begin{proposition}[Queue monotonicity failure]\label{prop:queue-monotonicity-failure}
There exist two winners with $w_1>w_2>0$, queue scores $r_1>r_2$, and budget $B\in(w_1-w_2,w_1]$ such that queue violates monotonicity:
\[
  s'_{1}<s'_{2},
\]
where $s'_i=w_i-x_i$ is post-ADL surviving endowment.
\end{proposition}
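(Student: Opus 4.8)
The plan is to establish the existence claim by exhibiting an explicit two-account instance and reading off the resulting seizures directly from the greedy queue allocation in~\eqref{eq:pl-ranking}. Since the statement only asks for existence of \emph{some} $(w_1,w_2,r_1,r_2,B)$ witnessing the failure, no optimization or limiting argument is needed: I would fix $w_1>w_2>0$ arbitrarily, choose scores $r_1>r_2$ (so that the larger-endowment winner is also the top-ranked account), and take any budget $B$ in the half-open window $(w_1-w_2,\,w_1]$, e.g.\ the concrete values $w_1=10$, $w_2=3$, $r_1=2>r_2=1$, and $B=8\in(7,10]$.

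First I would record the queue mechanics on this instance. Because $r_1>r_2$, the ranking permutation is $\sigma=(1,2)$, so account $1$ is processed first. Using the notation of~\eqref{eq:pl-ranking}, the running residual budgets are $A_0=B$, $A_1=(B-w_1)_+$, and $A_2=(B-w_1-w_2)_+$. Since $B\le w_1$, both $A_1=0$ and $A_2=0$. I would then check which branch of the piecewise definition fires for each account. For $\sigma(1)=1$ we have $A_0=B>0$ and $A_1=0$, so the middle branch applies and $h_{\pi_Q,1}=A_0/w_1=B/w_1$, giving seizure $x_1=h_{\pi_Q,1}\,w_1=B$ and survivor $s'_1=w_1-B$. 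For $\sigma(2)=2$ we have $A_1=0$, so neither the first branch (which requires $A_2>0$) nor the middle branch (which requires $A_1>0$) holds; the \emph{otherwise} branch gives $h_{\pi_Q,2}=0$, hence $x_2=0$ and $s'_2=w_2$. Thus the entire budget is absorbed by the top-ranked (and larger) winner while the smaller winner is untouched.

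Finally I would verify the reversal. The monotonicity requirement demands $s'_1\ge s'_2$ given $w_1\ge w_2$, but here
\[
s'_1 - s'_2 = (w_1-B)-w_2 = (w_1-w_2)-B < 0,
\]
where the strict inequality is exactly the condition $B>w_1-w_2$ built into the choice of $B$ (and $s'_1=w_1-B\ge 0$ since $B\le w_1$, so the survivor is a legitimate nonnegative endowment). In the concrete instance, $s'_1=2<3=s'_2$. This contradicts $w_1>w_2\Rightarrow s'_1\ge s'_2$, establishing the claimed failure. I do not anticipate a genuine obstacle here: the only point requiring care is confirming that the correct branches of~\eqref{eq:pl-ranking} fire at the boundary $B=w_1$ (where the middle branch still yields $h_{\pi_Q,1}=1$ and $s'_1=0<w_2=s'_2$), which is routine; the economic content is simply that the budget window $(w_1-w_2,w_1]$ forces the top-of-queue winner below the untouched smaller winner, reversing their order.
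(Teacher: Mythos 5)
Your proposal is correct and follows essentially the same argument as the paper's proof: with $r_1>r_2$ and $B\le w_1$ the queue seizes $x_1=B$, $x_2=0$, and the condition $B>w_1-w_2$ forces $s'_1=w_1-B<w_2=s'_2$. Your additional verification of which branches of~\eqref{eq:pl-ranking} fire, and the concrete numerical instance, are just more explicit bookkeeping of the same construction.
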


\begin{proof}
With $r_1>r_2$, queue processes winner $1$ first.
For $B\le w_1$, queue seizes $x_1=B$ and $x_2=0$.
Hence
\[
  s'_1=w_1-B,\qquad s'_2=w_2.
\]
If $B>w_1-w_2$, then $w_1-B<w_2$, so $s'_1<s'_2$ despite $w_1>w_2$ pre-ADL.
\end{proof}

\paragraph{Absolute-score vulnerability.}
If the score is an absolute quantity (\eg~$r_i=\mathrm{PNL}_i$), splitting changes child scores and can move a position deeper in the queue.
This is the standard Sybil-vulnerable queue setting.

\begin{proposition}[Uniqueness of the Pro-Rata Rule]\label{thm:unique-pro-rata}
If a feasible ADL policy $\pi$ satisfies the fairness properties, then $s_\pi(w;H)=s^{\mathrm{PR}}(w;H)$ for all feasible inputs, where $s$ is the surviving endowment.
\end{proposition}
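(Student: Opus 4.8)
The plan is to show that the four fairness properties force the per-account seizure $x_{\pi,i}=w_i-s_{\pi,i}$ to be \emph{linear} in the endowment with a slope that is common across winners; combined with the budget-balance identity $\sum_i x_{\pi,i}=H:=\theta_\pi D_T(\mathcal{P}_n)$, this pins the slope to $H/U_T$ and yields exactly $s_{\pi,i}=(1-H/U_T)\,w_i=s^{\mathrm{PR}}(w;H)_i$. I will work at a fixed feasible budget $H$ (legitimate, since splitting a winner changes neither the losers, hence $D_T$, nor the severity $\theta_\pi$), so throughout the argument $0\le x_{\pi,i}\le w_i$ and $\sum_i x_{\pi,i}=H$, i.e.\ $(w,H)\mapsto x_\pi$ is a budget-balanced rationing rule with ``claims'' $w_i$.

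First I would reduce the survivor map to a symmetric, single-argument function. Monotonicity as stated, $w_i\ge w_j\Rightarrow s_{\pi,i}\ge s_{\pi,j}$, applied in both directions to winners with $w_i=w_j$, gives $s_{\pi,i}=s_{\pi,j}$: the rule treats equal endowments equally, so at fixed budget and fixed complementary profile the survivor depends on the configuration only through the account's own endowment. Write $s_{\pi,i}=\sigma(w_i)$, with $\sigma$ nondecreasing by monotonicity and $C^1$ by interior regularity, and set the continuous per-account seizure $g(w)=w-\sigma(w)$.

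Second --- the crux --- I would extract additivity from Sybil resistance. Replacing one winner of endowment $u+v$ by two winners of endowments $u,v$ in the same economic state, at the same budget $H$, is precisely the no-advantageous-splitting/merging axiom, and it asserts that the total seizure on that endowment is unchanged: $\tilde x_{\pi}(u)+\tilde x_{\pi}(v)=g(u+v)$. Together with the anonymity and continuity from Step~1 and the degree-one homogeneity supplied by scale invariance (scaling all equities and deficits by $c$ scales $H$ by $c$, whence $x_\pi(cw;cH)=c\,x_\pi(w;H)$), these are exactly the hypotheses under which the proportional rule is the unique rationing rule, in the spirit of the split-proofness characterizations underlying Sybil resistance in concave allocation games~\cite{johnson2023concave}. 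Concretely, I would push the split-proofness identity (using an equal $N$-fold refinement $w=\sum_{a=1}^N w/N$, where anonymity forces each child's seizure to be $g(w)/N$ and homogeneity makes this stable as the refinement is iterated over rationals) to the Cauchy equation $g(u+v)=g(u)+g(v)$ on $(0,\infty)$.

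Finally, a monotone (equivalently continuous) solution of Cauchy's equation is linear, $g(w)=h\,w$ for a constant $h\ge 0$, so the haircut fraction $h_{\pi,i}=g(w_i)/w_i=h$ is common to all winners; budget balance then gives $h\,U_T=h\sum_i w_i=H=\theta_\pi D_T(\mathcal{P}_n)$, i.e.\ $h=\theta_\pi D_T/U_T$, and hence $s_{\pi,i}=(1-h)w_i=s^{\mathrm{PR}}(w;H)_i$ for every feasible input. The main obstacle is the second step: Sybil resistance is an identity \emph{across} two distinct ambient states (the split configuration has one extra winner at the same aggregate capacity), so one cannot naively treat $g$ as the same function before and after the split. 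Reconciling this --- leveraging the homogeneity from scale invariance and an equal-refinement that is stable under $N\to\infty$ to legitimately collapse the configuration dependence into the single functional equation --- is where the real work lies; the remaining Cauchy-to-linear and budget-balance steps are routine.
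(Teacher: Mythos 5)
Your proposal is correct in outline but takes a genuinely different route from the paper. The paper's argument works on a single, fixed configuration and is driven by \emph{scale invariance}: it parametrizes haircuts by the running budget $b$, uses scale invariance plus budget balance to show all unsaturated coordinates move in lockstep ($dh_i/db = 1/\sum_{j\in A}w_j$ on the active set), integrates to obtain a water-filling form, and invokes Sybil resistance and monotonicity only as auxiliary facts (the latter to control the order in which caps saturate). You instead make Sybil resistance the engine, in the spirit of the split-proofness characterizations of the proportional rule from the rationing/claims literature. Two remarks on the obstacle you flag. First, you are right that the cross-configuration identity is the crux, but the tool you propose (degree-one homogeneity from scale invariance) does not close it: scale invariance relates configurations in which \emph{all} claims and the budget are rescaled, never configurations with different numbers of accounts, so it cannot collapse the configuration dependence of $g$. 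What does close it is a backward induction over refinement configurations: on commensurable profiles $w_i=m_i\delta$, let $C_S$ denote the configuration in which winners in $S$ are unsplit and all others are refined into claim-$\delta$ children. In the base case $S=\emptyset$, equal treatment of equals (your Step 1) plus budget balance forces every seizure to equal $H/M$ with $M=\sum_i m_i$; in the inductive step, splitting any $j\in S$ and applying Sybil resistance together with the inductive hypothesis gives $x_j(C_S)=m_jH/M$, and then equal treatment plus budget balance pins the children's seizures in $C_S$ back to $H/M$. Taking $S$ to be the full winner set yields $x_{\pi,i}=(w_i/U_T)\,H$ directly---no Cauchy equation is needed---and interior regularity (continuity) extends from commensurable to arbitrary profiles. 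Second, the payoff of your route once repaired this way: it never uses scale invariance, so it proves the \emph{stronger} statement that Sybil resistance, monotonicity, regularity and budget balance alone characterize pro-rata; the cost is that, unlike the paper's water-filling argument, it does not extend as written to the capped variant with binding per-account caps $\beta_i$, where equal-claim children need not receive equal seizures and the refinement induction breaks down.
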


\begin{proof}
Fix a feasible budget $H$ (\eg~maximum value of $\theta_n D_T$) and write $\beta_i\in(0,1]$ for the haircut cap on winner $i$ (as in the capped pro-rata rule~\eqref{eq:capped-pro-rata}), and sort so that $w_1\ge\cdots\ge w_k$.
View the haircuts as a function of the realized budget $b\in[0,H]$ and write $h_i(b)$ for the haircut on winner $i$ when total budget $b$ has been allocated.
For each $b$, define the \emph{active set}
\begin{equation}\label{eq:active-set}
A(b)=\{i:\ h_i(b)<\beta_i\}
\end{equation}
of winners whose caps are not yet binding.
Since there are only finitely many caps $\beta_i$, there exists a partition $0=b_0<b_1<\cdots<b_L=H$ such that on each open interval $I_\ell:=(b_{\ell-1},b_\ell)$ the active set is constant.
Fix one such interval $I=(b_-,b_+)$ and write $A = A(b)$ for any $b\in I$.
On this interval, by scale invariance on $A$ (per–unit budget increases all active haircuts at the same rate) and feasibility (budget constraint in endowment space),
\[
  \frac{d h_i}{d b} \;=\; \frac{1}{\sum_{j\in A} w_j}\quad (i\in A),
  \qquad
  \frac{d h_i}{d b}=0\quad (i\notin A).
\]
Using the interior regularity property, we can integrate these terms on $I$.
Integrating from $b_-$ to any $b\in I$ gives the \emph{unconstrained} evolution
\[
  \tilde h_i(b)
  \;=\;
  \begin{cases}
    h_i(b_-)\;+\;\dfrac{b-b_-}{\sum_{j\in A} w_j} & (i\in A),\\[4pt]
    h_i(b_-) & (i\notin A).
  \end{cases}
\]
All active coordinates in $A$ move in lockstep, so on $I$ there exists a scalar function $\eta_I(b)$ such that
\[
  h_i(b) \;=\; \min\{\eta_I(b),\beta_i\}\quad (i\in A),\qquad
  h_i(b)=h_i(b_-) \quad (i\notin A).
\]
The budget identity can then be written on $I$ as
\[
  b \;=\; \sum_{i\notin A} w_i h_i(b_-)\;+\;\sum_{i\in A} w_i \min\{\eta_I(b),\beta_i\},
\]
which, for fixed $b\in I$, is a continuous strictly increasing function of $\eta_I(b)$ as long as $A\neq\emptyset$.
Thus for each $b\in I$ there is a unique $\eta_I(b)$ solving the budget identity.
In particular, on the \emph{interior} interval where no caps bind ($A=\{1,\dots,k\}$ and $b\in(0,U_T)$ where $U_T=\sum_i w_i$), we have $h_i(b)=\eta_I(b)$ for all $i$ so the budget identity reduces to
\[
  b \;=\; \sum_{i=1}^k w_i h_i(b) \;=\; \eta_I(b)\sum_{i=1}^k w_i \;=\; \eta_I(b)\,U_T,
\]
which implies $\eta_I(b)=b/U_T$.
Thus on this interval
$s_{\pi,i}=(1-\eta_I(b))w_i=(1-b/U_T)w_i$, i.e., proportional to endowment.

Sybil resistance implies allocations depend only on total endowment: splitting $w_i=\sum_a z_a$ leaves
$\sum_a z_a h_a(b)$ and hence the survivor $\sum_a (1-h_a(b)) z_a$ unchanged for each $b$, so the proportional form on interior intervals is preserved under arbitrary splits.
Monotonicity further restricts how indices can exit the active set as $b$ increases.
At an endpoint $b_\ell$ where $\eta_I(b_\ell)$ first hits some cap $\beta_m$, all indices $j\ge m$ with $\beta_j\le \beta_m$ must saturate together; otherwise we would have $w_m\ge w_j$ but $(1-h_m(b_\ell))w_m < (1-h_j(b_\ell))w_j$, violating monotonicity.
Thus, as we pass from $I_\ell$ to $I_{\ell+1}$, a (possibly empty) tail $\{j>m\}$ leaves $A$, contributing a fixed amount $\sum_{j>m} w_j\beta_j$ to the budget, and the same water-filling argument applies on the remaining head with reduced budget.

Concatenating the solution across all intervals $I_\ell$ yields the reverse–waterfilling form
$h_i=\min\{\eta^\star,\beta_i\}$ with $\eta^\star$ chosen so that $\sum_i w_i h_i=H$, which is exactly capped pro–rata.
Uniqueness follows either from the strict convexity of the Euclidean projection onto
$\{h\in[0,1]^k:\ \sum_i w_i h_i=H\}$ or from the monotone one–dimensional search that defines $\eta^\star$.
\end{proof}

\iparagraph{Convex optimality}\label{app:convex-optimality}
We now formalize the convex-welfare interpretation of capped pro-rata from \S\ref{sec:fairness}.
Fix time $T$, winners $\mathcal{W}_T$ with endowments $w_i = w_{T,i}$ (under PNL-only, $w_i = (\mathrm{PNL}_{T,i})_+$), and effective caps $\beta_i = \min\{\overline{h}_i, 1-(\underline{e}_i - c_{i,T})/w_{i,T}\}$ defined by the haircut and equity constraints~\eqref{eq:haircut-constraint}--\eqref{eq:equity-constraint}, where the equity constraint is reinterpreted in terms of minimum post-ADL endowment.
Let $B_T = \theta_{\pi} D_T(\mathcal{P}_n)$ denote the haircut budget from~\eqref{eq:eta}, and let $\phi:[0,1]\to\reals$ be a strictly convex increasing function representing per-unit haircut disutility as in~\eqref{eq:haircut-optimization}.
We consider choosing haircuts $h=(h_i)_{i\in\mathcal{W}_T}$ to minimize the endowment-weighted total disutility $\sum_{i\in\mathcal{W}_T} w_i \phi(h_i)$ subject to the per-account bounds $0\le h_i\le\beta_i$ and the aggregate budget constraint $\sum_{i\in\mathcal{W}_T} w_i h_i = B_T$ (budget constraint in endowment space).

\begin{proposition}[Convex optimality]\label{prop:capped-prorata-optimal}
    For any strictly convex increasing $\phi$, the unique solution to
    \[
      \min_{h}\ \sum_{i\in\mathcal{W}_T} w_i\,\phi(h_i)
      \quad\text{s.t.}\quad
      \sum_{i\in\mathcal{W}_T} w_i h_i = B_T,\quad 0\le h_i\le \beta_i
    \]
    is the capped pro-rata rule~\eqref{eq:capped-pro-rata}, \ie
    \[
      h_{\pi_{CP},i}^\star = \min\{\eta^\star, \beta_i\},
    \]
    where $\eta^\star$ satisfies $\sum_{i\in\mathcal{W}_T} w_i h_{\pi_{CP},i}^\star = B_T$.
\end{proposition}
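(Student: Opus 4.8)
The plan is to set up the Karush-Kuhn-Tucker (KKT) conditions for this convex program and show that the unique stationary point is exactly the capped pro-rata water-filling solution. First I would verify that the problem is a well-posed convex program: the objective $\sum_i w_i \phi(h_i)$ is strictly convex in $h$ (since each $w_i > 0$ and $\phi$ is strictly convex), the constraint set $\{h : \sum_i w_i h_i = B_T,\ 0 \le h_i \le \beta_i\}$ is a compact convex polytope, and feasibility requires $B_T \le \sum_i w_i \beta_i = C(\beta)$, which is exactly the capacity condition~\eqref{eq:max-cap} assumed in the setup. Strict convexity plus compactness immediately gives existence and uniqueness of the minimizer, so it only remains to identify it.

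Next I would write the Lagrangian with a scalar multiplier $\nu$ for the budget equality and multipliers $\underline{\mu}_i \ge 0, \overline{\mu}_i \ge 0$ for the bounds $h_i \ge 0$ and $h_i \le \beta_i$:
\[
\mathcal{L} = \sum_i w_i \phi(h_i) - \nu\Big(\sum_i w_i h_i - B_T\Big) - \sum_i \underline{\mu}_i h_i + \sum_i \overline{\mu}_i (h_i - \beta_i).
\]
Stationarity $\partial \mathcal{L}/\partial h_i = 0$ gives $w_i \phi'(h_i) = \nu w_i + \underline{\mu}_i - \overline{\mu}_i$, i.e. $\phi'(h_i) = \nu + (\underline{\mu}_i - \overline{\mu}_i)/w_i$. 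Complementary slackness then forces the following trichotomy at the optimum: if $0 < h_i < \beta_i$ (interior), both bound multipliers vanish and $\phi'(h_i) = \nu$, so $h_i = (\phi')^{-1}(\nu)$ is a common value $\eta^\star := (\phi')^{-1}(\nu)$ independent of $i$; if $h_i = \beta_i$ (upper cap binds), then $\overline{\mu}_i \ge 0$ forces $\phi'(\beta_i) \le \nu$, i.e. $\beta_i \le \eta^\star$ by monotonicity of $\phi'$; and the lower bound $h_i = 0$ case is excluded for $\beta_i > 0$ provided $\nu$ is chosen so the budget is met (I would note $B_T > 0$ and increasing $\nu$ raises every $h_i$). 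Combining the interior and upper-cap cases yields precisely $h_i^\star = \min\{\eta^\star, \beta_i\}$, the capped pro-rata form~\eqref{eq:capped-pro-rata}.

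Finally I would close the argument by pinning down $\eta^\star$ via the budget constraint: the map $\eta \mapsto \sum_i w_i \min\{\eta, \beta_i\}$ is continuous, nondecreasing, and strictly increasing on the relevant range (with positive slope $\sum_{i : \beta_i > \eta} w_i$ wherever the active set is nonempty), ranging from $0$ at $\eta = 0$ to $C(\beta)$ as $\eta \to \max_i \beta_i$. Since $0 < B_T \le C(\beta)$, the intermediate value theorem gives a unique $\eta^\star$ solving $\sum_i w_i \min\{\eta^\star, \beta_i\} = B_T$, which is exactly~\eqref{eq:eta}. Uniqueness of the KKT solution follows from strict convexity of the objective, so the capped pro-rata allocation is \emph{the} minimizer. \textbf{The main obstacle} I anticipate is purely bookkeeping rather than conceptual: one must handle the possibility that $\phi'$ is not surjective onto all of $\reals$ (e.g. if $\phi'(0) > -\infty$ or $\phi'$ blows up), so the inverse $(\phi')^{-1}(\nu)$ and hence $\eta^\star$ must be interpreted as a threshold obtained from the monotone budget equation rather than a literal derivative inverse. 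Framing the solution directly through the water-filling threshold $\eta^\star$ (defined by the budget identity) sidesteps any regularity assumptions on $\phi'$ beyond strict monotonicity, and this is the formulation I would adopt to keep the proof clean.
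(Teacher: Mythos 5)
Your proposal is correct and follows essentially the same route as the paper's proof: both set up the KKT conditions for the convex program, use complementary slackness and strict monotonicity of $\phi'$ to force the water-filling form $h_i = \min\{\eta^\star,\beta_i\}$, pin down $\eta^\star$ via the monotone budget equation, and invoke strict convexity for uniqueness. Your closing remark about defining $\eta^\star$ through the budget identity rather than a literal inverse of $\phi'$ is a nice touch of extra care, but it does not change the substance of the argument.
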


\begin{proof}
    The optimization problem is convex with a strictly convex objective and linear constraints, so any point satisfying the Karush--Kuhn--Tucker (KKT) conditions is the unique global minimizer~\cite[Ch.~5]{BoydVandenberghe2004}.
    For $B_T$ in the interior of the feasible region ($0<B_T<C(\beta)$, where $C(\beta)=\sum_i w_i \beta_i$ is defined in~\eqref{eq:max-cap}), Slater's condition holds, so KKT conditions are necessary and sufficient.
    The Lagrangian is
    \[
      \mathcal{L}(h,\lambda,\mu,\nu)
      \;=\;
      \sum_{i\in\mathcal{W}_T} w_i \phi(h_i)
      + \lambda\!\left(\sum_{i\in\mathcal{W}_T} w_i h_i - B_T\right)
      + \sum_{i\in\mathcal{W}_T} \mu_i(h_i-\beta_i)
      - \sum_{i\in\mathcal{W}_T} \nu_i h_i,
    \]
    with multipliers $\lambda\in\reals$ and $\mu_i,\nu_i\ge 0$.
    Stationarity with respect to $h_i$ gives
    \[
      w_i \phi'(h_i) + \lambda w_i + \mu_i - \nu_i = 0 \quad (i\in\mathcal{W}_T),
    \]
    together with complementary slackness $\mu_i(h_i-\beta_i)=0$ and $\nu_i h_i=0$.
    For any index $i$ with $0<h_i<\beta_i$, we must have $\mu_i=\nu_i=0$, so $\phi'(h_i)=-\lambda$.
    Because $\phi'$ is strictly increasing, this implies $h_i=c$ for some common scalar $c$ independent of $i$.
    If $h_i=\beta_i$ then $\mu_i\ge 0$ and $\nu_i=0$, and if $h_i=0$ then $\nu_i\ge 0$ and $\mu_i=0$, so in all cases the KKT conditions imply the water-filling form
    \[
      h_i = \min\{c,\beta_i\}\quad (i\in\mathcal{W}_T).
    \]
    The budget constraint $\sum_{i\in\mathcal{W}_T} w_i h_i = B_T$ then reduces to finding $c$ such that
    \[
      \sum_{i\in\mathcal{W}_T} w_i \min\{c,\beta_i\} = B_T.
    \]
    The left-hand side is continuous and strictly increasing in $c$ on $[0,1]$ as long as some $\beta_i>0$, so there is a unique $c=\eta^\star$ solving this equation.
    Thus every KKT point has the capped pro-rata form $h_i=\min\{\eta^\star,\beta_i\}$ with $\eta^\star$ determined by the budget, which is exactly the rule in~\eqref{eq:capped-pro-rata}.
    Strict convexity of the objective implies this KKT point is the unique global minimizer, proving the claim (see also the standard water-filling derivations in~\cite[Ch.~5]{BoydVandenberghe2004}).
\end{proof}

\section{Algorithms for Pro-Rata Haircut Rules}\label{app:pro-rata-algorithms}

This appendix collects the explicit water-filling procedures for both the capped pro-rata rule from~\cref{subsec:fairness} and the Risk-Aware Pro-Rata (RAP) rule from~\cref{sec:glpr}.
Each algorithm takes the winner endowments $\{w_i\}$ (under PNL-only, positive PNL), effective caps $\{\beta_i\}$, and the target budget $B_T = \theta_{\pi} D_T(\mathcal{P}_n)$, returning the optimal haircut vector $h$ or declaring infeasibility if the aggregate capacity $\sum_i w_i \beta_i$ is insufficient.

\subsection{Capped Pro-Rata Water-Filling}

The procedure below enforces the constraints in \cref{eq:capped-pro-rata,eq:eta} by leveling caps until the target budget is met.

\begin{algorithm}
\caption{Capped Pro-Rata Haircut Allocation (Water-Filling)}
\label{alg:capped-pro-rata}
\begin{algorithmic}[1]
\Require Winner endowments $w = \{w_1, \dots, w_n\}$ (under PNL-only, positive PNL), Effective caps $\beta = \{\beta_1, \dots, \beta_n\}$, Target budget $B_T$
\Ensure Haircut vector $h = \{h_1, \dots, h_n\}$ or \textbf{Infeasible}

\State $C \leftarrow \sum_{i=1}^n w_i \beta_i$ \Comment{Compute total maximum capacity}
\If{$B_T > C$}
    \State \Return \textbf{Infeasible}
\EndIf

\State Sort indices $p$ such that $\beta_{p_1} \le \beta_{p_2} \le \dots \le \beta_{p_n}$
\State $\beta_{p_0} \leftarrow 0$
\State $V \leftarrow 0$ \Comment{Cumulative value covered}
\State $R \leftarrow \sum_{i=1}^n w_i$ \Comment{Remaining uncapped endowment mass}

\For{$k = 1$ to $n$}
    \State $\Delta \beta \leftarrow \beta_{p_k} - \beta_{p_{k-1}}$
    \State $V_{\mathrm{step}} \leftarrow \Delta \beta \cdot R$
    
    \If{$V + V_{\mathrm{step}} \ge B_T$}
        \State $\eta \leftarrow \beta_{p_{k-1}} + (B_T - V) / R$ \Comment{Found the water level $\eta$}
        \State \textbf{break}
    \EndIf
    
    \State $V \leftarrow V + V_{\mathrm{step}}$
    \State $R \leftarrow R - w_{p_k}$ \Comment{User $p_k$ becomes fully capped}
\EndFor

\If{$V < B_T$} \Comment{Handling numerical edge cases}
    \State $\eta \leftarrow \beta_{p_n}$
\EndIf

\For{$i = 1$ to $n$}
    \State $h_i \leftarrow \min\{\eta, \beta_i\}$
\EndFor
\State \Return $h$
\end{algorithmic}
\end{algorithm}

\subsection{Risk-Aware Pro-Rata Water-Filling}

The RAP algorithm augments the capped procedure by prioritizing accounts according to their ``cap-to-weight'' ratios $\beta_i / \tilde{w}_i$, where $\tilde{w}_i$ are risk weights (not to be confused with the endowment $w_i$).

\begin{algorithm}
\caption{Risk-Aware Pro-Rata Haircut Allocation (Weighted Water-Filling)}
\label{alg:rap-water-filling}
\begin{algorithmic}[1]
\Require Winner endowments $w = \{w_1, \dots, w_n\}$ (under PNL-only, positive PNL), Effective caps $\beta = \{\beta_1, \dots, \beta_n\}$, Risk weights $\tilde{w} = \{\tilde{w}_1, \dots, \tilde{w}_n\}$, Target budget $B_T$
\Ensure Haircut vector $h = \{h_1, \dots, h_n\}$ or \textbf{Infeasible}

\State $C \leftarrow \sum_{i=1}^n w_i \beta_i$ \Comment{Total capacity (budget constraint in endowment space)}
\If{$B_T > C$}
    \State \Return \textbf{Infeasible}
\EndIf

\State Compute ratios $r_i \leftarrow \beta_i / \tilde{w}_i$ for all $i$ (treat $\tilde{w}_i=0$ as $r_i=\infty$)
\State Sort indices $p$ such that $r_{p_1} \le r_{p_2} \le \dots \le r_{p_n}$
\State $r_{p_0} \leftarrow 0$
\State $V \leftarrow 0$ \Comment{Cumulative value covered}
\State $W_{\mathrm{rem}} \leftarrow \sum_{i=1}^n w_i \tilde{w}_i$ \Comment{Remaining weighted endowment mass}

\For{$k = 1$ to $n$}
    \State $\Delta \tau \leftarrow r_{p_k} - r_{p_{k-1}}$
    \State $V_{\mathrm{step}} \leftarrow \Delta \tau \cdot W_{\mathrm{rem}}$
    
    \If{$V + V_{\mathrm{step}} \ge B_T$}
        \State $\tau \leftarrow r_{p_{k-1}} + (B_T - V) / W_{\mathrm{rem}}$ \Comment{Found the scaling factor $\tau$}
        \State \textbf{break}
    \EndIf
    
    \State $V \leftarrow V + V_{\mathrm{step}}$
    \State $W_{\mathrm{rem}} \leftarrow W_{\mathrm{rem}} - w_{p_k} \tilde{w}_{p_k}$ \Comment{User $p_k$ becomes fully capped}
\EndFor

\If{$V < B_T$} \Comment{Numerical edge case}
    \State $\tau \leftarrow r_{p_n}$
\EndIf

\For{$i = 1$ to $n$}
    \State $h_i \leftarrow \min\{\beta_i, \tau \tilde{w}_i\}$
\EndFor
\State \Return $h$
\end{algorithmic}
\end{algorithm}

\section{Risk-Aware Pro-Rata (RAP)}\label{app:rap}

\subsection{Examples of Risk-Aware Pro-Rata (RAP) and Next Deficit}\label{app:rap-examples}

In this section, we provide detailed numerical examples illustrating the properties of the RAP rule and the impact of post-haircut shocks.

\iparagraph{RAP weighting example.}
We illustrate the three choices of risk models on an example at $T=2$ where the winners are $\mathcal W_2=\{A,C,E\}$ with effective leverages $\lambda^+_{A,2}\approx 1.031$, $\lambda^+_{C,2}\approx 0.925$, and $\lambda^+_{E,2}\approx 1.548$.
Recall that under pro-rata, the normalized shares are $s^{\mathrm{PR}}_i \propto e_{T,i}$, yielding $(s^{\mathrm{PR}}_A, s^{\mathrm{PR}}_C, s^{\mathrm{PR}}_E) \approx (0.163, 0.728, 0.109)$.
For RAP with $w_i=\lambda_i g(\lambda_i)$, the shares allocate proportional to $e_{T,i}w_i$.
The resulting shares (order $A,C,E$) are:
\begin{itemize}
    \item \textbf{Linear} $g(\lambda)=\lambda$: $\approx (0.164, 0.589, 0.246)$.
    \item \textbf{Power} $g(\lambda)=\lambda^2$: $\approx (0.155, 0.498, 0.348)$.
    \item \textbf{CVaR} $g(\lambda)=(\lambda-0.9)_+$: $\approx (0.149, 0.114, 0.737)$.
\end{itemize}
RAP shifts haircut mass toward high-leverage winners; the tilt is mild for linear $g$, stronger for $\lambda^2$, and concentrates almost entirely on the over-threshold tail for the CVaR model.

\iparagraph{Next deficit and leverage sensitivity.}
Consider the setup where $T=2$ with $D_T \approx 0.705$ and $W_T \approx 7.72$.
Under the normal pro-rata rule, the haircut rate is $h^{\mathrm{PR}}_T \approx 0.0913$.
We consider a simple Markovian shock whose direction is uniformly random and whose magnitude grows with the winner leverage mass:
\[
Z_{T,i} = \xi_T \zeta_T, \quad \xi_T \in \{-1,+1\} \text{ equiprobable}, \quad \zeta_T = \alpha \frac{\ell^+_T}{k_T}.
\]
With $\alpha=1.2$, the expected next deficit is:
\[
\Expect[D^{\mathrm{next}}_{T+1} \mid \mathcal F_T] \approx 1.46 > D_T.
\]
This illustrates a failure mode for pro-rata when the shock kernel scales strongly with leverage: pro-rata shrinks all winners uniformly and leaves effective leverages $\lambda_{T,i}$ unchanged, so the shock magnitude $\zeta_T$ is unaffected while residual exposure remains large on high-leverage winners.

\iparagraph{Correlated shocks example.}
Consider two winners with equal equity $e$ and leverage levels $\lambda_{t,1}>\lambda_{t,2}\ge 1$, and budget $b_t=2\varepsilon e$.
Assume price shocks are AR(1): $Z_{t+2}=\rho Z_{t+1}+\varepsilon_{t+2}$, $\rho\in(0,1)$.
RAP with $w_{t,i} \propto \lambda_{t,i}\psi(1/\lambda_{t,i})$ puts more haircut mass on account 1 (the higher leverage account).
Assume account 1's exposure to $Z_{t+1}$ offsets the loss term (a "hedge") in the next step deficit: $D^{\text{next}}_{t+1}=(\alpha-\beta s_{t,1})Z_{t+1}$.
Shrinking $s_{t,1}$ (winner 1's residual equity) weakens the hedge into $t+2$.
Since $s^{\mathrm{RAP}}_{t,1} < s^{\mathrm{PR}}_{t,1}$, the two-step sum of deficits $S_t$ can satisfy $S_t(h^{\mathrm{PR}}_t) < S_t(h^{\mathrm{RAP}}_t)$, even if RAP minimizes the one-step deficit.

\iparagraph{Exchange incentive compatibility example.}
Consider two positions $\mathfrak{p}_A$ (high value, $\theta_A=100$) and $\mathfrak{p}_B$ (low value, $\theta_B=5$) with equal initial leverage $\ell=1$.
Suppose the exchange must reduce total leverage by 1 unit.
\begin{itemize}
    \item \textbf{RAP (Targeted):} Fully liquidates $\mathfrak{p}_A$. Continuation value: $5\beta$.
    \item \textbf{Pro-Rata:} Reduces both by 50\%. Continuation value: $52.5\beta$.
\end{itemize}
Pro-rata yields significantly higher utility by preserving the high-value trader, demonstrating that RAP need not be incentive compatible for the exchange.

\subsection{RAP Optimality and Convex Dominance}\label{app:rap-optimality-and-convex-dominance}\label{sec:rap-optimality-and-convex-dominance}
In this appendix, we prove two statements: 1) RAP optimizes the one-step next deficit~\eqref{eq:delta-T} and 2) RAP has a smaller residual than any other comonotone haircut rule.

\iparagraph{RAP optimizes $\delta_T$.}
In this section, we briefly show that RAP optimizes the one-step deficit proxy $\delta_T$.
We do this by showing that the weights determined by the perspective transformr $\rho(\lambda)$, which define $g^{\star}$, optimize $\delta_T$.

\begin{proposition}\label{thm:rap-onestep-7}
Fix a round $t$ with budget $b_t=\theta_t|D_t|$ and per‑account caps $0\le H_{t,i}\le 1$. For
\[
\delta_t(h)=\sum_{i\in W_t} (1-h_{t,i})\,\lambda_{t,i}e_{t,i}\,\psi_i\!\Big(\tfrac{1}{\lambda_{t,i}}\Big)
\]
the capped reverse-\-waterfilling with risk weights $\tilde{w}_{t,i}=\rho(\lambda_{t,i})=\lambda_{t,i}\psi_i(1/\lambda_{t,i})$ minimizes $\delta_t(h)$ among all $h$ with $\sum_i w_{t,i}h_{t,i}=b_t$ and $0\le h_{t,i}\le H_{t,i}$, where $w_{t,i}$ is the haircutable endowment (under PNL-only, $w_{t,i}=(\mathrm{PNL}_{t,i})_+$).
\end{proposition}
\begin{proof}
Using $\rho(\lambda)=\lambda\psi(1/\lambda)$,
\[
\delta_t(h)=\sum_i (1-h_{t,i})\,\lambda_{t,i}e_{t,i}\,\psi(1/\lambda_{t,i})
\equiv C_h - \sum_i \rho(\lambda_{t,i}) e_{t,i} h_{t,i}
\]
where $C_h$ is a constant independent of $h$ (can depend on $\lambda_{t,i}$).
Note that $e_{t,i} = c_{t,i} + \mathrm{PNL}_{t,i}$ and under PNL-only haircuts, $w_{t,i} = (\mathrm{PNL}_{t,i})_+$.
The gradient $-\partial \delta_t / \partial x_{t,i}$ where $x_{t,i} = h_{t,i} w_{t,i}$ is $\rho(\lambda_{t,i}) e_{t,i}$ (expressed in equity terms because insolvency is an equity concept).
Maximizing $\sum_i \rho(\lambda_{t,i})\,e_{t,i} (x_{t,i}/w_{t,i})$ under $\sum_i x_{t,i}=b_t$ and $0\le x_{t,i}\le w_{t,i}H_{t,i}$ is a fractional knapsack problem solved by sorting the values $\rho(\lambda_{t,i}) e_{t,i}/w_{t,i}$.
The optimizer for this problem is reverse‑waterfilling~\cite{BoydVandenberghe2004}:
\[
h_{t,i}=\min\{H_{t,i},\ \tau^\star_t\,\tilde{w}_{t,i}\},\qquad \tilde{w}_{t,i}=\rho(\lambda_{t,i}),
\]
with $\tau^\star_t$ set by $\sum_i w_{t,i}h_{t,i}=b_t$. 
This choice of $\tilde{w}$ minimizes $\delta_t$ among all weighted reverse‑waterfilling rules.
\end{proof}

\iparagraph{RAP realizes Schur-convex dominance.}\label{app:lpr-convex-dominance}
We first note that theoretical results from the measure theoretic literature imply than RAP should provide Schur-convex dominance.
RAP weights $w_i = \lambda_{T,i} g(\lambda_{T,i})$ can be interpreted as allocating budget proportional to a coherent, law-invariant risk measure.
Specifically, let $\rho(\lambda) = \lambda\,\psi(1/\lambda)$ be a risk density with $\psi$ convex and nonincreasing.
This corresponds to a spectral risk measure $\rho(X) = \int_0^1 \mathrm{ES}_\alpha(X)\,d\mu(\alpha)$ via the Kusuoka representation~\citep{Kusuoka2001,Acerbi2002}.
Choosing $g(\lambda) = \rho(\lambda)/\lambda$ aligns the RAP allocation with this spectral density.
Known results from~\cite{Kusuoka2001} then imply RAP with a ``more convex'' risk density (in the Schur sense) will weakly Schur-dominate any other weighted pro-rata rule with a less concentrated density.
Instead of utilizing such strong measure theoretic tools, we instead prove this directly below in an elementary manner.

\begin{theorem}[Constructive Schur–convex dominance]\label{thm:rho-to-g-dominance}
Fix $T$, budget $b_T$, and caps $(\beta_i)$. Let $\rho$ be nondecreasing and $g^\star(\lambda)=\rho(\lambda)/\lambda$.
Consider any weighted pro-rata rule $h^{(w)}$ with weights $w$.
If the haircut share vector of $h^{(w)}$ is no more concentrated on high-$\rho$ indices than that of $\mathrm{RAP}(g^\star)$ on any fixed active set, then the residual vector of $\mathrm{RAP}(g^\star)$ weakly submajorizes that of $h^{(w)}$:
\[
z_T(\mathrm{RAP}(g^\star)) \preceq_w z_T(h^{(w)}).
\]
Thus, for any convex increasing $\phi$, $\sum \phi(z_{T,i}(\mathrm{RAP})) \le \sum \phi(z_{T,i}(h^{(w)}))$.
\end{theorem}

\begin{proof}
We split this proof into three steps.
The first step is analogous to the proof of Proposition~\ref{thm:unique-pro-rata}, where we analyze the change to the weights on a piecewise constant set of intervals.
The second step is to analyze how the residuals change with budget using the piecewise-constant representation.
Given the change in residual, the final step uses the majorization inequality to show that the residual vector of RAP weakly submajorizes that of any other weighted pro-rata rule.

\emph{Step 1: Active set and parameterization.}
We first define the active set $A(b)$ as in~\eqref{eq:active-set}.
Any weighted reverse‑waterfilling with risk weights $\tilde{w}$ admits the water‑level form
\[
  h^{(\tilde{w})}_{T,i}(b)\;=\;\min\{\beta_i,\ \tau^{(\tilde{w})}(b)\,\tilde{w}_i\},
\]
where $\tau^{(\tilde{w})}(b)$ is chosen so that $\sum_i w_{T,i}\,h^{(\tilde{w})}_{T,i}(b)=b$ (budget constraint in endowment space).
On an interval $[b_0,b_1]$ with fixed $A(b)$ we have
\[
  \frac{db}{d\tau^{(\tilde{w})}}=\sum_{j\in A(b)} w_{T,j} \tilde{w}_j,
  \qquad
  \frac{d h^{(\tilde{w})}_{T,i}}{db}
  \;=\;
  \begin{cases}
    \displaystyle \frac{\tilde{w}_i}{\sum_{j\in A(b)} w_{T,j} \tilde{w}_j},& i\in A(b),\\[1.0ex]
    0,& i\notin A(b).
  \end{cases}
\]
\emph{Step 2: Residual dynamics.}
Next we look at how the residuals (\eg~post haircut equity) change with budget.
Write the reweighted residuals as
\[
  z_{T,i}(b)\;=\;\rho(\lambda_{T,i})\,e_{T,i}\,\bigl(1-h_{T,i}(b)\bigr),
\]
where the equity $e_{T,i}$ appears because insolvency is an equity concept, but the control variable is the endowment $w_{T,i}$ via $h_{T,i}$.
Then on $[b_0,b_1]$,
\[
  \frac{d z^{(\tilde{w})}_{T,i}}{db}
  \;=\;
  \begin{cases}
    \displaystyle -\frac{\rho(\lambda_{T,i})\,e_{T,i}\,\tilde{w}_i}{\sum_{j\in A(b)} w_{T,j} \tilde{w}_j},& i\in A(b),\\[1.0ex]
    0,& i\notin A(b).
  \end{cases}
\]
For $\mathrm{RAP}(g^\star)$ we take $\tilde{w}_i=\rho(\lambda_{T,i})$, giving
\[
  \frac{d z^{(\mathrm{RAP})}_{T,i}}{db}
  \;=\;
  \begin{cases}
    \displaystyle -\frac{\rho(\lambda_{T,i})^2\,e_{T,i}}{\sum_{j\in A(b)} w_{T,j} \rho(\lambda_{T,j})},& i\in A(b),\\[1.0ex]
    0,& i\notin A(b).
  \end{cases}
\]
\emph{Step 3: Majorization at each budget.} On a fixed $A(b)$, sort indices by decreasing $\rho(\lambda_{T,i})$.
By the hypothesis that the haircut share vector of $h^{(\tilde{w})}$ on $A(b)$,
\[
  \sigma^{(\tilde{w})}_i(b)\ :=\ \frac{w_{T,i} \tilde{w}_i}{\sum_{j\in A(b)} w_{T,j} \tilde{w}_j},
\]
is no more concentrated on high‑$\rho$ indices than the RAP share
$\sigma^{(\mathrm{RAP})}_i(b)=\frac{w_{T,i} \rho(\lambda_{T,i})}{\sum_{j\in A(b)} w_{T,j} \rho(\lambda_{T,j})}$, the rearrangement/majorization inequality implies that for every $k$,
\[
  \sum_{i\le k}\!\frac{d z^{(\mathrm{RAP})}_{T,(i)}}{db}
  \;\leq\;
  \sum_{i\le k}\!\frac{d z^{(\tilde{w})}_{T,(i)}}{db}
\]
where $(i)$ denotes the order by decreasing $\rho$.
Hence the instantaneous decrease vector under RAP weakly submajorizes that under $h^{(\tilde{w})}$ on $[b_0,b_1]$.
Integrating over $b$ preserves $\prec_w$ on the interval, and concatenating the finitely many intervals where $A(b)$ changes preserves the order overall:
$z_T(\mathrm{RAP}(g^\star)) \preceq_w z_T(h^{(\tilde{w})})$.
Schur–convexity then yields the separable convex loss comparison.
\end{proof}

\subsection{Example of the Solvency vs. Long-Term Revenue Trade-Off}\label{app:exchange-incentives}
This example illustrates a fundamental tension: the risk-minimizing policy (RAP) may be suboptimal for the exchange's long-term value (LTV) because it disproportionately liquidates high-leverage users who generate the most fees. Under certain conditions, a ``fairer'' policy like Pro-Rata (PR), which preserves these high-value users, yields higher total utility for the exchange.

\iparagraph{Setup.}
Consider an exchange with two profitable users $i\in\{H,L\}$. User $H$ is high-leverage ($\lambda_H > \lambda_L$) and high-revenue; user $L$ is safer but generates less fee volume.
The exchange must raise a budget $b$ via haircuts $h=(h_H, h_L)$ to cover a deficit.
Its objective combines immediate safety (minimizing insolvency risk) and future revenue (LTV):
\[
U^{\mathrm{exch}}(h)\;=\;\underbrace{-\mathrm{Loss}(h)}_{\text{Immediate Safety}}\;+\;\underbrace{\beta\sum_{i\in\{H,L\}} \theta_i(1-h_i)\lambda_i}_{\text{Future Revenue (LTV)}},
\]
where $\mathrm{Loss}(h) = L_0 - \alpha_H h_H - \alpha_L h_L$ is the expected insurance fund draw, and $\theta_i \lambda_i$ is the expected future fee revenue per unit of equity from user $i$.
We assume $\alpha_H/e_H > \alpha_L/e_L$, meaning user $H$ provides the cheapest risk reduction per dollar of haircut.

\iparagraph{Policy comparison.}
We compare two policies:
\begin{itemize}
    \item \emph{RAP (Risk-Minimizing):} Prioritizes risk reduction above all.
    Since $H$ offers the best ``bang for the buck'' in safety ($\alpha_H/e_H > \alpha_L/e_L$),\footnote{The marginal reduction in loss per unit of budget is $\frac{\partial \mathrm{Loss}}{\partial h_i} \frac{d h_i}{d (\text{budget})} = \frac{\alpha_i}{e_i}$. Since user $H$ has higher leverage, they have a higher risk coefficient $\alpha_H$, making $\alpha_H/e_H$ the steepest descent direction for the loss function.} RAP fully targets $H$ first: $h^{\mathrm{RAP}} = (b/e_H, 0)$ (assuming $b < e_H$).
    \item \emph{Pro-Rata (Revenue-Preserving):} Spreads the pain evenly, setting $h^{\mathrm{PR}}_i = \frac{b}{e_H+e_L}$ for both users. This is less efficient for immediate safety but preserves more of user $H$'s position.
\end{itemize}

\iparagraph{When an exchange prefers Pro-Rata to maximize long-term revenue.}
The exchange prefers PR over RAP when the LTV gain from saving user $H$ outweighs the increased immediate risk.
The utility difference is:
\[
\Delta U = U^{\mathrm{exch}}(h^{\mathrm{PR}}) - U^{\mathrm{exch}}(h^{\mathrm{RAP}}) 
\;=\; \underbrace{\Delta h_H (\beta \theta_H \lambda_H - \alpha_H)}_{\text{Gain from saving } H} \;-\; \underbrace{h^{\mathrm{PR}}_L (\alpha_L - \beta \theta_L \lambda_L)}_{\text{Cost of cutting } L}.
\]
If user $H$ is sufficiently profitable ($\theta_H$ is large), then $\Delta U > 0$.
Specifically, PR dominates RAP if the relative revenue of the high-leverage user exceeds a threshold:
\[
\frac{\theta_H}{\theta_L} \;\ge\; \Theta^\star\;=\;\frac{h^{\mathrm{PR}}_L}{\Delta h_H}\cdot\frac{\lambda_L}{\lambda_H}\;+\;\frac{\alpha_H-\frac{h^{\mathrm{PR}}_L}{\Delta h_H}\alpha_L}{\beta\,\lambda_H\,\theta_L}.
\]
While RAP is ``optimal'' for preventing immediate insolvency, it can be myopic. If high-leverage traders are the exchange's cash cows, the exchange has a rational incentive to use Pro-Rata to keep them active, even at the cost of slightly higher short-term risk.

\section{Stackelberg Control}\label{app:proofs-8}
In this appendix, we formalize the results of~\S\ref{sec:multi-round}, where we model ADL as a Stackelberg game.

\subsection{Opposing Schur orderings for time to solvency and LTV}\label{app:proof-opposing-orders}

We formalize the fundamental trade-off between aggressive debt reduction (safety) and trader fee retention (value), formalizing the example in Appendix~\ref{app:exchange-incentives}.
Let $n_t = |\mathcal{W}_t|$.
Given any feasible strategy $\pi$, write $z_t(\pi)\in\reals_+^{n_t}$ for the vector of residual debts at time $t$ (sorted in decreasing order).
Let $h_{t,i}^{\pi}\in[0,1]$ be the haircut fraction for agent $i$ with equity $e_{t,i}$, so that the haircut mass is $m_{t,i}(\pi) = h_{t,i}^{\pi} e_{t,i}$.
We write $m_t(\pi)\in\reals_+^{n_t}$ for the corresponding vector of haircut masses.
Let $Z_t\in\reals_+^{n_t}$ denote the equity shock at time $t$, following the notation of Section~\ref{subsec:next-deficit}.
These evolve componentwise as
\[
  z_{t+1,i}(\pi)=z_{t,i}(\pi)+Z_{t+1,i}-m_{t,i}(\pi),
\]
so summing from $\tau=0$ to $t-1$ yields the conservation–of–mass identity
\begin{equation}\label{eq:conservation-mass}
  z_t(\pi)
  =
  z_0+\sum_{\tau=1}^t Z_\tau-\sum_{\tau=0}^{t-1} m_\tau(\pi).
\end{equation}
This ensures that the equity at time $t$ is either initial equity, was gained or lost in a price shock, or haircut. 

\begin{proposition}[Solvency-Revenue Trade-off]\label{prop:solvency-revenue}
Let $A$ and $B$ be two strategies facing the same shock sequence $(Z_t)_t$. Assume:
\begin{enumerate}
    \item[(i)] \emph{Safety Dominance:} For all $t < \tau_{\mathrm{solv}}(A)$, strategy $A$ maintains weakly smaller residuals than $B$ in the weak submajorization order: $z_t(A) \preceq_w z_t(B)$.
    \item[(ii)] \emph{Retention Value:} Let $M_t(\pi):=\sum_{\tau=0}^{t-1} m_\tau(\pi)$ be the cumulative haircut vector and suppose the lifetime value takes the form
    \[
      \mathrm{LTV}(\pi)=\sum_t \beta^t G_t(M_t(\pi)),
    \]
    where each stage value $G_t:\reals_+^{n_t}\to\reals$ is Schur–concave and coordinate‑wise nonincreasing in $M_t$ (more cumulative liquidations in the weak submajorization order reduce exchange LTV).
\end{enumerate}
Then:
\begin{enumerate}
    \item[(a)] $\tau_{\mathrm{solv}}(A) \le \tau_{\mathrm{solv}}(B)$ (Strategy $A$ is safer).
    \item[(b)] $\mathrm{LTV}(A) \le \mathrm{LTV}(B)$ (Strategy $B$ generates more value).
\end{enumerate}
\end{proposition}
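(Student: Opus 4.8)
The plan is to treat the two conclusions separately, using the conservation-of-mass identity \eqref{eq:conservation-mass} as the shared structural input. Conclusion (a) reduces to a scalar (aggregate) comparison plus a stopping-time contradiction, while conclusion (b) is genuinely distributional: I would convert the residual ordering in hypothesis (i) into an ordering on the cumulative haircut masses $M_t(\pi)=\sum_{\tau=0}^{t-1}m_\tau(\pi)$ and then feed that into the Schur-concave, coordinate-wise nonincreasing structure of each stage value $G_t$, finally summing over $t$ with the nonnegative discounts $\beta^t$. I would carry out (a) first, since it fixes the operational meaning of $\tau_{\mathrm{solv}}$ (a threshold on the total residual deficit) and supplies the aggregate bound reused in (b).

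For (a), note that weak submajorization $z_t(A)\preceq_w z_t(B)$ compares all top-$k$ partial sums and, taking $k=n_t$, gives $\mathbf 1^\top z_t(A)\le\mathbf 1^\top z_t(B)$ throughout the window $t<\tau_{\mathrm{solv}}(A)$ on which (i) is assumed. Reading solvency recovery as $\tau_{\mathrm{solv}}(\pi)=\inf\{t:\ \mathbf 1^\top z_t(\pi)\le\delta\}$ (consistent with \S\ref{subsec:why-dynamic}), I argue by contradiction: if $\tau_{\mathrm{solv}}(B)<\tau_{\mathrm{solv}}(A)$, then at $t^\star:=\tau_{\mathrm{solv}}(B)$ hypothesis (i) still applies, so $\mathbf 1^\top z_{t^\star}(A)\le\mathbf 1^\top z_{t^\star}(B)\le\delta$, forcing $A$ to already be solvent at $t^\star$ and contradicting $t^\star<\tau_{\mathrm{solv}}(A)$. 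The only thing to verify is that the window on which (i) is posited coincides with the window the contradiction probes, which it does.

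For (b), I would write $M_t(\pi)=C_t-z_t(\pi)$ with the strategy-independent accumulation $C_t:=z_0+\sum_{\tau=1}^t Z_\tau$, so that summing coordinates and invoking part (a) yields $\mathbf 1^\top M_t(A)\ge\mathbf 1^\top M_t(B)$: the safer strategy liquidates weakly more in the aggregate at each $t$. The target is the stagewise inequality $G_t(M_t(A))\le G_t(M_t(B))$; applying the Marshall--Olkin machinery to $-G_t$ (which is increasing and Schur-convex), this would follow from $M_t(B)\preceq_w M_t(A)$, after which multiplying by $\beta^t\ge0$ and summing gives $\mathrm{LTV}(A)\le\mathrm{LTV}(B)$.

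The hard part — and the step I expect to be genuinely delicate — is establishing $M_t(B)\preceq_w M_t(A)$ from the residual hypothesis. The affine relation $M_t=C_t-z_t$ reverses tails (the largest haircut coordinates correspond to the smallest residual coordinates), so weak submajorization of $z$ does not transport to weak submajorization of $M$ when $C_t$ is heterogeneous; indeed, an equal-total configuration such as $z(A)=(2,2,2)\preceq_w z(B)=(4,2,0)$ with uniform $C$ makes $A$'s haircuts the \emph{more even} vector, and a decreasing Schur-concave $G_t$ then pushes $G_t(M_t(A))\ge G_t(M_t(B))$, the opposite of (b). This indicates the submajorization hypothesis alone is insufficient: the monotone (aggregate-liquidation) channel must dominate the Schur (spread) channel. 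I would resolve this by strengthening (i) to the per-account coupling $z_{t,i}(A)\le z_{t,i}(B)$ that the explicit Queue-versus-Pro-Rata dynamics of Appendix~\ref{app:exchange-incentives} actually produce; then $M_{t,i}(A)\ge M_{t,i}(B)$ coordinatewise and the coordinate-wise monotonicity of $G_t$ closes (b) directly, with Schur-concavity reserved for the finer claim that concentration (Queue) depresses $G_t$ relative to an equal-mass diffuse allocation. Pinning down exactly which residual-ordering hypothesis the dynamics guarantee, and confirming it is strong enough to dominate the spread channel, is the crux of making the proposition rigorous.
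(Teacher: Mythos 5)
Part (a) of your proposal is correct and is essentially the paper's Step~1: the paper phrases solvency as $z_t(B)=0$ and uses minimality of the zero vector under $\preceq_w$, while you use an aggregate-threshold reading of $\tau_{\mathrm{solv}}$; both are the same contradiction argument run through the top-$n_t$ partial sum, and either reading is fine.

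For part (b), the step you flag as ``genuinely delicate'' is exactly where the paper's own proof goes wrong, and your counterexample refutes it. The paper's Step~2 asserts that subtracting from the common vector $z_0+\sum_{\tau\le t} Z_\tau$ ``reverses the weak submajorization order, yielding $M_t(A)\succeq_w M_t(B)$,'' and then invokes Schur-concavity plus monotonicity of $G_t$. As you observe, subtraction from a common vector converts weak submajorization of $z$ into weak \emph{super}majorization of $M$, which is the order paired with \emph{decreasing} Schur-convex test functions, not with the increasing Schur-convex functions $-G_t$; the needed relation $M_t(B)\preceq_w M_t(A)$ simply does not follow. To make your objection airtight, instantiate it as full dynamics: $z_0=(4,4,4)$ for both strategies, no shocks, $m_0(A)=(2,2,2)$, $m_0(B)=(0,2,4)$, and clear both books at $t=2$. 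Hypothesis (i) holds at $t=0,1$, conclusion (a) holds ($\tau_{\mathrm{solv}}(A)=\tau_{\mathrm{solv}}(B)=2$), yet $M_1(A)=(2,2,2)$ is strongly majorized by $M_1(B)=(0,2,4)$ with equal totals, so for the admissible stage value $G_1(M)=-\lVert M\rVert_2^2$ (Schur-concave, coordinate-wise nonincreasing on the nonnegative orthant) one gets $G_1(M_1(A))=-12>-20=G_1(M_1(B))$, and since $M_2(A)=M_2(B)$ the later terms cancel, giving $\mathrm{LTV}(A)-\mathrm{LTV}(B)=8\beta>0$. So the proposition as stated is false: hypotheses (i)--(ii) hold but conclusion (b) fails, and the paper's order-reversal claim is the precise error. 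One nit: you attribute the failure to heterogeneous $C_t$, but your own example uses uniform $C_t$ --- the failure is intrinsic to the submajorization-to-supermajorization flip, not to heterogeneity. Your proposed repair is the right one: strengthening (i) to the coordinate-wise coupling $z_{t,i}(A)\le z_{t,i}(B)$ gives $M_{t,i}(A)\ge M_{t,i}(B)$ coordinate-wise, and then the coordinate-wise monotonicity of $G_t$ alone yields (b), with Schur-concavity no longer needed for that conclusion.
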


\begin{proof}
\noindent We prove this in two steps.

\noindent \emph{Step 1: Solvency time.}
For any $t<\tau_{\mathrm{solv}}(A)$ we have $z_t(A)\preceq_w z_t(B)$ by (i).
If $B$ is solvent at some such time $t$ so that $z_t(B)=0$, then weak submajorization on $\reals_+^{n_t}$ forces $z_t(A)=0$ as well, since the zero vector is minimal in this order.
Hence $B$ cannot become solvent strictly before $A$, and $\tau_{\mathrm{solv}}(A) \le \tau_{\mathrm{solv}}(B)$ almost surely.

\noindent \emph{Step 2: LTV.}
From the conservation–of–mass identity~\eqref{eq:conservation-mass} and the fact that $z_0$ and $(Z_\tau)_\tau$ are common across strategies, the cumulative haircuts $M_t(\pi):=\sum_{\tau=0}^{t-1} m_\tau(\pi)$ satisfy
\[
  M_t(\pi)
  =
  z_0+\sum_{\tau=1}^t Z_\tau-z_t(\pi),
\]
so that
\[
  M_t(A)-M_t(B)
  =
  z_t(B)-z_t(A).
\]
For each $t<\tau_{\mathrm{solv}}(A)$, assumption~(i) gives $z_t(A)\preceq_w z_t(B)$, and subtracting from the common vector $z_0+\sum_{\tau=1}^t \xi_\tau$ reverses the weak submajorization order, yielding $M_t(A)\succeq_w M_t(B)$.
By Schur–concavity and coordinate‑wise monotonicity of each $G_t$ in (ii), this implies $G_t(M_t(A))\le G_t(M_t(B))$ for all $t$, so
\[
  \mathrm{LTV}(A)
  \;=\;
  \sum_t \beta^t G_t(M_t(A))
  \;\le\;
  \sum_t \beta^t G_t(M_t(B))
  \;=\;
  \mathrm{LTV}(B).
\]
\end{proof}

\iparagraph{Examples.}
We first give a one–step illustration of Proposition~\ref{prop:solvency-revenue} and then show how its hypotheses can fail for the literal queue versus capped pro–rata policies.

\iparagraph{Single–period trade-off.}
Consider a single round with two winners and common initial residuals $z_0=(1,1)$.
Let strategy $A$ use aggressive haircuts $M_1(A)=(1,1)$, fully clearing both accounts so that $z_1(A)=(0,0)$.
Let strategy $B$ use milder haircuts $M_1(B)=(1,0)$, clearing only the first account and leaving $z_1(B)=(0,1)$.
Then $z_1(A)\preceq_w z_1(B)$ and $M_1(A)\succeq_w M_1(B)$, so the hypotheses of Proposition~\ref{prop:solvency-revenue} hold with $T=1$.
For a Schur–concave, coordinate‑wise nonincreasing stage value such as
\[
  G_1(M)\;:=\;-\|M\|_2^2,
\]
we obtain $G_1(M_1(A))=-2$ and $G_1(M_1(B))=-1$, so $\mathrm{LTV}(A)\le\mathrm{LTV}(B)$, exactly exhibiting the safety–versus–value trade‑off.

\iparagraph{Queue versus capped pro–rata.}
Now consider the familiar one–round example with two winners, equities $e=(1,1)$, and budget $b=1$.
Under capped pro–rata we have haircuts $h^{\mathrm{PR}}=(\tfrac12,\tfrac12)$, masses $m^{\mathrm{PR}}=(\tfrac12,\tfrac12)$, and residuals $z^{\mathrm{PR}}=(\tfrac12,\tfrac12)$.
Under a queue policy we instead have $h^{\mathrm{Q}}=(1,0)$, so $m^{\mathrm{Q}}=(1,0)$ and $z^{\mathrm{Q}}=(0,1)$.
In weak submajorization order, $z^{\mathrm{PR}}\prec_w z^{\mathrm{Q}}$ and likewise $m^{\mathrm{PR}}\prec_w m^{\mathrm{Q}}$.
Thus no choice of labels $A,B$ can make pro–rata simultaneously satisfy the safety‑dominance condition $z_t(A)\preceq_w z_t(B)$ and the ``more cumulative haircuts'' condition $M_t(A)\succeq_w M_t(B)$ in Proposition~\ref{prop:solvency-revenue}.
Moreover, for any Schur–concave, coordinate‑wise nonincreasing $G$ (for instance $G(M)=-\|M\|_2^2$), we have $G(m^{\mathrm{Q}})\le G(m^{\mathrm{PR}})$, so in this toy case the safer policy (pro–rata) also delivers \emph{higher} user value.
This shows that while Proposition~\ref{prop:solvency-revenue} captures a structural opposing–orders phenomenon, its hypotheses do not hold mechanically for every queue versus capped pro–rata comparison.

\subsection{Regret Analysis}\label{app:regret}
In this appendix, we unify the regret analysis for the severity (scalar) and MDIC (vector) controllers. We present a master theorem for Online Mirror Descent (OMD) with constraints and then specialize it to our specific settings.

\iparagraph{Master mirror descent bound.}
Consider a sequence of convex loss functions $f_t: \mathcal{X} \to \mathbb{R}$ on a convex set $\mathcal{X}$. A learner chooses $x_t \in \mathcal{X}$ and updates using Online Mirror Descent (OMD) with a proximal function (Bregman divergence) $D(\cdot\|\cdot)$.

\begin{theorem}[OMD Regret]\label{thm:master-regret}\label{thm:md-regret-7}
    Let subgradients be bounded by $\|g_t\|_* \le G$ in the dual norm, and let the domain diameter be bounded by $D_{\max}^2 \ge \max_{x} D(x\|x_1)$. With step size $\eta_t = \frac{D_{\max}}{G\sqrt{t}}$, the regret satisfies:
    \[
    \sum_{t=1}^T f_t(x_t) - \min_{x \in \mathcal{X}} \sum_{t=1}^T f_t(x) \;\le\; 2 D_{\max} G \sqrt{T}.
    \]
    When convex constraints $c_t(x) \le 0$ are imposed, applying OMD to the Lagrangian guarantees the same $O(\sqrt{T})$ regret and $O(T^{-1/2})$ average constraint violation \citep{Hazan2019}.
\end{theorem}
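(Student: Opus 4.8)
The plan is to prove the unconstrained bound by the canonical potential-function (Bregman-telescoping) argument for online mirror descent, and then to lift it to the constrained case through a Lagrangian primal--dual reduction. Write $\psi$ for a mirror map that is $1$-strongly convex in the primal norm $\|\cdot\|$, so that $D(\cdot\|\cdot)=D_\psi(\cdot\|\cdot)$ is its Bregman divergence, and let the iterates obey the proximal step $x_{t+1}=\argmin_{x\in\mathcal{X}}\{\eta_t\langle g_t,x\rangle + D_\psi(x\|x_t)\}$ with $g_t\in\partial f_t(x_t)$. First I would record the first-order optimality condition for this step: for every $u\in\mathcal{X}$, $\langle \eta_t g_t + \nabla\psi(x_{t+1})-\nabla\psi(x_t),\,u-x_{t+1}\rangle\ge 0$. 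Applying the three-point identity $\langle\nabla\psi(x_{t+1})-\nabla\psi(x_t),\,u-x_{t+1}\rangle = D_\psi(u\|x_t)-D_\psi(u\|x_{t+1})-D_\psi(x_{t+1}\|x_t)$ turns this into $\eta_t\langle g_t,x_{t+1}-u\rangle\le D_\psi(u\|x_t)-D_\psi(u\|x_{t+1})-D_\psi(x_{t+1}\|x_t)$.

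The next step is to recover the genuine linearized regret $\eta_t\langle g_t, x_t-u\rangle$ by adding $\eta_t\langle g_t,x_t-x_{t+1}\rangle$ to both sides and controlling the resulting drift term $\eta_t\langle g_t,x_t-x_{t+1}\rangle - D_\psi(x_{t+1}\|x_t)$. Using strong convexity, $D_\psi(x_{t+1}\|x_t)\ge \tfrac12\|x_{t+1}-x_t\|^2$, together with the Fenchel--Young inequality $\eta_t\langle g_t,x_t-x_{t+1}\rangle\le \tfrac12\eta_t^2\|g_t\|_*^2 + \tfrac12\|x_t-x_{t+1}\|^2$, bounds the drift by $\tfrac12\eta_t^2 G^2$. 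Convexity of $f_t$ then gives $f_t(x_t)-f_t(u)\le\langle g_t,x_t-u\rangle$, so after dividing by $\eta_t$ and summing over $t$ I would obtain, via Abel summation to absorb the time-varying $1/\eta_t$ weights and the diameter bound $D_\psi(u\|x_t)\le D_{\max}^2$, the standard estimate $\mathrm{Regret}_T\le D_{\max}^2/\eta_T + \tfrac{G^2}{2}\sum_{t=1}^T\eta_t$. Substituting $\eta_t=D_{\max}/(G\sqrt{t})$ and bounding $\sum_{t\le T} t^{-1/2}\le 2\sqrt{T}$ yields the claimed $2D_{\max}G\sqrt{T}$.

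For the constrained case I would pass to the per-round Lagrangian $L_t(x,\lambda)=f_t(x)+\langle\lambda,c_t(x)\rangle$ and run the primal mirror step against the augmented subgradient $g_t+\sum_j\lambda_{t,j}\,\partial c_{t,j}(x_t)$ while updating the multiplier by projected ascent $\lambda_{t+1}=[\lambda_t+\gamma_t c_t(x_t)]_+$ with $\gamma_t=O(t^{-1/2})$. The unconstrained bound applies to the (convex) Lagrangian losses, giving $O(\sqrt{T})$ regret against the Lagrangian, and a saddle-point/comparison argument against $(u,\lambda)$ for a fixed feasible $u$ converts the dual-ascent recursion into a telescoping bound on $\sum_t c_t(x_t)$, whence the $O(T^{-1/2})$ average violation, matching the statement and \citet{Hazan2019}. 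The main obstacle is precisely this coupled primal--dual step: I expect the delicate part to be keeping the dual iterates $\lambda_t$ bounded so that the augmented subgradient norm (and hence the effective $G$ in the primal bound) does not grow, which forces a careful joint choice of $(\eta_t,\gamma_t)$ and a comparator multiplier large enough to penalize violations but small enough to preserve the $\sqrt{T}$ rate; the time-varying-step telescoping is routine by comparison.
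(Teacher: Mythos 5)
The paper provides no proof of this theorem at all—it defers entirely to the reference it cites—and your proposal correctly reconstructs exactly the standard potential/Bregman-telescoping argument contained in that reference: the first-order optimality condition plus three-point identity, the Fenchel--Young control of the drift term $\eta_t\langle g_t,x_t-x_{t+1}\rangle - D_\psi(x_{t+1}\|x_t)\le \tfrac12\eta_t^2G^2$, the Abel summation over time-varying steps giving $D_{\max}^2/\eta_T+\tfrac{G^2}{2}\sum_t\eta_t$, and the substitution $\eta_t=D_{\max}/(G\sqrt{t})$ with $\sum_{t\le T}t^{-1/2}\le 2\sqrt{T}$ yielding the stated constant $2$, together with the standard primal--dual lift (and the correct identification of dual-iterate boundedness as the delicate point) for the constrained extension. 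The one caveat worth recording: with time-varying steps your telescoping requires $D_\psi(u\|x_t)\le D_{\max}^2$ for \emph{every} iterate $x_t$, i.e., a Bregman-diameter bound over the whole feasible set, which is the standard assumption in the literature but is slightly stronger than the theorem's literal hypothesis $D_{\max}^2\ge\max_x D(x\|x_1)$, which bounds divergences only relative to the initial point; either strengthen the hypothesis as you implicitly do, or switch to the fixed step $\eta=D_{\max}/(G\sqrt{T})$, for which the $x_1$-based bound suffices.
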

\noindent We refer the interested reader to the standard reference~\citep{Hazan2019} for a detailed proof.

\iparagraph{Residual value function and subgradients.}
We now describe the residual value function, which is the main subject of mirror descent analysis.
For each round $t$, let $\mathcal{W}_t$ be the set of winners with equities $e_{t,i}>0$, haircut caps $\beta_{t,i}\in[0,1]$, and weights $w_{t,i}\ge 0$.
The residual value function parametrized by a haircut budget $b$ is
\[
\tilde L_t(b)\;=\;\min_{h\in[0,1]^{|W_t|}} \;\sum_{i\in W_t} (1-h_i)\,\lambda_{t,i} e_{t,i}
\quad\text{s.t.}\quad \sum_{i\in W_t} e_{t,i} h_i = b,\;\; 0 \le h_i \le \beta_{t,i}.
\]
Let $\tau_t(b)$ be the KKT multiplier of the budget constraint at $b$.

\begin{lemma}[Convexity and subgradient]\label{lem:convexity}
For each $t$, $b \mapsto \tilde L_t(b)$ is convex, nonincreasing, and piecewise-linear on $[0,\bar b_t]$; any KKT multiplier $-\tau_t(b)\in\partial \tilde L_t(b)$ is a valid subgradient.
Moreover $|\tau_t(b)| \le \max_{i\in W_t} w_{t,i}$.
\end{lemma}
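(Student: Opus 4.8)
The plan is to treat $\tilde L_t$ as the optimal-value function of a parametric linear program in which the budget $b$ enters only as the right-hand side of the single equality constraint, and then to read off all four claims from the explicit fractional-knapsack solution. First I would rewrite the objective as $\sum_{i\in W_t}\lambda_{t,i}e_{t,i}-\sum_{i\in W_t}\lambda_{t,i}e_{t,i}\,h_i$, so that minimizing surviving value is equivalent to maximizing $\sum_i \lambda_{t,i}e_{t,i}\,h_i$ over the polytope $\{h:\sum_i e_{t,i}h_i=b,\ 0\le h_i\le\beta_{t,i}\}$. This is a continuous knapsack: coordinate $i$ consumes budget at coefficient $e_{t,i}$ and reduces surviving value at coefficient $\lambda_{t,i}e_{t,i}$, so its marginal value per unit of budget is the ratio $\lambda_{t,i}e_{t,i}/e_{t,i}=\lambda_{t,i}$. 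The program is feasible exactly for $b\in[0,\bar b_t]$ with $\bar b_t=\sum_i e_{t,i}\beta_{t,i}$, and I would restrict attention to this interval, where $\tilde L_t$ is finite.

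Convexity, monotonicity, and piecewise-linearity I would obtain directly from the greedy solver. Sorting coordinates by decreasing rate $\lambda_{t,(1)}\ge\lambda_{t,(2)}\ge\cdots$, the optimizer fills $h_{(1)}$ to its cap, then $h_{(2)}$, and so on until $b$ is exhausted. Hence on each interval between consecutive cumulative breakpoints $\sum_{j\le k}e_{t,(j)}\beta_{t,(j)}$ the value $\tilde L_t(b)$ is affine with slope $-\lambda_{t,(k)}$; these slopes are nonpositive (so $\tilde L_t$ is nonincreasing, using $\lambda_{t,i}e_{t,i}\ge 0$) and nondecreasing in $b$ (so $\tilde L_t$ is convex and piecewise-linear). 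For a slicker but less explicit route I could instead invoke the standard fact that the optimal value of a convex program is convex in the constraint right-hand side \cite[Ch.~5]{BoydVandenberghe2004} together with basis enumeration, but the explicit knapsack route has the advantage of simultaneously delivering the subgradient formula and the bound.

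For the subgradient claim I would use KKT stationarity. Writing the Lagrangian with multiplier $\tau$ on the budget constraint, stationarity in any strictly interior coordinate ($0<h_i<\beta_{t,i}$) forces $\tau=\lambda_{t,i}$, so the optimal multiplier equals the rate $\lambda_{t,i^\star}$ of the marginal fractional position. On an open linear piece this is exactly $-\partial_b\tilde L_t(b)$, and at a breakpoint joining two pieces the subdifferential is the closed interval $[-\lambda_{t,(k)},-\lambda_{t,(k+1)}]$ between the adjacent slopes, whose endpoints are precisely $-\tau$ for the two extreme optimal multipliers; invoking the convex-program sensitivity identity (a negative optimal dual variable is a subgradient of the value function) then covers smooth and non-smooth $b$ uniformly, giving $-\tau_t(b)\in\partial\tilde L_t(b)$. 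Finally, since every optimal multiplier equals some marginal rate $\lambda_{t,i^\star}\ge 0$ and is bounded above by the largest rate, I obtain $0\le\tau_t(b)\le\max_{i\in W_t}\lambda_{t,i}=\max_{i\in W_t}w_{t,i}$, the stated bound (the appendix's weights $w_{t,i}$ being exactly these per-budget marginal rates).

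The main obstacle I anticipate is the careful treatment at the kink points: I must show that the full set of optimal budget multipliers at a breakpoint sweeps exactly the interval between the left and right slopes, and that degenerate ties in the rates $\lambda_{t,i}$ (which create flat segments in the sorted sequence and hence non-unique optimizers) neither change $\tilde L_t$ nor enlarge $\partial\tilde L_t$ beyond this interval. Establishing the sign of the equality-constraint multiplier ($\tau\ge 0$ rather than merely $\tau\in\reals$) and pinning the finite domain $[0,\bar b_t]$ are the remaining routine points.
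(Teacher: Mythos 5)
Your proof is correct, and at the top level it is the same argument the paper gives: $\tilde L_t$ is the value function of a parametric LP in the right-hand side $b$, so convexity, piecewise linearity, and the dual-variable-as-subgradient property follow from LP sensitivity and KKT theory. Where you differ is in the execution, and your version is actually tighter than the paper's own three-sentence proof. The paper asserts that complementary slackness yields $h_i = \min\{\beta_{t,i}, \tau_t(b)\,w_{t,i}\}$ on active coordinates; that water-filling form is the solution of the RAP program (Algorithm~\ref{alg:rap-water-filling}), not of this linear program, whose optimizers are threshold/bang-bang: coordinates with rate $\lambda_{t,i}>\tau$ sit at their caps, those with $\lambda_{t,i}<\tau$ sit at zero, and only rate-tied coordinates are fractional. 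Your fractional-knapsack construction gets this structure right and simultaneously delivers every claim: slopes $-\lambda_{t,(k)}\le 0$ give the nonincreasing property (which the paper's proof never addresses explicitly), nondecreasing slopes give convexity and piecewise linearity, and identifying $\tau$ with the marginal coordinate's rate gives $0\le\tau_t(b)\le\max_{i}\lambda_{t,i}$. You are also right that the stated bound only makes sense under the reading $w_{t,i}=\lambda_{t,i}$, the ``marginal haircut savings'' of Corollary~\ref{cor:severity-regret}; the paper never pins this identification down, and your making it explicit is a genuine improvement. The one point you flagged as routine but should actually state carefully is the boundary: at $b=0$ the set of valid KKT multipliers is the half-line $\tau\ge\max_i\lambda_{t,i}$, and at $b=\bar b_t$ it is $\tau\le\min_i\lambda_{t,i}$ (possibly negative and unbounded below), so the bound $|\tau_t(b)|\le\max_{i}w_{t,i}$ holds for every multiplier only at interior budgets and must be read as a statement about the extreme (minimal-magnitude) multiplier at the endpoints; this caveat applies equally to the paper's version of the lemma.
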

\begin{proof}
The program is a linear minimization with a right-hand-side parameter $b$.
Sensitivity analysis implies that the optimal value is convex and piecewise-linear in $b$, and that the negative of the budget multiplier is a subgradient.
Complementary slackness yields $h_i = \min\{\beta_{t,i}, \tau_t(b) w_{t,i}\}$ on active coordinates, which bounds $|\tau_t(b)|$ by the maximum weight.
\end{proof}
\noindent Lemma~\ref{lem:convexity} verifies the curvature and bounded-subgradient assumptions required by the master OMD regret theorem (Theorem~\ref{thm:master-regret}) hold, so we can directly plug $\tilde L_t$ into that framework.

\iparagraph{Severity optimization.}
In the main text, the severity controller selects a scalar severity $\theta_t \in [0,\Theta_t]$ that determines the haircut budget via $b_t = \theta_t D_t$.
Optimizing over $b_t$ or $\theta_t$ is equivalent; we keep $b_t$ as the decision variable in the appendix for notational convenience.
The loss is $f_t(b) = \tilde L_t(b)$, which is convex by Lemma~\ref{lem:convexity}.
We use the Euclidean divergence $D(x\|y) = \tfrac12 (x-y)^2$, reducing OMD to projected gradient descent.

\begin{corollary}[Severity Regret]\label{cor:severity-regret}
    Let $G = \max_{t,i} w_{t,i}$ be the bound on subgradients (marginal haircut savings). The severity controller achieves:
\[
    \mathrm{Regret}_T^{(\mathrm{sev})} \;\le\; 2 \bar{b} G \sqrt{T}.
\]
\end{corollary}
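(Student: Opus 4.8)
The plan is to obtain the corollary as a direct specialization of the master Online Mirror Descent bound (Theorem~\ref{thm:master-regret}) to the one-dimensional severity problem, using the Euclidean Bregman divergence $D(x\|y)=\tfrac12(x-y)^2$ so that OMD reduces to projected subgradient descent on an interval. First I would fix the decision domain to be the feasible budget interval $\mathcal{X}=[0,\bar b]\subset\reals$, where $\bar b$ is the uniform upper bound on the per-round budgets $\bar b_t$; this is convex and contains every iterate $b_t$ and the fixed comparator. The convexity hypothesis required by Theorem~\ref{thm:master-regret} is then supplied verbatim by Lemma~\ref{lem:convexity}, which shows each loss $f_t(b)=\tilde L_t(b)$ is convex (indeed piecewise-linear) and identifies $-\tau_t(b)$ as a valid subgradient.

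Next I would discharge the two quantitative inputs of the master theorem. For the subgradient bound, Lemma~\ref{lem:convexity} gives $|\tau_t(b)|\le\max_{i\in W_t} w_{t,i}$; taking the uniform bound over all rounds yields $|g_t|\le\max_{t,i} w_{t,i}=G$, and since the dual norm in one dimension is the absolute value, this is exactly the constant $G$ in the statement. For the domain diameter, I would bound $D(x\|x_1)=\tfrac12(x-x_1)^2\le\tfrac12\bar b^2$ for $x,x_1\in[0,\bar b]$ and take $D_{\max}=\bar b$, which satisfies the requirement $D_{\max}^2\ge\max_x D(x\|x_1)$ and produces the leading constant $2$ (a sharper choice $D_{\max}=\bar b/\sqrt2$ is available but unnecessary). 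Plugging these into Theorem~\ref{thm:master-regret} with step size $\eta_t=D_{\max}/(G\sqrt t)=\bar b/(G\sqrt t)$ gives regret at most $2D_{\max}G\sqrt T=2\bar b G\sqrt T$, which is the claim. Finally, I would remark that since $b_t=\theta_t D_t$ is a bijection in $\theta_t$ for each fixed $D_t$, the regret stated in budget space coincides with the regret against the best fixed severity in hindsight.

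The main obstacle is essentially bookkeeping rather than new analysis: both curvature and the bounded-subgradient condition are already established by Lemma~\ref{lem:convexity}, so the only care required is to confirm that the scalar subgradient estimate matches the dual-norm quantity $G$ used by the master theorem (immediate in one dimension) and that the diameter is chosen so the leading factor is exactly $2$. If the per-round budget caps $\bar b_t$ vary, the one genuinely nontrivial point is to verify that projecting onto the fixed enclosing interval $[0,\bar b]$ (rather than onto each time-varying feasible set) does not inflate the regret, which follows because the comparator and all feasible iterates already lie in $[0,\bar b]$.
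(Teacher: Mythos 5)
Your proof is correct and follows essentially the same route as the paper's: instantiate Theorem~\ref{thm:master-regret} on the interval $[0,\bar b]$ with the Euclidean divergence, use Lemma~\ref{lem:convexity} to supply convexity and the subgradient bound $|g_t|\le G$, and take diameter $D_{\max}=\bar b$ to obtain $2\bar b G\sqrt{T}$. The extra bookkeeping you include (the explicit diameter calculation, the budget-versus-severity bijection, and the handling of time-varying caps $\bar b_t$) is a more careful spelling-out of what the paper leaves implicit, not a different argument.
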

\begin{proof}
    The domain diameter is $\bar{b}$. By Lemma~\ref{lem:convexity}, subgradients correspond to dual variables bounded by the maximum weight $G$. The result follows directly from Theorem~\ref{thm:master-regret}.
\end{proof}

\iparagraph{Haircut optimization (MDIC).}

The MDIC controller optimizes $u_t = (\theta_t, v_t)$, where $\theta_t$ is severity and $v_t$ are ranking weights. We use a block-separable divergence $D = D_\phi \oplus D_\Phi$, where $D_\Phi$ is the KL-divergence for the weights.

\begin{corollary}[MDIC Regret]\label{cor:mdic-regret}
    The MDIC controller achieves regret bounded by:
    \[
    O\left( \sqrt{T D_\phi(\theta^\star\|\theta_1)} \;+\; \sqrt{T D_\Phi(v^\star\|v_1)} \right).
    \]
    Initializing $v_1$ as the uniform distribution over the active winners (so every coordinate is strictly positive) minimizes $D_\Phi(v^\star\|v_1)$ to $O(\log |\mathcal{W}_t|)$, yielding better scaling than Euclidean approaches for high-dimensional weight vectors.
\end{corollary}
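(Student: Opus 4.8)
The plan is to exploit the block-separable structure of the mirror map $D = D_\phi \oplus D_\Phi$ to reduce the vector-valued MDIC update to two independent updates — one scalar step for severity and one simplex step for the ranking weights — and then invoke the master bound of Theorem~\ref{thm:master-regret} on each block separately. First I would verify that the per-round objective inherits convexity and bounded subgradients from Lemma~\ref{lem:convexity}: since the residual value function $\tilde L_t$ is convex and piecewise-linear with budget multiplier bounded by $\max_{i\in\mathcal{W}_t} w_{t,i}$, any subgradient $g_t$ of the composite loss splits as $g_t=(g_t^\theta,g_t^v)$ with each component bounded in the dual norm of its block. Because the Bregman divergence is additive across the two factors, $D_\psi(u\|u_t)=D_\phi(\theta\|\theta_t)+D_\Phi(v\|v_t)$, the penalized primal step $\argmin_{u}\{\langle g_t,u\rangle + \tfrac1{\eta_t}D_\psi(u\|u_t)\}$ decouples into an argmin over $\theta$ using $D_\phi$ and an independent argmin over $v$ using $D_\Phi$.

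Second, I would run the standard three-point/telescoping potential argument underlying Theorem~\ref{thm:master-regret} on each block in parallel. Since the updates are independent, the Bregman regret decomposition holds per block, producing two additive contributions. With block step sizes tuned as $O(t^{-1/2})$ (matching the main-text prescription $\eta_t,\gamma_t\propto 1/\sqrt T$), the severity block contributes $O(\sqrt{T\,D_\phi(\theta^\star\|\theta_1)})$ and the weight block contributes $O(\sqrt{T\,D_\Phi(v^\star\|v_1)})$, whose sum is exactly the claimed bound. The FIC/LIC constraints are absorbed via the primal--dual Lagrangian mechanism already invoked in Theorem~\ref{thm:master-regret}, which preserves the $O(\sqrt T)$ regret rate while guaranteeing $O(T^{-1/2})$ average constraint violation; the dual ascent step on the multipliers $\lambda_t$ contributes only lower-order terms to the primal regret.

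Third, for the initialization claim I would specialize $D_\Phi$ to the KL-divergence (negative-entropy mirror map) on the probability simplex over the active winners and use the identity $\mathrm{KL}(v^\star\|v_1)=\log|\mathcal{W}_t|-H(v^\star)$ when $v_1$ is uniform, where $H$ denotes Shannon entropy. Since $H(v^\star)\ge 0$, this yields $D_\Phi(v^\star\|v_1)\le \log|\mathcal{W}_t|$, giving the $O(\log|\mathcal{W}_t|)$ scaling. This is the familiar logarithmic dimension dependence of entropic mirror descent, exponentially sharper than the Euclidean diameter, which would scale polynomially in $|\mathcal{W}_t|$.

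The main obstacle I anticipate is making the block-separation of the regret bound rigorous rather than merely formal. The master theorem as stated bundles the diameter and subgradient norm into a single product $2 D_{\max}G\sqrt T$, so obtaining the additive two-term form requires re-deriving the potential argument with the direct-sum Bregman divergence and a block-diagonal dual norm $\|g_t\|_*^2=\|g_t^\theta\|_{*,\phi}^2+\|g_t^v\|_{*,\Phi}^2$, then optimizing the two step sizes independently. Care is also needed because the active winner set $\mathcal{W}_t$ — and hence the simplex dimension — changes across rounds; I would handle this either by embedding all rounds in a common maximal simplex or by reinitializing $v_t$ to uniform on the current active set and absorbing the reset cost into the $\log|\mathcal{W}_t|$ term.
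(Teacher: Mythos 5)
Your proposal is correct and follows essentially the same route as the paper: the main bound is obtained by applying the master OMD theorem (Theorem~\ref{thm:master-regret}) blockwise under the direct-sum divergence $D_\phi \oplus D_\Phi$, and the $O(\log|\mathcal{W}_t|)$ claim is proved by the identical entropy identity $D_\Phi(v^\star\|v_1)=\log|\mathcal{W}_t|-H(v^\star)\le\log|\mathcal{W}_t|$ under uniform initialization. In fact you are somewhat more careful than the paper, which only writes out the KL step and leaves the two-term additive decomposition (separate step-size tuning per block, block-diagonal dual norm, and the changing active set $\mathcal{W}_t$) implicit.
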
 
\begin{proof}[Proof of the $\mathcal{O}(\log |\mathcal{W}_t|)$ claim]
The divergence $D_\Phi$ induced by the entropy mirror map equals the Kullback--Leibler divergence:
\[
D_\Phi(v^\star\|v_1)=\sum_{i\in W_t} v^\star_i \log\frac{v^\star_i}{v_{1,i}}.
\]
Uniform initialization over active winners sets $v_{1,i}=1/|\mathcal{W}_t|$.
Hence
\[
D_\Phi(v^\star\|v_1)=\sum_{i} v^\star_i \log v^\star_i + \log |\mathcal{W}_t|.
\]
The first term is the negative Shannon entropy of $v^\star$ and is therefore non-positive, implying $D_\Phi(v^\star\|v_1)\le \log |W_t|$.
Thus the initialization costs at most $O(\log |\mathcal{W}_t|)$.
\end{proof}

\iparagraph{Linear regret for static policies.}
The following toy instance shows that any static severity $\bar\theta$ can incur $\Omega(T)$ regret. Consider two regimes $A$ (liquidity) and $B$ (stress) alternating every round, with loss
\[
  f_t(\theta) = \begin{cases}
    \theta & t \in A \quad (\text{minimize severity to protect revenue}),\\
    \bar{\theta} - \theta & t \in B \quad (\text{maximize severity to clear deficits}).
  \end{cases}
\]
The per-round optimizer is $\theta^\star_t = 0$ in $A$ and $\theta^\star_t = \bar\theta$ in $B$, so
\[
\sum_{t=1}^T f_t(\theta) - f_t(\theta^\star_t) = \tfrac{T}{2}\theta + \tfrac{T}{2}(\bar\theta-\theta) = \tfrac{T}{2}\bar\theta = \Omega(T)
\]
for any fixed $\theta \in [0,\bar\theta]$.
By Theorem~\ref{thm:master-regret}, OMD (and MDIC) with $\eta_t\propto 1/\sqrt{T}$ attains $O(\sqrt{T})$ regret on the same sequence.

\subsection{Stackelberg vs.\ Nash in a Two-Round ADL Game}\label{app:stack-nash}

This appendix illustrates how the timing of moves affects equilibrium selection in ADL scenarios. We show that while simultaneous moves (Nash) can result in a coordination failure where no one unwinds, sequential moves (Stackelberg) allow a leader to induce the efficient high-unwind outcome.

\iparagraph{Setup.}
Consider two agents $i \in \{1, 2\}$, each holding one unit of position.
At $t=1$, each agent chooses whether to \emph{unwind} (voluntarily close) their position ($x_i=1$) or maintain it ($x_i=0$).
Let $X = x_1 + x_2$ be the total volume of voluntary unwinds.
We assume that forced ADL occurs at $t=2$ if this volume is insufficient, i.e., if $X < T$ for some safety threshold $T \in (1, 2]$.
Voluntary unwinding incurs a transaction cost $f > 0$.
However, if ADL is triggered (because $X < T$), \emph{every} agent suffers an additional penalty cost $c > f$, regardless of their individual choice.

\iparagraph{Simultaneous play (Nash).}
In simultaneous play, there are two pure Nash equilibria and we effectively have a Coordination Game~\cite{FudenbergTirole1991}:
\begin{itemize}
    \item \emph{Coordination failure $(0,0)$:} If the opponent plays $0$, playing $1$ results in volume $1 < T$. ADL still triggers, yielding a total cost $f+c > c$. Thus, the best response is $0$, making $(0,0)$ stable.
    \item \emph{Coordination success $(1,1)$:} If the opponent plays $1$, playing $1$ achieves volume $2 \ge T$ at cost $f$. Since $f < c$, this is preferred to playing $0$ (which yields volume $1 < T$ and cost $c$), making $(1,1)$ stable.
\end{itemize}
This creates a coordination problem: agents may get stuck in the inefficient $(0,0)$ equilibrium where ADL triggers.

\begin{proposition}[Stackelberg Dominance]\label{prop:stack-nash}
In sequential play where Agent 1 moves first, the unique subgame‑perfect equilibrium is $(1,1)$. Agent 2 observes Agent 1 and will match their action (per the logic above). Agent 1 anticipates this: playing $0$ leads to $(0,0)$ with cost $c$, while playing $1$ induces $(1,1)$ with cost $f$. Since $f < c$, Agent 1 chooses $1$, eliminating the bad equilibrium.
\end{proposition}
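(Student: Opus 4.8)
The plan is to prove this by backward induction (subgame perfection), the canonical solution concept for a finite sequential game of perfect information. First I would make the cost structure fully explicit. Writing the total cost to agent $i$ under action profile $(x_1,x_2)$ as $C_i(x_1,x_2) = f\,x_i + c\,\mathbf{1}\{x_1+x_2 < T\}$, I would observe that since $T\in(1,2]$ and $X=x_1+x_2\in\{0,1,2\}$, the trigger event $\{X<T\}$ is exactly $\{X\le 1\}$; that is, ADL fires unless both agents unwind. This reduces the four profiles to the cost table $C_i(0,0)=c$, $C_1(1,0)=f+c$ with $C_2(1,0)=c$, the symmetric values for $(0,1)$, and $C_i(1,1)=f$.

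Next I would solve the follower's (Agent 2's) two subgames. If Agent 1 has played $0$, then Agent 2 faces cost $c$ by playing $0$ (profile $(0,0)$, ADL fires) versus $f+c$ by playing $1$ (profile $(0,1)$, $X=1<T$ so ADL still fires); since $f>0$, the strict best response is $0$. If Agent 1 has played $1$, Agent 2 faces cost $c$ by playing $0$ (profile $(1,0)$, ADL fires) versus $f$ by playing $1$ (profile $(1,1)$, $X=2\ge T$, no ADL); since $f<c$, the strict best response is $1$. Both subgame responses are strict, so the follower's optimal continuation is unique in each branch: match Agent 1's action.

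Finally I would fold back to the leader. Anticipating the unique follower responses above, Agent 1's induced cost is $c$ from playing $0$ (leading to $(0,0)$) and $f$ from playing $1$ (leading to $(1,1)$). Because $f<c$ is assumed, Agent 1 strictly prefers $1$, so the unique subgame-perfect equilibrium is $(1,1)$, with on-path cost $f$ to each agent, strictly below the coordination-failure cost $c$ of the $(0,0)$ Nash profile.

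There is no genuine analytical obstacle here; the only points requiring care are bookkeeping ones. The first is the threshold discretization: I must confirm that $T\in(1,2]$ makes $\{X<T\}=\{X\le 1\}$ exactly, so that a unilateral unwind to $X=1$ does not avert ADL. This is precisely what breaks the naive intuition that unilateral unwinding helps, and it is the feature that sustains the bad $(0,0)$ equilibrium in the simultaneous game. The second is establishing uniqueness rather than mere existence: because $f>0$ and $c>f$ are strict, every best response in the backward induction is strict, so no indifference ever arises and the equilibrium path is pinned down uniquely. I would close by remarking that this is exactly the margin on which the Stackelberg timing differs from Nash play analyzed just above, where the strictness argument cannot be propagated across agents who move without observing one another, leaving $(0,0)$ as a second, inefficient Nash equilibrium.
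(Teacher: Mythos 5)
Your proof is correct and takes essentially the same route as the paper, which embeds its argument directly in the proposition statement: backward induction where Agent 2's unique best response is to match Agent 1's action, and Agent 1, anticipating this, plays $1$ since $f<c$. Your additional bookkeeping---the explicit cost table $C_i(x_1,x_2)=f\,x_i+c\,\mathbf{1}\{x_1+x_2<T\}$, the observation that $T\in(1,2]$ makes $\{X<T\}=\{X\le 1\}$, and the strictness argument pinning down uniqueness---fills in details the paper leaves implicit (deferring to ``the logic above'' from its Nash analysis) but does not change the approach.
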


\iparagraph{Numerical example.}
Let $f=1$, $c=5$, and $T=1.5$.
In simultaneous play, $(0,0)$ is stable because deviating costs $1+5=6 > 5$. In Stackelberg play, the leader plays $1$, knowing the follower will respond with $1$ (cost $1$) rather than triggering ADL (cost $5$). From the exchange's perspective, if ADL losses exceed $2f$, the sequential outcome $(1,1)$ is strictly preferred as it collects fees and preserves solvency.

\subsection{Follower Strategic Responses}\label{app:follower-robustness}
We study two types of follower strategic responses.
First, we show that pro-rata haircuts can lead to low leverage users responding by leaving an exchange earlier than higher leverage users.
This imposes a negative feedback loop as the exchange's remaining users are higher risk.
Secondly, we study traders who aim to add liquidity to the exchange to cover a deficit.
These traders are speculating on profits that can be made after an ADL event.
We show that such users are incentivized to wait long than the exchange solvency time.

\subsubsection{Adverse Selection Under Pro-Rata}\label{app:follower-adverse}
We model the ADL interaction as a repeated Stackelberg game: in every round $t$ the exchange (leader) moves first and the surviving winners (followers) best respond.
Each stage looks like a one-round Stackelberg problem, but the outcomes feed into subsequent rounds through the evolving of winner equities $e_{T,i}$ and the winning set $\mathcal{W}_t$.
At the start of round $t$ the exchange publicly commits to a severity/haircut rule (\eg~queue, pro-rata, RAP) that maps the realized deficit $D_t$ and winner book $\mathcal{W}_t$ to haircut shares.
After observing $\theta_t$ and anticipating $D_t$, each winner chooses whether to keep its position active (accepting the induced haircut) or to exit/migrate, receiving outside option $u_0$.
Payoffs are $U_i^{(\pi)}=\mu_i-\Expect[H_{t,i}^{(\pi)}]$; type $i$ exits whenever this falls below $u_0$.

A \emph{death spiral} occurs when pro-rata haircuts force the safest (low-leverage) winners to churn first, shrinking $W_t$, which in turn raises the future haircut share for the remaining winners, triggering additional exits and further eroding liquidity.
We show that RAP breaks this feedback loop by tilting the follower game against high-leverage accounts.

\iparagraph{Setup.}
Let $i$ index a profitable trader with effective equity $e_{t,i}$, leverage $\lambda_{t,i}$, and expected per-round utility $\mu_i$.
Normalize severity and deficits by $\bar\theta=\Expect[\theta_t]$, $\bar D=\Expect[D_t]$, and write $\bar W=\Expect[W_t]$ for the expected equity mass of winners.
Under pro-rata, the haircut share of $i$ is $s^{\mathrm{PR}}_{t,i}=e_{t,i}/W_t$, so the realized haircut mass is
\[
H^{\mathrm{PR}}_{t,i}=\theta_t D_t\, s^{\mathrm{PR}}_{t,i}.
\]
For RAP with a nondecreasing weight $g$, the share becomes
\[
s^{\mathrm{RAP}}_{t,i}=\frac{e_{t,i}\,\lambda_{t,i} g(\lambda_{t,i})}{\sum_{j\in W_t} e_{t,j}\,\lambda_{t,j} g(\lambda_{t,j})},
\qquad
H^{\mathrm{RAP}}_{t,i}=\theta_t D_t\, s^{\mathrm{RAP}}_{t,i}.
\]
\iparagraph{Risk-intensity comparator.}
Define the equity‑weighted market risk intensity at round $t$ by
\[
\mu^{(g)}_t\ :=\ \frac{\sum_{j\in W_t} e_{t,j}\,\lambda_{t,j} g(\lambda_{t,j})}{\sum_{j\in W_t} e_{t,j}}.
\]
\begin{lemma}[When RAP burdens a trader less than Pro‑Rata]\label{lem:rap-vs-pr-share}
For any winner $i$,
\[
s^{\mathrm{RAP}}_{t,i}\ \le\ s^{\mathrm{PR}}_{t,i}
\quad\Longleftrightarrow\quad
\lambda_{t,i}\,g(\lambda_{t,i})\ \le\ \mu^{(g)}_t.
\]
If $g$ is strictly increasing and $\lambda_{t,i}$ is strictly below the equity‑weighted market average in the $g$‑scale, the inequality is strict.
\end{lemma}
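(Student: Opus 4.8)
The plan is to collapse the claimed equivalence to a single cross-multiplication after naming the two normalizing denominators that define the shares. First I would set $E_t := \sum_{j\in\mathcal{W}_t} e_{t,j}$ for the total winner equity (the pro-rata normalizer) and $S_t := \sum_{j\in\mathcal{W}_t} e_{t,j}\,\lambda_{t,j} g(\lambda_{t,j})$ for the total risk-weighted winner equity (the RAP normalizer). With these the shares read $s^{\mathrm{PR}}_{t,i} = e_{t,i}/E_t$ and $s^{\mathrm{RAP}}_{t,i} = e_{t,i}\,\lambda_{t,i} g(\lambda_{t,i})/S_t$, and the comparator in the definition of the risk intensity is exactly the ratio of the two normalizers, $\mu^{(g)}_t = S_t/E_t$. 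The only nondegeneracy required is $E_t, S_t > 0$, which holds whenever at least one winner carries positive $g$-scale risk $\lambda_{t,j} g(\lambda_{t,j})$; this is the standing assumption that makes the RAP share well-defined.

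The core step is then a short equivalence chain. Since $i$ is a winner we have $e_{t,i} > 0$, so $s^{\mathrm{RAP}}_{t,i} \le s^{\mathrm{PR}}_{t,i}$ is equivalent, after dividing through by $e_{t,i}$, to $\lambda_{t,i} g(\lambda_{t,i})/S_t \le 1/E_t$. Multiplying both sides by the strictly positive quantity $S_t E_t$ and substituting $\mu^{(g)}_t = S_t/E_t$ gives $\lambda_{t,i} g(\lambda_{t,i}) \le \mu^{(g)}_t$, the right-hand side of the lemma. Each operation --- division by $e_{t,i}>0$ and multiplication by $S_t E_t>0$ --- is reversible and order-preserving, so the chain runs in both directions and yields the claimed $\Longleftrightarrow$. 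The same chain shows that a strict comparator inequality $\lambda_{t,i} g(\lambda_{t,i}) < \mu^{(g)}_t$ propagates to $s^{\mathrm{RAP}}_{t,i} < s^{\mathrm{PR}}_{t,i}$, because no step that multiplies or divides by a strictly positive number can close a strict gap.

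It then remains to reconcile the phrase ``strictly below the equity-weighted market average in the $g$-scale'' with this comparator. I would read the $g$-scale risk of a trader as $r_j := \lambda_{t,j} g(\lambda_{t,j})$, so that $\mu^{(g)}_t$ is precisely the equity-weighted mean of $\{r_j\}$; the hypothesis is then $r_i < \mu^{(g)}_t$, which the chain above converts directly into the strict share inequality. The strict-monotonicity hypothesis on $g$ does the interpretive work: for the nonnegative risk models in use (linear, power, and the CVaR surrogate), $g$ strictly increasing makes $\lambda \mapsto \lambda g(\lambda)$ strictly increasing on the leverage domain, so ranking a trader ``below average in the $g$-scale'' is unambiguous and coincides with $r_i < \mu^{(g)}_t$. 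I expect no substantive obstacle here: the entire argument is a positivity check on the two denominators plus one cross-multiplication, and the only care needed is recording that the RAP normalizer $S_t$ is strictly positive, so that the division and cross-multiplication are valid and preserve the inequality direction.
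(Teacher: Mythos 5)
Your proposal is correct and is essentially the paper's own argument: the paper simply computes the ratio $s^{\mathrm{RAP}}_{t,i}/s^{\mathrm{PR}}_{t,i} = \lambda_{t,i}g(\lambda_{t,i})/\mu^{(g)}_t$ and reads off the equivalence, which is the same cross-multiplication you perform after naming the two normalizers $E_t$ and $S_t$. Your extra bookkeeping (positivity of $e_{t,i}$, $E_t$, $S_t$, and the propagation of strict inequality) makes explicit what the paper leaves implicit, but the route is identical.
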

\begin{proof}
Compute
\[
\frac{s^{\mathrm{RAP}}_{t,i}}{s^{\mathrm{PR}}_{t,i}}
\ =\
\frac{e_{t,i}\lambda_{t,i} g(\lambda_{t,i})/\sum_j e_{t,j}\lambda_{t,j} g(\lambda_{t,j})}{e_{t,i}/\sum_j e_{t,j}}
\ =\
\frac{\lambda_{t,i} g(\lambda_{t,i})}{\mu^{(g)}_t}.
\]
The claim follows immediately.
\end{proof}

\iparagraph{Participation thresholds.}
Each trader type $i$ has a reservation utility $u_0$: the expected per-round payoff it can secure outside the ADL venue (\eg~by migrating flow to another exchange, posting liquidity in a different product, or simply investing idle cash in a risk-free instrument).
We treat $u_0>0$ as exogenous and, unless stated otherwise, common across trader types.
We note that heterogeneity in reservation utility can be captured by indexing it as $u_{0,i}$ without changing the argument.

A trader remains active only if the ADL-adjusted payoff exceeds this fallback value.
We formalize this by defining the net utility under policy $\pi$ as
\[
U_i^{(\pi)} := \mu_i - \Expect[H_{t,i}^{(\pi)}],
\]
and saying that $i$ participates in a given regime iff $U_i^{(\pi)} \ge u_0$.
Equivalently, the \emph{participation threshold} is the maximum haircut burden $\Expect[H_{t,i}^{(\pi)}]$ that keeps $i$ indifferent, namely $\mu_i-u_0$.
\begin{corollary}[Pro‑Rata death spiral vs.\ RAP retention]\label{cor:death-spiral}
Fix a type $i$ and suppose there is a set of rounds of positive probability on which $\lambda_{t,i} g(\lambda_{t,i})\le \mu^{(g)}_t$ (so $s^{\mathrm{RAP}}_{t,i}\le s^{\mathrm{PR}}_{t,i}$ by Lemma~\ref{lem:rap-vs-pr-share}).
If, over the same distribution of rounds,
\[
\mu_i - \Expect\big[\theta_t D_t\, s^{\mathrm{PR}}_{t,i}\big]\ <\ u_0
\ \ \le\ \
\mu_i - \Expect\big[\theta_t D_t\, s^{\mathrm{RAP}}_{t,i}\big],
\]
then the participation constraint fails under pro‑rata but holds under RAP: type $i$ exits in the pro‑rata regime while remaining under RAP.
\end{corollary}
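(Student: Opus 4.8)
The plan is to treat this corollary as a direct unpacking of the participation rule, with Lemma~\ref{lem:rap-vs-pr-share} supplying the consistency check that makes the assumed chain of inequalities non-vacuous. First I would fix the type $i$ and the positive-probability event $E=\{\lambda_{t,i}\,g(\lambda_{t,i})\le \mu^{(g)}_t\}$ from the hypothesis. On $E$, Lemma~\ref{lem:rap-vs-pr-share} gives the pointwise share ordering $s^{\mathrm{RAP}}_{t,i}\le s^{\mathrm{PR}}_{t,i}$. Since the common multiplier $\theta_t D_t\ge 0$, multiplying through yields $H^{\mathrm{RAP}}_{t,i}\le H^{\mathrm{PR}}_{t,i}$ on $E$; that is, RAP charges this below-average-risk trader a weakly smaller haircut mass than pro-rata in exactly the states singled out by the hypothesis.

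Next I would take expectations. Under the distribution of rounds for which the hypothesis is posed, monotonicity of expectation propagates the ordering to $\Expect[H^{\mathrm{RAP}}_{t,i}]\le \Expect[H^{\mathrm{PR}}_{t,i}]$, hence $U_i^{(\mathrm{RAP})}=\mu_i-\Expect[H^{\mathrm{RAP}}_{t,i}]\ge \mu_i-\Expect[H^{\mathrm{PR}}_{t,i}]=U_i^{(\mathrm{PR})}$. This is the substantive step: it certifies that the target window $(\,\mu_i-\Expect[H^{\mathrm{PR}}_{t,i}],\ \mu_i-\Expect[H^{\mathrm{RAP}}_{t,i}]\,]$ into which $u_0$ is assumed to fall is a genuine interval, so the chained inequality in the statement is satisfiable rather than empty.

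The conclusion then follows by reading the two halves of the assumed inequality against the participation criterion ``type $i$ stays iff $U_i^{(\pi)}\ge u_0$.'' The left strict inequality $\mu_i-\Expect[H^{\mathrm{PR}}_{t,i}]<u_0$ says $U_i^{(\mathrm{PR})}<u_0$, so the pro-rata participation constraint fails and type $i$ exits; the right weak inequality $u_0\le \mu_i-\Expect[H^{\mathrm{RAP}}_{t,i}]$ says $U_i^{(\mathrm{RAP})}\ge u_0$, so the RAP constraint holds and type $i$ remains. This is precisely the claimed exit-versus-retention dichotomy.

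The main obstacle is not mathematical depth but the bookkeeping around when $\Expect[H^{\mathrm{RAP}}_{t,i}]\le \Expect[H^{\mathrm{PR}}_{t,i}]$ truly holds: Lemma~\ref{lem:rap-vs-pr-share} reverses sign on the complementary event $E^c=\{\lambda_{t,i}\,g(\lambda_{t,i})>\mu^{(g)}_t\}$, where RAP charges \emph{more}. I would therefore state explicitly that the corollary's hypothesis assumes the expected-haircut ordering directly (equivalently, that $E$ dominates $E^c$ in the expected contribution $\Expect[\theta_t D_t\,s_{t,i}]$), and record the clean sufficient condition $\Prob(E)=1$ — a trader whose risk weight stays below the equity-weighted market average almost surely — under which the interval is automatically nonempty and no further assumption is needed. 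I would close by noting that this single-round dichotomy is what seeds the death spiral of the surrounding discussion: the low-risk exits predicted here shrink $\mathcal{W}_t$ and raise $\mu^{(g)}_t$ for the residual pool, tightening the pro-rata participation constraint for the next-safest cohort.
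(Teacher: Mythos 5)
Your proof is correct and takes essentially the same route as the paper, which states the corollary without a separate proof precisely because it is an immediate unpacking of the participation rule $U_i^{(\pi)}=\mu_i-\Expect[H_{t,i}^{(\pi)}]\ge u_0$ against the assumed chained inequality, with Lemma~\ref{lem:rap-vs-pr-share} only guaranteeing non-vacuity. Your closing remark that the share ordering can reverse on the complementary event $\{\lambda_{t,i}\,g(\lambda_{t,i})>\mu^{(g)}_t\}$, so that the expected-haircut ordering is really part of the hypothesis rather than a consequence of positive probability alone, is a careful reading that the paper leaves implicit.
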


\iparagraph{Examples and calibration.}
Consider two winners, $L$ (low leverage) and $H$ (high leverage), who trade for two rounds.
Equities are $(e_L,e_H)=(60,40)$, leverage levels are $(\lambda_L,\lambda_H)=(2,6)$, expected per-round utilities are $(\mu_L,\mu_H)=(12,40)$, and the outside option is $u_0=2$.
Each round the deficit equals the total haircut budget ($\theta_t=1$) with $D_1=40$ and $D_2=30$.

\emph{Pro-rata.}
Shares equal equity weights: $s_i^{\mathrm{PR}}=e_i/(e_L+e_H)$, so $s^{\mathrm{PR}}=(0.6,0.4)$ even though both traders lose the \emph{same haircut factor} $h^{\mathrm{PR}}=H_i^{\mathrm{PR}}/e_i=\theta_t D_t/W_t=0.4$.
Round~1 haircut masses are $H^{\mathrm{PR}}=(24,16)$ and utilities are $U^{\mathrm{PR}}=(\!-12,24)$.
Trader $L$ churns because $U_L^{\mathrm{PR}}<u_0$, leaving only $H$ for round~2.
With $W_2=40$ the next deficit $D_2=30$ forces $H_{2,H}^{\mathrm{PR}}=30$ (i.e., haircut factor $0.75$), giving $U_{2,H}^{\mathrm{PR}}=10$; solvency is preserved but the equity base has already halved, so any larger $D_2$ would wipe out the last winner, illustrating the death spiral.

\emph{RAP with $g(\lambda)=\lambda$.}
Weights scale as $e_i\lambda_i^2$, so $L$'s share collapses to $s_L^{\mathrm{RAP}}=240/(240+1440)\approx0.14$ and $H$'s share rises to $0.86$.
The round~1 haircut factors become $h_L^{\mathrm{RAP}}\approx0.095$ and $h_H^{\mathrm{RAP}}\approx0.86$ (masses $H^{\mathrm{RAP}}=(5.7,34.3)$), yielding $U^{\mathrm{RAP}}=(6.3,5.7)>u_0$, so both types remain for round~2.
With both accounts active in round~2, shares remain tilted toward $H$ and haircuts $(4.3,25.7)$ keep both traders above $u_0$, preventing churn and stabilizing $W_t$.
This concrete two-player, two-round example mirrors the equilibria predicted by Appendix~\ref{app:mh-example} and the empirical replay in \S\ref{sec:numerics}: pro-rata drives the safest capital away first, whereas RAP reallocates the burden toward high-leverage accounts and keeps the equity base intact.
\subsubsection{Waiting Game and MDIC-NW}\label{app:follower-wait}

\iparagraph{Game.}
When an ADL event creates a deficit $D_t$, the exchange moves first: it announces the contemporaneous severity $\theta_t$ (hence the haircut budget $\theta_t D_t$) together with the liquidity premium $\kappa_t$ it is willing to pay per unit of external capital.\footnote{In practice, this occurs via either an increase in the expected payment for users who stake to an insurance fund (\eg~\cite{DriftInsuranceFund}) or add assets to an HLP/LLP style vault~\cite{HyperliquidHLPVaults,LighterWhitepaper}}
A Backstop Liquidity Provider (BLP) then decides whether to intervene immediately by injecting any $q_t\in[0,D_t]$ units, or to wait $u\ge 1$ additional rounds before posting the same quantity.
Waiting exposes the BLP to time discounting (with a factor $\beta^u$) and to the future premium schedule $(\theta_{t+u},\kappa_{t+u})$.
The BLP’s payoff from intervening in round $t+u$ with size $q$ is $\beta^u q(\theta_{t+u}-\kappa_{t+u})$, while the exchange’s objective is to restore solvency before a deadline by ensuring that the entire deficit is filled (so delay harms solvency).

This Stackelberg game involves the exchange posting incentives as leader and BLPs optimally choose a stopping time and quantity as a follower.
The setup captures the ``waiting game’’ intuition: unless the contract guarantees non-negative per-unit surplus right now, rational liquidity providers will defer, pushing resolution beyond the solvency window.
We note that this resembles other waiting games in Maximal Extractable Value (MEV) auctions that have been studied empirically on live systems~\cite{messias2025express,mamageishvili2025timeboost}.

\iparagraph{Per‑unit surplus and waiting.}
Let a Backstop Liquidity Provider (BLP) be able to absorb up to $D_t$ units at time $t$.
Write the per‑unit liquidity premium as $\kappa_t\ge 0$ (execution cost plus risk per unit, e.g., $\kappa_t=\Gamma_t/D_t$ when the premium is linear in size) and define the per‑unit net surplus
\[
\delta_t\ :=\ \theta_t - \kappa_t.
\]
If the BLP executes $q\in[0,D_t]$ units at time $t$, the immediate surplus is $q\,\delta_t$; deferring the same $q$ to a later time $t+u$ yields discounted surplus $\beta^u q\,\delta_{t+u}$.

\iparagraph{No-wait condition.}
We say the exchange enforces a per‑unit ``No‑Wait'' constraint when $\delta_t\ge 0$ (equivalently, $\theta_t D_t \ge \Gamma_t$ in the linear‑premium case).

\begin{lemma}[Per‑unit No‑Wait implies immediate action]\label{lem:no-wait}
Suppose $\beta\in(0,1]$, the exchange enforces $\delta_t\ge 0$, and the net per‑unit surplus is nonincreasing over time: $\delta_{t+u}\le \delta_t$ for all $u\ge 0$ (e.g., when $\theta_{t+u}\le \theta_t$ and $\kappa_{t+u}\ge \kappa_t$).
Then for any $q\in[0,D_t]$, executing $q$ immediately at $t$ weakly dominates waiting:
\[
q\,\delta_t\ \ge\ \beta^u\,q\,\delta_{t+u}\qquad\forall\,u\ge 0.
\]
In particular, the BLP’s optimal stopping time is $\tau^\star=t$ and, if capacity allows, $q_t^\star=D_t$.
\end{lemma}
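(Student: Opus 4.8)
The plan is to reduce the claimed inequality $q\,\delta_t \ge \beta^u q\,\delta_{t+u}$ to the scalar comparison $\delta_t \ge \beta^u \delta_{t+u}$ by factoring out the nonnegative quantity $q \ge 0$, and then to prove that scalar inequality directly. First I would record the elementary fact that since $\beta \in (0,1]$ and $u \ge 0$, the discount factor satisfies $\beta^u \in (0,1]$, so discounting can only shrink (never amplify) the magnitude of a nonnegative future surplus.

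The core step is a two-line case split on the sign of the future net surplus $\delta_{t+u}$. If $\delta_{t+u} \ge 0$, then $\beta^u \delta_{t+u} \le \delta_{t+u} \le \delta_t$, where the first inequality uses $\beta^u \le 1$ together with $\delta_{t+u} \ge 0$, and the second is exactly the nonincreasing hypothesis $\delta_{t+u} \le \delta_t$. If instead $\delta_{t+u} < 0$, then $\beta^u \delta_{t+u} < 0 \le \delta_t$, using $\beta^u > 0$ and the enforced constraint $\delta_t \ge 0$. In both cases $\beta^u \delta_{t+u} \le \delta_t$, and multiplying through by $q \ge 0$ yields the dominance inequality $\beta^u q\,\delta_{t+u} \le q\,\delta_t$ for every $q \in [0, D_t]$ and every $u \ge 0$, as required.

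For the ``in particular'' clause, I would argue optimality of $(\tau^\star, q_t^\star) = (t, D_t)$ in two parts, matching the per-fixed-$q$ framing of the statement. Since the immediate surplus $q \mapsto q\,\delta_t$ is nondecreasing in $q$ (because $\delta_t \ge 0$), the BLP maximizes its time-$t$ payoff by posting the full feasible quantity $q_t^\star = D_t$. Since the dominance inequality shows that executing any fixed quantity now weakly beats executing it after any delay $u \ge 1$, no later stopping time can strictly improve on acting at $t$, so $\tau^\star = t$ is optimal. I would flag the degenerate case $\delta_t = 0$ explicitly: there the BLP is indifferent across quantities and delays, but $(t, D_t)$ remains a weakly optimal choice, consistent with the ``if capacity allows'' qualifier.

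I expect essentially no obstacle here, as this is an elementary discounting-plus-monotonicity argument. The single point requiring care is that the enforcement $\delta_t \ge 0$ is imposed only at the decision round and is \emph{not} assumed for future rounds, so one cannot bound $\beta^u \delta_{t+u}$ by $\delta_{t+u}$ without first knowing the sign of $\delta_{t+u}$; the two-case split above is precisely what resolves this, and it is the only place where the hypotheses $\delta_t \ge 0$ and $\delta_{t+u} \le \delta_t$ are both genuinely used.
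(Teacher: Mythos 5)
Your proof is correct and takes essentially the same route as the paper's: factor out $q\ge 0$ and reduce to the scalar inequality $\beta^u\delta_{t+u}\le\delta_t$ via $\beta^u\le 1$ and the monotone-surplus hypothesis, then get $q_t^\star=D_t$ from $\delta_t\ge 0$. Your explicit case split on the sign of $\delta_{t+u}$ is in fact more careful than the paper's one-line justification, which silently needs $\delta_t\ge 0$ precisely in the case $\delta_{t+u}<0$ that you isolate.
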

\begin{proof}
By assumption, $\delta_{t+u}\le \delta_t$ and $\beta^u\le 1$ for all $u\ge 0$, hence $\beta^u q\,\delta_{t+u}\le q\,\delta_t$ for any fixed $q\in[0,D_t]$.
Summing over an optimal decomposition of $D_t$ into infinitesimal units yields the stated dominance and the immediate‑execution optimality.
\end{proof}

\begin{corollary}[No‑Wait bounds solvency recovery]\label{cor:nowait-solv}
Let $\tau_{\mathrm{def}}$ denote the default time and recall the solvency recovery clock $\tau_{\mathrm{solv}}$ from \S\ref{subsec:opposite-orders}.
If the exchange enforces $\delta_t\ge 0$ for all rounds between $\tau_{\mathrm{def}}$ and the first round in which the deficit is zero, then $\tau_{\mathrm{solv}}=\tau_{\mathrm{def}}+1$.
Hence solvency is restored within a single round of the default event.
\end{corollary}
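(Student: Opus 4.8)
The plan is to derive the corollary as a short consequence of Lemma~\ref{lem:no-wait}, whose hypotheses the corollary inherits, and then to do the deficit bookkeeping across one round boundary. The only substantive modelling fact I need is already built into the waiting game: a BLP may inject any $q_t\in[0,D_t]$, so capacity to clear the contemporaneous deficit in full is always available. First I would instantiate Lemma~\ref{lem:no-wait} at $t=\tau_{\mathrm{def}}$. Since the exchange enforces $\delta_{\tau_{\mathrm{def}}}\ge 0$ (and, inheriting the lemma, the per-unit surplus is nonincreasing so that deferral is never strictly profitable), immediate execution weakly dominates waiting, and the BLP's optimal response is $\tau^\star=\tau_{\mathrm{def}}$ with optimal size $q^\star=D_{\tau_{\mathrm{def}}}$.

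Next I would propagate this action through the residual-shortfall recursion. Under the state-update convention used throughout the paper (the state at $t+1$ records actions taken during round $t$, exactly as in the insurance-fund recursion~\eqref{eq:if-process}), injecting $q^\star=D_{\tau_{\mathrm{def}}}$ during round $\tau_{\mathrm{def}}$ drives the start-of-round residual to $(D_{\tau_{\mathrm{def}}}-q^\star)_+=0$ at time $\tau_{\mathrm{def}}+1$, where the residual is as in~\eqref{eq:adl-residual}. Hence the first deficit-free round is precisely $\tau_{\mathrm{def}}+1$, giving $\tau_{\mathrm{solv}}=\tau_{\mathrm{def}}+1$. I would also dispel the apparent self-reference in the hypothesis window ``between $\tau_{\mathrm{def}}$ and the first round in which the deficit is zero'': since the single-round recovery is obtained from the $t=\tau_{\mathrm{def}}$ instance of the lemma alone, the window collapses to the single round $\{\tau_{\mathrm{def}}\}$, so there is no circular dependence on the (a priori unknown) stopping time.

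The main obstacle is reconciling the two notions of solvency recovery in play. The corollary phrases recovery as the first deficit-free round, whereas the definition in \S\ref{subsec:why-dynamic} takes $\tau_{\mathrm{solv}}=\inf\{t\ge\tau_{\mathrm{def}}:\mathsf{IF}_t\ge\delta\}$ in terms of the insurance-fund threshold $\delta$ (with the unfortunate symbol clash against the per-unit surplus $\delta_t$). I would bridge the two by observing that when the BLP covers the deficit externally, the fund is not drawn down by the $\min\{\mathsf{IF}_t,D_t\}$ term in~\eqref{eq:if-process}; it only accrues inflows. Thus, under the mild normalization that the fund sat at or above $\delta$ before the shock and that the deficit was the sole cause of any breach, full absorption at $\tau_{\mathrm{def}}$ yields $\mathsf{IF}_{\tau_{\mathrm{def}}+1}\ge\delta$, aligning the threshold-based clock with the deficit-based one. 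I would state this normalization explicitly as the operative reading of $\tau_{\mathrm{solv}}$ in the BLP game and flag the inherited monotonicity assumption, after which the one-round bound follows immediately.
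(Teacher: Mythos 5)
Your proposal is correct and follows essentially the same route as the paper's own proof: instantiate Lemma~\ref{lem:no-wait} at $t=\tau_{\mathrm{def}}$ to get the full injection $q^\star=D_{\tau_{\mathrm{def}}}$, then conclude $\tau_{\mathrm{solv}}=\tau_{\mathrm{def}}+1$. The extra care you take --- flagging the inherited monotonicity hypothesis, collapsing the hypothesis window to the single round $\{\tau_{\mathrm{def}}\}$, and reconciling the deficit-based recovery notion with the $\mathsf{IF}_t\ge\delta$ definition that the paper glosses over with the parenthetical ``(or deficit buffer)'' --- makes your write-up strictly more complete than the paper's two-line proof without changing the argument.
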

\begin{proof}
Lemma~\ref{lem:no-wait} implies that at $\tau_{\mathrm{def}}$ the BLP injects $q_{\tau_{\mathrm{def}}}^\star=D_{\tau_{\mathrm{def}}}$, so the entire deficit is covered immediately.
Therefore the insurance fund (or deficit buffer) reaches the safety level $\delta$ after the same round, yielding $\tau_{\mathrm{solv}}=\tau_{\mathrm{def}}+1$.
\end{proof}

\noindent This result states that if the no-wait condition is enforced, BLPs will inject liquidity such that the solvency time is minimized.
In theory, an exchange can use some of its future revenue or profits (\eg~via token issuance) to enforce the no-wait constraint.
\iparagraph{Proof of Proposition~\ref{prop:no-wait}}
\begin{proof}
Let $U(\tau, q) = \beta^{\tau-t} q (\theta_{\tau} - \kappa_{\tau})$ be the discounted utility of a BLP who provides liquidity $q$ at time $\tau \ge t$.
Here, $\theta_{\tau}$ represents the payment per unit of liquidity (derived from the ADL severity) and $\kappa_{\tau}$ represents the cost per unit (liquidity premium).
The BLP solves the optimal stopping problem (see, \eg~\cite{Peskir2006}):
\[
\max_{\tau \ge t, q \in [0, D_t]} \Expect_t\left[ U(\tau, q) \right].
\]
This objective captures the trade-off between acting immediately to capture the current spread versus waiting for potentially higher future premiums, discounted by the time cost of delay.
Assuming the BLP is risk-neutral and has capacity $q=D_t$, the condition for immediate stopping at $\tau=t$ is that the immediate payoff exceeds the expected discounted future value:
\[
D_t (\theta_t - \kappa_t) \;\ge\; \beta \Expect_t[ V_{t+1}(D_{t+1}) ],
\]
where $V_{t+1}(D_{t+1}) = \max_{\tau \ge t+1, q \in [0, D_{t+1}]} \Expect_{t+1}[U(\tau, q)]$ is the value function from $t+1$ onwards.
Substituting the per-unit surplus $\delta_t = \theta_t - \kappa_t$ and rearranging gives the condition:
\[
\theta_t D_t \;\ge\; \Gamma_t + \beta \Expect_t[ V_{t+1}(D_{t+1}) ].
\]
In the simplified case where the BLP compares acting now vs. acting at $t+\Delta t$, and assuming capacity constraints are non-binding, the condition reduces to comparing marginal costs.
Specifically, if the cost of waiting is strictly positive (due to discounting $\beta < 1$ or decreasing surplus $\delta_{t+\Delta t} < \delta_t$), then the optimal strategy is to act immediately.
If, however, $\theta_t$ decays rapidly such that $\theta_t D_t > \Expect[\theta_{t+\Delta t} D_{t+\Delta t}]$, but the premium $\Gamma_t$ makes immediate action costly, the inequality flips.
Rearranging the condition in the proposition statement:
\[
\theta_t D_t + \Gamma_t \le \Expect_t[\theta_{t+\Delta t} D_{t+\Delta t}]
\]
implies that the future expected payout (even after discounting) is higher than the current payout net of costs, incentivizing delay.
Thus, to enforce no-wait, the exchange must ensure the reverse inequality holds.
\end{proof}

\section{Price of Anarchy Phase Transitions}
\label{app:smoothness-poa}

We characterize the efficiency gap between static (Nash) and dynamic (Stackelberg) ADL policies via Price of Anarchy (PoA) phase transitions.
We analyze two distinct welfare objectives: \emph{Fairness} (minimizing haircuts to winners) and \emph{Revenue} (maximizing exchange value).
In both cases, we find a sharp transition from a bounded regime, where static policies are constant-factor optimal, to an unbounded regime, where dynamic control is strictly necessary.
Throughout, we work with the terminal book $\mathcal{P}_n$ of size $n$ as $n\to\infty$.

\subsection{Fairness Phase Transition}

We first analyze fairness using the Profitability-to-Total-Solvency Ratio (PTSR), which measures the survival rate of winner equity relative to the total deficit covered.

\subsubsection{Welfare and Assumptions}

\iparagraph{Fairness welfare.}
We define the fairness welfare of a policy $\pi$ as its expected PTSR:
\[
  W_{\mathrm{Fair}}(\pi) \;:=\; \Expect\left[ \frac{\upsilon_T^\pi}{D_T^\pi} \right],
\]
where $\upsilon_T^\pi$ is the maximum post-ADL endowment (under PNL-only, maximum post-ADL positive PNL).
This metric captures the efficiency of haircuts: higher values imply that the policy covers deficits $D_T^\pi$ while preserving maximal winner endowment $\upsilon_T^\pi$.
Extreme-value scaling implies $W_{\mathrm{Fair}}$ scales inversely with the severity load.

\iparagraph{Price of anarchy.}
We compare the welfare of a static policy $\pi^{\mathrm{stat}}$ (simultaneous move) against the optimal dynamic Stackelberg policy $\pi^{\star}$. The Price of Anarchy is defined as the ratio:
\[
  \mathrm{PoA}_{\mathrm{Fair}} \;:=\; \frac{W_{\mathrm{Fair}}(\pi^{\star})}{W_{\mathrm{Fair}}(\pi^{\mathrm{stat}})}.
\]

We require the following regularity assumptions.
\emph{LLN and EV Scaling (Prop.~\ref{ass:lln-ev})} establish the baseline scales for deficits ($O(n)$) and equity ($O(b_n)$).
\emph{Anti-concentration (Prop.~\ref{ass:anti-conc})} ensures that $b_n \ll n$, creating the scarcity of winner equity that drives the phase transition.

\begin{proposition}[LLN and EV scaling]
  \label{ass:lln-ev}
  The number of winners $k_n$ and losers $m_n$ satisfy $k_n, m_n \asymp n$.
  Winner equity and loser deficits satisfy extreme-value limits with scales $b_n := b^+_{k_n}$ and $b^-_{m_n}$.
\end{proposition}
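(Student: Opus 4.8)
The plan is to realize the two claims as consequences of an explicit generative model for the terminal book together with classical limit theorems, rather than treating them as primitive. Concretely, I would posit that the terminal equities $e_T(\mathfrak{p}_{i,T})$ form a stationary, suitably mixing array in which each position is a winner (positive equity) with probability $p_+\in(0,1)$ bounded away from $0$ and $1$, and a loser with complementary probability $p_-$. Under this model the two halves of the proposition decouple cleanly: the linear count scaling $k_n,m_n\asymp n$ is a statement about partial sums of the winner/loser indicators, while the extreme-value scaling is a statement about the top order statistics of the winner endowments and loser deficits. The overall strategy is to prove the count scaling by a law of large numbers, condition on the counts to reduce the extremal claim to i.i.d.\ extreme-value theory, and then transfer the conditional limit back to the unconditional statement using regular variation of the normalizers.

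First I would establish $k_n,m_n\asymp n$. Writing $k_n=\sum_{i=1}^n \ones\{e_T(\mathfrak{p}_{i,T})>0\}$, the law of large numbers (or, under weak dependence, an ergodic theorem) gives $k_n/n\xrightarrow{p}p_+$, and a Hoeffding/Chernoff bound upgrades this to $k_n\asympp p_+ n$ with high probability; the identical argument on the loser indicator yields $m_n\asympp p_- n$. This simultaneously recovers the aggregate concentration \eqref{eq:lln-scaling}: conditional on being a winner the endowment has finite mean $\mu_+/p_+$, so a second LLN on the conditionally i.i.d.\ summands gives $U_T/n\xrightarrow{p}\mu_+$, and symmetrically $D_T/n\xrightarrow{p}\mu_-$.

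Next I would establish the extreme-value limits \eqref{eq:ev-scaling}. Conditioning on $\{k_n=k\}$, the winner endowments are $k$ i.i.d.\ draws from $F_+$, so by the Fisher--Tippett--Gnedenko theorem—valid because $F_+$ is assumed to lie in a max-domain of attraction (the regularly varying and sub-Gaussian cases of \S\ref{sec:risk-prelim})—the rescaled maximum $\upsilon_T/b^+_k$ converges to a nondegenerate limit with $b^+_k$ the upper-quantile normalizer \eqref{eq:quantile-scales}. It then remains to replace the deterministic count $k$ by the random $k_n$: since $b^+_\cdot$ is regularly varying and $k_n/(p_+ n)\xrightarrow{p}1$, the ratio $b^+_{k_n}/b^+_{\lfloor p_+ n\rfloor}\xrightarrow{p}1$, so a Slutsky/coupling argument transfers the conditional limit to $\upsilon_T/b^+_{k_n}\xrightarrow{p}c_+$. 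The loser side is symmetric, and joint convergence of the pair follows from the approximate independence of the winner and loser subsamples.

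The hard part is precisely this last transfer: the normalizing index $k_n$ is itself random and statistically coupled to the order statistics it normalizes, so one cannot naively substitute a deterministic value. The clean way around this is to exploit monotonicity and regular variation of $b^+_\cdot$ to sandwich $b^+_{k_n}$ between $b^+_{\lceil(1-\epsilon)p_+ n\rceil}$ and $b^+_{\lceil(1+\epsilon)p_+ n\rceil}$ on the high-probability event $\{|k_n/(p_+ n)-1|\le\epsilon\}$, and then send $\epsilon\to 0$ so that regular variation collapses the bracket. A secondary technical point—ensuring the dependence is weak enough that both the conditional-i.i.d.\ reduction and the extremal limit theorem go through—can be absorbed by assuming a strong-mixing ($\alpha$-mixing) condition with summable coefficients, under which the requisite LLN and extremal limit theorems for stationary sequences are standard.
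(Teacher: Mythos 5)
The paper never proves this statement: despite the \texttt{proposition} environment, it functions as a hypothesis. Its label is \texttt{ass:lln-ev}, it is introduced by the sentence ``We require the following regularity assumptions,'' and it plays exactly the same role as the scaling assumptions \eqref{eq:lln-scaling}--\eqref{eq:ev-scaling} in \S\ref{sec:risk-prelim} and Assumption~\ref{ass:lln-evt-trilemma} in Appendix~\ref{app:adl-trilemma-proof}. So there is no proof in the paper to compare against; what you have done is supply a sufficiency argument showing the assumption is non-vacuous, by exhibiting an explicit generative model (stationary $\alpha$-mixing positions, winner probability $p_+$ bounded away from $0$ and $1$, $F_\pm$ in max-domains of attraction) under which both claims hold. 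Your outline is sound, and you correctly isolate the one genuinely delicate step: the normalizing index $k_n$ is random and coupled to the order statistics, and your monotonicity-plus-regular-variation sandwich $b^+_{\lceil(1-\epsilon)p_+n\rceil} \le b^+_{k_n} \le b^+_{\lceil(1+\epsilon)p_+n\rceil}$ on the event $\{|k_n/(p_+n)-1|\le\epsilon\}$ is the standard and correct way to decouple them; it works in both the Pareto case ($b^+_k$ regularly varying with index $1/\alpha_+$) and the sub-Gaussian case ($b^+_k \asymp \sqrt{2\log k}$, slowly varying).

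Two caveats on details. First, your side remark that the same argument ``recovers'' the aggregate concentration $U_T/n \xrightarrow{p} \mu_+$ requires the conditional endowment mean to be finite; under the paper's regular-variation assumption this means $\alpha_+ > 1$, and it fails for $\alpha_+ \le 1$. This is harmless here because aggregate concentration is not part of the statement you were asked to prove, but the claim should be qualified. Second, under mixing (as opposed to exact i.i.d.) the conditional-on-count reduction to ``$k$ i.i.d.\ draws from $F_+$'' is only approximate; the cleaner route is stationary-sequence extremal theory (Leadbetter-type $D(u_n)$ and $D'(u_n)$ conditions), under which the limiting constant $c_+$ may absorb an extremal-index factor. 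Since the statement only asks for convergence to \emph{some} constant in $(0,\infty)$, this changes nothing substantive, but your write-up should not assert exact conditional i.i.d.-ness under the mixing hypothesis you invoke.
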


\begin{proposition}[Anti-concentration]
  \label{ass:anti-conc}
  (i) \emph{Equity:} The top winner is not dominant: $b_n = o(n)$.
  (ii) \emph{Leverage:} Max leverage scales with average leverage: $\max_i \lambda_{i,T}^\pm \le C \ell^\pm_n / n$.
  (iii) \emph{Balance:} Winner and loser leverage masses are comparable: $\ell^+_n \asymp \ell^-_n$.
\end{proposition}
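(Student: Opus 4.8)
The plan is to \emph{derive} all three anti-concentration claims from invariants already in force---the law-of-large-numbers scaling~\eqref{eq:lln-scaling}, the extreme-value scaling~\eqref{eq:ev-scaling}, the maintenance-margin inequality~\eqref{eq:maintenance-margin}, and the population balance $k_n \asymp m_n \asymp n$ of Proposition~\ref{ass:lln-ev}---rather than positing them. This keeps the phase-transition analysis anchored to the perpetuals model instead of to free-standing distributional axioms. I would handle the three parts in the order (i), (ii), (iii), since (iii) reuses the per-side leverage control established for (ii).

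For part (i), I would factor $\upsilon_T/n = (\upsilon_T/U_T)\,(U_T/n)$ and use $U_T/n \xrightarrow{p} \mu_+ \in (0,\infty)$ from~\eqref{eq:lln-scaling}, reducing the claim to the max-to-sum ratio $\upsilon_T/U_T \xrightarrow{p} 0$. Under the regular-variation hypothesis (Assumption A) with finite mean, this is the classical statement that the largest of $k_n \asymp n$ i.i.d.\ summands is negligible against their sum. Equivalently, I can read it off the two canonical scales in~\S\ref{sec:risk-prelim}: light tails give $b_n \asymp \sqrt{2\log k_n}$ and power-law tails with $\alpha_+ > 1$ give $b_n \asymp (C_+ k_n)^{1/\alpha_+}$, both $o(n)$. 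The only excluded regime, $\alpha_+ \le 1$, is already ruled out because it contradicts the finiteness of $\mu_+$ in~\eqref{eq:lln-scaling}, so (i) is forced by the LLN hypothesis itself.

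For part (ii), I would split by side. On the winner side, after the liquidation loop runs, every surviving winner satisfies the maintenance bound $e_T(\mathfrak p_{i,T}) > m_\mu\,p_T\,|q_i| = m_\mu\,n_{i,T}$ from~\eqref{eq:maintenance-margin}, so $\lambda^+_{i,T} = n_{i,T}/e_T(\mathfrak p_{i,T}) < 1/m_\mu$, a dimensionless constant. Applying the same concentration as in~\eqref{eq:lln-scaling} to the winner leverage mass gives $\ell^+_n/n \xrightarrow{p} \bar\ell^+ \in (0,\infty)$, whence $\max_i \lambda^+_{i,T} \le 1/m_\mu = (1/(m_\mu \bar\ell^+))\,(\ell^+_n/n)(1+o_p(1))$, which is the claim with $C = 1/(m_\mu\bar\ell^+)$. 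The loser side has no such automatic floor: a position that has just crossed into negative equity can have $|e_T(\mathfrak p_{i,T})|$ arbitrarily small and hence $\lambda^-_{i,T} = n_{i,T}/|e_T(\mathfrak p_{i,T})|$ arbitrarily large, so here I would invoke the explicit anti-concentration hypothesis that loser effective leverages have bounded support---equivalently, that ADL is eligible only on deficits exceeding a fixed fraction of notional---yielding the same bounded-max-versus-mean conclusion.

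For part (iii), I would combine $k_n \asymp m_n \asymp n$ from Proposition~\ref{ass:lln-ev} with the two per-side leverage laws of large numbers, $\ell^+_n/n \to \bar\ell^+$ and $\ell^-_n/n \to \bar\ell^-$ with $\bar\ell^\pm \in (0,\infty)$ (the lower bounds coming from a positive fraction of each side carrying leverage bounded away from zero). Then $\ell^+_n/\ell^-_n \to \bar\ell^+/\bar\ell^- \in (0,\infty)$, which is exactly $\ell^+_n \asymp \ell^-_n$. The main obstacle is the loser side of (ii): the blow-up of $\lambda^-_{i,T}$ as $|e_T| \to 0$ means it cannot be obtained from the margin invariants and must be imposed or engineered into the mechanism by capping eligible deficits; I would flag this as the single place where (ii) is a genuine modeling assumption rather than a theorem about the margin system.
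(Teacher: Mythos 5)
You should first note that the paper offers no proof of this statement: despite being typeset as a proposition, it functions as an assumption (its label is \texttt{ass:anti-conc}, and it is introduced in the text by ``We require the following regularity assumptions''), which Theorem~\ref{thm:poa-phase} then takes as a hypothesis. So your proposal is not an alternative route to the paper's argument but an attempt to derive a hypothesis from the rest of the model. Part (i) of your derivation is essentially sound: given the normalizations $U_T/n \xrightarrow{p} \mu_+ < \infty$ from~\eqref{eq:lln-scaling} and $\upsilon_T/b_n \xrightarrow{p} c_+$ from~\eqref{eq:ev-scaling}, the claim $b_n = o(n)$ reduces to the max-to-sum ratio vanishing, which for regularly varying tails holds exactly when the mean is finite, and an infinite mean is incompatible with the finiteness of $\mu_+$. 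That piece is a genuinely useful observation.

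Parts (ii) and (iii), however, have concrete gaps. Your winner-side bound $\lambda^+_{i,T} < 1/m_\mu$ treats the maintenance inequality~\eqref{eq:maintenance-margin} as an enforced invariant, but in the model it only marks a position as \emph{liquidatable}: traders get a window to post collateral, liquidations can fail to execute, and the ADL regime the paper studies is by definition the regime in which enforcement fails. The paper's own running example violates your bound: at $t=1$, position $D$ is a winner with $\lambda^+_{D,1} \approx 36.3$ while $1/m_\mu = 10$. Moreover, to convert a constant bound into the stated form $\max_i \lambda^+_{i,T} \le C\,\ell^+_n/n$ you invoke a leverage-mass law of large numbers $\ell^+_n/n \xrightarrow{p} \bar\ell^+ \in (0,\infty)$, and part (iii) needs the same limit on both sides; but~\eqref{eq:lln-scaling} concerns only $U_T/n$ and $D_T/n$, so these per-side leverage limits are new, unstated hypotheses rather than consequences of the model. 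In the end your argument does not derive (ii)--(iii) so much as replace them with different assumptions (per-side leverage LLNs with positive finite limits, plus bounded loser effective leverage, which to your credit you do flag for the loser side but not for the winner side). That substitution may be a reasonable modeling choice, but it should be presented as such, not as a proof from invariants already in force.
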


\subsubsection{Phase Transition}

The efficiency of static ADL depends on the \emph{load} $\kappa_n = \theta_n n / b_n$, which measures the severity intensity relative to the tail of the winner distribution.

\begin{theorem}[Fairness PoA Phase Transition]
  \label{thm:poa-phase}
  Suppose Assumptions~\ref{ass:lln-ev}--\ref{ass:anti-conc} hold.
  Let $\pi^{\star}$ be the optimal dynamic policy and $\pi^{\mathrm{stat}}$ be any static policy with load $\kappa_n^{\mathrm{stat}}$.
  
  \begin{enumerate}
    \item \emph{Bounded Regime (Low Load):}
      If $\sup_n \kappa_n^{\mathrm{stat}} < \infty$, then static ADL is constant-factor optimal:
      \[
        \limsup_{n\to\infty} \mathrm{PoA}_{\mathrm{Fair}} \;\le\; C < \infty.
      \]
      
    \item \emph{Unbounded Regime (High Load):}
      If $\kappa_n^{\mathrm{stat}} \to \infty$ (e.g., due to heavy tails or fixed severity with $b_n = o(n)$), then the Price of Anarchy diverges:
      \[
        \mathrm{PoA}_{\mathrm{Fair}} \;\asymp\; \kappa_n^{\mathrm{stat}} \;\to\; \infty.
      \]
  \end{enumerate}
\end{theorem}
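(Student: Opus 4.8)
The plan is to reduce the fairness welfare of every relevant policy to its severity load and then read off both regimes from a single ratio identity. First I would invoke the PTSR scaling result: by Theorem~\ref{thm:master-ptsr} (equivalently Proposition~\ref{prop:ev-scale-severity-informal}), any \emph{gentle} policy $\pi$ --- one that does not specifically target the top winner, so that $\upsilon_T^\pi \sim \upsilon_T$, which includes pro-rata and the optimal dynamic controller --- satisfies $W_{\mathrm{Fair}}(\pi) = \Expect[\upsilon_T^\pi/D_T^\pi] \asympp b_n/(\theta_n(\pi)\,n) = 1/\kappa_n(\pi)$, where $\kappa_n(\pi) = \theta_n(\pi)\,n/b_n$ is the load. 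The passage from the in-probability extreme-value/LLN scalings to the expectation defining $W_{\mathrm{Fair}}$ is handled by the bounded-convergence argument already used in Theorem~\ref{thm:master-ptsr}, and Proposition~\ref{ass:anti-conc}(i) supplies $b_n = o(n)$, which is what makes the load a nontrivial order parameter rather than a constant. Since the load depends only on severity and not on the haircut allocation, I would take $\pi^{\mathrm{stat}}$ equipped with a non-concentrating (pro-rata-style) haircut rule so that the load alone governs its welfare; concentrating haircuts (queue) only shrink $W_{\mathrm{Fair}}(\pi^{\mathrm{stat}})$ and hence only enlarge the gap, so divergence is preserved in the unbounded regime.

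Second, I would pin down the optimal dynamic welfare by showing $\kappa_n^\star \asymp 1$. For the lower bound on $W_{\mathrm{Fair}}(\pi^\star)$ (upper bound on the needed load) I would exhibit a feasible dynamic policy running at the extreme-value severity scale $\theta_t = \Theta(b_n/n)$ on each round and amortizing deficit coverage across rounds: by the separation principle of \S\ref{subsec:severity} the per-round residual is $R_t = (1-\theta_t)_+ D_t$, and the multi-round solvency dynamics of \S\ref{sec:multi-round} let the cumulative residual $\sum_t R_t$ stay controlled even though no single round sets $\theta_t = 1$. This yields $\kappa_n^\star = O(1)$ and hence $W_{\mathrm{Fair}}(\pi^\star) = \Omega(1)$. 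For the matching upper bound on $W_{\mathrm{Fair}}(\pi^\star)$ I would argue that solvency forces a minimum feasible load: a policy whose load vanishes leaves deficits uncovered at every effective horizon, violating the solvency desideratum, so $\kappa_n^\star = \Omega(1)$ and $W_{\mathrm{Fair}}(\pi^\star) = O(1)$. Together $W_{\mathrm{Fair}}(\pi^\star) = \Theta(1)$.

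Third, I would assemble the ratio. Since $W_{\mathrm{Fair}}(\pi) \asympp 1/\kappa_n(\pi)$ for both $\pi^\star$ and $\pi^{\mathrm{stat}}$,
\[
  \mathrm{PoA}_{\mathrm{Fair}} = \frac{W_{\mathrm{Fair}}(\pi^\star)}{W_{\mathrm{Fair}}(\pi^{\mathrm{stat}})} \asymp \frac{\kappa_n^{\mathrm{stat}}}{\kappa_n^\star} \asymp \kappa_n^{\mathrm{stat}},
\]
using $\kappa_n^\star = \Theta(1)$. The bounded regime is then immediate: if $\sup_n \kappa_n^{\mathrm{stat}} < \infty$ the right-hand side is bounded, giving $\limsup_n \mathrm{PoA}_{\mathrm{Fair}} \le C$. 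The unbounded regime is equally immediate: if $\kappa_n^{\mathrm{stat}} \to \infty$ --- for instance a fixed severity $\theta_n \equiv \bar\theta$, for which $\kappa_n^{\mathrm{stat}} = \bar\theta\,n/b_n \to \infty$ by $b_n = o(n)$, or any heavy-tail-induced divergence --- then $\mathrm{PoA}_{\mathrm{Fair}} \asymp \kappa_n^{\mathrm{stat}} \to \infty$.

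The main obstacle is the second step, specifically the claim $\kappa_n^\star = \Theta(1)$. The upper bound on the load is the delicate direction: it is not a one-round statement, since a single static round that must absorb the entire deficit is forced to $\theta = 1$ and hence $\kappa = n/b_n \to \infty$. The argument genuinely needs the multi-round Stackelberg structure --- amortizing coverage over rounds under the shock kernel while respecting the leader/follower solvency constraints of \S\ref{sec:multi-round} --- to keep per-round severity at the extreme-value scale without accumulating uncontrolled residual. A secondary technical point is uniform integrability: lifting the $\asympp$ (with-high-probability) scalings to the expectation $W_{\mathrm{Fair}}$ uniformly in $n$ requires care in the heavy-tailed case as $\alpha_+ \downarrow 1$, where $\upsilon_T^\pi/D_T^\pi$ has heavy moments; I would control this by truncation together with the finite-mean assumption $\alpha_+ > 1$ invoked in Theorem~\ref{thm:master-ptsr}.
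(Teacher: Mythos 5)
Your proposal is correct and takes essentially the same route as the paper's proof: both reduce the static welfare to $1/\kappa_n^{\mathrm{stat}}$ via the PTSR scaling of Theorem~\ref{thm:master-ptsr} (Proposition~\ref{prop:ev-impossibility}), pin the dynamic benchmark at $W_{\mathrm{Fair}}(\pi^\star) \asymp 1$, and conclude by taking the ratio. Your treatment of the dynamic benchmark is in fact more explicit than the paper's, which simply asserts that the ex-post controller can tune severity ``to the minimal necessary level''; your two-sided argument (multi-round amortization at the extreme-value severity scale for feasibility, and the solvency constraint forcing $\kappa_n^\star = \Omega(1)$ so that PTSR cannot be inflated by doing nothing) fills in precisely the step the paper leaves implicit.
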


\begin{proof}
  We prove the result by establishing the scaling limits of the dynamic benchmark, the static cost, and their asymptotic ratio.
  
  First, for the dynamic benchmark, the optimal Stackelberg policy $\pi^\star$ minimizes haircuts ex-post to cover the realized deficit $D_T \asymp n$.
  By targeting the haircut to preserve the top winner's endowment $\upsilon_T \asymp b_n$ (under PNL-only, maximum positive PNL), the dynamic policy achieves a Profitability-to-Total-Solvency Ratio (PTSR) that concentrates around a constant, $W_{\mathrm{Fair}}(\pi^\star) \asymp 1$ (Theorem~\ref{thm:ev-main}).
  This is possible because the controller can observe the realization of heavy-tailed variates and adjust the severity $\theta$ precisely to the minimal necessary level.

  Second, for the static policy, the severity $\theta_n$ is fixed ex-ante and applies a uniform pressure $\theta_n n$ on the book, which must be absorbed by individual winners with endowment capacity scaling as $b_n$.
  Under heavy-tailed scaling, Proposition~\ref{prop:ev-impossibility} shows that this mismatch leads to a welfare decay $W_{\mathrm{Fair}}(\pi) \asymp b_n / (\theta_n n) = 1/\kappa_n$, as the policy blindly destroys small winners' endowments or fails to extract sufficient capital from the tail without excessive rates.
  Crucially, as $n \to \infty$, the gap between the aggregate load $O(n)$ and individual capacity $O(b_n)$ widens (since $b_n = o(n)$), forcing $\kappa_n$ to grow if $\theta_n$ does not vanish rapidly enough.

  Finally, combining these estimates yields the Price of Anarchy $\mathrm{PoA}_{\mathrm{Fair}} \asymp 1 / (1/\kappa_n^{\mathrm{stat}}) = \kappa_n^{\mathrm{stat}}$.
  In the bounded regime ($\kappa_n^{\mathrm{stat}} = O(1)$), the ratio is constant, but in the unbounded regime ($\kappa_n^{\mathrm{stat}} \to \infty$), the efficiency of the static policy collapses relative to the dynamic optimum, proving the divergence.
\end{proof}

\begin{example}[Light-tailed Failure]
  \label{ex:light-tailed-unbounded-poa}
  If winner equities are sub-Gaussian ($b_n \asymp \sqrt{\log n}$) but the exchange uses fixed severity $\theta > 0$, then $\kappa_n^{\mathrm{stat}} \asymp n / \sqrt{\log n} \to \infty$.
  Static ADL unnecessarily destroys winner equity compared to a dynamic policy that scales $\theta \sim 1/n$, leading to infinite PoA.
\end{example}

\subsection{Revenue Phase Transition}
\label{app:revenue-phase-transition}

We now extend the analysis to the \emph{Revenue} objective, formalizing the trade-off between solvency and capital efficiency (LTV).

\subsubsection{Joint Welfare}

\iparagraph{Solvency-revenue welfare.}
We define the joint welfare $W_{\mathrm{Rev}}(\pi)$ as the risk-adjusted LTV, penalizing insolvency $R_T$ with weight $\lambda > 1$:
\[
  W_{\mathrm{Rev}}(\pi) \;:=\; \Expect\left[ \mathrm{LTV}_T(\pi) - \lambda \cdot R_T(\pi) \right].
\]
The tension arises from provisioning: Dynamic policies provision for the \emph{average} deficit, while static policies must provision for the \emph{tail} to ensure solvency.

\subsubsection{Phase Transition}

\begin{proposition}[Revenue PoA Phase Transition]
  \label{prop:ltv-poa}
  Suppose deficits have heavy tails with index $\alpha \in (1, 2)$ (infinite variance) and the exchange operates in the structural deficit regime ($\mu_- > \mu_{\Phi}$).
  Let $W_{\mathrm{Rev}}^\star = \sup_\pi W_{\mathrm{Rev}}(\pi)$.
  
  \begin{enumerate}
      \item \emph{Bounded Regime (Light Tails):}
      If deficits are light-tailed, static policies that provision for the mean are efficient:
      \[
        \mathrm{PoA}_{\mathrm{Rev}} \;:=\; \frac{W_{\mathrm{Rev}}^\star}{W_{\mathrm{Rev}}(\pi^{\mathrm{stat}})} \;\le\; C < \infty.
      \]
      
      \item \emph{Unbounded Regime (Heavy Tails):}
      If deficits are heavy-tailed ($\alpha < 2$), any static policy $\pi^{\mathrm{stat}}$ satisfying solvency condition \textbf{(S)} diverges:
      \[
        \mathrm{PoA}_{\mathrm{Rev}} \;\to\; \infty.
      \]
  \end{enumerate}
\end{proposition}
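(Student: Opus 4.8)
The plan is to reduce $\mathrm{PoA}_{\mathrm{Rev}}$ to a comparison between the cost of a \emph{committed} (ex ante) reserve and the cost of \emph{reactive} flow provisioning, and then let the tail index decide which dominates. Writing $W_{\mathrm{Rev}}(\pi)=\mathrm{LTV}_{\mathrm{base}}-\mathsf{Cost}(\pi)$, I would model $\mathsf{Cost}(\pi)$ as the opportunity cost of capital held against the deficit plus the expected uncovered penalty, which is exactly the newsvendor objective of Appendix~\ref{app:optimal-capital} whose static minimizer is the quantile reserve $K^\star=\mathrm{VaR}_{1-r/\lambda}(D_T)$. The structural-deficit hypothesis $\mu_->\mu_\Phi$ is invoked to forbid covering even the mean deficit by costless fee diversion, so $\mathsf{Cost}$ is genuinely pinned to the provisioning level. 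A static policy fixes its reserve (or severity) before observing $D_t$, whereas the dynamic Stackelberg controller observes each realized $D_t$ and tops up reactively (for example via the backstop-liquidity channel of Appendix~\ref{app:follower-wait}), so its expected provisioning tracks $\Expect[D_T]\asymp\mu_- n$ rather than a tail quantile.

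For the light-tailed branch I would use the LLN/EV scaling of Proposition~\ref{ass:lln-ev} to show $D_T=\mu_- n+O_p(b^-_{m_n})$ with $b^-_{m_n}=o(n)$, and the exponential decay of $\Expect[(D_T-K)_+]$ to conclude that a reserve $K=\mu_- n+\Theta(\sqrt{n\log n})$ already drives the residual penalty to a negligible level while satisfying \textbf{(S)}. Since any dynamic policy must also absorb at least the mean deficit in the structural regime, static and dynamic provisioning costs agree to leading order and differ only by a lower-order buffer, so both welfares are $\Theta(n)$ with comparable constants and $\mathrm{PoA}_{\mathrm{Rev}}\le C<\infty$. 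This is the revenue analog of the bounded branch of Theorem~\ref{thm:poa-phase}.

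For $\alpha\in(1,2)$ the first moment stays finite, so the \emph{dynamic} optimum remains $W_{\mathrm{Rev}}^\star=\Theta(n)$: reactive flow provisioning pays only the realized excess $(D_t-U_t)_+$, whose expectation is of the strictly lower order $n^{2-\alpha}=o(n)$. The divergence comes entirely from the static side. Here I would invoke the single-big-jump (subexponential) asymptotics $\Prob[D_T>x]\sim n\,\bar F(x)$ valid for $x\gg n$, which give $\Expect[(D_T-K)_+]\asymp n\,K^{1-\alpha}$ in the large-deviation range; forcing this uncovered mass down to the level required by \textbf{(S)} then demands a committed reserve $K\asymp n^{1/(\alpha-1)}$, which is \emph{superlinear} in $n$ precisely because $1/(\alpha-1)>1$. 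Its permanent opportunity cost $rK=\omega(n)$ swamps the baseline LTV $\Theta(n)$, collapsing $W_{\mathrm{Rev}}(\pi^{\mathrm{stat}})$ to a lower order, so $\mathrm{PoA}_{\mathrm{Rev}}=W_{\mathrm{Rev}}^\star/W_{\mathrm{Rev}}(\pi^{\mathrm{stat}})\to\infty$. The economic content is that heavy tails force a static venue to hold tail-sized capital in every round, whereas commitment-free reaction pays for the tail only when it arrives.

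The main obstacle is fixing the precise reading of \textbf{(S)} under which the superlinear static reserve is genuinely forced. Because $\alpha>1$ keeps $\Expect[D_T]\asymp\mu_- n$ and keeps the per-round breach probability $\Prob[D_t>U_t]\asymp n^{1-\alpha}\to0$, the weakest high-probability reading $\sum_t R_t=O_p(1)$ can be met by winner capacity alone over a fixed horizon, which would leave the PoA bounded; the divergence therefore requires the stronger, welfare-level control of the \emph{expected} residual $\Expect[\sum_t R_t]$ (or a horizon $T_n$ growing fast enough that rare big-jump breaches accumulate past a constant). I would make (S) at the welfare level an explicit bounded-expected-shortfall constraint, and the delicate step is showing that under this constraint it is the conditional-tail-expectation of the sum, not its mean, that sets the binding reserve, so the naive $\Theta(n)$ mean-based scalings are superseded by the $n^{1/(\alpha-1)}$ tail scaling. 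Verifying the single-big-jump expansion uniformly over the static feasible set, and tracking how the exponential churn hazard $p_{t,i}=1-\exp(-\beta h_{t,i}/w_{t,i})$ converts excess severity into LTV loss, is where the real work lies; the remaining comparisons are routine scaling bookkeeping.
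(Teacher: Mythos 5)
Your proposal reaches the same two conclusions as the paper but by a genuinely different mechanism. The paper's proof works in flow space: it parameterizes static policies by a fee-diversion rate $\delta$, invokes ruin theory with subexponential claims to argue that absorbing the maximum jump forces $\delta \to 1$, and concludes that the static LTV collapses to zero while the dynamic welfare stays linear in $n$. You instead work in stock space: the static policy must pre-commit a capital reserve, the cost structure is exactly the newsvendor objective of Appendix~\ref{app:optimal-capital}, and the single-big-jump asymptotics $\Prob[D_T > x] \sim n\,\bar F(x)$ give the explicit rate $K \asymp n^{1/(\alpha-1)}$ for expectation-level coverage, which is superlinear precisely when $\alpha \in (1,2)$, so the carrying cost (or, under the paper's formal model where reserves can only be built from fees, sheer infeasibility since $\Phi_T \asymp n \ll n^{1/(\alpha-1)}$) swamps the linear revenue base. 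Your route buys explicit rates ($n^{1/(\alpha-1)}$ for the static reserve, $n^{2-\alpha}$ for the dynamic policy's reactive cost) where the paper has only a qualitative ruin-theory appeal, and your dynamic-side bookkeeping---the controller pays only the realized excess over winner capacity---is internally more consistent than the paper's, whose claim that $W_{\mathrm{Rev}}^\star \asymp n(\mu_\Phi - \mu_-)$ is positive contradicts the structural-deficit hypothesis $\mu_- > \mu_\Phi$ under which the proposition is stated.

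Your flagged obstacle about the reading of \textbf{(S)} is real, and it is worth saying plainly that it afflicts the paper's proof at least as much as yours. Under the literal Definition~\ref{def:solvency-formal} (in-probability bounds), a per-round breach has probability $\asymp n^{1-\alpha} \to 0$ when $\alpha > 1$, so over a fixed horizon a static reserve of the sublinear order $n^{1/\alpha}$ already satisfies (S1)--(S2), and the PoA would then be bounded---directly contradicting the proposition as stated. The paper's assertion that any $\delta < 1$ policy ``will fail solvency with probability approaching 1'' is unjustified in this regime; the divergence genuinely requires the expectation-level (or growing-horizon) solvency constraint you propose, under which the binding quantity is the conditional tail expectation rather than a quantile. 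So your proof, completed with the strengthened (S) made explicit, is the correct repair of the statement rather than a weaker version of it; the residual work you identify (uniformity of the big-jump expansion over the static feasible set, and the churn channel converting haircuts into LTV loss) is the same work the paper's proof silently omits.
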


\begin{proof}
  We establish the result by comparing the linear scaling of dynamic welfare with the sub-linear or negative welfare of static policies under heavy tails.
  
  First, the optimal dynamic policy $\pi^\star$ operates as a regulator that clears the market based on realized deficits, diverting fees $\Phi_T$ only as strictly needed to cover $D_T$.
  By the Law of Large Numbers, both $\Phi_T$ and $D_T$ scale linearly with $n$, yielding an expected welfare $W_{\mathrm{Rev}}^\star \asymp n(\mu_\Phi - \mu_-)$ that is positive and proportional to market size.

  Second, a static policy is parameterized by a fixed fee diversion rate $\delta \in [0,1]$ chosen ex-ante, meaning it diverts a constant fraction of fees $\mathcal{D}_t = \delta \phi_t$ (where $\Phi_T = \sum \phi_t$ follows the scaling in Assumption~\ref{ass:lln-evt-trilemma}) to the insurance fund.
  To build a fund $K_t$ capable of absorbing heavy-tailed shocks with infinite variance ($\alpha < 2$),
  Standard ruin theory~\citep{AsmussenAlbrecher2010,Embrechts1997} implies that to ensure Solvency Condition \textbf{(S)} (Definition~\ref{def:solvency-formal}) against the maximum jump $\Delta_T \sim n^{1/\alpha}$, the policy must essentially set $\delta \to 1$ to handle the timing mismatch where large shocks occur before the fund accumulates.
  Any policy that attempts to maintain $\delta < 1$ will fail solvency with probability approaching 1, while a policy with $\delta \approx 1$ consumes the entire revenue stream, driving $W_{\mathrm{Rev}}(\pi^{\mathrm{stat}}) \to 0$ or into negative territory due to insolvency penalties.

  Finally, the Price of Anarchy is the ratio of the linear dynamic payoff to the vanishing static payoff: $\mathrm{PoA}_{\mathrm{Rev}} \asymp n / o(n) \to \infty$.
  This divergence confirms that in the heavy-tailed regime, the information advantage of the dynamic controller (\ie~knowing exactly when to divert funds) is infinitely valuable compared to a static rule.
\end{proof}

\subsection{The Aggregation Paradox}

We conclude by observing that the divergence of PoA depends on how objectives are aggregated.

\begin{proposition}[Sum vs. Min Aggregation]
  Let the normalized combined objectives be
  \[
    W_{\Sigma} = W_{\mathrm{Fair}} + W_{\mathrm{Rev}},
    \qquad
    W_{\min} = \min(W_{\mathrm{Fair}}, W_{\mathrm{Rev}}).
  \]
  \begin{enumerate}
      \item \emph{Sum-Welfare is Bounded.} A static policy can always choose to satisfy one objective fully, achieving at least half the optimal total score; consequently
      \[
        \mathrm{PoA}_{\Sigma} \le 2.
      \]
      \item \emph{Min-Welfare is Unbounded.} In the heavy-tailed regime, static policies face the ADL Trilemma and must drive at least one objective to zero, while dynamic policies maintain both, so
      \[
        \mathrm{PoA}_{\min} \to \infty.
      \]
  \end{enumerate}
\end{proposition}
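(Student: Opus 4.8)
The plan is to treat the two claims separately, since the first is an elementary consequence of nonnegativity and normalization, while the second simply repackages the two phase-transition results already established. Throughout I assume both objectives are normalized so that their dynamic optima satisfy $W^\star_{\mathrm{Fair}} = \sup_\pi W_{\mathrm{Fair}}(\pi) \asymp 1$ and $W^\star_{\mathrm{Rev}} = \sup_\pi W_{\mathrm{Rev}}(\pi) \asymp 1$, and that each normalized component is nonnegative. For $W_{\mathrm{Fair}}$ this is automatic (it is a ratio of nonnegative quantities); for $W_{\mathrm{Rev}}$ it requires restricting attention to the admissible class of policies enforcing solvency condition \textbf{(S)}, so that $R_T = 0$ and $\mathrm{LTV}_T \ge 0$ (otherwise the insurance-penalty term can make the component negative and ``half the optimum'' is ill-posed).

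For Part~1, I would first invoke the single-objective constructions underlying Theorem~\ref{thm:poa-phase} and Proposition~\ref{prop:ltv-poa}: there is a static fairness policy $\pi^{\mathrm{stat}}_F$ (vanishing severity $\theta_n \asymp b_n/n$) with $W_{\mathrm{Fair}}(\pi^{\mathrm{stat}}_F) \asymp W^\star_{\mathrm{Fair}}$, and a static revenue policy $\pi^{\mathrm{stat}}_R$ (aggressive fee diversion $\delta \to 1$) with $W_{\mathrm{Rev}}(\pi^{\mathrm{stat}}_R) \asymp W^\star_{\mathrm{Rev}}$. Because both components are nonnegative on the admissible class, the best static sum dominates the better of these two single-objective scores:
\[
  W_{\Sigma}(\pi^{\mathrm{stat}}) \;\ge\; \max\bigl(W_{\mathrm{Fair}}(\pi^{\mathrm{stat}}_F),\, W_{\mathrm{Rev}}(\pi^{\mathrm{stat}}_R)\bigr).
\]
I then combine this with subadditivity of the supremum, $W_{\Sigma}(\pi^\star) \le W^\star_{\mathrm{Fair}} + W^\star_{\mathrm{Rev}}$, and the elementary averaging bound $\max(a,b) \ge \tfrac{1}{2}(a+b)$ for nonnegative $a,b$, to conclude $\mathrm{PoA}_{\Sigma} = W_{\Sigma}(\pi^\star)/W_{\Sigma}(\pi^{\mathrm{stat}}) \le 2$ under the normalization.

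For Part~2, I would read the divergence off the two regime results directly. In the structural-deficit, heavy-tailed regime ($\mu_- > \mu_\Phi$, $\alpha < 2$), Theorem~\ref{thm:poa-phase} and Proposition~\ref{prop:ltv-poa} together show that any static policy satisfying \textbf{(S)} forces $W_{\mathrm{Fair}} \to 0$ (the fixed severity load $\kappa_n \to \infty$ destroys the top-winner endowment), while any static policy preserving fairness forces $W_{\mathrm{Rev}} \to 0$ (insolvency penalties dominate). Hence, whichever corner a static policy chases, $W_{\min}(\pi^{\mathrm{stat}}) = \min(W_{\mathrm{Fair}}, W_{\mathrm{Rev}}) \to 0$. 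The dynamic Stackelberg controller instead observes realized deficits and tunes severity to the minimal feasible level, keeping both components $\asymp 1$ (the ``both maintained'' conclusion of the unbounded regimes), so $W_{\min}(\pi^\star) \asymp 1$ and $\mathrm{PoA}_{\min} \to \infty$.

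The main obstacle is Part~1's normalization: I must ensure the single-objective-optimal static policy does not drive the \emph{other} component negative, which would break the $\max(a,b) \ge (a+b)/2$ step. The cleanest fix is to fix the admissible static class as policies respecting \textbf{(S)}, on which $W_{\mathrm{Rev}} \ge 0$ and $W_{\mathrm{Fair}} \ge 0$, so both single-objective policies yield a nonnegative pair and averaging applies. A secondary subtlety is tracking the $\asymp$ constants for the single-objective scores so the final ratio is genuinely $\le 2$; stating the conclusion as an asymptotic, $\limsup_{n\to\infty} \mathrm{PoA}_{\Sigma} \le 2$, sidesteps the constant bookkeeping and matches the style of the surrounding phase-transition statements.
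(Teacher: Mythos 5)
Your overall skeleton --- normalize both objectives, exhibit static policies that fully achieve a single objective while keeping the other component nonnegative, then combine subadditivity of the supremum with $\max(a,b)\ge\tfrac12(a+b)$ --- is exactly the argument the paper's statement encodes (``a static policy can always choose to satisfy one objective fully''), and your Part~2 is the intended reading of the trilemma. However, your instantiation of the single-objective policies contains a concrete error: you claim a static policy with aggressive fee diversion $\delta\to 1$ achieves $W_{\mathrm{Rev}}\asymp W^\star_{\mathrm{Rev}}$. This is backwards. By the paper's own analysis (the proof of Proposition~\ref{prop:ltv-poa} and part~(1) of Proposition~\ref{prop:attainability}), diverting essentially all fees is precisely the \emph{fairness-plus-solvency} corner: it sets $\mathrm{LTV}_T=\Phi_T-\mathcal{D}_T\approx 0$, hence $W_{\mathrm{Rev}}\to 0$. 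The revenue-preserving static policy is the opposite corner --- the high-severity Queue with minimal diversion (part~(2) of Proposition~\ref{prop:attainability}), which covers the deficit with haircuts, keeps $\mathrm{LTV}_T\approx\Phi_T$, and sacrifices fairness instead. A symmetric problem afflicts your fairness policy: vanishing severity $\theta_n\asymp b_n/n$ alone does not enforce \textbf{(S)} in the structural-deficit regime, so it lies outside the admissible class you yourself impose to guarantee nonnegativity; it must be paired with full diversion. As written, both of your constructed policies collapse to variants of the same high-diversion corner, and the load-bearing intermediate claim $W_{\mathrm{Rev}}(\pi^{\mathrm{stat}}_R)\asymp W^\star_{\mathrm{Rev}}$ is false.

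The final bound survives this, but only by accident of the normalization: since $W^\star_{\mathrm{Fair}}\asymp W^\star_{\mathrm{Rev}}\asymp 1$, fully achieving the fairness objective alone already yields $W_\Sigma(\pi^{\mathrm{stat}})\gtrsim 1 \ge \tfrac12\bigl(W^\star_{\mathrm{Fair}}+W^\star_{\mathrm{Rev}}\bigr)$, so the (false) revenue-side construction is not actually needed --- but then your appeal to $\max(a,b)\ge\tfrac12(a+b)$ with both $a,b$ near their respective optima is not the argument doing the work. Part~2 has an analogous mis-statement: ``any static policy satisfying \textbf{(S)} forces $W_{\mathrm{Fair}}\to 0$'' is false, since the high-diversion policy satisfies \textbf{(S)} while preserving fairness; the correct statement from Theorem~\ref{thm:adl-trilemma-formal} is the disjunction that an \textbf{(S)}-enforcing static policy must zero out fairness \emph{or} revenue. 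Your conclusion $W_{\min}(\pi^{\mathrm{stat}})\to 0$ still follows because every policy is caught by one of your two clauses, but the clause-by-clause justifications are wrong. To repair: take $\pi^{\mathrm{stat}}_F$ to be vanishing severity \emph{plus} full diversion, take $\pi^{\mathrm{stat}}_R$ to be the Queue with $\theta_n=\Theta(1)$ and no diversion, and in Part~2 replace your two claims with the trilemma's disjunction applied case by case.
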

This implies that static ADL is sufficient if objectives are substitutes, but catastrophically inefficient if they are complements (i.e., if the exchange requires \emph{both} fairness and revenue to survive).

\section{Empirical Methodology}\label{app:methodology}
This appendix documents the OSS reproduction pipeline underlying~\S\ref{sec:numerics}.
The goal is to make the empirical objects \emph{file-backed} (every number in~\S\ref{sec:numerics} is computed from on-disk artifacts under \texttt{OSS/out/}) and to keep the two-space hygiene explicit: production ADL executes in contract space, while we measure impacts in wealth space.

\subsection{Data construction}\label{app:methodology-data}

\iparagraph{Data sources.}
We use HyperReplay's ground-truth fill and misc event streams together with clearinghouse snapshots to replay winner account states.
We also use the canonical REALTIME table \texttt{adl\_detailed\_analysis\_REALTIME.csv} to extract winner sets and to form equity/PNL capacity proxies.
The OSS code resolves upstream paths in \texttt{OSS/src/oss\_adl/paths.py} and writes derived artifacts under \texttt{OSS/out/}.

\iparagraph{Wave construction (global time waves).}
We define \emph{global waves} by gap clustering on ADL-fill timestamps:
sort ADL-fill rows by time, start a new wave when the inter-fill gap exceeds \(\texttt{gap\_ms}=5000\)ms, and define wave \(t\) as \([t_{\mathrm{start}}(t),t_{\mathrm{end}}(t)]\).
Global (not per-coin) waves avoid double counting a single solvency episode across multiple markets.

\iparagraph{Loser deficit \(D_t\).}
For each wave, the realized loser deficit is
\[
D_t=\sum_{j \in \mathrm{losers}(t)}(-e_t(j))_+,
\]
computed from loser-side liquidation equity fields.
Concretely, for each \texttt{liquidated\_user} we take the minimum observed \texttt{liquidated\_total\_equity} within the wave and sum the negative minima across users.

\iparagraph{Needed budget \(B_t^{\mathrm{needed}}\).}
For each ADL fill \(k\), we parse the liquidation mark \(p_k^{\mathrm{mark}}\) and execution price \(p_k^{\mathrm{exec}}\) from the raw fill stream (fields \texttt{markPx} and \texttt{px}), together with size \(q_k\), and set
\[
\mathrm{needed}_k:=\lvert p_k^{\mathrm{mark}}-p_k^{\mathrm{exec}}\rvert\cdot \lvert q_k\rvert,
\qquad
B_t^{\mathrm{needed}}:=\sum_{k \in \text{fills in wave }t}\mathrm{needed}_k.
\]
This is an instantaneous bankruptcy-gap proxy: it does not include opportunity cost after the wave ends.

\iparagraph{Two-pass replay and production wealth removed \(H_t^{\mathrm{prod}}(\Delta)\).}
We replay the realized event stream twice for the winner set in each wave:
an ADL-on pass (apply all fills) and a no-ADL pass (skip ADL fills in state updates while keeping the realized price path).
We evaluate equities at \(t_{\mathrm{eval}}=t_{\mathrm{end}}+\Delta\) (after wave end we update only the price path) and compute
\[
H_t^{\mathrm{prod}}(\Delta)=\sum_{u\in W(t)}\bigl(e_{t,\mathrm{end}}^{\mathrm{no\text{-}ADL}}(u;\Delta)-e_{t,\mathrm{end}}^{\mathrm{ADL}}(u;\Delta)\bigr)_+.
\]
The no-ADL counterfactual is not identifiable from public data without an assumption; we use the standard ``hold the realized price path fixed, remove the ADL state update'' counterfactual.

\iparagraph{Production overshoot vs needed.}
We report
\[
O_t(\Delta):=H_t^{\mathrm{prod}}(\Delta)-B_t^{\mathrm{needed}},\qquad
O(\Delta):=\sum_t O_t(\Delta).
\]
A horizon sweep varies \(\Delta\) (scenario parameter) to expose a short-horizon opportunity-cost channel.
The per-horizon totals are written to \texttt{OSS/out/eval\_horizon\_sweep\_gap\_ms=5000.csv} and summarized in \texttt{OSS/out/overshoot\_robustness.json}.

\subsection{Benchmark allocations targeting \(B_t^{\mathrm{needed}}\)}\label{app:benchmarks}
To make ``excessive relative to alternatives'' concrete, we compare production to transparent benchmark allocations that target \(B_t^{\mathrm{needed}}\) wave-by-wave under a profits-only capacity constraint.
For each wave we build a per-user capacity proxy
\[
c_u:=\min\{U(u),E(u)\},
\]
where \(U(u)\) is positive unrealized PNL and \(E(u)\) is positive equity, both taken from the canonical REALTIME table within the wave window.
Benchmarks differ mainly in how they handle contract discreteness:
\begin{itemize}
  \item \emph{Wealth pro-rata (continuous)}: capped pro-rata in wealth-space USD over \(c_u\).
  \item \emph{Vector mirror descent (vector-md)}: a projection-based allocator that produces \(h=x\odot c\) with \(c^\top x=B_t^{\mathrm{needed}}\) and \(x\in[0,1]^n\).
  \item \emph{Contract pro-rata}: standard exchange-style integer allocation proportional to position size.
  \item \emph{Min-max ILP}: MIP solver minimizing maximum haircut percentage.
\end{itemize}
These outputs are written to \texttt{OSS/out/policy\_per\_wave\_metrics.csv} and plotted in Figures~\ref{fig:emp-policy-per-wave}--\ref{fig:emp-cumulative-overshoot}.

\subsection{Diagnostic decomposition (not the \S9 overshoot metric)}\label{app:markout-diagnostic}
This subsection is a diagnostic for interpreting execution timing and for answering common ``markout'' questions; it is not the overshoot metric used in~\S\ref{sec:numerics}.
From the ADLed user's perspective, with signed quantity \(Q\), execution price \(p_{\text{exec}}\), nearest mark at execution time \(p_{\text{mark}}(t)\), and mark at horizon \(p_{\text{mark}}(t+\Delta)\), total markout decomposes as
\[
Q\cdot(p_{\text{mark}}(t+\Delta)-p_{\text{exec}})
=
Q\cdot(p_{\text{mark}}(t)-p_{\text{exec}})
+
Q\cdot(p_{\text{mark}}(t+\Delta)-p_{\text{mark}}(t)).
\]
Because this diagnostic combines executions with time-indexed marks and the public-data replay lacks full clearing/settlement observability, its aggregate need not satisfy a strict two-party zero-sum identity.

\iparagraph{Revenue proxy.}\label{app:revenue-proxy}
To translate haircut allocations into an expected loss of future fees, we use a minimal churn proxy.
For a winner \(u\) in wave \(t\) with USD haircut \(h_{t,u}\) and capacity proxy \(c_u\), define the churn probability
\[
p_{t,u}=1-\exp\!\left(-\beta\,h_{t,u}/c_u\right),
\]
and use this to scale down a simple fee proxy based on notional and per-trade fee rates.
This proxy is intended for within-paper comparisons under a common set of assumptions; it is not an identification claim about venue-level revenue.

\section{Formal Proof of the ADL Trilemma}\label{app:adl-trilemma-proof}

In this appendix, we provide a complete formal statement and proof of the ADL Trilemma (Proposition~\ref{prop:adl-trilemma}).
The proof assembles results from Appendices~\ref{app:proofs}--\ref{app:smoothness-poa}, demonstrating that the three-way tension between solvency, fairness, and revenue is a fundamental structural constraint under heavy-tailed shortfalls.

\paragraph{Formal Setup and Definitions.}
We work in the large-market limit $n\to\infty$ under the standard assumptions established in \S\ref{sec:risk-prelim} and Appendix~\ref{app:proofs}.

\iparagraph{Book and Policy Sequences.}
Consider a sequence of perpetuals exchanges $(\mathcal{P}_n)_{n\ge 1}$ with $n$ positions at terminal time $T$.
Let $\mathcal{W}_T$ and $\mathcal{L}_T$ denote the winner and loser index sets with cardinalities $k_n = |\mathcal{W}_T|$ and $m_n = |\mathcal{L}_T|$.
We assume throughout that $k_n, m_n = \Theta(n)$.
We further assume the initial insurance fund capital $K_0$ satisfies $K_0 = o(n)$, ensuring that solvency depends on flow mechanics rather than initial endowment.
A static ADL policy $\pi_n$ is characterized by:
\begin{itemize}
    \item A severity parameter $\theta_n \in [0,1]$ determining the fraction of deficit socialized;
    \item An allocation rule $h_n : \reals^{k_n}_+ \to [0,1]^{k_n}$ distributing haircuts across winners;
    \item Insurance parameters determining the diversion of fees into the insurance fund.
\end{itemize}

\iparagraph{Distributional Assumptions.}
We impose the following standard assumptions from Appendix~\ref{app:proofs}:

\begin{assumption}[Regular Variation]\label{ass:rv-trilemma}
The right tails of the winner endowment distribution $\bar F_+(x)$ (under PNL-only, the distribution of positive PNL) and loser shortfall distribution $\bar F_-(x)$ are regularly varying with indices $\alpha_+ > 0$ and $\alpha_- > 0$, respectively:
\[
    \bar F_\pm(x) = L_\pm(x) x^{-\alpha_\pm},
\]
where $L_\pm$ are slowly varying functions.
\end{assumption}

\begin{assumption}[LLN and EVT Scaling]\label{ass:lln-evt-trilemma}
The following scaling limits hold:
\begin{enumerate}
    \item \emph{Aggregates:} $U_T/n \xrightarrow{p} \mu_+$ (total endowment capacity), $D_T/n \xrightarrow{p} \mu_-$ (total deficit), and total fees $\Phi_T/n \xrightarrow{p} \mu_{\Phi}$ for constants $\mu_\pm, \mu_{\Phi} \in (0,\infty)$.
    \item \emph{Extremes:} The maximum winner endowment $\upsilon_T$ (under PNL-only, maximum positive PNL) and maximum loser shortfall $\Delta_T$ satisfy
    \[
        \frac{\upsilon_T}{b_{k_n}^+} \xrightarrow{p} c_+, \qquad \frac{\Delta_T}{b_{m_n}^-} \xrightarrow{p} c_-,
    \]
    where $b_k^\pm = F_\pm^{-1}(1-1/k)$ are the extreme-value scales.
\end{enumerate}
We abbreviate $b_n := b_{k_n}^+$.
\end{assumption}

\begin{assumption}[Structural Deficit Regime]\label{ass:structural-deficit}
We assume the exchange operates in a regime where insurance alone is insufficient to cover tail risks. Specifically, the expected deficit rate exceeds the maximum sustainable fee diversion rate: $\mu_- > \mu_{\Phi}$.
This ensures that the Solvency constraint cannot be trivially satisfied by insurance without impacting LTV or requiring haircuts.
\end{assumption}

\iparagraph{Formal Desiderata.}
We now define the three desiderata precisely:

\begin{definition}[Solvency]\label{def:solvency-formal}
A policy family $(\pi_n)$ satisfies the \emph{solvency} condition \textbf{(S)} if:
\begin{enumerate}
    \item[(S1)] \emph{Bounded cumulative residual:} $\sum_{t=1}^T R_t(\pi_n) = O_p(1)$ as $n\to\infty$;
    \item[(S2)] \emph{Controlled breach probability:} $\sup_{n,t} \Prob\!\left[R_t(\pi_n) > 0\right] < 1$.
\end{enumerate}
\end{definition}

\begin{definition}[Fairness / Bounded Moral Hazard]\label{def:fairness-formal}
A policy family $(\pi_n)$ satisfies the \emph{fairness} condition \textbf{(F)} if:
\begin{enumerate}
    \item[(F1)] \emph{PTSR stability:} There exist constants $0 < c_{\text{lo}} \le c_{\text{hi}} < \infty$ such that
    \[
        c_{\text{lo}} \;\le\; \mathsf{PTSR}_T(\mathcal{P}_n,\pi_n) := \Expect\left[\frac{\upsilon_T^{\pi_n}}{D_T^{\pi_n}}\right] \;\le\; c_{\text{hi}},
    \]
    where $\upsilon_T^{\pi_n}$ is the maximum post-ADL endowment (under PNL-only, maximum post-ADL positive PNL);
    \item[(F2)] \emph{PMR stability:} There exist constants $0 < c'_{\text{lo}} \le c'_{\text{hi}} < \infty$ such that
    \[
        c'_{\text{lo}} \;\le\; \mathsf{PMR}_T(\mathcal{P}_n,\pi_n) := \Expect\left[\frac{\upsilon_T^{\pi_n}}{\Delta_T^{\pi_n}}\right] \;\le\; c'_{\text{hi}}.
    \]
\end{enumerate}
These bounds ensure the top winner's residual endowment (profit capacity) remains proportional to the deficit scale.
For brevity we write $\mathsf{PTSR}_T(\pi_n)$ (and $\mathsf{PMR}_T(\pi_n)$) whenever the dependence on $\mathcal{P}_n$ is clear from context.
\end{definition}

\begin{definition}[Revenue Preservation]\label{def:revenue-formal}
Let $\Phi_T(\pi)$ be the cumulative trading fees generated under policy $\pi$, and let $\mathcal{D}_T(\pi)$ be the cumulative diversion of fees into the insurance fund. The \emph{Exchange Long-Term Value} is defined as the net retained revenue:
\[
    \mathrm{LTV}_T(\pi) \;:=\; \Phi_T(\pi) - \mathcal{D}_T(\pi).
\]
A policy family $(\pi_n)$ satisfies the \emph{revenue} condition \textbf{(R)} if there exists a constant $c_R \in (0,1]$ such that:
\[
    \mathrm{LTV}_T(\pi_n) \;\ge\; c_R \cdot \sup_{\pi'} \Phi_T(\pi').
\]
This definition implies that (1) the policy does not cause excessive churn (reducing $\Phi_T$) and (2) the policy does not divert substantially all revenue to insurance (increasing $\mathcal{D}_T$ to $\approx \Phi_T$).
\end{definition}

\paragraph{Formal Statement of the Trilemma.}

\begin{theorem}[ADL Trilemma]\label{thm:adl-trilemma-formal}
Let $(\mathcal{P}_n)_{n\ge 1}$ be a sequence of perpetuals exchanges satisfying Assumptions~\ref{ass:rv-trilemma}--\ref{ass:structural-deficit}.
For any static ADL policy family $(\pi_n)$ with severity sequence $(\theta_n)$, at most two of the three conditions \textbf{(S)}, \textbf{(F)}, and \textbf{(R)} can hold simultaneously.

More precisely:
\begin{enumerate}
    \item[\textbf{(I)}] \textbf{(S)} $\wedge$ \textbf{(F)} $\Rightarrow$ $\neg$\textbf{(R)}:
    If both solvency and fairness hold, then LTV must be sacrificed via full fee diversion, violating \textbf{(R)}.
    
    \item[\textbf{(II)}] \textbf{(S)} $\wedge$ \textbf{(R)} $\Rightarrow$ $\neg$\textbf{(F)}:
    If both solvency and revenue hold, then fairness is sacrificed ($\mathsf{PTSR}_T \to 0$).
    
    \item[\textbf{(III)}] \textbf{(F)} $\wedge$ \textbf{(R)} $\Rightarrow$ $\neg$\textbf{(S)}:
    If both fairness and revenue hold, then solvency is sacrificed ($\Prob[R_t > 0] \to 1$).
\end{enumerate}
\end{theorem}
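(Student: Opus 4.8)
The plan is to reduce all three conditions to inequalities on two scalar control sequences: the severity $\theta_n$ and the per-capita fee-diversion rate $d_n := \mathcal{D}_T(\pi_n)/n$. First I would invoke the LLN scaling of Assumption~\ref{ass:lln-evt-trilemma} to write $D_T \sim \mu_- n$ and $\Phi_T \sim \mu_\Phi n$, so feasibility caps diversion at $d_n \le \mu_\Phi + o(1)$. Using the separation principle derived from~\eqref{eq:budget-balance} in \S\ref{subsec:severity}, the post-ADL residual is $R_T = (1-\theta_n)_+ D_T$; combining this with the insurance recursion~\eqref{eq:if-process} and the hypothesis $K_0 = o(n)$, the per-capita coverage shortfall is $\big(\mu_-(1-\theta_n) - d_n\big)_+ + o_p(1)$. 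Solvency condition (S1) of Definition~\ref{def:solvency-formal} then forces this shortfall rate to vanish, yielding the \emph{solvency budget identity} $\mu_-(1-\theta_n) \le d_n + o(1)$.

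Next I would characterize fairness and revenue in the same coordinates. By Theorem~\ref{thm:master-ptsr}, $\mathsf{PTSR}_T(\pi_n) \asymp b_n/(\theta_n n)$, so (F1) of Definition~\ref{def:fairness-formal} (two-sided boundedness) holds if and only if $\theta_n \asymp b_n/n$; since $b_n = o(n)$ by Assumption~\ref{ass:anti-conc}(i), this gives the crucial equivalence $\textbf{(F)} \Rightarrow \theta_n \to 0$. Likewise, writing $\mathrm{LTV}_T = \Phi_T - \mathcal{D}_T$, Definition~\ref{def:revenue-formal} is equivalent to $d_n \le (1-c_R)\mu_\Phi + o(1)$, i.e.\ diversion bounded strictly below the full-fee rate. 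The three implications are then the three corners of the feasible region in $(\theta_n, d_n)$-space. For \textbf{(III)} ($\textbf{F}\wedge\textbf{R}\Rightarrow\neg\textbf{S}$): (F) gives $\theta_n\to0$ and (R) gives $d_n \le (1-c_R)\mu_\Phi$, so the coverage rate $\theta_n\mu_- + d_n \le o(1) + (1-c_R)\mu_\Phi < \mu_\Phi < \mu_-$ by the structural-deficit Assumption~\ref{ass:structural-deficit}; the shortfall rate is bounded below by $\mu_- - \mu_\Phi > 0$, so $\sum_t R_t = \Theta_p(n)\to\infty$, violating (S1). For \textbf{(II)} ($\textbf{S}\wedge\textbf{R}\Rightarrow\neg\textbf{F}$): the budget identity with the revenue cap $d_n \le (1-c_R)\mu_\Phi$ forces $\theta_n \ge 1 - (1-c_R)\mu_\Phi/\mu_- =: \theta_\star > 0$ (positive since $\mu_- > \mu_\Phi$), whence $\mathsf{PTSR}_T \asymp b_n/(\theta_\star n)\to 0$, contradicting (F1). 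For \textbf{(I)} ($\textbf{S}\wedge\textbf{F}\Rightarrow\neg\textbf{R}$): (F) gives $\theta_n\to0$, so the budget identity forces $d_n \ge \mu_-(1-\theta_n) - o(1) \to \mu_- > \mu_\Phi > (1-c_R)\mu_\Phi$, so eventually $d_n$ exceeds the revenue threshold, giving $\neg\textbf{R}$; under the strict formal cap $d_n \le \mu_\Phi < \mu_-$ this even shows $\textbf{S}\wedge\textbf{F}$ is asymptotically infeasible, so the implication holds \emph{a fortiori}.

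The hard part will be making the solvency budget identity rigorous: passing from the $O_p(1)$ cumulative-residual requirement (S1) to a leading-order \emph{rate} condition on $(\theta_n, d_n)$. This requires controlling the stochastic insurance-fund trajectory $\{\mathsf{IF}_t\}$ along the recursion~\eqref{eq:if-process}, showing that under $\mu_- > \mu_\Phi$ and $K_0 = o(n)$ no buffer of size $o(n)$ can absorb a $\Theta(n)$ coverage gap, and upgrading the in-probability LLN/EVT limits of Assumptions~\ref{ass:rv-trilemma}--\ref{ass:lln-evt-trilemma} to high-probability statements that hold uniformly over the finite horizon $T$, so that each contradiction in the previous paragraph holds with probability tending to one rather than merely in expectation. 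Once this reduction is in place, the remaining steps are bookkeeping against the already-established PTSR scaling (Theorem~\ref{thm:master-ptsr}) and the separation principle, with the structural-deficit gap $\mu_- - \mu_\Phi > 0$ supplying the strict separation that drives all three cases.
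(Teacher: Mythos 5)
Your proposal is correct and follows essentially the same route as the paper's proof: it combines the PTSR scaling of Theorem~\ref{thm:master-ptsr} (fairness $\Leftrightarrow$ small severity), an accounting identity linking deficit, haircut budget, fee diversion, and residual (your ``solvency budget identity'' is the per-capita form of Lemma~\ref{lem:solvency-identity}), and the structural-deficit gap $\mu_- > \mu_\Phi$ to drive the three contradictions. The reparametrization in $(\theta_n, d_n)$-coordinates and your closing remarks about upgrading $O_p(1)$ residual control to a rate condition are presentational refinements of the same argument, not a different proof.
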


\begin{remark}[Interpretation of $\mu_\Phi$]
In applications, $\mu_\Phi$ can be interpreted as the largest fee diversion rate that remains compatible with non-declining long-run venue value (e.g., via an LTV sensitivity constraint).
This is an interpretation of the regime boundary, not a change to the formal assumption used in the proof.
\end{remark}

\begin{remark}[Theorem vs observability]
Theorem~\ref{thm:adl-trilemma-formal} is proved in the policy model and does not assume any particular empirical observability of execution/settlement cashflows.
Empirical tests necessarily operate under an observation model and should be interpreted as measurements of the paper's defined objects, not as complete ledger identities.
\end{remark}

\subsection{Proof of the Trilemma}

We first establish a fundamental identity linking the three quantities.

\begin{lemma}[Solvency-Revenue Identity]\label{lem:solvency-identity}
For any policy $\pi$, the cumulative deficit $D_T$ must be covered by the haircut budget $B_T$, insurance fund diversions $\mathcal{D}_T$, initial capital $K_0$, and residual insolvency $R_T$:
\[
    D_T \;\le\; B_T + \mathcal{D}_T + K_0 + R_T.
\]
Substituting the LTV definition $\mathcal{D}_T = \Phi_T - \mathrm{LTV}_T$ and using $K_0 = o(n)$, we obtain the asymptotic inequality:
\[
    \mathrm{LTV}_T(\pi) \;\le\; \Phi_T(\pi) + B_T(\pi) + R_T(\pi) - D_T(\pi) + o(n).
\]
\end{lemma}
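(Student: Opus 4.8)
The plan is to prove the lemma purely by a conservation-of-funds (accounting) argument: I establish a per-round coverage inequality, telescope the insurance-fund recursion, sum over rounds, and finally substitute the LTV identity. No distributional assumptions are needed here; this is a deterministic ledger bound that holds path-by-path.

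First I would fix a round $t$ and identify the three and only three sinks that can absorb realized bad debt $D_t$: the ADL haircut budget $B_t$, the insurance-fund payout $\mathrm{pay}_t := \min\{\mathsf{IF}_t, D_t\}$, and the residual insolvency $R_t = (D_t - \mathsf{IF}_t)_+$ that is left uncovered (cf.~\eqref{eq:adl-residual}). By the budget-balance constraint~\eqref{eq:budget-balance}, the seized endowment $B_t$ is measured in the same (deficit) numéraire as $D_t$ via $\sum_i h_{\pi,i} w_{i,T} = \theta_\pi D_T$, so the quantities are commensurable. Since the model provides no other mechanism to remove deficit dollars, the covered amount $D_t - R_t$ cannot exceed what ADL and insurance jointly supply, giving $D_t - R_t \le B_t + \mathrm{pay}_t$, i.e.
\[
D_t \;\le\; B_t + \mathrm{pay}_t + R_t.
\]
I would keep this as an inequality rather than an equality on purpose: in overshoot regimes the realized ADL budget may exceed the residual it is meant to clear (as documented empirically in \S\ref{sec:numerics}), so coverage weakly dominates need.

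Next I would telescope the insurance fund. The recursion~\eqref{eq:if-process} has the form $\mathsf{IF}_{t+1} = \mathsf{IF}_t + \iota_t - \mathrm{pay}_t$, where $\iota_t = \alpha\,\mathsf{Fee}^{\text{liq}}_t + \eta\,\mathsf{Fee}^{\text{trade}}_t + \beta\,\mathsf{Fee}^{\text{fund}}_t$ is exactly the diverted fee inflow. Summing from $t=1$ to $T$ with $\mathsf{IF}_0 = K_0$ and using $\mathsf{IF}_{T+1} \ge 0$ yields
\[
\sum_{t} \mathrm{pay}_t \;=\; K_0 + \sum_t \iota_t - \mathsf{IF}_{T+1} \;\le\; K_0 + \mathcal{D}_T,
\]
where $\mathcal{D}_T = \sum_t \iota_t$ is the cumulative diversion of Definition~\ref{def:revenue-formal}. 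Summing the per-round inequality over $t$ and substituting this bound gives the first display, $D_T \le B_T + \mathcal{D}_T + K_0 + R_T$. Finally I substitute the LTV identity $\mathcal{D}_T = \Phi_T - \mathrm{LTV}_T$, rearrange to $\mathrm{LTV}_T \le \Phi_T + B_T + R_T - D_T + K_0$, and absorb $K_0 = o(n)$ into the error term to obtain the claimed asymptotic inequality.

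The main obstacle is not the algebra but justifying \emph{exhaustiveness} and \emph{unit consistency} in the per-round step: I must argue that the mechanism ordering (liquidation $\to$ insurance $\to$ ADL) leaves no hidden leakage of deficit dollars beyond the three sinks, reconciling the residual definition~\eqref{eq:adl-residual} (post-insurance, pre-ADL) with the ADL budget-balance~\eqref{eq:budget-balance} so that $B_t$ is correctly counted against $D_t$ in a single commensurable ledger. Once that bookkeeping is pinned down, the telescoping and the harmless uses of $\mathsf{IF}_{T+1}\ge 0$ and $K_0=o(n)$ are routine.
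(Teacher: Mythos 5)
Your proof is correct, and it reaches the same conservation-of-funds inequality as the paper, but it is a genuinely more detailed route. The paper's own proof is a one-step definitional assertion: it declares that, by definition of residual insolvency, $R_T \ge D_T - B_T - \mathcal{D}_T - K_0$, and then rearranges and substitutes $\mathcal{D}_T = \Phi_T - \mathrm{LTV}_T$. You instead derive that assertion from the model primitives: the per-round residual definition \eqref{eq:adl-residual} gives $D_t \le B_t + \min\{\mathsf{IF}_t, D_t\} + R_t$, and telescoping the insurance-fund recursion \eqref{eq:if-process} with $\mathsf{IF}_0 = K_0$ and $\mathsf{IF}_{T+1} \ge 0$ yields $\sum_t \min\{\mathsf{IF}_t, D_t\} \le K_0 + \mathcal{D}_T$, after which summing and substituting gives the claim. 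What your route buys is (i) a path-by-path, deterministic proof rather than an appeal to what $R_T$ ``must'' be, and (ii) an explicit identification of where $K_0$ and $\mathcal{D}_T$ enter --- through the fund's initial condition and its fee inflows --- which the paper leaves implicit. One caveat worth flagging: the appendix reads $R_T$ as the post-ADL (final) uncovered shortfall, whereas you invoke the main-text pre-ADL residual \eqref{eq:adl-residual}; this is harmless here because your per-round justification (covered deficit cannot exceed what insurance and the ADL budget jointly supply) validates the inequality under either reading, but it deserves the one sentence of reconciliation you already anticipate in your closing paragraph.
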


\begin{proof}
By definition of residual insolvency, any uncovered shortfall after applying haircuts and diverted fees must appear as $R_T$:
\[
    R_T \;\ge\; D_T - B_T - \mathcal{D}_T - K_0,
\]
which rearranges to $D_T \le B_T + \mathcal{D}_T + K_0 + R_T$, establishing the first inequality. Substituting $\mathcal{D}_T = \Phi_T - \mathrm{LTV}_T$ and using $K_0 = o(n)$ yields the asymptotic bound.
\end{proof}

\begin{proof}[Proof of Theorem~\ref{thm:adl-trilemma-formal}]
\medskip\noindent\textit{(S) and (F) $\Rightarrow$ $\neg$(R).}
By \textbf{(F)} and Theorem~\ref{thm:master-ptsr}, $\theta_n = O(b_n/n)$, so $B_T = o(n)$; \textbf{(S)} gives $R_T = o_p(n)$.
Lemma~\ref{lem:solvency-identity} implies
\[
  \frac{\mathrm{LTV}_T}{n} \;\le\; \mu_\Phi + o(1) - \mu_- \;<\; 0
\]
using Assumption~\ref{ass:structural-deficit}, contradicting \textbf{(R)}.

\medskip\noindent\textit{(S) and (R) $\Rightarrow$ $\neg$(F).}
Revenue bounds diversions, so solvency requires $B_T \gtrsim n$, hence $\theta_n=\Theta(1)$.
Then $\kappa_n=\theta_n n/b_n \to \infty$ (since $b_n=o(n)$); Theorem~\ref{thm:master-ptsr} gives $\mathsf{PTSR}_T \to 0$, violating \textbf{(F)}.

\medskip\noindent\textit{(F) and (R) $\Rightarrow$ $\neg$(S).}
Fairness again yields $B_T=o(n)$; revenue caps $\mathcal{D}_T$.
Lemma~\ref{lem:solvency-identity} gives
\[
  \frac{R_T}{n} \;\ge\; \mu_- - (1-c_R)\mu_\Phi - o(1) \;>\; 0,
\]
so $\Prob[R_T>0]\to 1$, violating \textbf{(S)}.
\end{proof}

\paragraph{Sharpness and Attainability.}

The trilemma bound is tight in the sense that each pair of desiderata \emph{can} be achieved by an appropriately designed policy:

\begin{proposition}[Attainability of Two Desiderata]\label{prop:attainability}
Under Assumptions~\ref{ass:rv-trilemma}--\ref{ass:structural-deficit}:
\begin{enumerate}
    \item \textbf{(S)} $\wedge$ \textbf{(F)} is achieved by a \emph{high-diversion} policy: set diversions $\mathcal{D}_T \approx \Phi_T$ (taking all revenue) plus potentially external capital if $\mu_- > \mu_{\Phi}$ is very large. This covers deficits ($D_T \approx \mathcal{D}_T$) with minimal haircuts ($B_T \approx 0$), satisfying \textbf{(S)} and \textbf{(F)}, but reducing $\mathrm{LTV}_T \to 0$, violating \textbf{(R)}.
    
    \item \textbf{(S)} $\wedge$ \textbf{(R)} is achieved by a \emph{Queue} (concentrated haircut) policy: use high severity $\theta_n = \Theta(1)$ to generate $B_T \approx D_T$. This ensures solvency and preserves fee revenue (since $\mathcal{D}_T \approx 0$), but destroys the top winners ($\mathsf{PTSR}_T \to 0$), violating \textbf{(F)}.
    
    \item \textbf{(F)} $\wedge$ \textbf{(R)} is achieved by a \emph{Pro-Rata with low severity} policy: use EV-scaled severity $\theta_n = O(b_n/n)$ and low diversion. This keeps $\mathrm{LTV}_T \approx \Phi_T$ and $\mathsf{PTSR}_T = \Theta(1)$, but leaves an unhedged deficit $R_T \approx D_T > 0$, violating \textbf{(S)}.
\end{enumerate}
\end{proposition}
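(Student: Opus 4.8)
The plan is to establish sharpness by explicit construction: for each of the three pairs I would exhibit one policy family and verify that the two targeted conditions hold while the omitted one provably fails. Every verification reduces to tools already in hand: the separation identity $R_T(\pi)=(1-\theta_\pi)_+D_T$ from \S\ref{subsec:severity}; the scaling $\mathsf{PTSR}_T\asymp b_n/(\theta_n n)$ of Theorem~\ref{thm:master-ptsr} (and the companion $\mathsf{PMR}_T$ of Theorem~\ref{thm:pmr-scaling}); the top-survivor bound $\upsilon^\pi_T\ge(\upsilon_T-B_T)_+$ with Queue attaining equality from Proposition~\ref{prop:queue-min-top}; and the coverage identity $D_T\le B_T+\mathcal{D}_T+K_0+R_T$ of Lemma~\ref{lem:solvency-identity}. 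Throughout I would invoke the LLN/EVT scalings $D_T\asymp\mu_-n$, $U_T\asymp\mu_+n$, $\Phi_T\asymp\mu_\Phi n$, $\upsilon_T\asymp b_n=o(n)$, and work in the operative regime $\mu_\Phi<\mu_-\le\mu_+$ where the deficit exceeds fee capacity yet stays within winner capacity.

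For \textbf{(S)}$\wedge$\textbf{(R)} I would take Queue at full severity $\theta_n=1$, feasible since $D_T\le U_T$ reduces to $\mu_-\le\mu_+$. Then $R_T=(1-\theta_n)_+D_T=0$ gives (S1)--(S2); because the full deficit is absorbed from winner endowment, no fees need diverting, so $\mathcal{D}_T=0$ and $\mathrm{LTV}_T=\Phi_T$ clears the bar in Definition~\ref{def:revenue-formal}, giving (R). The budget is $B_T=\theta_nD_T\asymp\mu_-n$, so Proposition~\ref{prop:queue-min-top} pins the top survivor at $(\upsilon_T-B_T)_+$; since $b_n=o(n)\ll\mu_-n$ this is $0$ for large $n$, forcing $\mathsf{PTSR}_T\to0$ and violating (F1), hence (F). Symmetrically, for \textbf{(F)}$\wedge$\textbf{(R)} I would take capped pro-rata at extreme-value severity $\theta_n=\Theta(b_n/n)$ with negligible diversion $\mathcal{D}_T=o(n)$. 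Theorem~\ref{thm:master-ptsr} gives $\mathsf{PTSR}_T\asymp b_n/(\theta_n n)=\Theta(1)$, establishing (F1); the companion bound (F2) holds once the witnessing book is taken with matched leverage masses $\ell_n^-/\ell_n^+\asymp n/b_n^-$ (a free choice in an existence claim), so that Theorem~\ref{thm:pmr-scaling} yields $\mathsf{PMR}_T=\Theta(1)$ as well. Low diversion keeps $\mathrm{LTV}_T=\Phi_T-o(n)\asymp\mu_\Phi n$, so (F) and (R) hold; but $B_T=o(n)$ and $\mathcal{D}_T=o(n)$ leave $R_T\ge D_T-B_T-\mathcal{D}_T-K_0\asymp\mu_-n\to\infty$ by Lemma~\ref{lem:solvency-identity}, violating (S1).

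The genuinely delicate case, and the main obstacle, is \textbf{(S)}$\wedge$\textbf{(F)}, because within the fee-only insurance recursion~\eqref{eq:if-process} it is outright impossible in the structural regime: fairness forces $\theta_n=O(b_n/n)$ hence $B_T=o(n)$, and diverting even \emph{all} fees gives only $\mathcal{D}_T\le\Phi_T\asymp\mu_\Phi n<\mu_-n$, so Lemma~\ref{lem:solvency-identity} leaves a linear residual and (S) fails --- this is precisely implication \textbf{(I)} of Theorem~\ref{thm:adl-trilemma-formal}. The resolution is to state this construction in a model augmented with an exogenous solvency backstop: set $\theta_n=O(b_n/n)$ for (F), divert all fees so that $\mathrm{LTV}_T=\Phi_T-\mathcal{D}_T=0$, and inject external capital of size $(\mu_--\mu_\Phi)n-o(n)$ to close the remaining gap and secure $R_T=o_p(1)$ for (S). Since $\mathrm{LTV}_T=0<c_R\,\sup_{\pi'}\Phi_T$, condition (R) fails for every $c_R\in(0,1]$. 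I would emphasize that this external channel is a permissible relaxation for a \emph{sharpness} claim --- the impossibility direction used only fee-based resources --- and that outside the structural regime ($\mu_-\le\mu_\Phi$) fee diversion alone restores solvency and all three desiderata coexist, consistent with the regime-boundary discussion in \S\ref{sec:adl-trilemma}. Thus each two-of-three construction spends exactly the one resource --- haircut concentration, uncovered residual, or forgone revenue plus external capital --- that the omitted desideratum is designed to protect.
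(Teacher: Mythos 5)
Your proposal is correct and follows essentially the same route as the paper: the paper offers no separate proof beyond the three constructions embedded in the statement itself, and you verify exactly those constructions (Queue at full severity, capped pro-rata at extreme-value severity, full diversion plus external backstop) using the paper's own tools --- Lemma~\ref{lem:solvency-identity}, Theorem~\ref{thm:master-ptsr}, Theorem~\ref{thm:pmr-scaling}, and Proposition~\ref{prop:queue-min-top}. Your write-up is in fact slightly more careful than the paper's on three points it glosses over: the feasibility requirement $\mu_-\le\mu_+$ for full-severity Queue, the leverage-mass condition needed so that \textbf{(F2)} (the PMR bound) holds and not just \textbf{(F1)}, and the explicit acknowledgment that the \textbf{(S)}$\wedge$\textbf{(F)} construction steps outside the fee-only model via external capital, which is consistent with the paper's own phrasing ``plus potentially external capital.''
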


\subsection{Connection to Classical Impossibility Results}
The ADL trilemma echoes classical impossibility results in mechanism design and finance:

\begin{itemize}
    \item \emph{Arrow's Impossibility Theorem:} No voting rule satisfies Pareto efficiency, independence of irrelevant alternatives, and non-dictatorship simultaneously~\citep{Arrow1951}.
    
    \item \emph{Mundell-Fleming Trilemma:} In international finance, a country cannot simultaneously maintain a fixed exchange rate, free capital movement, and independent monetary policy~\citep{Mundell1963,Fleming1962}.
    
    \item \emph{CAP Theorem:} In distributed systems, a database cannot provide consistency, availability, and partition tolerance simultaneously~\citep{Brewer2000,GilbertLynch2002}.
    
    \item \emph{Credibility Trilemma:} Single-item auctions cannot be simultaneously optimal, strategy-proof, and credible, forcing designers to sacrifice at least one desideratum~\citep{AkbarpourLi2020}.
\end{itemize}

These connections suggest the trilemma is a fundamental constraint arising from the heavy-tailed nature of crypto markets, not an artifact of specific mechanism choices.

\paragraph{Circumventing Impossibility via Relaxations.}

While the impossibility results are strict in worst-case settings, recent literature demonstrates that they can be circumvented under probabilistic assumptions or cryptographic commitments:
\begin{enumerate}
\item \emph{Quantitative Arrow's Theorem:} 
    Recent results in quantitative social choice show that while Arrow's impossibility holds in the worst case, the probability of paradoxical outcomes (like intransitivity) can be small for many natural distributions of preferences~\citep{MosselNeemanTamuz2014}.
    Analogously, our ADL trilemma bounds hold with high probability under heavy-tailed distributions, but dynamic policies (like Stackelberg controllers) can minimize the frequency of trilemma-binding events, achieving a ``quantitative'' relaxation.

    \item \emph{Probabilistic CAP Theorem:} 
    Blockchains circumvent the strict CAP theorem by weakening consistency to probabilistic finality (e.g., Nakamoto consensus) or availability to liveness under synchronous periods~\citep{PassShi2017,Shi2020,ShiConsensusBook}.
    This parallels the \textbf{(S)} vs.\ \textbf{(R)} trade-off: exchanges effectively accept probabilistic solvency (via insurance funds) to maintain liveness (continuous trading/revenue).

\item \emph{Cryptographic Commitments for Credibility:} 
    The credibility trilemma of Akbarpour and Li motivates cryptographic mechanisms that make optimal auctions simultaneously credible and strategy-proof by enforcing operator commitments~\citep{AkbarpourLi2020,FerreiraWeinberg2020,EssaidiFerreiraWeinberg2022,ChitraFerreiraKulkarni2024}.
    For ADL, this suggests that verifiable execution (\eg~via zero-knowledge proofs or on-chain logic) could allow an exchange to commit to a dynamic policy that balances the trilemma better than any opaque static policy could, by removing the operator's incentive to deviate during crises.
\end{enumerate}

\bibliographystyle{abbrvnat}
\bibliography{references}
\end{document}